\numberwithin{equation}{section}
\DeclareMathAlphabet{\pazocal}{OMS}{zplm}{m}{n}
\DeclareFontFamily{OT1}{pzc}{}
\DeclareFontShape{OT1}{pzc}{m}{it}{<-> s * [1.1500] pzcmi7t}{}
\DeclareMathAlphabet{\mathpzc}{OT1}{pzc}{m}{it}
\newtheorem{theorem}{Theorem}[section]
\newtheorem{definition}[theorem]{Definition}
\newtheorem{lemma}[theorem]{Lemma}
\newtheorem{proposition}[theorem]{Proposition}
\newtheorem{corollary}[theorem]{Corollary}
\newtheorem{axiom}{Axiom}
\declaretheorem[sibling=theorem,style=definition, qed=\bell]{remark}
\declaretheorem[sibling=theorem,style=definition, qed=\kreuz]{example}
\numberwithin{equation}{section}
\newcommand{\CC}{{\mathbb C}}
\newcommand{\RR}{{\mathbb R}}
\newcommand{\MM}{{\mathbb M}}
\newcommand{\NN}{{\mathbb N}}
\newcommand{\Oc}{{\mathcal{O}}}
\newcommand{\Dc}{{\mathcal{D}}}
\newcommand{\Ec}{{\mathcal{E}}}
\newcommand{\Fc}{{\mathcal{F}}}
\newcommand{\Lc}{{\mathcal{L}}}
\newcommand{\Rc}{{\mathcal{R}}}
\newcommand{\Ups}{{\Upsilon}}
\newcommand{\fA}{{\mathfrak A}}
\newcommand{\Floc}{\Fc_{\mathrm{loc}}}                   % domains of operators
\newcommand{\supp}{{\mathrm{supp} \, }}
\newcommand{\be}{\begin{equation}}
\newcommand{\ee}{\end{equation}}
\newcommand{\ox}{\otimes}
\def\eg{{\it e.g.\ }}
\def\ie{{\it i.e.\ }}
\def\etc{{\it etc.}}
\newcommand{\const}{\mathrm{const}}
\newcommand{\0}{\emptyset}
\newcommand{\g}{\mathpzc{g}}
\newcommand{\tildeg}{\mathpzc{\tilde{g}}}
\newcommand{\h}{\mathpzc{h}}
\newcommand{\tildeh}{\mathpzc{\tilde{h}}}
\newcommand{\kk}{\mathpzc{k}}
\newcommand{\e}{\mathpzc{e}}
\newcommand{\G}{\mathpzc{G}}
\newcommand{\HL}{\mathpzc{H}_L}
\newcommand{\HLz}{\mathpzc{H}_{L,\zeta}}
\newcommand{\LieGc}{\mathrm{Lie}\G_c(M)}
\newcommand{\LieRc}{\mathrm{Lie}\,\Rc_c}
\newcommand{\naturaltra}{\overset{\centerdot}{\longrightarrow}}
\begin{document} 
%\title{On Noether's Theorem in Quantum Field Theory}
\title[The unitary Anomalous Master Ward Identity]{The unitary Master Ward Identity: Time slice axiom, Noether's Theorem and Anomalies}

\author[R. Brunetti]{Romeo Brunetti}
\address{Dipartimento di Matematica, Universit\`a di Trento, 38123 Povo (TN), Italy}
\email{romeo.brunetti@unitn.it}

\author[M. D\"utsch]{Michael D\"utsch}
\address{Institute f\"ur Theoretische Physik, Universit\"at G\"ottingen, 37077 G\"ottingen, Germany}
\email{michael.duetsch3@gmail.com}

\author[K. Fredenhagen]{Klaus Fredenhagen}
\address{II. Institute f\"ur Theoretische Physik, Universit\"at Hamburg, 22761 Hamburg, Germany}
\email{klaus.fredenhagen@desy.de}

\author[K. Rejzner]{Kasia Rejzner}
\address{Department of Mathematics, University of York, YO10 5DD York, UK}
\email{kasia.rejzner@york.ac.uk}

\dedicatory{Dedicated to Detlev Buchholz on the occasion of his $77^{\textrm{th}}$ birthday}

\begin{abstract}
 \noindent The C*-algebraic formulation of generic interacting quantum field theories, recently presented by Detlev Buchholz and one of the authors (KF), is enriched by a unitary version of the Master Ward Identity, which was postulated some time ago by Franz Marc Boas, Ferdinand Brennecke and two of us (MD,KF). It is shown that the corresponding axiom implies the validity of the time slice axiom. Moreover, it opens the way for a new approach to Noether's Theorem where it yields directly the unitaries implementing the symmetries. It also unravels interesting aspects of the role of anomalies in quantum field theory.
\end{abstract}

\maketitle

\tableofcontents

\section{Introduction}
The C*-algebraic formulation of quantum physics is well known for its rather unique combination of conceptual clarity and mathematical precision \cite{RevBook}. In quantum field theory, the algebraic approach of Araki, Borchers, Haag and Kastler \cite{Araki09,Borchers1,Haag64,Haag92} has led to deep insights into the structure of the theory. As particularly important examples one may mention the theory of superselection sectors of Doplicher, Haag and Roberts (see \eg \cite{Haag92}) which is at the basis of recent work in conformal field theory (see \eg Rehren's contribution in \cite{RevBook}), and Tomita-Takesaki modular theory \cite{StraZs19,Takesaki-book,BorchersT} which is instrumental \eg in Quantum Statistical Mechanics via the KMS conditions \cite{BR}, in the reconstruction of symmetries via Wiesbrock's half-sided modular condition/intersection \cite{Wies1,Wies2}, in localization properties of field theories with their particle interpretation \cite{BGL} and its mathematical ramifications \cite{NeebOlaf}, and which is fruitfully used nowadays to discuss entropy and entanglement in quantum field theories \cite{CLR19,HS18,LongoXu19}.

On the other hand, the axioms of algebraic quantum field theory have not yet been equally successful in fixing specific interacting theory models. As a \emph{bon mot}, Rudolf Haag used to ask colleagues ``What is the Lagrangian?'' and no answer could satisfy him.

Recently, Detlev Buchholz and one of us (KF) succeeded in finding a framework where the classical Lagrangian determines a net of C*-algebras \cite{BF19,BF20}. It is a fully non-perturbative construction. It was formulated for the case of a scalar field with polynomial self-interaction (for the incorporation of fermions, see \cite{BDFR21}). The algebras are generated by unitaries $S(F)$ labeled by local functionals $F$ of the (classical) scalar field configuration. $F$ is interpreted as a local variation of the dynamics, and $S(F)$ is the induced operation on the system (the scattering ``$S$-matrix,'' an interpretation as such is offered at the end of Sect.\ref{sec:tsa}). By definition, these unitaries satisfy a causality relation corresponding to the ordering in time when these operations are performed, and a relation which determines dynamics in terms of the classical Lagrangian and which is a unitary version of the Schwinger-Dyson equation known from perturbative quantum field theory \cite{IZ}. In the case of the free Lagrangian, the canonical commutation relations in form of the Weyl relations are a consequence of the formalism \cite{BF19,BF-time}. 

We discuss in this paper whether further relations can be added in order to enrich the structure and to bring the formalism nearer to standard quantum field theory. We thereby use inspiration from the path integral formulation and check whether the resulting relations are compatible \eg with perturbation theory.
An important relation is obtained by the \emph{field redefinition} (see \eg \cite{MYeats20, KreimerYeats,Balduf} and references therein). Its infinitesimal version appears in a somewhat different form in the quantum master equation of the BV formalism, and was precisely analyzed in renormalized perturbation theory under the name \emph{Master Ward Identity (MWI)} 
in \cite{DB02,DF03,Brennecke08,Due19,Hollands08,FredenhagenR13}. 
We postulate in this paper a unitary version of this identity (Axiom ``Symmetries'' in Sect.~\ref{sec:MWI})-- including anomalies -- and show that, 
in formal perturbation theory, it is essentially equivalent to the infinitesimal version (see Section~\ref{sec:pQFT} and Appendix \ref{app:Delta-X}). 

The crucial ingredient for our formulation is an intrinsic nonperturbative concept of the renormalization group (Definition~\ref{def1}). 
It corresponds to the St\"uckelberg-Petermann renormalization group \cite{BDF09,StuPet53,PopineauS16,DF04}
in formal perturbation theory. Here it is used to characterize possible anomalies of the unitary Master Ward Identity.

We then prove that the new axiom implies the time slice axiom \cite{HaagSchroer} (primitive causality in \cite{Haag64}), which states that each observable can be expressed in terms of observables in any neighbourhood of a Cauchy surface. The remarkable stability of the algebra under a large class of time evolutions 
compares well to a similar property in non-relativistic quantum field theory \cite{Bu} in the framework of Resolvent Algebras by Buchholz and Grundling \cite{BG}.

Moreover, the axiom also characterizes symmetries of the theory.  
One of the deepest structural results of classical physics is the intimate relation between symmetries and conservation laws which were uncovered by Emmy Noether \cite{Noether1,Noether2} about 100 years ago. 
A similar connection also holds in quantum physics, but since the Lagrangian is not directly used in canonical quantisation, it is less evident. In the path integral formulation, the situation looks better but mathematically rigorous results are rare in this formalism. Traditionally in quantum theory, one uses the classical form for the generators of symmetries and attempts finding the corresponding expressions in the quantized theory. There is no general unique procedure to do this and it might fail altogether. For example, in quantum field theory, one then has to rely on renormalized perturbation theory and has to check for anomalies. 

We show that in the absence of anomalies, symmetries can be locally unitarily implemented (a unitary version of Noether's Theorem\footnote{Another unitary version of Noether's Theorem was derived in \cite{BuDoLo} on the basis of the split property \cite{DoLo}.}). We also discuss the occurrence of anomalies and present two equivalent notions of the induced renormalization group flow; one related to anomalies of the Master Ward Identity (see formula \eqref{eq:flow:anomaly} and the discussion below it), and the other (the standard one) expressed in terms of a flow of Lagrangians together with an associated field renormalization (Theorem~\ref{thm:RG-flow}). 
A new, unexpected, result is an anomalous version of Noether's Theorem, Theorem~\ref{theorem:anomalousnoethertheorem}, in which we prove the unitary implementation of symmetries, modulo renormalization group transformations.

It turns out to be useful to generalize the framework from theories on Minkowski space to theories on generic globally hyperbolic spacetimes and to make use of the locally covariant
formulation \cite{BFV03,HW01}, extended in such a way that also interactions can be varied. As an immediate consequence, we find a closed expression for the \emph{algebraic adiabatic limit} by which the net of algebras with an additional interaction can be constructed within the original net. A complication, not present in formal perturbation theory, consists in the possible change of the causal structure by interactions of kinetic type, \ie quadratic functionals of first derivatives of the field (see \eg \cite{BF20}).
%%%%%%%%%%%%%%%%%%%%%%%%%%%%%%%%%%%%%%%%%%%%%
\section{Lagrangians, interactions and dynamical spacetimes}\label{sec:frame}
\subsection{Local functionals and observables}
We consider an $n$-component real scalar field $\phi$. The classical configuration space $\Ec(M,\RR^n)$ is the space of smooth functions on the manifold $M$ with values in $\RR^n$. \emph{Manifolds} for us are topological spaces that are connected, Hausdorff and locally homeomorphic to a Euclidean space, together with a smooth differentiable structure (\ie an atlas) and of generic dimension larger than $2$.\footnote{The case of dimension $2$ can be equally treated \emph{mutatis mutandis} on some of the crucial results, \eg those in Appendix~\ref{sec:interpolatingmetrics}.} We equip manifolds with Lagrangians $L=L_0+V_0$, 
which are density-valued local functionals on the configuration space and are of the form
\be\label{eq:L0,V}
L_0(x)[\phi]=\frac12 g^{-1}(d\phi(x),d\phi(x))d\mu_g(x)\ ,\ V_0(x)[\phi]=\hat{V}_0(x,\phi(x),d\phi(x))\ ,
%\Bigr)d\mu_g(x)
\ee
with a metric $g$ for which the manifold is a globally hyperbolic spacetime\footnote{This entails that our manifolds are also paracompact, see \cite{Mara}.} and such that the linearized classical field equation is normally hyperbolic. Hence, specifying the kinetic part of the Lagrangian equips the manifold $M$ with a metric that makes it into a globally hyperbolic spacetime. Signatures for the metrics are taken as $(+,-,\dots,-)$, $\mu_g$ denotes the density  induced by $g$ and $\hat{V}_0$ is of 1st order with respect to $d\phi$. 
Globally hyperbolic manifolds are always time-orientable, and we choose a fixed time orientation. 
%We choose a time function $t:M\to\RR$, \ie a smooth surjective function with $g^{-1}(dt(x),dt(x))>0$ for all $x\in M$. Then the submanifolds $t=\const$ are Cauchy surfaces, and $M$ is diffeomorphic to $\RR\times \Sigma$ with $\Sigma=\{x\in M|t(x)=0\}$ \cite{Sanchez}.

We label the fundamental observables by local functionals $F\in\Floc(M)$ on the configuration space, \ie functionals of the form
\begin{equation}
    F[\phi]=\int \hat{F}(x,j_x(\phi))\ ,
\end{equation}
with a smooth density-valued function $\hat{F}$ on the jet space of $\Ec(M,\RR^n)$ with compact support in $x$. 
A particularly simple local functional is the basic field integrated with a test density $h\in\Dc_{\mathrm{dens}}(M,\RR^n)$:
\be\label{eq:basic-field}
\langle\phi,h\rangle [\phi]\equiv \phi(h)\doteq\int\sum_{i=1}^n\phi_i(x) h_i(x) \ .
\ee

We refer to Appendix \ref{app:gen-field}, where one finds definitions and properties of functionals, including, for instance, the important notion of their support.
Next we want to add to a Lagrangian $L$ the interaction given by a local functional $F$.
In order to do this, we use the fact that a Lagrangian defines a family of local functionals $L(f)$, $f\in\Dc(M,[0,1])$, (named generalized Lagrangian in \cite{BDF09,BFR19}) by
\begin{equation}\label{eq:LagGenField}
    L(f)[\phi]\doteq \int L(x)[f\phi]\ .
\end{equation}
Lagrangians $L',L$ with
\begin{equation}
    \supp (L'(f)-L(f))\subset\supp (f-1)
\end{equation}
lead to the same equation of motion and are considered to be equivalent (notation  $L\sim L'$). The map $f\mapsto L(f)$ is
a generalized field in the sense of \cite{BDF09,BFR19} (again, details are given in Appendix \ref{app:gen-field}). In the following, we shall always mean a Lagrangian $L$ in the sense of a generalized field.

Given a local functional $F$, on the other hand, we can define a generalized field $A_F$ 
%in the sense of \cite{BDF09} (details are given in Appendix \ref{app:gen-field}) 
by
\begin{equation}
 A_F\,:\,\Dc(M)\to\Floc(M)\,;\,  A_F(f)[\phi]\doteq F[f\phi]\ .
\end{equation}
%From the definition and characterization of the support of a generalized field $A$ given in that Appendix, we see that $\supp A$
%is the closure of the union of the supports of the local functionals $A(f)$, $f\in\Dc(M)$. We conclude 
In Appendix \ref{app:gen-field}, the support of a generalized field is defined and it is shown that $\supp A_F=\supp F$. 
Therefore, a local functional $F$ corresponds  to a generalized field %$(F(f))_{f\in\Dc(M)}$ 
$A_F$ with compact support.  The addition of the interaction given by $F\in\Floc(M)$
to a Lagrangian $L$ is now expressed by $L+A_F$.

Next we construct the category $\mathfrak{Loc}$. We first describe its objects: they are \emph{dynamical spacetimes} \ie pairs $(M,L)$ where $M$ is a manifold and $L$ is a Lagrangian of the form \eqref{eq:L0,V}, considered as a generalized field in the form of \eqref{eq:LagGenField}, together with a choice of time orientation.
%where $t$ is a time function. 
We consider local functionals $F$ for which the generalized field $A_F$ can be added to the Lagrangian $L$ as an interaction, such that the linearized equation of motion remains normally hyperbolic with respect to a possibly changed metric $g'$.  
%but still with the same time function. 
This requires that $F$
%$L'=L+A_F$ is still 
is of the form 
%\eqref{eq:L0,V} %with a Lorentz metric $g'$ 
\be
F[\phi]=F_{g',g}[\phi]+F_0[\phi]
\ee
with
\begin{equation}\label{eq:funct-metric-change}
    F_{g',g}[\phi]=\int\frac12\Bigl((g')^{-1}\bigl(d\phi(x),d\phi(x)\bigr)\,d\mu_{g'}(x)
-g^{-1}\bigl(d\phi(x),d\phi(x)\bigr)\,d\mu_g(x)\Bigr)\ ,
\end{equation}
where $g'$ is a Lorentz metric for which $M$ is globally hyperbolic with $\supp(g'-g)$ compact, 
and $F_0$ is a local functional of the form
\begin{equation}
    F_0[\phi]=\int  \bigl(h_0(x,\phi(x))+\langle h_1(x,\phi(x)),d\phi(x)\rangle\bigr)\, d\mu_g(x)\ ,
\end{equation}
with $h_0,h_1$ smooth  functions on the bundle $M\times \RR^n$, with real values and values in the tangent bundle, respectively, and compactly supported on the base space $M$. We denote the set of these functionals by $\Floc(M,L)$. For later purposes, we denote by $\mathrm{Q}(M,L)$ the subset of quadratic functionals.

For $F\in\Floc(M,L)$, we have $A_F(f)\in\Floc(M,L)$ if $f\equiv1$ on $\supp F$, hence the equivalence class of 
$L+A_F$ and the associated metric are well defined. The time orientation is obtained from the time orientation of the unperturbed Lagrangian by continuity. 
Notice that $\Floc(M,L)$ is \emph{not} a linear space, but has the property that for any $F\in\Floc(M,L)$
\begin{equation}\label{eq:Floc-add}
 F+G\in\Floc(M,L)\Longleftrightarrow G\in\Floc(M,L+A_F)\ .
\end{equation}
Moreover, in case $\supp F\cap \supp G=\0$, 
%is in the future and $\supp G$
%in the past of a Cauchy surface for the metric associated to $L$ 
we have that with $F$ and $G$ also $F+G\in\Floc(M,L)$,
namely, if $g_1$ is the metric associated to $L+A_F$ and $g_2$ the metric associated to $L+A_G$, then $g_{12}\doteq g_1+g_2-g$ is the metric associated to $L+A_{F+G}$ which is easily seen to be also globally hyperbolic.
%hence, to unburden the notation, from now on we take for granted that any sum of functionals from this set satisfies the necessary requirement.

Complications arise from the fact that the set of Lorentz metrics is, in general, not convex, hence $A_F(f)$ for $0\le f\le1$ might not belong to $\Floc(M,L)$, for $F_{g',g}$ of the form in \eqref{eq:funct-metric-change} with $g\neq g'$. In Appendix~\ref{sec:interpolatingmetrics}, we solve this problem by introducing additional metrics $g_i$, $i=0,\dots,4$, with $g_0=g$, $g_4=g'$ such that, for each pointwise convex combination
\begin{equation}\label{eq:g-convex-comb}
    g_{\lambda}(x)=\lambda(x)g_{i-1}(x)+(1-\lambda(x))g_{i}(x)\ ,\ \lambda\in\mathcal{C}^{\infty}(M)\ ,\  0\le\lambda\le1 
\end{equation}
of $g_{i-1}$ and $g_i$, $i=1,\dots,4$, the manifold $M$ is globally hyperbolic.  We decompose
\begin{equation}\label{eq:partition(F)}
    F\equiv F_{g',g}+F_0=\sum_{i=0}^4 F_i
\end{equation}
with $F_0$ as above and $F_i\doteq F_{g_{i},g_{i-1}}$ for $i=1,\dots,4$. 
We thus get a quintuple of generalized fields $A_{F_i}$, $i=0,\dots,4$ with $A_{F_i}(f_i)\in\Floc(M,L+\sum_{j<i}A_{F_j})$ and $\sum_i A_{F_i}(f_i)\in\Floc(M,L)$ provided $f_i\in\Dc(M,[0,1])$,  with $\supp f_k\subset (f_{k-1})^{-1}(1)$, $k=1,\dots,4$.

Let $\Dc_s(M)$ denote the set of finite sequences $f=(f_1,\dots,f_n)$ of test functions $f_i\in \Dc(M,[0,1])$ subject to restrictions as above, and $f\equiv 1$ on a set $N$ means that $f_i\equiv1$ on $N$, $i=1\dots,n$.

\begin{definition}[\textbf{Interactions}] The set of finite sequences $V=(V_1,\dots,V_n)$ of generalized fields $V_i$ whose values, upon evaluation with test functions in $\Dc(M,[0,1])$, 
are local functionals in $\Floc(M,L+\sum_{j<i} V_j)$, shall be called \textbf{interactions} w.r.t the Lagrangian $L$ and denoted by the symbol $\mathrm{Int}(M,L)$. We use the notation $V(f)=\sum_iV_i(f_i)$, $f=(f_1,\dots,f_n)\in\Dc_s(M)$ for the local functionals which approximate $V$ for $f\equiv 1$ on a sufficiently large region.
\end{definition}
In the following we always understand Lagrangians as generalized fields in the sense that they might depend on a finite sequence $f\in\Dc_s(M)$.
For example, for $V=(V_1,\dots,V_n)\in\mathrm{Int}(M,L)$, $L+V$ is evaluated with an $(n+1)$-tuple $(f_0,f_1,\ldots,f_n)\in\Dc_s(M)$, and
% of test functions satisfying
% $\supp f_k\subset (f_{k-1})^{-1}(1)$ for $1\leq k\leq n$, that is, 
$$(L+V)(f_0,\ldots,f_n)=L(f_0)+\sum_{i=1}^nV_i(f_i)\ .$$

We now turn to the morphisms. Morphisms of $\mathfrak{Loc}$, $\iota\in\mathrm{Hom}((M,L),(M',L'))$, are compositions of elementary morphisms of either of the following forms:
\begin{itemize}
\item $\iota_{\rho}$: Structure preserving embeddings, \ie smooth embeddings $\rho:M\to M'$  with $\rho_{\ast}L=L'$ which preserve the time orientation and with causally convex image $\rho(M)$. 

\noindent Here $(\rho_{\ast}L)(f)\doteq \rho_{\ast}(L(f\circ\rho))$, $f\in\Dc_s(M')$ and $\rho_{\ast}F[\phi]\doteq F[\phi\circ\rho]$,  $F\in\Floc(M,L)$.
\item $\iota_{\Phi}$: Affine field redefinitions $\Phi:\phi\mapsto \phi A+\phi_0$ with $(\phi A)_j(x)=\sum_{i=1}^n\phi_i(x)A_{ij}(x)$, $A(x)\in\mathrm{GL}(n,\RR)$,  where $A$ and $\phi_0$ are smooth functions of compact support, $M'=M$, $L'=\Phi_{\ast}L$.\footnote{More general field redefinitions could be considered \cite{KreimerYeats,MYeats20,Balduf}. We restrict ourselves here to affine field redefinitions, since they, together with their inverses, map polynomial functionals to polynomial functionals (this is important for the perturbative proof of the anomalous Master Ward Identity, see \eg Section \ref{sec:pQFT}); moreover, they map the set of Lagrangians of 2nd order in the field into itself, thus allowing analytic control.} 

\noindent Here $(\Phi_{\ast}L)(f)\doteq \Phi_{\ast}(L(f))$, $f\in\Dc_s(M)$ and $(\Phi_{\ast}F)[\phi]\doteq F[\Phi(\phi)]$, $F\in\Floc(M,L)$. 
%\red{There are two reasons why we restrict to \emph{affine} field redefinitions. First, the %field redefinitions should form a group -- for more general
%field redefinitions there may be problems with the existence of the inverse transformation. %Second, the perturbative proof of the equivalence of the unitary 
%MWI to the original MWI  (given in Sect.~\ref{}) works only for a restricted class of field %redefinitions including the affine ones.}}
%
\item $\iota_{V,+}$: Retarded interaction $M'=M$, $L'+V=L$, where $V\in\mathrm{Int}(M',L')$ with past compact support \cite{Ko13}.

\item $\iota_{V,-}$: Advanced interaction $M'=M$, $L'+V=L$, where $V\in\mathrm{Int}(M',L')$ with future compact support \cite{Ko13}.  
\end{itemize}
%triples $\mathbf{\chi}=(\chi,h,W):(M,g,V,t)\to(M',g',V',t')$ with an embedding $\chi:M\to M'$, a quadratic form $h$ on the tangent space of $M$ satisfying some restrictions discussed later and an additional compactly supported interaction $W$ such that
%\be
%\chi^*g'=g+h\ , \ V'(\phi(\chi(x))=(V+W)(\phi(x)),\ x\in M,\ \  \chi^*t'=t\ .
%\ee
%thus
%\be
%\chi^*L_{g',V'}=L_{g+h,V+W}\ .
%\ee 
\subsection{Dynamical algebras} 
The associated quantum field theory is now a functor $\mathfrak{A}$ from $\mathfrak{Loc}$ to the category $\mathfrak{C}^*$ of unital C*-algebras with unital homomorphisms as arrows.
The \emph{dynamical algebra} $\mathfrak{A}(M,L)$ is a C*-algebra freely generated by unitaries $S_{(M,L)}(F)$, $F\in\Floc(M,L)$ with $S_{(M,L)}(c)=e^{ic}1$ for constant functionals $c$, $c\in\RR$,
%local functional with compact support such that $L+A_F$ is an admissible Lagrange density, 
%\footnote{Any local functional $F$ becomes a generalized field by setting
%\be
%F(f)[\phi]=F[f\phi]\ .
%\ee}, 
modulo the following relations:
\begin{axiom}[\textbf{Causality Relation}] Let $G \in\Floc(M,L)$ and $F,H\in\Floc(M,L+A_G)$. Then
\be\label{Caus}
S_{(M,L)}(F+G+H)=S_{(M,L)}(F+G)S_{(M,L)}(G)^{-1}S_{(M,L)}(G+H)
\ee 
when $\supp F\cap J_-^{L+A_G}(\supp H)=\0$ where $J_-^{L+A_G}$ denotes the causal past with respect to the metric induced by $L+A_G$, and
\end{axiom}
\begin{axiom}[\textbf{Dynamical Relation}] For all $F\in\Floc(M,L)$ we require
\be\label{SD}
%S_{\Mcc}(F)=S_{\Mcc}(\delta L(\psi))^{-1}S_{\Mcc}(F^\psi+\delta L(\psi))=S_{\Mcc}(F^\psi+\delta L(\psi))S_{\Mcc}(\delta L(\psi))^{-1}\quad \forall \psi\in\Dc(M)\ ,
S_{(M,L)}(F)=S_{(M,L)}(F^\psi+\delta L(\psi))\ ,\quad  \psi\in\Dc(M,\RR^n)\ ,
\ee
where 
$$
F^\psi[\phi]\doteq F[\phi+\psi]\ ,\quad\delta L(\psi)\doteq L(f)^\psi-L(f)\ ,
$$
for any $f\in\Dc_s(M)$ satisfying $f \equiv 1$ on $\supp \psi$. This is the on-shell version of the Schwinger-Dyson relation in \cite{BF19}.
\end{axiom}
\begin{remark}
%\begin{enumerate}
%\item 
%We point out that, although the Causality Relation \eqref{Caus} is a defining relation for $\fA(M,L)$, it has to be 
%interpreted as ``causality'' in the dynamical spacetime $(M,L+A_G)$, to wit, it is required that $\supp F$ is later than $\supp H$
%with respect to the metric given by $L+A_G$; see \cite{BF20}. 
%
%In causal perturbation theory \cite{EG73,BF00,Due19} the factorization \eqref{Caus} is required for the $S$-matrix (describing %perturbations of the free theory) w.r.t.~the causality structure of the \emph{free} theory; actually, in
%that framework this factorization is equivalent to its particular case obtained by setting $G=0$.
%
%In the nonperturbative framework at hand, the Causality Relation \eqref{Caus} expresses this simpler causal factorization,
%\ie, 
%\be\label{eq:Caus1}
%S_{(M,L+A_G)}(F+H)=S_{(M,L+A_G)}(F)\,S_{(M,L+A_G)}(H)
%\ee
%if $\supp F\cap J_-^{L+A_G}(\supp H)=\0$, where $F,H\in\Floc(M,L+A_G)$. Namely, applying 
%$\alpha_{A_G,+}$ to this relation, we obtain precisely \eqref{Caus}.
%\item 
For the Dynamical Relation note that $\delta L(\psi)\in\Floc(M,L)$, and that with
$F\in\Floc(M,L)$ also $F^\psi\in\Floc(M,L)$ and $F^\psi+\delta L(\psi)\in\Floc(M,L)$; we wish also to point out that
$\supp\delta L(\psi)\subseteq\supp\psi$.

The ``on-shell algebra'' is distinguished from the ``off-shell algebra'' by the validity of the additional relation $S_{(M,L)}(\delta L(\psi))=1$ 
for all $\psi\in\Dc(M,\RR^n)$; in perturbation theory
this terminology agrees with the usual distinction between on-shell and off-shell time-ordered products, see \cite[Sect.~7]{BDFR21}. 
%\end{enumerate}
\end{remark}

The morphisms $\fA\iota_{\bullet}\equiv\alpha_{\bullet}$ associated to the various elementary morphisms are mono\-morphisms $\fA(M,L)\to\fA(M',L')$ which act on the generators of the algebra as
\begin{align}
%\fA\iota_{\chi}(S(F))\equiv
\alpha_{\rho}(S_{(M,L)}(F))&=S_{(M',L')}(\rho_\ast  F)\ ,\\ %\chi_\ast  F[\phi]\doteq F[\phi\circ\chi]\ ,
 \alpha_{\Phi}(S_{(M,L)}(F))&=S_{(M',L')}(\Phi_\ast  F)\ ,\\ %,\quad %\Phi_\ast  F\doteq F\circ\Phi^{-1},
\label{Bog1}
\alpha_{V,+}(S_{(M,L)}(F))&=S_{(M',L')}(V(f))^{-1}S_{(M',L')}(F+V(f))\ ,\\
\label{Bog2}
\alpha_{V,-}(S_{(M,L)}(F))&=S_{(M',L')}(F+V(f))S_{(M',L')}(V(f))^{-1}\ .
\end{align}
%\red{where, for clarity reasons, we denote the generators of $\fA(\Mcc')$ by $S'(F)$. Later we mostly omit the prime.}
In the last two equations $f\in\Dc_s(M)$
%$V_{\pm}\in\Floc(\Mcc')$ with $V_\pm[\phi]=V[\phi]$ for $\supp\phi$ contained in neighborhood of $\subset J_{\mp}(\supp F)\cap\supp V$.
%
with $f\equiv 1$ on
%sufficiently large region which is 
a neighborhood of $J_-^L(\supp F)\cap\supp V$ for \eqref{Bog1} and of  $J_+^L(\supp F)\cap\supp V$ for \eqref{Bog2}. 
Due to the Causality Relation, $\alpha_{V,\pm}(S_{(M,L)}(F))$ do not depend on the 
remaining freedom in the choice of $f=(f_1,\dots,f_n)$. Namely, \eg for the retarded case, let $f_j'\in\Dc(M,[0,1])$ with $\supp f_j'\subset (f_{j-1})^{-1}(1)$ and $f_j'\equiv 1$ on $\supp f_{j+1}$ where we put $f_0\equiv 1$ and $f_{n+1}=0$, and  let $f'$ denote the sequence $f$ with $f_j$ replaced by $f_j'$. Then $\supp(V(f)-V(f'))\cap J_-^L(\supp F)\cap\supp V=\0$ and 
\begin{equation}\label{eq:Bog}
    S_{(M',L')}(V(f')+F)=S_{(M',L')}(V(f'))S_{(M',L')}(V(f))^{-1}S_{(M',L')}(V(f)+F)\ .
\end{equation}

We point out that $\alpha_\Phi$ and $\alpha_{V,\pm}$ are even surjective, that is, they are isomorphisms. 
For the latter two, there is a simple formula for the inverse, namely:
%For later purpose we note that
$$
(\iota_{V,\pm})^{-1}=\iota_{(-V),\pm}\ ,\quad \text{hence}\quad(\alpha_{V,\pm})^{-1}=\alpha_{(-V),\pm}\ ,
$$
by using that $(-V)\in\mathrm{Int}(M,L)$.

\begin{remark}\label{rm:Caus}
We return to the Causality Relation \eqref{Caus}: although it is a defining relation for $\fA(M,L)$, it has to be 
interpreted as ``causality'' in the dynamical spacetime $(M,L+A_G)$, {\it i.e.}, it is required that $\supp F$ is later than $\supp H$
with respect to the metric given by $L+A_G$; see \cite{BF20}. 

In causal perturbation theory \cite{EG73,BF00,Due19}, the factorization \eqref{Caus} is required for the $S$-matrix (describing perturbations of the free theory) w.r.t.~the causality structure of the \emph{free} theory; actually, in
that framework this factorization property is equivalent to its special case obtained by setting $G=0$.

In the nonperturbative framework at hand, this equivalence does not hold. However,
the Causality Relation \eqref{Caus} is equivalent to the just mentioned simpler 
causal factorization in dynamical spacetimes $(M,L+A_G)$, explicitly
\be\label{eq:Caus1}
S_{(M,L+A_G)}(F+H)=S_{(M,L+A_G)}(F)\,S_{(M,L+A_G)}(H)
\ee
if $\supp F\cap J_-^{L+A_G}(\supp H)=\0$, where $F,H\in\Floc(M,L+A_G)$. Namely, applying 
$\alpha_{A_G,+}$ to \eqref{eq:Caus1}, we indeed obtain \eqref{Caus}.
\end{remark}

We check that all these maps $\alpha_{\bullet}$
preserve the defining algebraic relations. For $\alpha_{\chi}$ this is the standard situation for local covariance. 
In detail, by using $(\rho_\ast F)^\psi=\rho_\ast(F^{\psi\circ\rho})$ (with $\psi\in\Dc(M',\RR^n)$) we obtain
\be
S_{(M',L')}\bigl((\rho_\ast F)^\psi+\delta L'(\psi)\bigr)=\alpha_\rho\bigl(S_{(M,L)}\bigl(F^{\psi\circ\rho}+\delta L(\psi\circ\rho)\bigr)\bigr)
=\alpha_\rho\bigl(S_{(M,L)}(F)\bigr)=S_{(M',L')}(\rho_\ast F)\,;
\ee
and for the Causality Relation the claim follows from the equivalence
\be\label{eq:J0}
\supp F\cap J_-^{L+A_G}(\supp H)=\0\quad\Leftrightarrow\quad\supp(\rho_\ast F)\cap J_-^{\rho_\ast L+A_{\rho_\ast G}}(\supp(\rho_\ast H))=\0\ .
\ee
The definition of $\alpha_{V,\pm}$ is just the Bogoliubov formula; for the Dynamical Relation the claim relies on
\begin{align}
S_{(M',L')}\bigl(V(f)\bigr)\,\alpha_{V,+}\bigl(S_{(M,L)}\bigl(F^\psi+\delta L(\psi)\bigr)\bigr)&=S_{(M',L')}\bigl(F^\psi+\delta L(\psi)+V(f)\bigr)
\nonumber\\
&=S_{(M',L')}\bigl((F+V(f))^\psi+\delta L'(\psi)\bigr)
\end{align}
and for the Causality Relation on
\be\label{eq:J1}
J_-^{L'+A_{G+V(f)}}=J_-^{L'+V+A_G}=J_-^{L+A_G}\ .
\ee
For $\alpha_{\Phi}$ we only have to check the Dynamical Relation, since $\supp (\Phi_{\ast}F)=\supp F$ and
\be\label{eq:J2}
J_-^{\Phi_\ast L+A_{\Phi_\ast G}}=J_-^{L+A_G}\ .
\ee
%[The validity of the relations \eqref{eq:J0} and \eqref{eq:J2} is not obvious to me. \eqref{eq:J1} is no problem.\todo{proof???(MD)}]
We use the formula
% \be
% \Phi^{-1}[\phi]+\psi=A^{-1}(\phi-\phi_0+A\psi)=\Phi^{-1}[\phi+A\psi]
% \ee
\be
(\Phi_{\ast}F)^{\psi'}[\phi]=F[(\phi+\psi')A+\phi_0]=(\Phi_{\ast}F^{\psi})[\phi]
\ee
with $\psi= \psi'A$, and the corresponding formula for $L$ and $L'=\Phi_{\ast}L$, and find 
\be
\alpha_{\Phi}\bigl(S_{(M,L)}(F^{\psi}+\delta L(\psi))\bigr)=
S_{(M',L')}\bigl(\Phi_{\ast}(F^{\psi}+\delta L(\psi))\bigr)
=S_{(M',L')}\bigl((\Phi_{\ast}F)^{\psi'}+\delta L'(\psi')\bigr)\ .
\ee

%\be
%\begin{split}
%\alpha_{\Phi}\bigl(S_{(M,L)}(F^{\psi}+\delta L(\psi))\bigr)&=
%S_{(M',L')}\bigl(\Phi_{\ast}(F^{\psi}+\delta L(\psi))\bigr)\\
%&=S_{(M',L')}\bigl((\Phi_{\ast}F)^{\psi'}+\delta L'(\psi')\bigr)\ .
%\end{split}
%\ee
%\be
%=S(F\circ\Phi^{-1})=\alpha_{\Phi}(S(F));\label{eq:SD'}
%\ee
\noindent Due to the Dynamical Relation \eqref{SD} in $\fA(M,L)$, the l.h.s.~is equal to 
\be
\alpha_{\Phi}(S_{(M,L)}(F))=S_{(M',L')}(\Phi_{\ast}F)
\ee
so we see that $\alpha_\Phi$ 
maps this relation in $\fA(M,L)$ to the same relation in $\fA(M',L')$.

%\begin{example}\label{ex:dilation}
%An important example, see \eg sect.\ref{subsect:comp-anomalies}, for a combination of
%a structure preserving embedding $\rho$ with an affine field
%redefinition $\Phi$ are the \emph{dilations}. In $4$-dimensional Minkowski space $M$ the transformation $\g$ of the configuration $\phi$ reads
%\be\label{eq:dilation}
% (\g\phi)(x)=e^{\beta(x)}\,\phi(e^{\beta(x)} x)\ ,
%\ee
%where $\beta\in\Dc(M,\RR)$ is such that $x\mapsto e^{\beta(x)} x$ is a bijection of $M$. 
%Typically, $\beta(x)=c$  for all $x$ in some compact region and some constant $c>0$ and 
%$\beta$ decays suitably outside this region. 
%\end{example}
\begin{remark}[Net Structure] Given any dynamical spacetime $(M,L)$, we can restrict the functor $\fA$ to relatively compact, causally convex subregions $\Oc\subset M$ and the arrows to inclusions $\Oc_1\subset\Oc_2$. This  
subfunctor is the Haag-Kastler net associated to $(M,L)$ and will be denoted by $\fA_{(M,L)}$.
An isomorphism $\alpha:\fA(M,L)\to\fA(M',L')$ is called a \emph{net isomorphism} if it extends to an equivalence between the corresponding functors, \ie it extends to a \emph{bijective natural transformation}. In detail, for $\Oc_1\subset\Oc_2$ as
above, let $\iota_{(M,L)}^{\Oc_2,\Oc_1}:\fA_{(M,L)}(\Oc_1)\hookrightarrow\fA_{(M,L)}(\Oc_2)$ be the pertinent embedding of algebras.
That $\alpha$ extends to a bijective natural transformation means that there exists a unique bijection $\tilde\alpha:M\to M'$
fulfilling $\fA_{(M',L')}\bigl(\tilde\alpha(\Oc)\bigr)=\alpha\bigl(\fA_{(M,L)}(\Oc)\bigr)$ for all causally convex $\Oc\subset M$ 
and that
\be
\alpha\circ\iota_{(M,L)}^{\Oc_2,\Oc_1}=\iota_{(M',L')}^{\tilde\alpha(\Oc_2),\tilde\alpha(\Oc_1)}\circ\alpha \ .
\ee
\end{remark}
%%%%%%%%%%%%%%%%%%%%%%%%%%%%%%%%%%%%%%
\section{Algebraic adiabatic limit}
A problem with Bogoliubov's formulae (\ref{Bog1},\ref{Bog2}) is that they do not apply for an interaction without support restriction. As remarked earlier \cite{IS78,BF00} this does not matter for the algebraic structure of the algebra of local observables for which it is sufficient to fix the interaction in a sufficiently large, however bounded, 
region (algebraic adiabatic limit), and it was also observed that the algebra of the interacting theory can be identified with the algebra of the free theory \cite{BF19}. The formalism described above allows an elegant formula for this identification 
by writing the interaction as a sum of a past compact and a future compact part and composing the arrows for retarded and advanced interactions. Namely we find:
%let $L'=L-V$ and $\Mcc'=(M,L',t)$. 
%We then split $V=V_++V_-$ with $\supp V_+$ and $\supp V_-$, respectively past and future compact. 
%We let $L''\doteq L-V_-=L'+V_+$ and $\Mcc''=(M,L'',t)$ and consider the arrow 
%\be\label{eq:i-V+V-}
%\iota_{V_+,V_-}\doteq \iota_{V_+,+}\circ\iota_{V_-,-}:\Mcc\to\Mcc''\to\Mcc'\ %.\ee

\begin{proposition}
Let $V_+\in\mathrm{Int}(M,L)$ have past compact support and $V_-\in\mathrm{Int}(M,L+V_+)$ future compact support (for the respective causal structures).  Given the isomorphisms $\alpha_{V_-,-}:\fA(M,L+V)\to\fA(M,L+V_+)$ and $\alpha_{V_+,+}:\fA(M,L+V_+)\to\fA(M,L)$, with $V=V_++V_-$, then
\begin{equation}\label{eq:isochangeint}
    \alpha_{V_+,V_-}\doteq \alpha_{V_+,+}\circ\alpha_{V_-,-}
\end{equation}
defines an isomorphism from $\fA(M,L+V)$ to $\fA(M,L)$. 
It acts 
on the generators of the algebra as 
\be
\begin{split}
    &\alpha_{V_+,V_-}(S_{(M,L+V)}(F))\\
&=S_{(M,L)}(V_+(f))^{-1}S_{(M,L)}(F+V_+(f)+V_-(g))S_{(M,L)}(V_-(g)+V_+(f))^{-1}S_{(M,L)}(V_+(f))\  \label{eq:alpha-V}
\end{split}
\ee
where $g,f\in\Dc_s(M)$ with $g\equiv 1$ on a neighborhood of $J^{L+V}_+(\supp F)\cap\supp V_-$ and 
$f\equiv1$ on a neighborhood of $J^{L+V_+}_-\bigl(\supp F\cup\supp V_-(g)\bigr)\cap\supp V_+$.
The explicit expression \eqref{eq:alpha-V} for $\alpha_{V_+,V_-}(S(F))$ does not depend on the choices of $f$ and $g$.
\end{proposition}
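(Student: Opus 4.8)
The plan is to establish, in order, that $\alpha_{V_+,V_-}$ is an isomorphism, that it acts on generators by \eqref{eq:alpha-V}, and that the right-hand side of \eqref{eq:alpha-V} does not depend on the choices of $f$ and $g$. The first point is immediate: under the stated hypotheses $\iota_{V_-,-}\colon(M,L+V)\to(M,L+V_+)$ and $\iota_{V_+,+}\colon(M,L+V_+)\to(M,L)$ are (composable) morphisms of $\mathfrak{Loc}$ — this is where the support conditions on $V_\pm$ relative to the appropriate causal structures and the identity $(L+V_+)+V_-=L+V$ enter — and each of $\alpha_{V_+,+}$, $\alpha_{V_-,-}$ is an isomorphism (with inverses $\alpha_{(-V_+),+}$, $\alpha_{(-V_-),-}$, as recorded above); hence the composition $\alpha_{V_+,V_-}=\alpha_{V_+,+}\circ\alpha_{V_-,-}$ is an isomorphism $\fA(M,L+V)\to\fA(M,L)$.

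For the formula I would compute $\alpha_{V_+,+}\bigl(\alpha_{V_-,-}(S_{(M,L+V)}(F))\bigr)$ directly from the Bogoliubov relations. Applying \eqref{Bog2} with $L+V,\,L+V_+,\,V_-$ in the roles of $L,\,L',\,V$ (legitimate since $(L+V_+)+V_-=L+V$) gives
\[
\alpha_{V_-,-}\bigl(S_{(M,L+V)}(F)\bigr)=S_{(M,L+V_+)}\bigl(F+V_-(g)\bigr)\,S_{(M,L+V_+)}\bigl(V_-(g)\bigr)^{-1},
\]
with $g\in\Dc_s(M)$, $g\equiv1$ on a neighbourhood of $J^{L+V}_+(\supp F)\cap\supp V_-$. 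I then apply the homomorphism $\alpha_{V_+,+}$ to both factors via \eqref{Bog1} with $L+V_+,\,L,\,V_+$ in the roles of $L,\,L',\,V$; the key observation is that a single $f\in\Dc_s(M)$ with $f\equiv1$ on a neighbourhood of $J^{L+V_+}_-\bigl(\supp F\cup\supp V_-(g)\bigr)\cap\supp V_+$ is admissible in \eqref{Bog1} simultaneously for $S_{(M,L+V_+)}(F+V_-(g))$ and for $S_{(M,L+V_+)}(V_-(g))$, since that set contains both $J^{L+V_+}_-(\supp(F+V_-(g)))\cap\supp V_+$ and $J^{L+V_+}_-(\supp V_-(g))\cap\supp V_+$. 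Carrying this out, and using the homomorphism property to pass the inverse through $\alpha_{V_+,+}$, the product of the two resulting expressions is exactly \eqref{eq:alpha-V}, with the stated conditions on $f$ and $g$.

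Independence of the choices then follows by reading \eqref{eq:alpha-V} back as this composition: grouping its four factors as $\bigl(S_{(M,L)}(V_+(f))^{-1}S_{(M,L)}(F+V_+(f)+V_-(g))\bigr)\cdot\bigl(S_{(M,L)}(V_-(g)+V_+(f))^{-1}S_{(M,L)}(V_+(f))\bigr)$ exhibits the first group as $\alpha_{V_+,+}(S_{(M,L+V_+)}(F+V_-(g)))$ and the second as $\alpha_{V_+,+}(S_{(M,L+V_+)}(V_-(g))^{-1})$, so the whole equals $\alpha_{V_+,+}\bigl(\alpha_{V_-,-}(S_{(M,L+V)}(F))\bigr)$; since $\alpha_{V_-,-}$ does not depend on $g$ and $\alpha_{V_+,+}$ does not depend on $f$ — each by the Causality-Relation argument already used to prove well-definedness of the Bogoliubov morphisms, cf.\ \eqref{eq:Bog} — neither does the composition.

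The step requiring the most care, which I would flag as the main obstacle, is the support bookkeeping in the second paragraph: verifying that the admissibility conditions of the two successive Bogoliubov formulae are mutually compatible, that one may take a common $f$ subject precisely to the stated neighbourhood condition, and that all functionals appearing ($F+V_-(g)$ and $F+V_+(f)+V_-(g)$) lie in the $\Floc$-sets required for the corresponding generators to be defined — the latter being already subsumed in the well-definedness of $\alpha_{V_\pm,\pm}$.
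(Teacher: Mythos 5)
Your proposal is correct and follows essentially the same route as the paper: the formula is obtained by inserting the Bogoliubov definitions \eqref{Bog1} and \eqref{Bog2}, and the independence of the choices of $f$ and $g$ is deduced from the already-established well-definedness of the retarded and advanced isomorphisms. The paper's proof adds only an instructive direct verification of that independence via causal factorization (changing $f$ for fixed $g$, then $g$, then both through an intermediate $f''$), which it explicitly notes is not logically necessary.
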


%[Typos: $g\vert_{J^{L+V}_+(\supp F)\cap\supp V_-}=1$ and $f\vert_{J^{L+V_+}_-(\supp F\cup\supp V_-(g))\cap\supp V_+}=1$ because
%$\supp V_-(g)$ may be not contained in $\supp (F+V_-(g))$\todo{??? (MD)}]
%Mapping the arrow $\iota_{V_+,V_-}$ \eqref{eq:i-V+V-} (which describes the change of interaction
%$L\to L'=L-V$) by the to functor $\fA$, the resulting isomorphism, that is,
%\be
%\alpha_{V_+,V_-}\doteq \alpha_{V_+,+}\circ\alpha_{V_-,-}\,:\,\fA(\Mcc)\to\fA(\Mcc')\ ,
%\ee
%on the generators of the algebra is 
%Inserting the definitions \eqref{Bog1}, \eqref{Bog2} of the isomorphisms we arrive at the formula 
%\be\nonumber
%\alpha_{V_+,V_-}(S_{\Mcc}(F))\ee\be=S_{\Mcc'}(V_+(f))^{-1}S_{\Mcc'}(F+V_+(f)+V_-(g))S_{\Mcc'}(V_-(g)+V_+(f))^{-1}S_{\Mcc'}(V_+(f)))\ ,\label{eq:alpha-V}
%\ee
% \be
% \begin{split}
%     &\alpha_{V_+,V_-}(S_{(M,L+V)}(F))\\
% &=S_{(M,L)}(V_+(f))^{-1}S_{(M,L)}(F+V_+(f)+V_-(g))S_{(M,L)}(V_-(g)+V_+(f))^{-1}S_{(M,L)}(V_+(f))\  \label{eq:alpha-V}
% \end{split}
% \ee
% where $g,f\in\Dc_s(M)$ with $g\equiv 1$ on a neighborhood of $J^L_+(\supp F)\cap\supp V_-$ and 
% $f\equiv1$ on a neighborhood of $J^{L+V_+}_-(\supp (F+V_-(g)))\cap\supp V_+$.
% The explicit expression \eqref{eq:alpha-V} for $\alpha_{V_+,V_-}(S(F))$ does not depend on the choices of $f$ and $g$.
% In the general case the isomorphism is obtained by decomposing $V=\sum V_i$ such that the splitting is possible for every entry, and by then composing the resulting isomorphisms.  
% \end{theorem}

\begin{proof}
We shall use the abbreviation $S=S_{(M,L)}$ for simplicity.
The formula \eqref{eq:alpha-V} is obtained straightforwardly by inserting the definitions \eqref{Bog1} and \eqref{Bog2}
of $\alpha_{V_+,+}$ and $\alpha_{V_-,-}$, respectively. Also the independence of the choice of $f$ and $g$ under the given conditions derives from the corresponding facts about the retarded and advanced isomorphisms.  
Nevertheless it is instructive to check the independence directly. 
%$\alpha_{V_+,V_-}(S(F))$ 
%\eqref{eq:alpha-V} does not depend on the choices of $f$ and $g$. 

By iteration of the argument given before equation \eqref{eq:Bog}, let $f'$ be the appropriate change for $f$ satisfying the conditions above for fixed $g$. Then we have that $\supp(V_+(f')-V_+(f))$ does not intersect $J^{L+V_+}_-\bigl(\supp F\cup\supp V_-(g)\bigr)$, 
and by causal factorization we get
\be
S(F+V_+(f') +V_-(g))=S(V_+(f'))S(V_+(f))^{-1}S(F+V_+(f) +V_-(g))
\ee
and
\be
S(V_+(f') +V_-(g))^{-1}=S(V_+(f) +V_-(g))^{-1}S(V_+(f))S(V_+(f'))^{-1}\ ,
\ee
hence
\be
    S(V_+(f'))^{-1}S(F+V_+(f') +V_-(g))S(V_+(f')+V_-(g))^{-1}S(V_+(f'))
    =\alpha_{V_+,V_-}(S_{(M,L+V)}(F))\ .
\ee
%\be
%\begin{split}
%    S(&V_+(f'))^{-1}S(F+V_+(f') +V_-(g))S(V_+(f')+V_-(g))^{-1}S(V_+(f'))\\
%    &=\alpha_{V_+,V_-}(S_{(M,L+V)}(F))\ .
%    \end{split}
%\ee
Let now $g'$ be another choice for $g$ satisfying $g'\equiv1$ on a neighborhood of $J^{L+V}_+(\supp F)\cap V_-$. 
Then $\supp(V_-(g')-V_-(g))$ does not intersect $J^{L+V}_+(\supp F)$ and we obtain
\be
%\begin{split}
 S(F+V_+(f) +V_-(g'))S(V_+(f) +V_-(g'))^{-1}=S(F+V_+(f) +V_-(g))S(V_+(f) +V_-(g))^{-1}\ ,   
%\end{split}
\ee
thus again $\alpha_{V_+,V_-}(S_{(M,L+V)}(F))$ does not change.

To prove the joint independence of the choices we use an intermediate choice $f''$ for $f$ which satisfies the conditions above for both $g$ and $g'$. Then we first replace $f$ by $f''$, then $g$ by $g'$ and finally $f''$ by $f'$.
\end{proof}

The isomorphism $\alpha_{V_+,V_-}$ which embeds the original algebra (with Lagrangian $L+V$) into the algebra with Lagrangian $L$ depends, however, on the split $V=V_++V_-$.

This isomorphism changes the association of local subalgebras. But on the intersection $\supp V_+\cap\supp V_-$ the local subalgebras are only slightly deformed, as can be seen from the
conditions on the test functions $f$ and $g$ in \eqref{eq:alpha-V}. In particular, if $N$ is a globally hyperbolic neighbourhood of $\supp V_+\cap\supp V_-$, then
\begin{equation}\label{eq:interactionchange}
    \alpha_{V_+,V_-}(\fA(N,(L+V)\!\restriction_N))=\fA(N,L\!\restriction_N)
\end{equation}
where $L\!\restriction_{N}$ denotes the restriction of $f\mapsto L(f)$ to $\Dc_s(N)$. (To simplify the notation, this restriction is understood without mentioning
at some places.)

This observation allows an easy proof of the validity of a weak form of the time slice axiom. 
\begin{definition}[\textbf{Time slice in representation}]
We say that $\fA_{(M,L)}$ satisfies, in the representation $\pi$, the time slice axiom  for some Cauchy surface $\Sigma$ of $M$ whenever for any causally convex and globally hyperbolic neighbourhood $N$ of $\Sigma$ we have
\be\label{eq:timesliceaxiominrep}
\pi(\fA(N,L\!\restriction_{N}))=\pi(\fA(M,L))\ .
\ee 
\end{definition}
 Then it holds that 
\begin{theorem}\label{thm2} Let $\fA_{(M,L)}$ satisfy, in the representation $\pi$, the time slice axiom for some Cauchy surface $\Sigma$ which is also a Cauchy surface for the metric associated to $L+V$, and  let $\alpha\doteq\alpha_{V_+,V_-}: \fA(M,L+V)\to\fA(M,L)$  denote the isomorphism in \eqref{eq:isochangeint} for some decomposition $V=V_++V_-$. Then $\fA_{(M,L+V)}$  satisfies, in the representation $\pi\circ\alpha$, the time slice axiom  for  $\Sigma$.
%
%and let $N$ be any globally hyperbolic neighbourhood of Cauchy surfaces of %$M$ for which $\supp V_+\cap\supp V_-\subset N$. Then $\fA(\Mcc)$  satisfies %the time slice axiom in the representation $\pi\circ\alpha$.
\end{theorem}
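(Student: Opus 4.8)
The plan is to reduce the claim, via the identification of local algebras under a change of interaction in \eqref{eq:interactionchange}, to the hypothesis that $\fA_{(M,L)}$ satisfies the time slice axiom in $\pi$. Since $\alpha=\alpha_{V_+,V_-}$ is an isomorphism of $\fA(M,L+V)$ onto $\fA(M,L)$, one has $(\pi\circ\alpha)(\fA(M,L+V))=\pi(\fA(M,L))$, and for every causally convex globally hyperbolic neighbourhood $N$ of $\Sigma$ (with respect to the metric associated with $L+V$) the inclusion $\fA(N,(L+V)\!\restriction_{N})\subseteq\fA(M,L+V)$ gives $(\pi\circ\alpha)(\fA(N,(L+V)\!\restriction_{N}))\subseteq\pi(\fA(M,L))$. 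So the whole content is the reverse inclusion $\pi(\fA(M,L))\subseteq(\pi\circ\alpha)(\fA(N,(L+V)\!\restriction_{N}))$.

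To obtain it, I would first replace $N$ by a smaller neighbourhood $N_0\subseteq N$ of $\Sigma$ that is causally convex and globally hyperbolic for \emph{both} the metric of $L$ and that of $L+V$; this is possible since $\Sigma$ is assumed to be a Cauchy surface for both metrics, so that a sufficiently thin common slab about $\Sigma$ inside $N$ exists, if necessary after interpolating the two metrics as in Appendix~\ref{sec:interpolatingmetrics}. Next I would choose an auxiliary decomposition $V=W_++W_-$, with $W_+$ of past compact support in $(M,L)$ and $W_-$ of future compact support in $(M,L+W_+)$, such that $\supp W_+\cap\supp W_-\subseteq N_0$: concretely by setting $W_\pm=\chi_\pm V$, where $\chi_++\chi_-=1$ and $\chi_+$ equals $1$ to the future of a Cauchy surface contained in $N_0$ strictly to the future of $\Sigma$ and $0$ to the past of a Cauchy surface contained in $N_0$ strictly to the past of $\Sigma$; the control of the associated metrics is again provided by the interpolation construction. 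Then \eqref{eq:interactionchange}, applied with $N_0$ as a globally hyperbolic neighbourhood of $\supp W_+\cap\supp W_-$, yields $\alpha_{W_+,W_-}(\fA(N_0,(L+V)\!\restriction_{N_0}))=\fA(N_0,L\!\restriction_{N_0})$.

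It then remains to pass back from the auxiliary split to the one defining $\alpha$. Two decompositions of the same total interaction $V$ induce isomorphisms $\fA(M,L+V)\to\fA(M,L)$ which, restricted to the slab algebra $\fA(N_0,(L+V)\!\restriction_{N_0})$, agree up to conjugation with a unitary $u\in\fA(M,L)$ -- a routine consequence of the Bogoliubov structure, or of the explicit formula \eqref{eq:alpha-V} together with the Causality Relation \eqref{Caus}, which show that changing the split amounts to inserting unitaries of $\fA(M,L)$ built from the interaction in the region swept between the two splitting surfaces. Hence $\alpha(\fA(N_0,(L+V)\!\restriction_{N_0}))=u\,\fA(N_0,L\!\restriction_{N_0})\,u^{-1}$ for some unitary $u\in\fA(M,L)$. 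Applying $\pi$, using the time slice axiom for $\fA_{(M,L)}$ in $\pi$ at the causally convex globally hyperbolic neighbourhood $N_0$ of $\Sigma$, and using that $\pi(u)$ belongs to the algebra $\pi(\fA(M,L))$, we get $(\pi\circ\alpha)(\fA(N_0,(L+V)\!\restriction_{N_0}))=\pi(u)\,\pi(\fA(M,L))\,\pi(u)^{-1}=\pi(\fA(M,L))$. As $N_0\subseteq N$ is causally convex, $\fA(N_0,(L+V)\!\restriction_{N_0})\subseteq\fA(N,(L+V)\!\restriction_{N})$, and the required inclusion follows.

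The step I expect to be the main obstacle is the geometric one: producing a single causally convex globally hyperbolic slab $N_0\subseteq N$ adapted simultaneously to the two (in general not conformally related) metrics, and carrying out the split $V=W_++W_-$ while keeping $W_+\in\mathrm{Int}(M,L)$ and $W_-\in\mathrm{Int}(M,L+W_+)$ together with the prescribed support and global hyperbolicity properties -- exactly the kind of difficulty for which the interpolating-metrics analysis in the Appendix is designed. By comparison, the algebraic bookkeeping (split independence up to an inner automorphism and the transfer of the time slice property along $\pi\circ\alpha$) is routine.
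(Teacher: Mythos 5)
Your proposal is correct and follows essentially the same route as the paper's proof: choose an auxiliary split of $V$ whose supports overlap only inside the given slab, invoke \eqref{eq:interactionchange}, use that two splits of the same interaction yield isomorphisms differing by an inner automorphism, and then apply the time slice axiom for $\fA_{(M,L)}$ in $\pi$. Your extra step of shrinking $N$ to a slab $N_0$ adapted to both metrics is a reasonable refinement of a point the paper leaves implicit, and the inner-automorphism claim is asserted at the same level of detail in both arguments.
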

\begin{proof} Let $N$ be a neighborhood of $\Sigma$. Choose a decomposition $V=V_+'+V_-'$ with 
\be\supp V_+'\cap\supp V_-'\subset N\ .\ee 
Then the induced isomorphism $\alpha'\doteq\alpha_{V_+',V_-'}$ differs from $\alpha$ by an inner automorphism 
$\mathrm{Ad}(U)(\bullet)\doteq U\bullet U^{-1}$, $\alpha=\alpha'\circ\mathrm{Ad}(U)$, with a unitary $U\in\fA(M,L+V)$.  
From the above discussion the following chain of identities holds true
\begin{equation}\label{eq:time-slice}
\begin{split}
    \pi\circ\alpha(\fA(N,L+V))&=\pi\circ\alpha'\circ\mathrm{Ad}(U)(\fA(N,L+V))\,\\
    &=\,\mathrm{Ad}(\pi\circ\alpha'(U))\circ\pi\circ\alpha'(\fA(N,L+V))\\
    &\stackrel{\mathclap{\eqref{eq:interactionchange}}}{=}\, \mathrm{Ad}(\pi\circ\alpha'(U))\circ\pi(\fA(N,L))\\
    &\stackrel{\mathclap{\eqref{eq:timesliceaxiominrep}}}{=}\,\mathrm{Ad}(\pi\circ\alpha'(U))\circ\pi(\fA(M,L))\\
    &\stackrel{\mathclap{\eqref{eq:isochangeint}}}{=}\, \mathrm{Ad}(\pi\circ\alpha'(U))\circ\pi\circ\alpha'(\fA(M,L+V))\\
    &=\,\pi\circ\alpha'(\mathrm{Ad}(U)(\fA(M,L+V)))\\
    &=\,\pi\circ\alpha(\fA(M,L+V))\ ,
\end{split}
\end{equation}
by which the theorem is proven.
\end{proof}

\begin{remark}
For the following particular case Theorem \ref{thm2} has already been proven in \cite{CF09}.
Looking at perturbation theory in the Fock representation,
let $L_0$ be the free Lagrangian (which may include a mass term). Then $\fA(M,L_0)$ is the algebra
consisting of Wick products of free fields; it is generated by the $S$-matrices 
$S_{(M,L_0)}(F), \, F\in\Floc(M,L_0)$, where $S_{(M,L_0)}(F)$ is the generating functional of the time-ordered products of the interaction $F$, fixed by suitable renormalization conditions. 
The time slice axiom holds in $\fA_{(M,L_0)}$, as shown in the first part of \cite{CF09}.

Given some Cauchy surface $\Sigma$ of $(M,L_0)$, the interacting fields in the future of $\Sigma$ depend only on the interaction 
in the future of $\Sigma$, up to unitary equivalence (see the algebraic adiabatic limit in \cite{BF00}).
We consider therefore an interaction $V\in\mathrm{Int}(M,L_0)$  with past compact support, 
hence $\alpha_{V,+}$ 
\eqref{Bog1} is an isomorphism from the interacting algebra $\fA(M,L_0+V)$ into the free algebra $\fA(M,L_0)$. 
For $G\in\Floc(M,L_0+V)$, we introduce 
\be\label{eq:intfield}
G_V\doteq\frac{d}{d\lambda}\Big\vert_{\lambda=0}\alpha_{V,+}(S_{(M,L_0+V)}(\lambda G))\in \fA(M,L_0);
\ee
$G_V$ is the interacting field corresponding to $G$ ({\it i.e.},~it agrees with $G$ for $V=0$), expressed in terms of the free theory. 
The relative $S$-matrix
$$
S_V(G)\doteq\alpha_{V,+}(S_{(M,L_0+V)}(G))=S_{(M,L_0)}(V(f))^{-1}\,S_{(M,L_0)}(G+V(f))\in \fA(M,L_0)\ ,
$$
(where still $G\in\Floc(M,L_0+V)$)
is the generating functional of the time-ordered products of the interacting fields $G_V$. %\eqref{eq:intfield}. 

By means of the Causality Relation \eqref{Caus}
for $S_V$, it is proven in the same reference -- in the second step -- that the validity of the time slice axiom in $\fA(M,L_0)$ implies 
that this axiom holds also for the net of algebras
$$
M\supset\Oc\mapsto\fA_V(\Oc)\doteq\bigl[\{S_V(G)\,\big\vert\,G\in\Floc(\Oc,L_0+V)\}\bigr] 
$$
(where $[\cdots]$ means the algebra generated by the elements of the indicated set). Actually, the proof of this second step
in \cite{CF09} is not limited to perturbation theory and agrees essentially with the proof of Theorem \ref{thm2} given above. 
Finally, the time slice axiom holds then also for the net 
$$
\Oc\mapsto\alpha_{V,+}^{-1}\bigl(\fA_V(\Oc)\bigr)=\fA(\Oc,L_0+V)=\bigl[\{S_{(M,L_0+V)}(G)\,\big\vert\,G\in\Floc(\Oc,L_0+V)\}\bigr]\ ,
$$ 
\ie, in $\fA_{(M,L_0+V)}$.
\end{remark}
%\begin{remark}
%If we consider for $\fA$ the algebra before completion, %then we may use as representation $\pi$ the faithful %representation of $\fA$ into formal power series. %Hence, the result of \cite{CF09} follows.
%\end{remark}
%The morphisms are pairs $\mathfrak{A}\iota=(\alpha_{\mathrm{ret}},\alpha_{\mathrm{adv}})$ with
%\be
%\alpha_{\mathrm{ret}}(S(F))=S(-\chi_*L')^{-1}S(\chi_*(F-L'))\ ,\ \alpha_{\mathrm{adv}}(S(F))=S(\chi_*(F-L'))S(-\chi_*L')^{-1}
%\ee
%where $L'=\int(L_{g+h,V+W}-L_{g,V})$.
Due to possible changes of the causal structure by kinetic terms in the interaction, the splitting into the past and future compact part is not always possible.
The isomorphy between algebras with different Lagrangians, however,  holds in general.
\begin{theorem}\label{thm:adiabatic}
For every interaction $V\in\mathrm{Int}(M,L)$ the algebras $\fA(M,L+V)$ and $\fA(M,L)$ are isomorphic.
\end{theorem}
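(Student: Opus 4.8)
The plan is to build up the interaction $V$ on top of $L$ in finitely many steps, each of which is an isomorphism of dynamical algebras obtained from one of the elementary morphisms $\iota_{V,+}$, $\iota_{V,-}$ (recall that $\alpha_{V,\pm}$ are always isomorphisms, with inverse $\alpha_{(-V),\pm}$), so as to reduce to the situation already treated by the preceding Proposition --- a splitting $V=V_++V_-$ into a past-compact and a future-compact part. The only feature of the general case not covered there is that, when $V$ contains kinetic (quadratic) terms, a partial sum $L+(\text{part of }V)$ may carry a metric for which $M$ is no longer globally hyperbolic, so that the intermediate dynamical spacetimes, and with them the morphisms $\iota_{\bullet,\pm}$, are a priori unavailable; controlling this change of causal structure is exactly what the construction of Appendix~\ref{sec:interpolatingmetrics} achieves.

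First I would reduce to the case of a single generalized field. Writing $V=(V_1,\dots,V_n)$, the truncation $V'\doteq(V_1,\dots,V_{n-1})$ lies in $\mathrm{Int}(M,L)$, and $V_n$ --- evaluated on admissible test sequences --- takes values in $\Floc(M,L+V')$, hence is a one-component interaction over $(M,\tilde L)$ with $\tilde L\doteq L+V'$. Thus $\fA(M,L+V)=\fA(M,\tilde L+V_n)$, and by induction on $n$ it suffices to prove: for any dynamical spacetime $(M,\tilde L)$ and any $W$ with $(W)\in\mathrm{Int}(M,\tilde L)$, one has $\fA(M,\tilde L+W)\cong\fA(M,\tilde L)$. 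For this I would separate $W$ into its non-kinetic part $W_0$ (of the form $h_0+\langle h_1,d\phi\rangle$, which leaves the metric unchanged) and its kinetic contribution, and rewrite the latter, as in \eqref{eq:partition(F)}, through the auxiliary metrics $g_0=g_{\tilde L},g_1,\dots,g_4=g_{\tilde L+W}$ of Appendix~\ref{sec:interpolatingmetrics} --- all of whose pointwise convex combinations are globally hyperbolic --- as a sum of increments $A_{F_{g_i,g_{i-1}}}$, so that $\tilde L+W=\tilde L+W_0+\sum_{i=1}^{4}A_{F_{g_i,g_{i-1}}}$. Each increment is supported where the corresponding metric difference is; where this region is compact the increment is in particular past-compact, and where it is not I would cut it along a Cauchy time function into a past-compact and a future-compact piece. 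One then has a finite chain of interactions interpolating from $\tilde L$ to $\tilde L+W$ in which every partial sum is a genuine dynamical spacetime and every increment is past- or future-compact for the pertinent causal structure, so that each step is an isomorphism $\alpha_{\bullet,\pm}$; finally the non-kinetic remainder $W_0$ is split across a Cauchy time function of $(M,g_{\tilde L})$ into $W_{0,+}+W_{0,-}$ and dealt with by the preceding Proposition (the isomorphism $\alpha_{V_+,V_-}$ of \eqref{eq:isochangeint}). Composing all these isomorphisms gives $\fA(M,\tilde L+W)\cong\fA(M,\tilde L)$, whence the theorem.

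The hard part is the middle step: showing that the metric of $L+V$ can be joined to that of $L$ through a finite chain of globally hyperbolic metrics, consecutive ones differing by a Lorentzian perturbation with past- or future-compact support, so that all the intermediate dynamical algebras exist and all the morphisms $\iota_{\bullet,\pm}$ are defined and compose. This is precisely the content of Appendix~\ref{sec:interpolatingmetrics}; granted it, the remaining work --- verifying the support conditions so that the $\alpha_{\bullet,\pm}$ are well defined and chain together, and invoking the preceding Proposition for the part of the interaction that does not affect the metric --- is routine, and parallels the proof of Theorem~\ref{thm2}.
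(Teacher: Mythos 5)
Your proposal is correct and follows essentially the same route as the paper: decompose $V$ via the interpolating metrics of Appendix~\ref{sec:interpolatingmetrics} into finitely many increments so that all intermediate spacetimes are globally hyperbolic (with consecutive ones admitting a joint foliation by Cauchy surfaces), split each increment into a past-compact and a future-compact part, and compose the resulting isomorphisms $\alpha_{\bullet,\pm}$. The paper simply treats all five pieces (including the non-kinetic one) uniformly with the same past/future split rather than routing $W_0$ separately through the earlier Proposition, but that is a purely organizational difference.
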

\begin{proof}
By using the construction of Appendix B we decompose $V=\sum_{i=1}^5V_i$ such that $V_i\in\mathrm{Int}(M,L+\sum_{j<i}V_j)$ and such that the spacetimes corresponding to $L+\sum_{j<i}V_j$ and $L+\sum_{j\le i}V_j$ 
possess a joint foliation by Cauchy surfaces. For each $i=1,\dots,5$ we then decompose  $V_i=V_i^++V_i^-$ with $V_i^+\in\mathrm{Int}(M,L+\sum_{j<i}V_j)$ such that $\supp V_i^+$
is past compact (with respect to $L+\sum_{j<i}V_j$) and with $V_i^-\in\mathrm{Int}(M,L+\sum_{j<i}V_j+V_i^+)$ such that
$\supp V_i^-$ is future compact (with respect to $L+\sum_{j<i}V_j+V_i^+$). Then
\begin{equation}
    \alpha\doteq\prod_{i=1}^5\alpha_{V_i^+,+}\alpha_{V_i^-,-}
\end{equation}
is an isomorphism from $\fA(M,L+V)$ to $\fA(M,L)$.
\end{proof}
\begin{remark} The reader should not feel worried by this result that entails that the \emph{global} algebras of free and interacting theories are isomorphic. What matters is the locality structure, \ie the \emph{net structure}, and the isomorphism does not extend to a \emph{net isomorphism}, 
% Indeed, in spite of the fact that the single automorphisms $\alpha_{V,\pm}$ are net isomorphisms, however their composition, depending on the split $V=V_++V_-$, changes the association of the sub-algebras, as previously emphasized. 
\ie the sub-algebras related to bounded regions are not mapped bijectively to each other, thus the theories are not equivalent.
\end{remark}
%%%%%%%%%%%%%%%%%%%%%%%%%%%%%%%%%%%%%%%%%%%%%%%%%%%%%%%%%%%%%%%%%%%%%

\section{Renormalization group}\label{sec:RG}

In perturbation theory, the axiom Dynamical Relation is equivalent to the renormalization condition ``Field Equation''
\cite[Sect.~7]{BDFR21}. But in contrast to the classical theory, in perturbative quantum field theory the field equation alone does not in general completely fix the dynamical evolution. There remains some freedom which has to be fixed by additional conditions (``renormalization conditions''). They can be classified in terms of the St\"uckelberg-Petermann renormalization group,
see \cite{StuPet53,PopineauS16,DF04,BDF09}.

\begin{example} We illustrate this problem on a simple example.
We assume that the free massive scalar field on $4$-dimensional Minkowski space is perturbed by an interaction $\phi^2(f)$. 
We use the axioms for the time-ordered (or retarded) products
given in \cite{Due19}. Then,
the Field Equation determines uniquely the time evolution of the interacting basic field (with derivatives), 
\eg $(\square\phi)_{\phi^2(f)}(x)$, because only tree diagrams contribute. But in general this does not hold for the
interacting \emph{composite} fields, e.g., $(\phi\square\phi)_{\phi^2(f)}(x)$, because of the contributions of divergent loop diagrams whose renormalization is unique only up to a real constant.
%which may have some freedom of renormalization governed by the St\"uckelberg-Petermann %renormalization group,
%see \cite{DF04,BDF09}. 
In this example, such indeterminacy can be removed by requiring the Master Ward Identity (MWI)
as an additional renormalization condition:%
\footnote{Note that there is no freedom of renormalization for the composite field $\phi^2_{\phi^2(f)}(x)$. Moreover, in classical 
field theory \eqref{eq:MWI:phi} is satisfied, as it is obtained by the pointwise product of $\phi^{\mathrm{class}}_{\phi^2(f)}(x)$ with
the Field Equation $(\square+m^2)\phi^{\mathrm{class}}_{\phi^2(f)}(x)=-2f(x)\,\phi^{\mathrm{class}}_{\phi^2(f)}(x)$.}
\begin{equation}\label{eq:MWI:phi}
    (\phi\square\phi)_{\phi^2(f)}(x)=-(m^2+2f(x))\,\phi^2_{\phi^2(f)}(x)\ ,
\end{equation}
which has the same form as the identity in the classical theory. 
\end{example}

In more complicated cases, the classical identities cannot always be preserved, and this is how the \emph{anomalies} arise.

Similarly to the perturbative case, also in the nonperturbative framework of this paper,
the axioms given so far do not yet fix completely the dynamics, that is, the time slice axiom does not hold for the algebra they define.
%This phenomenon is well known from perturbation theory
%where the definition of time ordered products of local functionals allows some freedom which is governed 
%by the renormalization group (in the sense of Stückelberg-Petermann). 
%A similar situation is present in our approach
There exists a large group of isomorphisms, which prevent the existence of a dynamical law. Actually, the axiom Dynamical Relation fixes the dynamics (for the free Lagrangian) only for the subalgebra generated by $S$-matrices $S(F)$ with affine functionals $F$. For a given Lagrangian $L$ we consider the following group. 

\begin{definition}\label{def1} 
The \textbf{renormalization group} $\mathcal{R}(M,L)$ for a Lagrangian $L=L_0+V$, with $V\in\mathrm{Int}(M,L_0)$,
is the set of all bijections $Z$ of $\Floc(M,L)$ which satisfy the following conditions:
\begin{enumerate}
    \item[$(i)$] (Compact support) The support of $Z$
    %, defined as the complement of the set
    \begin{align}\label{eq:suppZ}
&\supp Z\doteq \{x\in M\,|\, \text{ for every neighborhood }U\ni x\,\text{ there exist}\ G\in\Floc(M,L),\nonumber \\
              &\,F\in\Floc(M,L+A_G)\text{ with }\supp F\subset U\text{ such that } Z(F+G)\neq F+Z(G)\},
\end{align}

is compact.
%\item[$(ii)$] (Invariance of support) $Z$ preserves the support of functionals in the %sense that  for all $G\in\Floc(M,L)$, $F\in\Floc(M,L+A_G)$ holds
%\be\label{eq:invar-suppZ} 
%\supp \bigl(Z(F+G)-Z(G)\bigr)=\supp F \ .
%\ee
\item[$(ii)$] (Locality)
Let $G\in\Floc(M,L)$ and $F,H\in\Floc(M,L+A_G)$ with the requirement that $\supp F\cap\supp H=\0$. Then
\be
Z(F+G+H)=Z(F+G)-Z(G)+Z(G+H)\ .
\ee
\item[$(iii)$] (Dynamics)
$Z$ preserves the dynamics, \ie\be
Z(F^\psi+\delta L(\psi))=Z(F)^{\psi}+\delta L(\psi)\ ,\ \psi\in\Dc(M,\RR^n)\ .
\ee
\item[$(iv)$] (Field shift)
Under shifts in configuration space, $Z$ transforms as
\begin{equation}\label{eq:shift-Z}
    Z(F^{\psi}-V(f))=Z(F-V(f))^{\psi}+\delta V(\psi)
\end{equation} 
with $\psi\in\Dc(M,\RR^n)$, $f\in\Dc_s(M)$,  $f\equiv 1\textrm{ on }\supp\psi$,
and $F\in\Floc(M,L-V)$ .
\item[$(v)$] (Causal Stability)
$Z$ does not change the causal structure, \ie
\begin{equation}
    J_{\pm}^{L+A_{Z(F)}}=J_{\pm}^{L+A_F}\ .
\end{equation}
\end{enumerate}
\end{definition}
For the free Lagrangian (\ie $L=L_0$), the condition $(iv)$ takes the simpler form
\be\label{eq:shift-RL0}
Z(F^\psi)=Z(F)^{\psi}\ ,\quad Z\in\Rc(M,L_0)\ .
\ee

\begin{remark}\label{rm:SP}
For the free Lagrangian $L_0$ the group $\Rc(M,L_0)$ can be compared with the St\"uckel\-berg-Petermann renormalization group
$\Rc_0$ as defined in \cite{DF04,BDF09,Due19}, see Appendix \ref{app:Delta-X}. 
The conditions  $(ii)$-$(iv)$ given above appear also in the definition of $\Rc_0$, but there is a main difference: 
%for all elements of $\Rc_0$ the pertinent $z$ is trivial (i.e., $z=\mathrm{id}$), 
the elements $Z_0$ of $\Rc_0$  may have 
non-compact support, which allows, \eg\!\!, to impose  translation invariance in Minkowski space as a renormalization condition.
\end{remark}

\begin{proposition}\label{prop:inv-suppZ}
Let $Z$ be in $\Rc(M,L)$, then it preserves the support of functionals in the sense that  for all $G\in\Floc(M,L)$ and $F\in\Floc(M,L+A_G)$ it holds
\be\label{eq:invar-suppZ} 
\supp \bigl(Z(F+G)-Z(G)\bigr)=\supp F \ .
\ee
\end{proposition}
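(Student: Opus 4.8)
The plan is to prove the two inclusions in \eqref{eq:invar-suppZ} separately, using the locality axiom $(ii)$ and the compact-support axiom $(i)$ of Definition~\ref{def1}. Fix $G\in\Floc(M,L)$ and $F\in\Floc(M,L+A_G)$.

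\emph{Step 1: $\supp\bigl(Z(F+G)-Z(G)\bigr)\subseteq\supp F$.} Let $x\notin\supp F$. Choose a relatively compact neighborhood $U$ of $x$ with $U\cap\supp F=\0$. I want to show that $Z(F+G)-Z(G)$, viewed as a generalized field (via $A_{\bullet}$), has no support at $x$; equivalently, that locally near $x$ the functional $Z(F+G)-Z(G)$ ``does not depend on'' the field configuration in $U$. Here one should reduce to testing against a further summand: for any $E\in\Floc(M,L+A_{F+G})$ with $\supp E\subset U$, locality $(ii)$ applied to the decomposition $(F+G)+E$ with $\supp E\cap\supp F=\0$ (noting $F\in\Floc(M,(L+A_G)+A_E)$ after rearranging, and $E\in\Floc(M,L+A_{F+G})$) gives
\[
Z(E+F+G)=Z(E+G)-Z(G)+Z(F+G).
\]
Rearranging,
\[
\bigl(Z(E+F+G)-Z(F+G)\bigr)-\bigl(Z(E+G)-Z(G)\bigr)=0,
\]
which is exactly the statement that adding $E$ (supported in $U$) shifts $Z(F+G)$ and $Z(G)$ by the same amount. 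This is the characterization, via \eqref{eq:suppZ} applied with the roles appropriately assigned, of $x\notin\supp\bigl(Z(F+G)-Z(G)\bigr)$ once we also record that $x\notin\supp Z$ or handle the $\supp Z$-contribution separately. Concretely, if $x\notin\supp Z$, then for suitable small $U$ we have $Z(F+G)=F'+Z(G)$-type relations that immediately localize; for $x\in\supp Z$ but $x\notin\supp F$, the displayed identity still shows the $E$-dependence of $Z(F+G)-Z(G)$ near $x$ is trivial. I would phrase this cleanly by first proving the general lemma that for $x\notin\supp Z$ one has $Z(F+G)-Z(G)$ agreeing with $F$ near $x$, and then treating $x\in\supp Z\setminus\supp F$ via the locality identity above; either way one concludes $x\notin\supp\bigl(Z(F+G)-Z(G)\bigr)$.

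\emph{Step 2: $\supp F\subseteq\supp\bigl(Z(F+G)-Z(G)\bigr)$.} Suppose $x\in\supp F$ but $x\notin\supp\bigl(Z(F+G)-Z(G)\bigr)$, and derive a contradiction. Since $Z$ is a bijection of $\Floc(M,L)$, it has an inverse; I would check that $Z^{-1}\in\Rc(M,L)$ as well — axioms $(ii)$, $(iii)$, $(iv)$ are manifestly symmetric in $Z\leftrightarrow Z^{-1}$ (one rearranges each identity), axiom $(v)$ is symmetric because $Z$ and $Z^{-1}$ induce the same causal structure, and axiom $(i)$ for $Z^{-1}$ follows because $\supp Z^{-1}=\supp Z$ (the defining condition in \eqref{eq:suppZ} is unchanged under $Z\mapsto Z^{-1}$ after substituting $F+Z(G)\mapsto$ the appropriate argument). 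Then apply Step~1 to $Z^{-1}$ with the pair $\bigl(Z(F+G)-Z(G),\,Z(G)\bigr)$: writing $\tilde F\doteq Z(F+G)-Z(G)$ and $\tilde G\doteq Z(G)$, Step~1 for $Z^{-1}$ gives $\supp\bigl(Z^{-1}(\tilde F+\tilde G)-Z^{-1}(\tilde G)\bigr)\subseteq\supp\tilde F$. But $Z^{-1}(\tilde F+\tilde G)=Z^{-1}(Z(F+G))=F+G$ and $Z^{-1}(\tilde G)=Z^{-1}(Z(G))=G$, so the left side is $\supp F$. Hence $\supp F\subseteq\supp\tilde F=\supp\bigl(Z(F+G)-Z(G)\bigr)$, which is the desired reverse inclusion and contradicts the assumption.

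\emph{Main obstacle.} The delicate point is Step~1, specifically the careful bookkeeping of which space of local functionals each summand lives in — ensuring that when we write $E+F+G$ and reassociate, the membership conditions $F\in\Floc(M,L+A_G)$, $E\in\Floc(M,L+A_{F+G})$, $F\in\Floc(M,L+A_{G+A_E})$, etc., are all simultaneously satisfiable (this uses \eqref{eq:Floc-add} and the disjoint-support stability of $\Floc(M,L)$ discussed after it), and that the neighborhood $U$ can be chosen small enough to be disjoint from $\supp F$ yet still probe the support of $Z(F+G)-Z(G)$ at $x$. The other subtlety, used in Step~2, is verifying $Z^{-1}\in\Rc(M,L)$ and $\supp Z^{-1}=\supp Z$; this is routine but must be stated, since the reverse inclusion genuinely relies on applying the first inclusion to the inverse.
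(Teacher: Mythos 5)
Your Step 1 does not prove the inclusion it claims. The quantity $\supp\bigl(Z(F+G)-Z(G)\bigr)$ is the support of a \emph{functional} in the sense of Appendix \ref{app:gen-field}: the smallest closed set $N$ such that $\bigl(Z(F+G)-Z(G)\bigr)[\phi+\psi]=\bigl(Z(F+G)-Z(G)\bigr)[\phi]$ for all field shifts $\psi$ with $\supp\psi\cap N=\0$. Your argument instead probes the behaviour of the \emph{map} $Z$ under addition of an auxiliary functional $E$ supported near $x$; the identity $Z(E+F+G)-Z(F+G)=Z(E+G)-Z(G)$ is a statement in the spirit of \eqref{eq:suppZ} (the support of $Z$ itself) and says nothing about how the functional $Z(F+G)-Z(G)$ depends on the field configuration in $U$. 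These are genuinely different notions of support, and the bridge between them is exactly the Field shift axiom $(iv)$, which your proposal never invokes. The paper's proof shows directly that $\bigl(Z(F+G)-Z(G)\bigr)^{\psi}=Z(F+G)-Z(G)$ for every $\psi\in\Dc$ with $\supp\psi\cap\supp F=\0$, by writing $Z(F+G)^{\psi}=Z\bigl((F+G)^{\psi}+\delta V(\psi)\bigr)-\delta V(\psi)$ via $(iv)$, using $F^\psi=F$, and then peeling off the $\psi$-dependent piece $G^{\psi}-G+\delta V(\psi)$ (whose support lies in $\supp\psi$, hence is disjoint from $\supp F$) with Locality $(ii)$. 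Without $(iv)$ the field-configuration dependence of $Z(F)$ is simply not controlled, so Step 1 cannot be repaired within the tools you use.

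Your Step 2 has the right idea — the reverse inclusion does come from applying the forward inclusion to $Z^{-1}$, and this is essentially how the paper later handles $Z^{-1}$ in the proof that $\Rc(M,L)$ is a group — but the claim that $Z^{-1}\in\Rc(M,L)$ is ``routine'' and that Locality $(ii)$ is ``manifestly symmetric'' hides a circularity. To verify $(ii)$ for $Z^{-1}$ one sets $F'=Z^{-1}(F+G)-Z^{-1}(G)$ and $H'=Z^{-1}(G+H)-Z^{-1}(G)$ and must know that $\supp F'=\supp F$ and $\supp H'=\supp H$ in order to apply Locality of $Z$ to $F'+G'+H'$; that is precisely Proposition \ref{prop:inv-suppZ}, the statement under proof. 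The paper resolves this by proving the shift identity for $Z$ first (which needs only $(ii)$ and $(iv)$ for $Z$, not for $Z^{-1}$), and only afterwards deducing the properties of $Z^{-1}$. If you want to present the reverse inclusion via $Z^{-1}$, you must either order the argument the same way or find an independent verification of $(ii)$ for $Z^{-1}$; as written, the logic is circular.
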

\begin{proof} Let $\psi\in\Dc$ with $\supp\psi\cap\supp F=\0$. The assertion follows from
\be Z(F+G)^{\psi}-Z(G)^{\psi}=Z(F+G)-Z(G)\ ,
\ee
which can be verified as follows: let $f\equiv1$ on $\supp \psi$. Then by the conditions $(ii)$ (Locality) and $(iv)$ (Field shift)
\be
\begin{split}
Z(F+G)^{\psi}&=Z(F+G+V(f)-V(f))^{\psi}\\
&=Z((F+G)^{\psi}+\delta V(\psi))-\delta V(\psi)\\
&=Z(F+G+(G^{\psi}-G+\delta V(\psi))-\delta V(\psi)\\
&=Z(F+G)-Z(G)+Z(G^{\psi}+\delta V(\psi))-\delta V(\psi)\\
&=Z(F+G)-Z(G)+Z(G)^{\psi}\ .
\end{split}
\ee
\end{proof}

The renormalization group for a Lagrangian $L+W$ is obtained from $\mathcal{R}(M,L)$ by the following definition:
\begin{definition}\label{def:ZW}
Let $Z\in\Rc(M,L)$ and let $W\in\mathrm{Int}(M,L)$.
%$W$ has the form assumed for $V$ in \eqref{eq:L0,V}). 
Then, for $F\in\Floc(M,L+W)$ we define:
\be\label{eq:renint}
Z^W(F)\doteq  Z(F+W(f))-W(f)\in\Floc(M,L+W)
\ee
with $f\in\Dc_s(M)$ and $f\equiv 1$ on the supports of $F$ and $Z$. 
\end{definition}
This definition%
\footnote{The notation $Z^W$ should not be confused with the notation $Z_W=Z(\bullet+W(f))-Z(W(f))$ used in \cite{BDF09}. The latter describes the renormalization of fields induced by a renormalization $W(f)\mapsto Z(W(f))$ of the interaction $W$.} 
%We point out that this definition of $\Rc_L\ni Z\mapsto Z^W$ differs from the one of 
%$\Rc_0\ni Z\mapsto Z_{W(f)}$ given in perturbation theory for a related purpose. In detail \cite[Sect.~6]{BDF09}: given
%two admissible $S$-matrices (i.e., two
%renormalization prescriptions) $S$ and $\hat S$, the Main Theorem states that there exists a unique $Z\in\Rc_0$ such that 
%$\hat S=S\circ Z$. The pertinent relative $S$-matrix, defined by
%$$
%S_{W(f)}(F)\doteq S\bigl(W(f)\bigr)^{-1}\,S\bigl(W(f)+F\bigr)\ ,\quad \forall %F\in\Floc,\,f\in\Dc(M),
%$$
%and similarly for $\hat S_{W(f)}$, is the generating functional of the %time-ordered products of the interacting fields to
%the interaction $W(f)$ (that is, compared to the theory described by $S$ or $\hat %S$, respectively, there is the 
%additional interaction $W(f)$). The relative $S$-matrices $S_{W(f)}$ and $\hat %S_{W(f)}$ satisfy then the corresponding 
%relation $\hat S_{W(f)}=S_{W(f)}\circ Z_{W(f)}$ with
%$$
%Z_{W(f)}\doteq Z\bigl(W(f)+F\bigr)-Z\bigl(W(f)\bigr)\ ,\quad\forall %F\in\Floc,\,f\in\Dc(M).
%$$
%Under addition of a further interaction $W_1(f_1)$, the pertinent $Z$'s transform %by the same formula, since
%$$
%Z_{W(f)+W_1(f_1)}(F)= Z_{W(f)}\bigl(W_1(f_1)+F\bigr)-Z_{W(f)}\bigl(W_1(f_1)\bigr)\ %,\quad\forall F\in\Floc\ .
%$$}
is motivated by the following results:
\begin{proposition}\label{prop:ZW}
Let $Z^W$ be defined in terms of $Z\in\Rc(M,L)$ and $W\in\mathrm{Int}(M,L)$ by \eqref{eq:renint}. Then:
\begin{itemize}
    \item[($i$)] $Z^W(F)$ does not depend on the choice of the test function $f\in\Dc_s(M)$,
    \item[($ii$)] $Z^W\in\Rc(M,L+W)$.
\end{itemize}
\end{proposition}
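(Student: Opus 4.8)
The plan is to extract a single lemma that carries the whole argument, deduce part~(i) from it in one line, and then verify the five defining conditions of Definition~\ref{def1} for $Z^W$ by unfolding \eqref{eq:renint}.
\textbf{Key Lemma.} \emph{If $Z\in\Rc(M,L)$, $G\in\Floc(M,L)$ and $D\in\Floc(M,L+A_G)$ with $\supp D\cap\supp Z=\0$, then $Z(G+D)=Z(G)+D$.} I would prove this by a covering argument based directly on the definition \eqref{eq:suppZ} of $\supp Z$: since $\supp D$ is compact and disjoint from the closed set $\supp Z$, for every $x\in\supp D$ there is an open $U_x\ni x$, disjoint from $\supp Z$, such that $Z(F''+G'')=F''+Z(G'')$ for \emph{every} $G''\in\Floc(M,L)$ and $F''\in\Floc(M,L+A_{G''})$ with $\supp F''\subset U_x$. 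Choose a finite subcover $U_1,\dots,U_N$ of $\supp D$ and induct on $N$: with a smooth cut-off one writes $D=D_a+D_b$ with $\supp D_a\subset U_1$ and $\supp D_b\subset U_2\cup\dots\cup U_N$ (the supports of $D_a,D_b$ need not be disjoint, which is precisely why one invokes the additivity built into \eqref{eq:suppZ} with an arbitrary background, rather than condition~$(ii)$), so that $Z(G+D)=Z(D_a+(D_b+G))=D_a+Z(D_b+G)=D_a+D_b+Z(G)$ by the inductive hypothesis applied to $D_b$. I expect the only genuine obstacle, here and in all subsequent steps, to be the bookkeeping that the auxiliary functionals actually lie in the relevant spaces $\Floc(M,L+\cdots)$ — using \eqref{eq:Floc-add} and the Causal Stability~$(v)$ of $Z$ to control the associated metrics; everything else is formal.

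For part~(i): if $f,f'\in\Dc_s(M)$ are both $\equiv1$ on a neighbourhood of $\supp F\cup\supp Z$, then $D\doteq W(f)-W(f')$ satisfies $\supp D\cap(\supp F\cup\supp Z)=\0$, so the Key Lemma with $G=F+W(f')$ gives $Z(F+W(f))=Z(F+W(f'))+D$, hence $Z(F+W(f))-W(f)=Z(F+W(f'))-W(f')$, proving $Z^W(F)$ is independent of $f$. Bijectivity of $Z^W$ on $\Floc(M,L+W)$ then follows: since $\Rc(M,L)$ is a group, $Z^{-1}\in\Rc(M,L)$, and a direct computation — using part~(i) to employ a common test function — shows that $(Z^{-1})^W$ is a two-sided inverse of $Z^W$.

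For part~(ii) I verify the conditions of Definition~\ref{def1} with $L+W$ in place of $L$, always writing $Z^W(X)=Z(X+W(f))-W(f)$ with $f$ chosen large enough, which is legitimate by part~(i). Condition~$(i)$: I show $\supp Z^W\subset\supp Z$, since for $x\notin\supp Z$ a neighbourhood $U$ with the additivity property of \eqref{eq:suppZ} also works for $Z^W$ — unfolding $Z^W(F'+G')$ and $Z^W(G')$ with the same $f$ reduces the required identity to $Z(F'+(G'+W(f)))=F'+Z(G'+W(f))$. Condition~$(ii)$: unfolding both sides with a common $f$ reduces the claim to locality~$(ii)$ of $Z$ with background $G'+W(f)$. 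Condition~$(iii)$: using $\delta(L+W)(\psi)=\delta L(\psi)+\delta W(\psi)$, the identity $(F+W(f))^{\psi}=F^{\psi}+W(f)^{\psi}$, and $\delta W(\psi)=W(f)^{\psi}-W(f)$ for $f\equiv1$ on $\supp\psi$, the argument of $Z$ on the left becomes $(F+W(f))^{\psi}+\delta L(\psi)$, so condition~$(iii)$ for $Z$ applies and the $W(f)$-terms recombine to $Z^W(F)^{\psi}+\delta(L+W)(\psi)$.

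It remains to check conditions~$(iv)$ and~$(v)$. For~$(iv)$ I write $L+W=L_0+(V+W)$ and unfold $Z^W\bigl(F^{\psi}-(V+W)(h)\bigr)$ with a test function $g$ chosen moreover $\equiv1$ on a neighbourhood of $\supp Z$; then $W(g)-W(h)$ has support disjoint from $\supp Z$, so by the Key Lemma it commutes past $Z$, and the identity collapses to condition~$(iv)$ for $Z$ in the form $Z(F^{\psi}-V(h))=Z(F-V(h))^{\psi}+\delta V(\psi)$, after using $\delta W(\psi)=W(h)^{\psi}-W(h)$; the case of a general cut-off $h$ is then reduced to this one via the compact-support and locality properties of $Z^W$ just established (this reduction requires a little care). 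Finally, for~$(v)$: in any bounded region $L+W+A_{Z^W(F)}$ induces the same metric as $L+A_{Z(F+W(f))}$, and $L+W+A_F$ the same as $L+A_{F+W(f)}$, so $J_{\pm}^{L+W+A_{Z^W(F)}}=J_{\pm}^{L+W+A_F}$ follows from Causal Stability~$(v)$ of $Z$. This establishes $Z^W\in\Rc(M,L+W)$.
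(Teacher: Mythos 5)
Your proof is correct and its overall architecture coincides with the paper's: prove independence of the cut-off, observe bijectivity via $(Z^W)^{-1}=(Z^{-1})^W$, and then verify the five conditions of Definition~\ref{def1} by unfolding \eqref{eq:renint} with a sufficiently large test function; your computations for conditions $(ii)$, $(iii)$ and $(v)$ and for $\supp Z^W\subset\supp Z$ are essentially the paper's. The one genuine difference is organisational: you isolate as a Key Lemma the statement that $Z(G+D)=Z(G)+D$ whenever $\supp D\cap\supp Z=\0$, proved by a finite cover of $\supp D$ by neighbourhoods from \eqref{eq:suppZ} and a partition-of-unity induction, and you then derive part~($i$) in one stroke with background $G=F+W(f')$. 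The paper instead first applies Locality~$(ii)$ of $Z$ (using $\supp F\cap\supp(W(f_1)-W(f_2))=\0$) to peel off $F$, and only then invokes the support definition on $Z\bigl(W(f_2)+(W(f_1)-W(f_2))\bigr)$ --- that is, it uses your Key Lemma only with the whole difference $W(f_1)-W(f_2)$ as foreground and treats that case as immediate from \eqref{eq:suppZ}, without the covering argument. Your version is more self-contained (the direct appeal to \eqref{eq:suppZ} really does need your covering step, since \eqref{eq:suppZ} only gives additivity for functionals supported in a single small neighbourhood), but it pays for this by having to split $D$ with a cut-off, which for functionals with a kinetic part runs into the non-convexity of the set of Lorentz metrics handled in Appendix~\ref{sec:interpolatingmetrics}; you correctly flag admissibility as the main bookkeeping issue.

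Two small points. In your verification of condition~$(iv)$, the assertion that $\supp\bigl(W(g)-W(h)\bigr)\cap\supp Z=\0$ fails for a general $h$ that is $\equiv1$ only on $\supp\psi$ (on $\supp Z$ one has $g-h=1-h$); it holds only when $h\equiv1$ near $\supp Z$, which is evidently the special case you intend to treat before the deferred reduction --- and that reduction is in fact immediate: $D\doteq(V+W)(h)-(V+W)(h')$ satisfies $D^\psi=D$, so one replaces $F$ by $F+D$. The paper's own computation for $(iv)$ silently uses the same ``$f$ large enough'' convention. Second, you justify the existence of $Z^{-1}$ by appealing to ``$\Rc(M,L)$ is a group'', but in the paper that group property is proved \emph{after} (and partly \emph{from}) this proposition; all that is needed is that $Z$ is a bijection of $\Floc(M,L)$, which is already part of Definition~\ref{def1}, so the appeal should be dropped to avoid circularity.
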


\begin{proof}
($i$) Let $f_1,f_2\in\Dc_s(M)$ have the above mentioned property. Since $\supp(W(f_1)-W(f_2))\cap(\supp F\cup\supp Z)=\emptyset$
we may use Locality $(ii)$ and $Z(W(f_1))=Z(W(f_2))+(W(f_1)-W(f_2))$ in the following way (by abuse of notation):
\begin{align*}
    Z^{W(f_1)}(F)&=Z\bigl(F+W(f_2)+(W(f_1)-W(f_2))\bigr)-W(f_1)\\
    &=Z\bigl(F+W(f_2)\bigr)-Z\bigl(W(f_2)\bigr)+Z\bigl(W(f_1)\bigr)-W(f_1)\\
    &=Z\bigl(F+W(f_2)\bigr)+(W(f_1)-W(f_2))-W(f_1)\\
    &=Z^{W(f_2)}(F).
\end{align*}
($ii$) By \eqref{eq:Floc-add} we see that $Z^W$ maps $\Floc(M,L+W)$ into itself. $Z^W$ is also invertible with (bilateral) inverse $(Z^W)^{-1}=(Z^{-1})^W$ as may be seen from the equation
\begin{equation}\label{eq:inverse-zw}
    Z^W(Z^{-1})^W(F)=Z^W\bigl(Z^{-1}(F+W)-W\bigr)=Z\bigl(Z^{-1}(F+W)\bigr)-W=F\ .
\end{equation}
We are now going to verify that each of the properties $(i)$-$(v)$ in Definition \ref{def1} for $Z^W$ follows 
from the corresponding property of $Z\in\Rc(M,L)$. For Locality $(ii)$ this is 
straightforward; for the other properties we give the details.

To verify $(i)$, let $\supp F\cap\supp Z=\emptyset$. Then, we obtain
$Z^W(F+G)-Z^W(G)=F$, hence $\supp F\cap\supp Z^W=\emptyset$. It follows that $\supp Z^W\subset\supp Z$.

The property $(iii)$ can be checked as follows:
\begin{align*}
    Z^W\bigl(F^\psi+\delta(L+W)(\psi)\bigr)&=Z\bigl(F^\psi+W(f)^\psi+\delta L(\psi)\bigr)-W(f)\\
    &=Z\bigl(F+W(f)\bigr)^\psi+\delta L(\psi)-W(f)\\
    &=Z\bigl(F+W(f)\bigr)^\psi-W(f)^\psi+\delta (L+W)(\psi)\\
    &=Z^W(F)^\psi+\delta (L+W)(\psi).
    \end{align*}
    
To verify $(iv)$, we take into account that $L+W=L_0+(V+W)$ and obtain
\begin{align*}
    Z^W\bigl(F^\psi-(V+W)(f)\bigr)&=Z\bigl(F^\psi-V(f)\bigr)-W(f)\\
    &=Z\bigl(F-V(f)\bigr)^\psi+V(f)^\psi-V(f)-W(f)\\
    &=Z\bigl(F-(V+W)(f)+W(f)\bigr)^\psi-W(f)^\psi+\delta(V+W)(\psi)\\
    &=Z^W\bigl(F-(V+W)(f)\bigr)^\psi+\delta(V+W)(\psi).
\end{align*}
It remains to check $(v)$. We have for $f\in\Dc_s(M)$, $f\equiv 1$ on some relatively compact region $\Oc\supset\supp Z\cup\supp F$
\begin{equation}
    (L+W+A_{Z^W(F)})(f)=L(f)+W(f)+Z^W(F)=L(f)+Z(F+W(f))\ .
\end{equation}
But by assumption, $L+A_{Z(F+W(f))}$ induces the same causal structure as $L+A_{F+W(f)}$ which coincides on $\Oc$ with the causal structure of $L+W+A_{Z(F)}$. Since $\Oc$ can be arbitrarily large, condition $(v)$ is fulfilled.
\end{proof}

\begin{remark}\label{rem:Z-shift}
The defining condition $(iv)$ of $Z\in\Rc(M,L_0+V)$ in Definition \ref{def1} can be obtained from its simpler formulation in the particular case of 
$\Rc(M,L_0)$ (given in \eqref{eq:shift-RL0}) in the following way: looking at $Z^W$, we choose $W$ such that it compensates 
the interaction, that is, $W=-V$. Then, Proposition \ref{prop:ZW} states that $Z^{-V}\in\Rc(M,L_0)$, hence we may apply
\eqref{eq:shift-RL0}:
\be\label{eq:shift-Z-V}
Z^{-V}(F^\psi)=Z^{-V}(F)^\psi\ ,\,\, F\in\Floc(M,L_0)\ .
\ee
Inserting the definition of $Z^{-V}$, we precisely get the (general) formulation of $(iv)$ in $\Rc(M,L_0+V)$
(given in \eqref{eq:shift-Z}). Hence, \eqref{eq:shift-Z-V} is an equivalent reformulation of \eqref{eq:shift-Z}.
\end{remark}
%%%%%%%%%%%%%%%%%%%%%%%%
We are now ready to prove that naming $\Rc(M,L)$ a renormalization \emph{group} is sound:
\begin{proposition}
$\Rc(M,L)$ is a group.
\end{proposition}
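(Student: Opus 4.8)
The plan is to verify the three group axioms: closure under composition, existence of inverses, and presence of the identity; associativity is automatic since elements of $\Rc(M,L)$ are bijections of $\Floc(M,L)$ under composition. The identity map clearly satisfies $(i)$--$(v)$ (its support is empty, hence compact, and conditions $(ii)$--$(v)$ hold trivially), so $\mathrm{id}\in\Rc(M,L)$. The main work is therefore to show: (a) if $Z_1,Z_2\in\Rc(M,L)$ then $Z_1\circ Z_2\in\Rc(M,L)$, and (b) if $Z\in\Rc(M,L)$ then $Z^{-1}\in\Rc(M,L)$.

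For closure, I would check each of the five conditions for the composite $Z_1\circ Z_2$ in turn. For $(i)$ (compact support), I expect $\supp(Z_1\circ Z_2)\subseteq\supp Z_1\cup\supp Z_2$: if $x$ lies outside both supports, then on a small neighborhood $U\ni x$ one has $Z_2(F+G)=F+Z_2(G)$ and, applying Proposition \ref{prop:inv-suppZ} to control supports, $Z_1$ acts additively too, so $Z_1(Z_2(F+G))=F+Z_1(Z_2(G))$; since the union of two compact sets is compact, $(i)$ follows. Condition $(ii)$ (Locality) should chain directly: apply the Locality identity for $Z_2$ first to split $Z_2(F+G+H)$, then apply $Z_1$'s Locality, using that Proposition \ref{prop:inv-suppZ} guarantees $Z_2$ preserves the disjointness $\supp F\cap\supp H=\0$ needed to invoke $Z_1$'s Locality on the transformed functionals. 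Conditions $(iii)$ (Dynamics) and $(iv)$ (Field shift) are functional equations of the shape ``$Z$ intertwines a fixed affine operation,'' so composing two such $Z$'s preserves them by a one-line substitution each. Condition $(v)$ (Causal Stability) is immediate: $J_\pm^{L+A_{Z_1(Z_2(F))}}=J_\pm^{L+A_{Z_2(F)}}=J_\pm^{L+A_F}$.

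For inverses, given $Z\in\Rc(M,L)$, the map $Z^{-1}$ is a bijection of $\Floc(M,L)$, so I must verify $(i)$--$(v)$ for it. The cleanest route for $(ii)$--$(v)$ is to rewrite each defining equation of $Z$ with $Z(\text{something})$ isolated, then apply $Z^{-1}$ to both sides: e.g. from $(iii)$, $Z(F^\psi+\delta L(\psi))=Z(F)^\psi+\delta L(\psi)$; setting $F'=Z(F)$, i.e. $F=Z^{-1}(F')$, gives $Z^{-1}\bigl((F')^\psi+\delta L(\psi)\bigr)=Z^{-1}(F')^\psi+\delta L(\psi)$, which is $(iii)$ for $Z^{-1}$. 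The same isolation-and-invert trick handles $(iv)$ and $(v)$ (for the latter, $J_\pm^{L+A_{Z^{-1}(F)}}=J_\pm^{L+A_{Z(Z^{-1}(F))}}=J_\pm^{L+A_F}$, using $(v)$ for $Z$ applied to the argument $Z^{-1}(F)$). For $(i)$, I would show $\supp Z^{-1}=\supp Z$: Proposition \ref{prop:inv-suppZ} applied to $Z$ shows $\supp(Z(F+G)-Z(G))=\supp F$, and reading the defining equation of the support backwards, $Z(F+G)=F+Z(G)$ on a neighborhood of $x$ is equivalent to $Z^{-1}(F+G')=F+Z^{-1}(G')$ there (with $G'=Z(G)$, using additivity of $Z$ away from $\supp Z$). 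For $(ii)$ of $Z^{-1}$, I would again invert the Locality identity of $Z$ after isolating one term, or argue via Proposition \ref{prop:inv-suppZ} that the support-preservation lets one set up the splitting.

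The main obstacle I anticipate is condition $(ii)$ (Locality) for the composite and for the inverse, since the Locality identity $Z(F+G+H)=Z(F+G)-Z(G)+Z(G+H)$ is not a simple intertwining relation but a cocycle-type identity, and chaining it through a second map requires knowing that the intermediate functionals $Z_2(F+G)$, $Z_2(G)$, $Z_2(G+H)$ still have the support properties needed to reapply $Z_1$'s Locality. This is exactly what Proposition \ref{prop:inv-suppZ} is designed to supply, so the argument should go through, but the bookkeeping of which functional lies in which $\Floc(M,L+A_\bullet)$ will need care. The remaining conditions $(iii)$, $(iv)$, $(v)$ I expect to be routine one-line verifications.
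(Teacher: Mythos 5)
Your proposal is correct and follows essentially the same route as the paper's proof: one verifies conditions $(i)$--$(v)$ of Definition~\ref{def1} separately for the composite $Z_1Z_2$ and for $Z^{-1}$, with Proposition~\ref{prop:inv-suppZ} supplying exactly the support control needed to chain the Locality identity and to obtain $\supp (Z_1Z_2)\subset\supp Z_1\cup\supp Z_2$ and $\supp Z^{-1}=\supp Z$. The only (cosmetic) divergence is condition $(iv)$ for a general Lagrangian $L=L_0+V$, which the paper handles by reducing to the equivalent free-field form $Z^{-V}(F^\psi)=Z^{-V}(F)^\psi$ together with $(Z_1Z_2)^W=Z_1^W Z_2^W$, whereas your direct substitution also goes through once one uses $(F-V(f))^\psi+\delta V(\psi)=F^\psi-V(f)$ to recognize $(iv)$ as a genuine intertwining relation.
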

\begin{proof}
We first show that $Z_1Z_2\in\Rc(M,L)$ for $Z_1,Z_2\in\Rc(M,L)$ by verifying that the conditions $(i)-(v)$ of Definition~\ref{def1} hold.
\begin{itemize}
\item [$(ii)$](Locality) Let $G\in\Floc(M,L)$, $F,H\in\Floc(M,L+A_G)$ and $\supp F\cap\supp H=\0$.
Then
\begin{equation}
\begin{split}
    Z_1&Z_2(F+G+H)\\
    &=Z_1\bigl(Z_2(F+G)-Z_2(G)+Z_2(G+H)\bigr)\\
    &=Z_1\bigl(\underbrace{Z_2(F+G)-Z_2(G)}_{\supp F}+Z_2(G)+\underbrace{Z_2(G+H)-Z_2(G)}_{\supp H}\bigr)\\
    &=Z_1Z_2(F+G)-Z_1Z_2(G)+Z_1Z_2(G+H)\ .
    \end{split}
\end{equation}
\item[$(iii)$] (Dynamics) 
\begin{equation}
    Z_1Z_2\bigl(F^{\psi}+\delta L(\psi)\bigr)=Z_1\bigl(Z_2(F)^{\psi}+\delta L(\psi)\bigr)
    =Z_1Z_2(F)^{\psi}+\delta L(\psi)\ .
\end{equation}
\item[$(iv)$] (Field shifts) This is evident for the massless Lagrangian $L_0$. It remains true for the general case,
since $(iv)$ can equivalently be written as \eqref{eq:shift-Z-V}, and due to 
\begin{equation}\label{eq:comb-z}
\begin{split}
    Z_1^W Z_2^W(F)&=Z_1^W\bigl(Z_2(F+W(f))-W(f)\bigr)\\
    &=Z_1\bigl(Z_2(F+W(f))-W(f)+W(g)\bigr)-W(g)\\
    &=Z_1Z_2\bigl(F+W(f)\bigr)-W(f)+W(g)-W(g)\\
    &=(Z_1Z_2)^W(F)
    \end{split}
\end{equation}
where $f\equiv 1$ on $\supp Z_1\cup\supp Z_2\cup\supp F$ and $g\equiv 1$ on %$\supp Z_1\cup\supp Z_2\cup\supp F\cup\supp f$.
$\supp f$. In the third line we have used that $\supp\bigl(W(g)-W(f)\bigr)\cap\supp Z_1=\emptyset$.

\item[$(i)$](Compact support)
Let $\supp F\cap\supp Z_i=\0$, $i=1,2$. Then
\begin{equation}
    Z_1Z_2(F+G)=Z_1\bigl(F+Z_2(G)\bigr)
    =F+Z_1Z_2(G)\ ,
\end{equation}
%\begin{equation}
%\begin{split}
%    Z_1Z_2(F+G)&=Z_1\bigl(F+Z_2(G)\bigr)\\
%    &=F+Z_1Z_2(G)\ ,
%\end{split}
%\end{equation}
hence $\supp Z_1Z_2\subset\supp Z_1\cup\supp Z_2$.
\item[$(v)$] (Causal Stability) evident.
\end{itemize}
We now show that $Z\in\Rc(M,L)$ implies $Z^{-1}\in\Rc(M,L)$.
\begin{itemize}
    \item [$(ii)$] (Locality) Let $G\in\Floc(M,L)$ and $F,H\in\Floc(M,L+A_G)$ with $\supp F\cap\supp H=\0$. Set
    $G'\doteq Z^{-1}(G)$, $F'\doteq Z^{-1}(F+G)-G'$ and $H'\doteq Z^{-1}(G+H)-G'$. 
    Then $F=Z(F'+G')-Z(G')$ and $H=Z(G'+H')-Z(G')$, thus 
    by \eqref{eq:invar-suppZ} we have $\supp F'=\supp F$ and $\supp H'=\supp H$, hence
    \begin{equation}
        F+G+H=Z(F'+G')-Z(G')+Z(G'+H')=Z(F'+G'+H')
    \end{equation}
     by Locality of $Z$, and therefore
    \be
    Z^{-1}(F+G+H)=Z^{-1}(F+G)-Z^{-1}(G)+Z^{-1}(G+H)\ .\ee
\item[$(iii)$](Dynamics) We apply $Z^{-1}$
to the equation
\begin{equation}
    Z\bigl(Z^{-1}(F)^{\psi}+\delta L(\psi)\bigr)=F^{\psi}+\delta L(\psi) 
\end{equation}
and get the wanted relation.
\item[$(iv)$](Field shifts) This is again evident for the Lagrangian $L_0$. The general case follows from 
$Z^{-V}\in\Rc(M,L_0)$ and the equality $(Z^W)^{-1}=(Z^{-1})^W$.  
\item[$(i)$](Compactness of support)
It is easy to check that $Z^{-1}$ verifies also \eqref{eq:invar-suppZ}. Indeed, let $F'=Z^{-1}(F+G)-Z^{-1}(G)$, then
\begin{equation}\label{eq:supp-inv}
    F=Z\bigl(F'+Z^{-1}(G)\bigr)-Z\bigl(Z^{-1}(G)\bigr)
\end{equation}
hence $\supp F'=\supp F$.

Now, let $\supp F\cap\supp Z=\0$. Then from \eqref{eq:supp-inv}
with $\supp F'=\supp F$, we get $F=F'=Z^{-1}(F+G)-Z^{-1}(G)$. We conclude that
%$\supp F'\cap\supp Z^{-1}=\0$ 
$\supp Z^{-1}\subset\supp Z$, and therefore $\supp Z^{-1}=\supp Z$. 
\item[$(v)$] (Causal Stability) evident.
\end{itemize}
\end{proof}

%One easily verifies that the right hand side does not depend on the choice of the test function $f$
%and that $Z^W$ satisfies the conditions above where $L$ has to be replaced by $L-W$,
%\ie $Z^W\in\mathcal{R}_{L-W}$.% Moreover, one observes that condition $(v)$ in Definition \ref{def1} means that for the free 
%massless Lagrangian $L_0$ the renormalization group elements are invariant under shifts.  

\begin{corollary}
The map $\Rc(M,L)\to\Rc(M,L+W)\,;\,Z\mapsto Z^W$ \eqref{eq:renint} is a group isomorphism for any interaction $W\in\mathrm{Int}(M,L)$.
\end{corollary}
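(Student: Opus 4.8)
The plan is to check that $Z\mapsto Z^W$ is a bijective group homomorphism, almost all of the work having already been done in Proposition~\ref{prop:ZW} and in the computation \eqref{eq:comb-z}. By Proposition~\ref{prop:ZW} the assignment is well defined, independent of the auxiliary test function, and takes values in $\Rc(M,L+W)$; it sends the identity to the identity, since $\mathrm{id}^W(F)=\mathrm{id}\bigl(F+W(f)\bigr)-W(f)=F$. That it respects composition, $(Z_1Z_2)^W=Z_1^WZ_2^W$, is precisely the content of \eqref{eq:comb-z} (there used to prove closure of $\Rc(M,L)$ under composition). So the only thing left to establish is bijectivity.

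For bijectivity one exhibits the explicit inverse map. Since $W\in\mathrm{Int}(M,L)$ and $(L+W)+(-W)\sim L$, the sequence $-W$ is an interaction with respect to $L+W$ --- the same fact used in Section~\ref{sec:frame} for the inverse morphisms $\iota_{(-V),\pm}$ --- so by Proposition~\ref{prop:ZW} applied to $L+W$ we obtain a well-defined map $\Rc(M,L+W)\to\Rc(M,L)$, $Y\mapsto Y^{-W}$. I claim this is a two-sided inverse of $Z\mapsto Z^W$. To see $(Z^W)^{-W}=Z$, fix $f\in\Dc_s(M)$ with $f\equiv1$ on a relatively compact region containing $\supp Z\cup\supp W\cup\supp F$ and unravel \eqref{eq:renint} twice,
\[
(Z^W)^{-W}(F)=Z^W\bigl(F-W(f)\bigr)+W(f)=\Bigl(Z\bigl(F-W(f)+W(f)\bigr)-W(f)\Bigr)+W(f)=Z(F)\ ,
\]
where the middle step is legitimate because the same $f$ serves the inner application of \eqref{eq:renint}, as $\supp\bigl(F-W(f)\bigr)\subset\supp F\cup\supp W$. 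The identity $(Y^{-W})^W=Y$ follows symmetrically, now with $f\equiv1$ on $\supp Y\cup\supp W\cup\supp F$ and using $\supp Y^{-W}\subset\supp Y$ from Proposition~\ref{prop:ZW}. Hence $Z\mapsto Z^W$ is a bijection, thus a group isomorphism; this is consistent with the relation $(Z^W)^{-1}=(Z^{-1})^W$ of \eqref{eq:inverse-zw}, i.e.\ with the fact that the isomorphism intertwines group inversion.

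The one genuinely delicate point is the bookkeeping of supports and test functions in the displayed identity: one must be certain that a \emph{single} $f\in\Dc_s(M)$ can be chosen large enough for both the outer and the inner occurrence of \eqref{eq:renint}, so that the two $W(f)$ terms cancel exactly rather than only up to a functional localized away from everything relevant. This is handled exactly as in the proof of Proposition~\ref{prop:ZW}$(i)$: the difference of two admissible test functions has support disjoint from $\supp Z\cup\supp F$, so Locality~$(ii)$ absorbs it. No ingredient beyond the present section is required.
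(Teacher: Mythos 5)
Your proof is correct and follows essentially the paper's route: the paper's own proof is the one-line remark that the claim is evident from \eqref{eq:inverse-zw} and \eqref{eq:comb-z}, i.e.\ from the same two computations you invoke for compatibility with inverses and for the homomorphism property. The one thing you add --- the explicit two-sided inverse $Y\mapsto Y^{-W}$ establishing bijectivity, with the test-function bookkeeping justified via Proposition~\ref{prop:ZW}$(i)$ --- is left implicit in the paper and is a worthwhile clarification, since \eqref{eq:inverse-zw} by itself only says the map intertwines group inversion, not that it is onto.
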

\begin{proof}
The claim is evident from \eqref{eq:inverse-zw} and \eqref{eq:comb-z}.
\end{proof}

The non-triviality of the group $\Rc(M,L)$ gives rise to a group of automorphisms of $\fA(M,L)$ preventing the validity
of the time slice axiom. In detail:

\begin{proposition}\label{prop:automorphism-by-Z}
The map
\be\label{eq:beta-Z} 
S(F)\mapsto S(Z(0))^{-1}S(Z(F)) \ ,\ 
\ee
induces a representation $\beta^{\mathrm{ret}}$(named retarded) of $\mathcal{R}(M,L)$ by automorphisms $\beta_Z^{\mathrm{ret}}$ of $\fA(M,L)$. The 
analogous advanced representation $\beta^{\mathrm{adv}}$ of $\mathcal{R}(M,L)$ is obtained by reversing the order of the factors in \eqref{eq:beta-Z}. 
\end{proposition}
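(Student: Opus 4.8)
The plan is, for each fixed $Z\in\Rc(M,L)$, to show that the prescription \eqref{eq:beta-Z} on the unitary generators extends to a unital $*$-endomorphism $\beta_Z^{\mathrm{ret}}$ of $\fA(M,L)$; then to check $\beta_{Z_1}^{\mathrm{ret}}\circ\beta_{Z_2}^{\mathrm{ret}}=\beta_{Z_1Z_2}^{\mathrm{ret}}$; and finally, using that $\Rc(M,L)$ is a group, to conclude that every $\beta_Z^{\mathrm{ret}}$ is an automorphism. Since the images $S(Z(0))^{-1}S(Z(F))$ are products of unitaries, all that is needed for the first step is that the three families of defining relations of $\fA(M,L)$ are mapped to valid identities. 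The relation $S(c)=e^{ic}1$ for constant $c$ reduces to $Z(F+c)=Z(F)+c$; this is forced by the compact support condition $(i)$ of Definition~\ref{def1}, because if $Z$ shifted some constant non-trivially then every point of $M$ would lie in $\supp Z$, contradicting compactness (here one uses that a globally hyperbolic $M$ is non-compact). The Dynamical Relation is immediate from condition $(iii)$: applying $\beta_Z^{\mathrm{ret}}$ to $S(F)=S(F^\psi+\delta L(\psi))$ produces $S(Z(0))^{-1}S(Z(F))$ and $S(Z(0))^{-1}S\bigl(Z(F)^\psi+\delta L(\psi)\bigr)$, which coincide by the Dynamical Relation in $\fA(M,L)$ applied to $Z(F)\in\Floc(M,L)$.

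The substantial step is compatibility with the Causality Relation. Fix $G\in\Floc(M,L)$ and $F,H\in\Floc(M,L+A_G)$ with $\supp F\cap J_-^{L+A_G}(\supp H)=\0$. A direct computation (inserting \eqref{eq:beta-Z} on both sides and cancelling the $S(Z(0))$ factors) shows that $\beta_Z^{\mathrm{ret}}$ respects \eqref{Caus} exactly when
\be\label{eq:betaZ-reduction}
S\bigl(Z(F+G+H)\bigr)=S\bigl(Z(F+G)\bigr)\,S\bigl(Z(G)\bigr)^{-1}\,S\bigl(Z(G+H)\bigr)\ .
\ee
To prove \eqref{eq:betaZ-reduction} I would put $\tilde G\doteq Z(G)$, $\tilde F\doteq Z(F+G)-Z(G)$, $\tilde H\doteq Z(G+H)-Z(G)$. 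First, since $\supp H\subseteq J_-^{L+A_G}(\supp H)$ the hypothesis already gives $\supp F\cap\supp H=\0$; hence Locality $(ii)$ yields $Z(F+G+H)=Z(F+G)-Z(G)+Z(G+H)=\tilde F+\tilde G+\tilde H$, while $\tilde F+\tilde G=Z(F+G)$ and $\tilde G+\tilde H=Z(G+H)$ trivially. Second, Proposition~\ref{prop:inv-suppZ} gives $\supp\tilde F=\supp F$, $\supp\tilde H=\supp H$, while \eqref{eq:Floc-add} places $\tilde F,\tilde H\in\Floc(M,L+A_{\tilde G})$ and $\tilde F+\tilde G+\tilde H=Z(F+G+H)\in\Floc(M,L)$. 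Third, Causal Stability $(v)$ gives $J_-^{L+A_{\tilde G}}=J_-^{L+A_{Z(G)}}=J_-^{L+A_G}$, so $\supp\tilde F\cap J_-^{L+A_{\tilde G}}(\supp\tilde H)=\0$. Then \eqref{eq:betaZ-reduction} is precisely the Causality Relation of $\fA(M,L)$ applied to the triple $\tilde F,\tilde G,\tilde H$. Hence $\beta_Z^{\mathrm{ret}}$ is a well-defined unital $*$-endomorphism of $\fA(M,L)$.

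The representation property is then formal: using that $\beta_{Z_1}^{\mathrm{ret}}$ is an algebra homomorphism,
\be
\beta_{Z_1}^{\mathrm{ret}}\bigl(\beta_{Z_2}^{\mathrm{ret}}(S(F))\bigr)
=\beta_{Z_1}^{\mathrm{ret}}\bigl(S(Z_2(0))\bigr)^{-1}\,\beta_{Z_1}^{\mathrm{ret}}\bigl(S(Z_2(F))\bigr)
=S\bigl(Z_1Z_2(0)\bigr)^{-1}S\bigl(Z_1Z_2(F)\bigr)\ ,
\ee
the two copies of $S(Z_1(0))$ cancelling; thus $\beta_{Z_1}^{\mathrm{ret}}\circ\beta_{Z_2}^{\mathrm{ret}}=\beta_{Z_1Z_2}^{\mathrm{ret}}$. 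Since $\beta_{\mathrm{id}}^{\mathrm{ret}}=\mathrm{id}_{\fA(M,L)}$ and $\Rc(M,L)$ is a group, taking $Z_2=Z^{-1}$ shows each $\beta_Z^{\mathrm{ret}}$ is bijective, hence an automorphism, and $Z\mapsto\beta_Z^{\mathrm{ret}}$ is a homomorphism into $\mathrm{Aut}(\fA(M,L))$. The advanced version $\beta_Z^{\mathrm{adv}}:S(F)\mapsto S(Z(F))S(Z(0))^{-1}$ is treated in exactly the same way: the check of the defining relations again reduces to \eqref{eq:betaZ-reduction} (equivalently, to the future-directed causal factorization obtained by interchanging past and future throughout), and the analogous cancellation gives $\beta_{Z_1}^{\mathrm{adv}}\circ\beta_{Z_2}^{\mathrm{adv}}=\beta_{Z_1Z_2}^{\mathrm{adv}}$.

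I expect the Causality Relation to be the only non-routine point, and within it the delicate issue is the mismatch of hypotheses: Locality $(ii)$ demands strictly disjoint supports, whereas the Causality axiom only assumes the directional condition $\supp F\cap J_-^{L+A_G}(\supp H)=\0$. The resolution has two parts — the directional condition nonetheless forces disjointness of supports, so $(ii)$ applies and provides the additivity of $Z$ needed to rewrite $Z(F+G+H)$; and the directional information is itself preserved under $Z$, thanks to support preservation (Proposition~\ref{prop:inv-suppZ}) together with Causal Stability $(v)$, which is exactly what is required to re-invoke the Causality Relation in the target algebra. Everything else is bookkeeping with the domains $\Floc(M,L+A_\bullet)$ via \eqref{eq:Floc-add}.
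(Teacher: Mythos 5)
Your proposal is correct and follows essentially the same route as the paper's proof: the Causality Relation is handled by rewriting $Z(F+G+H)$ via Locality $(ii)$ as $\tilde F+\tilde G+\tilde H$ with $\tilde F=Z(F+G)-Z(G)$, $\tilde H=Z(G+H)-Z(G)$, then invoking support preservation (Proposition~\ref{prop:inv-suppZ}) and Causal Stability $(v)$ to re-apply the Causality Relation in the target; the Dynamical Relation follows from $(iii)$; and the representation and automorphism properties follow from the cancellation computation together with the group structure of $\Rc(M,L)$. Your explicit observations that the directional condition already forces $\supp F\cap\supp H=\0$ and that constants are respected are correct and only make explicit what the paper leaves implicit.
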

\begin{proof} 
%Let $S'(F)\doteq  S(Z(0))^{-1}S(Z(F))$.
We first check that $\beta_Z\equiv \beta_Z^{\mathrm{ret}}$ preserves the defining relations of $\fA(M,L)$.
\begin{itemize}
\item
Causality Relation: Let $G\in\Floc(M,L)$ and $F,H\in\Floc(M,L+A_G)$ for which it holds that $J^{L+A_G}_+(\supp F)\cap\supp H=\0$. Then
\be
\begin{split}\label{eq:caus-SZ}
 &S(Z(0))\,\beta_Z(S(F+G+H))\\
 &=S(Z(F+G+H))\\
 &=S(Z(F+G)-Z(G)+Z(G+H))\\
&=S\bigl((Z(F+G)-Z(G))+Z(G)+(Z(G+H)-Z(G))\bigr)\ .
\end{split}
\ee
Since $\supp Z(F+G)-Z(G)=\supp F$ and $\supp Z(G+H)-Z(G)=\supp H$ and by Definition \ref{def1} $(v)$ $J^{L+A_{Z(G)}}_+=J^{L+A_G}_+$ we obtain
\be
\begin{split}
 &S\bigl(((Z(F+G)-Z(G))+Z(G)+(Z(G+H)-Z(G)))\bigr)\\
 &=S(Z(F+G))S(Z(G))^{-1}S(Z(G+H))\\&
 =S(Z(0))\,\beta_Z(S(F+G))\,\beta_Z(S(G))^{-1}\,\beta_Z(S(G+H))\ .
 \end{split}
\ee
\item Dynamical Relation:
\be
\begin{split}
 &S(Z(0))\,\beta_Z(S(F^{\psi}+\delta L(\psi)))=S(Z(F^{\psi}+\delta L(\psi)))\\
 &=S(Z(F)^{\psi}+\delta L(\psi))=S(Z(F))=S(Z(0))\,\beta_Z(S(F))\ .   
\end{split}
\ee
%\item Symmetries: if $(S(\cdot),Z_g)$ satisfies the anomalous unitary MWI \eqref{eq:uni-anom-MWI} for the symmetry transformation
%$g\in\Gc$, then this holds also for $(\beta_Z(S(\cdot)),Z'_g)$ for a suitable %$Z'_g\in\mathcal{R}_L$; this is shown 
%below in \eqref{eq:MWI-SZ} in Sect.~\ref{sec:RG}. 
\end{itemize}
We see that $\beta_Z$ is an endomorphism. For $Z_1,Z_2\in\mathcal{R}(M,L)$ we find
\begin{align}
    \beta_{Z_1}\beta_{Z_2}(S(F))&=\beta_{Z_1}\bigl(S(Z_2(0))^{-1}S(Z_2(F))\bigr)\\
    &=\bigl(\beta_{Z_1}(S(Z_2(0)))\bigr)^{-1}\beta_{Z_1}\bigl(S(Z_2(F))\bigr)\nonumber\\
    &=\bigl(S(Z_1(0))^{-1}S(Z_1Z_2(0))\bigr)^{-1}S(Z_1(0))^{-1}S(Z_1Z_2(F))\nonumber\\
    &=\beta_{Z_1Z_2}(S(F))
\end{align}
and with $\beta_{\mathrm{id}_{\mathcal{R}(M,L)}}=\mathrm{id}_{\fA(M,L)}$ we conclude that $Z\mapsto\beta_Z$ is a representation by automorphisms.
\end{proof}

\begin{remark}\label{rm:main-thm} 
Compared with the perturbative $S$-matrix and the St\"uckelberg-Petermann renormalization group $\mathcal{R}_0$,
Proposition \ref{prop:automorphism-by-Z} corresponds to the part of the Main Theorem of perturbative renormalization, 
stating that if $S$ is an 
admissible $S$-matrix ({\it i.e.}, satisfies the axioms for the time-ordered product) this holds also for $S\circ Z$ for all 
$Z\in\mathcal{R}_0$, see \cite{DF04,BDF09,Due19}. %(Note that $Z(0)=0$ for all $Z\in\mathcal{R}_0$.)
\end{remark}
\begin{remark}\label{rem:beta}
According to \eqref{eq:invar-suppZ}, the support of $Z(F)$ does not necessarily coincide with the support of $F$ but only the support of $Z(F)-Z(0)$ does. Hence the automorphism $\beta^{\mathrm{ret}}_Z$ does not in general preserve the local subalgebras, thus is not an automorphism of the net. Instead its image can be interpreted as a theory with an additional interaction $Z(0)$ (in the sense of a generalized field) together with a field renormalization $F\mapsto Z(F)-Z(0)$, see \eqref{eq:betaZ-alpha}. This will be crucial in the discussion of the renormalization group flow 
in Section \ref{sec:RGF}.
\end{remark}
%%%%%%%%%%%%%%%%%%%%%%%%%%%%%%%%%%
The automorphisms induced by the renormalization group transform under the addition of an interaction
in the following way:
\begin{proposition}
Let $V\in\mathrm{Int}(M,L)$ have past compact support  and $Z\in\Rc(M,L)$. Then
\begin{equation}
    \beta_Z^{\mathrm{ret}}\circ\alpha_{V,+}=\alpha_{V,+}\circ\beta_{Z^V}^{\mathrm{ret}}\ .
\end{equation}
Analogously, if $V$ is future compact, then it holds that 
$\beta_Z^{\mathrm{adv}}\circ\alpha_{V,-}=\alpha_{V,-}\circ\beta_{Z^V}^{\mathrm{adv}}$.
\end{proposition}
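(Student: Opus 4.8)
The plan is to check the identity on the generators $S_{(M,L+V)}(F)$, $F\in\Floc(M,L+V)$. This is legitimate because $Z^V\in\Rc(M,L+V)$ by Proposition~\ref{prop:ZW}, so that both $\beta_Z^{\mathrm{ret}}\circ\alpha_{V,+}$ and $\alpha_{V,+}\circ\beta_{Z^V}^{\mathrm{ret}}$ are homomorphisms $\fA(M,L+V)\to\fA(M,L)$. Writing $S\doteq S_{(M,L)}$, I would first fix $F$ and choose a single sequence $f\in\Dc_s(M)$ that is admissible in all the formulas to be used: concretely, $f\equiv1$ on a relatively compact causally convex region containing $\supp Z$, $\supp F$, the (compact) supports of the local functionals $Z^V(0)$ and $Z^V(F)$, and a neighbourhood of the intersection of $\supp V$ with the causal past (with respect to $L+V$) of those compact sets --- the latter being compact because $\supp V$ is past compact. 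With such an $f$, formula \eqref{Bog1} computes $\alpha_{V,+}$ on all the relevant generators, and \eqref{eq:renint} gives $Z^V(0)=Z(V(f))-V(f)$ and $Z^V(F)=Z(F+V(f))-V(f)$.

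Then the computation is short. For the left-hand side, apply \eqref{Bog1} and then $\beta_Z^{\mathrm{ret}}$, using that it is a homomorphism and that $\beta_Z^{\mathrm{ret}}(S(G))=S(Z(0))^{-1}S(Z(G))$; the two occurrences of $S(Z(0))$ cancel and one is left with $S(Z(V(f)))^{-1}\,S(Z(F+V(f)))$. For the right-hand side, apply $\beta_{Z^V}^{\mathrm{ret}}$ first, obtaining $S_{(M,L+V)}(Z^V(0))^{-1}S_{(M,L+V)}(Z^V(F))$, and then \eqref{Bog1} to each factor; using $Z^V(0)+V(f)=Z(V(f))$ and $Z^V(F)+V(f)=Z(F+V(f))$, the two occurrences of $S(V(f))$ cancel and one is again left with $S(Z(V(f)))^{-1}\,S(Z(F+V(f)))$. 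Hence the two sides agree on generators. The advanced statement follows by the same argument with all products taken in reverse order, \eqref{Bog2} in place of \eqref{Bog1}, the definition of $\beta_Z^{\mathrm{adv}}$, and future compactness of $\supp V$.

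The one place that needs care --- and which I expect to be the only real obstacle --- is the bookkeeping of test functions and supports: one must ensure that a single $f\in\Dc_s(M)$ meets the admissibility conditions of \eqref{Bog1} for $F$, $Z^V(0)$ and $Z^V(F)$ and of \eqref{eq:renint} for $0$ and $F$ simultaneously, and that $F+V(f)$, $Z^V(0)+V(f)$ and $Z^V(F)+V(f)$ actually lie in $\Floc(M,L)$ so that the right-hand sides are meaningful in $\fA(M,L)$. Both points are handled by the support estimate $\supp Z^V\subset\supp Z$ together with Proposition~\ref{prop:inv-suppZ} applied to $Z^V$, by past compactness of $\supp V$, and by the observation (used already in the construction of $\alpha_{V,\pm}$) that for $f\equiv1$ on a sufficiently large region the metric attached to $L+A_{V(f)}$ coincides on the relevant region with that of $L+V$, so that \eqref{eq:Floc-add} applies; no new input is needed beyond what has already been established for $\alpha_{V,+}$ and for the map $Z\mapsto Z^W$.
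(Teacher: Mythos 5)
Your proof is correct and follows essentially the same route as the paper: both sides are evaluated on generators and reduced, via $Z^V(F)+V(f)=Z(F+V(f))$, to $S(Z(V(f)))^{-1}S(Z(F+V(f)))$. The paper simply writes this as a single chain of equalities (with the test-function bookkeeping compressed into "$f\equiv1$ on a sufficiently large region"), whereas you meet in the middle and spell out the admissibility of $f$; the substance is identical.
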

\begin{proof}
Let $F\in\Floc(M,L+V)$. Then if $f\in\Dc_s(M)$ with $f\equiv 1$ on a sufficiently large region and with the abbreviations $S_{(M,L)}\equiv S$, $S_{(M,L+V)}\equiv S'$, and $\beta_{\bullet}\equiv\beta_{\bullet}^{\mathrm{ret}}$
the claim follows from
\begin{equation}
    \begin{split}
        \alpha_{V,+}\circ\beta_{Z^V}(S'(F))&=\alpha_{V,+}\left(S'(Z^V(0))^{-1}S'(Z^V(F))\right)\\
                                      &=S(V(f)+Z^V(0))^{-1}S(V(f)+Z^V(F))\\
                                      &=S(Z(V(f))^{-1}S(Z(V(f)+F))\\
                                      &=\beta_Z\left(S(V(f))^{-1}S(V(f)+F)\right)\\
                                      &=\beta_Z\circ\alpha_{V,+}(S'(F))\ .
    \end{split}
\end{equation}
\end{proof}

We collect some properties of renormalization group elements. % -- properties $(i)$ and $(iii)$ hold also for the 
%St\"uckelberg-Petermann renormalization group $\mathcal{R}_0$. 
\begin{proposition}\label{prop:Z} Let $Z\in\mathcal{R}(M,L)$, $L=L_0+V$ with the free massless Lagrangian $L_0$ and $V\in\mathrm{Int}(M,L_0)$.
\begin{enumerate}
    \item[$(i)$] For an affine functional $F=\langle\phi,h\rangle+c$, $c\in\RR$, $h\in\Dc_{\mathrm{dens}}(M,\RR^n)$ the relation
    \begin{equation}\label{eq:Z(F+c)}
        Z(F+G)=F+Z(G),\ G\in\Floc(M,L)
    \end{equation}
    holds true. Therefore the addition of a source term $\langle \phi,q\rangle$ to the Lagrangian, with a smooth density $q\in\Ec_{\mathrm{dens}}(M,\RR^n)$,
    does not change $Z$, \ie $Z^{\langle \phi,q\rangle}=Z$, and  $\Rc(M,L+\langle \phi,q\rangle)=\Rc(M,L)$.
    % For a constant functional $c$ the relation
    % \begin{equation}\label{eq:Z(F+c)}
    %     Z(F+c)=Z(F)+c
    % \end{equation}
    % holds true.
    \item[$(ii)$] 
    % Let $L=L_0+V$ with the free massless Lagrangian $L_0$ and $V\in\mathrm{Int}(M,L_0)$. 
    We have
    \begin{equation}\label{eq:suppZ(0)}
        \supp Z(0)\subset\supp Z\cap\supp V\ 
    \end{equation}
    and, more generally,
    \be\label{eq:suppZ(F)}
    \supp Z(F)\subset\supp F\cup (\supp Z\cap\supp V).
    \ee
    In addition it holds that
    \be\label{eq:supp(Z(F)-F)}
    \supp \bigl(Z(F)-F\bigr)\subset\supp Z\ ,\quad  F\in\Floc(M,L)\ .
    \ee
    \item[$(iii)$] 
    % Let $L[\phi]=\frac12\langle \phi,K\phi\rangle$ with a normal hyperbolic differential operator 
    %
    %\footnote{Or, equivalently, the assumption is that $V=L-L_0$ is of this form.} 
    %$K$ and $Z\in\mathcal{R}(M,L)$. 
    Let $V(f)\in\mathrm{Q}(M,L_0)$ for $f\in\Dc_s(M)$.
    Then $Z$ is invariant under shifts,
    \begin{equation}\label{eq:shift-Z2}
       Z(F^{\psi})=Z(F)^{\psi}\ ,\  \psi\in\Dc(M,\RR^n)\ . 
    \end{equation}
    %and
    % \begin{equation}\label{eq:linear}
    %     Z(F+\langle\phi,f\rangle)=Z(F)+\langle\phi,f\rangle \ ,\ f\in\Dc(M,\RR^n)\ .
    % \end{equation}
    %\item [$(iii)$]Let $L$ be as in the previous item and $Z\in\mathcal{R}_L$.  
    Moreover, for $F\in\mathrm{Q}(M,L)$ we find
    \begin{equation}\label{eq:quadratic}
        Z(F)=F+c_F
    \end{equation}
    with a constant functional $c_F$. In particular $Z(0)$ is a constant functional.
\end{enumerate}
\end{proposition}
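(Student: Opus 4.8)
The plan is to isolate two preliminary facts and then deduce the three statements in turn. First, a \emph{locality principle}: if $G\in\Floc(M,L)$, $F\in\Floc(M,L+A_G)$ and $\supp F\cap\supp Z=\0$, then $Z(F+G)=F+Z(G)$. Indeed, negating \eqref{eq:suppZ}, every $x\notin\supp Z$ has an open neighbourhood $U_x$ on which $Z$ acts trivially, \ie $Z(F'+G')=F'+Z(G')$ for all admissible $G'$ and all $F'$ with $\supp F'\subset U_x$; covering the compact set $\supp F$ by finitely many such $U_x$ and removing the pieces of a subordinate partition of $F$ one at a time proves the claim. Second, a \emph{field--shift identity}: inserting $F\mapsto F+V(f)$ into condition $(iv)$ of Definition~\ref{def1} gives, for $F\in\Floc(M,L)$ and all $\psi\in\Dc(M,\RR^n)$,
\[
Z\bigl(F^\psi+\delta V(\psi)\bigr)=Z(F)^\psi+\delta V(\psi)\ ,
\]
so that $Z(F^\psi)=Z(F)^\psi$ and $Z(0)^\psi=Z(0)$ whenever $\supp\psi\cap\supp V=\0$ (and unconditionally for $L=L_0$, which is \eqref{eq:shift-RL0}). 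I shall also use that a local functional invariant under all shifts is constant, and the reduction $Z\mapsto Z^{-V}\in\Rc(M,L_0)$ of Proposition~\ref{prop:ZW}, together with $Z(\widetilde F)=Z^{-V}(\widetilde F+V(f))-V(f)$, which carries identities from $\Rc(M,L_0)$ to $\Rc(M,L)$.

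I would prove $(ii)$ first. If $x\notin\supp Z$, take the trivial neighbourhood $U_x$ and a test section $\psi$ supported in $U_x$. The field--shift identity reads $Z(F+R)=Z(F)^\psi+\delta V(\psi)$ with $R\doteq(F^\psi-F)+\delta V(\psi)$ supported in $\supp\psi\subset U_x$, so by triviality $Z(F+R)=R+Z(F)$, and comparing yields $(Z(F)-F)^\psi=Z(F)-F$. Letting $\psi$ run over all test sections supported near $x$ shows $x\notin\supp\bigl(Z(F)-F\bigr)$, which is \eqref{eq:supp(Z(F)-F)}, and hence $\supp Z(F)\subset\supp F\cup\supp Z$. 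If instead $x\notin\supp F\cup\supp V$, pick $U\ni x$ disjoint from both; for $\psi$ supported in $U$ one has $\delta V(\psi)=0$ and $F^\psi=F$, whence $Z(F)^\psi=Z(F)$ and $x\notin\supp Z(F)$, so $\supp Z(F)\subset\supp F\cup\supp V$. Intersecting the two inclusions gives \eqref{eq:suppZ(F)}, and the case $F=0$ gives \eqref{eq:suppZ(0)}.

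For $(i)$ it suffices to treat $L=L_0$, the general case following from the reduction above. \emph{Constants}: given $c\in\RR$ and $G\in\Floc(M,L_0)$, choose $\psi\in\Dc(M,\RR^n)$ and $h\in\Dc_{\mathrm{dens}}(M,\RR^n)$ with $\int\sum_i\psi_ih_i=c$ and supports disjoint from the compact set $\supp G\cup\supp Z$; since $\supp Z(G)\subset\supp G$ by $(ii)$ we get $G^\psi=G$, $Z(G)^\psi=Z(G)$ and $(\langle\phi,h\rangle+G)^\psi=\langle\phi,h\rangle+c+G$, hence
\begin{align*}
\langle\phi,h\rangle+Z(c+G)&=Z\bigl(\langle\phi,h\rangle+c+G\bigr)=Z\bigl((\langle\phi,h\rangle+G)^\psi\bigr)\\
&=\bigl(\langle\phi,h\rangle+Z(G)\bigr)^\psi=\langle\phi,h\rangle+c+Z(G)
\end{align*}
by \eqref{eq:shift-RL0} and the locality principle; thus $Z(c+G)=c+Z(G)$. \emph{Linear terms}: for $h\in\Dc_{\mathrm{dens}}(M,\RR^n)$ let $\psi\in\Dc(M,\RR^n)$ satisfy $P\psi=h$ on a neighbourhood $N$ of $\supp Z$, where $P=-\square_g$ is the free field equation operator (such a $\psi$ is obtained by cutting off the retarded solution of $Pu=h$, which exists by global hyperbolicity); then $\langle\phi,h-P\psi\rangle$ has support in $M\setminus N$ and is invisible to $Z$. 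A short computation gives $\langle\phi,P\psi\rangle=\delta L_0(\psi)-\tfrac12\int g^{-1}(d\psi,d\psi)\,d\mu_g$, so using the locality principle for $h-P\psi$, the constant case, condition $(iii)$ of Definition~\ref{def1} and \eqref{eq:shift-RL0} — which give, via $G=(G^{-\psi})^\psi$, the identity $Z(\delta L_0(\psi)+G)=Z(G)+\delta L_0(\psi)$ — one obtains $Z(\langle\phi,h\rangle+G)=\langle\phi,h\rangle+Z(G)$. Combining with the constant case settles $(i)$ for affine $F$; the statement about source terms $\langle\phi,q\rangle$ then follows from \eqref{eq:renint}, the case just proved, and the fact that $Z\mapsto Z^W$ is a group isomorphism.

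For $(iii)$, when $V(f)\in\mathrm{Q}(M,L_0)$ the functional $\delta V(\psi)=V(f)^\psi-V(f)$ is affine in $\phi$, hence commutes with $Z$ by $(i)$; together with the field--shift identity this yields $Z(F^\psi)=Z(F)^\psi$ for all $\psi\in\Dc(M,\RR^n)$, \ie \eqref{eq:shift-Z2}. If moreover $F\in\mathrm{Q}(M,L)$, then $F^\psi-F$ is affine, so $(i)$ and \eqref{eq:shift-Z2} give $Z(F)^\psi=Z(F^\psi)=Z\bigl(F+(F^\psi-F)\bigr)=(F^\psi-F)+Z(F)$, whence $(Z(F)-F)^\psi=Z(F)-F$ for every $\psi$; therefore $Z(F)-F$ is a constant functional $c_F$, and in particular $Z(0)$ is constant. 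The step I expect to be the main obstacle is the linear part of $(i)$: one must recognise $\langle\phi,h\rangle$, up to a term invisible to $Z$, as the linear part of $\delta L_0(\psi)$ for a suitable \emph{compactly supported} $\psi$, which forces one to solve $P\psi=h$ near $\supp Z$ (here global hyperbolicity enters) and to dispose of the accompanying constant, itself handled via the auxiliary choice $\int\sum_i\psi_ih_i=c$; keeping the $Z\leftrightarrow Z^{-V}$ bookkeeping correct for the passage to general $L$, and setting up the locality principle from \eqref{eq:suppZ}, are the other points requiring care, while everything else is routine.
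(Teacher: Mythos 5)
Your proof is correct, and it rests on the same two pillars as the paper's: the reformulation of Definition~\ref{def1}$(iv)$ as $Z(F^\psi+\delta V(\psi))=Z(F)^\psi+\delta V(\psi)$, and the decomposition $h=h_0+K\psi$ with $h_0$ supported away from $\supp Z$ and $\psi$ a cut-off retarded solution of the wave equation. The organization differs in two places. For $(ii)$ you derive all three support statements in one stroke from the field-shift identity, by testing with $\psi$ supported near a point outside $\supp Z$ (giving $(Z(F)-F)^\psi=Z(F)-F$, hence \eqref{eq:supp(Z(F)-F)}) respectively outside $\supp F\cup\supp V$ (giving $Z(F)^\psi=Z(F)$), and intersecting the two inclusions; the paper instead proves \eqref{eq:suppZ(0)} first via a splitting $\psi=\psi_1+\psi_2$, obtains \eqref{eq:suppZ(F)} from Proposition~\ref{prop:inv-suppZ}, and gets \eqref{eq:supp(Z(F)-F)} by decomposing $F=F_1+F_2$. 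Your route is more uniform and bypasses Proposition~\ref{prop:inv-suppZ}. For $(i)$ you reduce to $L=L_0$ via $Z^{-V}\in\Rc(M,L_0)$ and treat constants and linear terms separately, whereas the paper runs the computation directly for $L=L_0+V$, absorbing the terms $V(f)-V(f)^\psi$ through condition $(iv)$; both work, and the explicit ``locality principle'' you set up (the partition-of-unity upgrade of the support definition \eqref{eq:suppZ}) is exactly what the paper uses tacitly. Part $(iii)$ is identical in substance: the affinity of $\delta V(\psi)$ for quadratic $V(f)$ is fed back into $(i)$, and the constancy of $Z(F)-F$ follows from invariance under all shifts.
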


\begin{proof}
\begin{itemize}
\item[$(i)$] 
    % This follows immediately from $\supp c=\0$ and the compactness of $\supp Z$.
Let $h=h_0+K\psi$, $h_0\in\Dc_{\mathrm{dens}}(M,\RR^n)$, $\psi\in\Dc(M,\RR^n)$ where $\supp h_0\cap\supp Z=\0$ and $K$ is minus the d'Alembertian, considered as a map from functions to densities. Then
\begin{equation}
    \langle \phi,K\psi\rangle=\delta L(\psi)-\frac12\langle\psi,K\psi\rangle-V(f)^{\psi}+V(f)
\end{equation}
with $f\in\Dc_s(M)$, $f\equiv 1$ on $\supp\psi$. Since constant functionals have empty support, we obtain, from the definition of the support of $Z$, 
\begin{equation*}
    \begin{split}
        Z(F+G)&=Z\bigl(G+\delta L(\psi)+V(f)-V(f)^{\psi}\bigr)+\langle\phi,h_0\rangle-\frac12\langle\psi,K\psi\rangle+c\\
        &=Z\bigl((G+V(f))^{-\psi}-V(f)\bigr)^{\psi}+\delta L(\psi)+\langle\phi,h_0\rangle-\frac12\langle\psi,K\psi\rangle+c\\
        &=Z(G)-\delta V(\psi)+\delta L(\psi)+\langle\phi,h_0\rangle-\frac12\langle\psi,K\psi\rangle+c\\
        &=Z(G)+F
    \end{split}
\end{equation*}    
    where the second equality follows from $(iii)$ and the third one from $(iv)$ in Definition \ref{def1}.
\item[$(ii)$]
% From Definition \ref{def1}$(v)$ we have with $f\equiv1$ on $\supp Z\cap\supp V$ and $Z_0\doteq Z^{-V}$ (i.e., $Z=Z_0^V$)
% \todo{$Z(0)=Z_0(V(f))-V(f)$ only for $f|_{\supp Z_0}=1$ (MD)}
% \begin{equation}
%     Z(0)^{\psi}=Z_0(V(f))^{\psi}-V(f)^{\psi}=Z_0(V(f)^{\psi})-V(f)^{\psi}=Z_0(V(f))-V(f)=Z(0)
% \end{equation}
% for $\supp \psi\cap\supp f=\0$.
%%%%%%%%%%%%%%%%%%%%%%
Let $\psi\in\Dc(M,\RR^n)$ with $\supp\psi\cap\supp Z\cap\supp V=\0$. We decompose $\psi=\psi_1+\psi_2$ with $\supp\psi_1\cap\supp Z=\0$, $\supp\psi_2\cap\supp V=\0$ and $\supp\psi_1\subset\supp\psi$. From Definition \ref{def1}$(iv)$ we have with $f\equiv1$ on $\supp\psi$, $f\in\Dc_s(M)$
\begin{equation}
    Z(0)^{\psi}=Z(V(f)^{\psi}-V(f))-V(f)^{\psi}+V(f)=Z(\delta V(\psi))-\delta V(\psi)\ .
\end{equation}
With $\delta V(\psi_1+\psi_2)=\delta V(\psi_1)+\delta V(\psi_2)^{\psi_1}$ and $\delta V(\psi_2)=0$ (Prop.~\ref{prop:suppA=suppdA}) 
we get
\begin{equation}
    Z(0)^{\psi}=Z(\delta V(\psi_1))-\delta V(\psi_1)=Z(0)
\end{equation}
since $\supp\delta V(\psi_1)\cap\supp Z=\0$.
 Hence $\supp Z(0)\subset \supp Z\cap\supp V$ by definition of the support of functionals.

The statement \eqref{eq:suppZ(F)} follows immediately from \eqref{eq:invar-suppZ} and \eqref{eq:suppZ(0)}:
$$    
\supp Z(F)=\supp\bigl((Z(F)-Z(0))+Z(0)\bigr)\subset\supp F\cup \supp Z(0)\ .
$$
To prove \eqref{eq:supp(Z(F)-F)} 
%let $\psi\in\Ec(M)$ with $\supp\psi\cap\supp Z=\emptyset$. 
we decompose $F=F_1+F_2$ with $\supp F_2\cap\supp Z=\emptyset$. 
% $(\supp F_1\cup\supp F_2)\cap\supp\psi=\emptyset$ and 
% $(\supp F_2\cup\supp F_3)\cap\supp Z=\emptyset$. 
By the definition of $\supp Z$ we obtain
\begin{align*}
    Z(F)&=Z(F_1)+F_2\ ,
\end{align*}
hence $Z(F)-F=Z(F_1)-F_1$. Taking into account \eqref{eq:suppZ(F)}, we conclude that $\supp(Z(F_1)-F_1)\subset\supp F_1\cup\supp Z$. Since for any $x\not\in\supp Z$ we can find a decomposition with $x\not\in\supp F_1$ we arrive at the conclusion.
% that is, $\supp(Z(F)-F)\cap\supp\psi=\emptyset$. Therefore, $(Z(F)-F)[\phi+\psi]=(Z(F)-F)[\phi]$ for all $\phi\in\Ec(M)$.

\item[$(iii)$] 
To prove \eqref{eq:shift-Z2} we use again Definition \ref{def1}$(iv)$ and get for $f\equiv 1$ on $\supp\psi$, $f\in\Dc_s(M)$
\begin{equation}
    \begin{split}
        Z(F^{\psi})&=Z((F+V(f)^{-\psi})^{\psi}-V(f))\\
        &=Z(F+V(f)^{-\psi}-V(f))^{\psi}+\delta V(\psi)\\
        &=Z(F)^{\psi}+(V(f)^{-\psi}-V(f))^{\psi}+\delta V(\psi)\\
        &=Z(F)^{\psi}
    \end{split}
\end{equation}
where we used \eqref{eq:Z(F+c)} in the third equality.

% that $L=L_0+Q$ with the free massless
% Lagrangian $L_0$ and a quadratic interaction $Q$. Then $Z\in\mathcal{R}(M,L)$ is of the form $Z_0^Q$ \eqref{eq:renint}
% for some $Z_0\in\mathcal{R}(M,L_0)$. But $Z_0$ satisfies \eqref{eq:shift-Z2} by Definition \ref{def1}$(v)$ and 
% $Q(f)^{-\psi}-Q(f)$ is an affine functional, hence \eqref{eq:shift-Z2} follows from \eqref{eq:Z(F+c)} and \eqref{eq:linear} for $Z_0$:
% \begin{align*}
%  Z(F^\psi)&=Z_0\bigl(Q(f)+F^\psi\bigr)-Q(f)\\
%  &=Z_0\bigl(Q(f)+F+[Q(f)^{-\psi}-Q(f)]\bigr)^\psi-Q(f)\\
%  &=\Bigl(Z_0\bigl(Q(f)+F\bigr)+[Q(f)^{-\psi}-Q(f)]\Bigr)^\psi-Q(f)\\
%  &=Z_0\bigl(Q(f)+F\bigr)^\psi-Q(f)^\psi\\
%  &=Z(F)^\psi \ .  
% \end{align*}
In particular, for a quadratic functional $F\in\mathrm{Q}(M,L)$ the difference $F-F^\psi$ is affine, hence
\begin{equation}
    Z(F)-F=Z(F^\psi+(F-F^\psi))-F=Z(F^\psi)-F^\psi=(Z(F)-F)^\psi,
\end{equation}
from which we conclude that $Z(F)-F$ is a constant functional.
\end{itemize}
\end{proof}
%%%%%%%%%%%%%%%%%%%%%%%%%%%%%%%%%%%%%%%%%%%%%%%%%%%%%%%%%%%%%%%%%%

\section{Master Ward Identity}\label{sec:MWI}
% and the time slice axiom}
%We add to our axioms the time slice axiom:%
%\footnote{\red{Here we crucially use that we interprete the generators
%$S(F)$ of $\mathfrak{A}(\Mcc)$ as \emph{onshell} $S$-matrices, otherwise
%it would not make sense to postulate the time slice axiom and the 
%axiom Dynamics would be more involved \cite{BF19}.}}
% 
%\begin{axiom}[Time slice] 
%If $\chi(M)$ contains a Cauchy surface of $M'$ then $\alpha_{\chi}$ is surjective (hence bijective).
%\end{axiom} 
 In certain cases several elementary morphisms exist between objects of $\mathfrak{Loc}$. Namely let $M'=M$, let $\chi$ be  a diffeomorphism of $M$ with compact support, let $L'=\chi_{\ast}L$ and assume that the time orientations coincide outside of the support of $\chi$. Set 
\be
 \delta_{\chi}L \doteq (L'-L)(f)\in\Floc(M,L)\ ,\quad f\in\Dc_s(M)\ \textrm{with}\ f\equiv 1\ \textrm{on}\ \supp\chi\ .
\ee

$\delta_{\chi}L$ can be interpreted as a retarded or an advanced interaction.\footnote{For the simplicity of notation, we will write $\delta_{\chi}L$ instead of $A_{\delta_{\chi}L}$.} Hence we obtain the arrow  
$\iota_\chi$ from $(M,L)\to(M,L')$ and the arrows $\iota_{\delta_{\chi}L,+},\iota_{\delta_{\chi}L,-}$ from $(M,L')\to(M,L)$ and can build the automorphisms
\be
\alpha^{\chi}_{\pm}\doteq 
%(\alpha_{-\delta_{\chi}L,{\pm}})^{-1}\circ\alpha_{\chi}
\alpha_{\delta_{\chi}L,{\pm}}\circ\alpha_{\chi}
\ee
of $\fA(M,L)$. We compute
\be
\alpha^{\chi}_+(S_{(M,L)}(F))=S_{(M,L)}(\delta_{\chi}L)^{-1}S_{(M,L)}(\chi_*F+\delta_{\chi}L)
\ee
and
\be
\alpha^{\chi}_-(S_{(M,L)}(F))=S_{(M,L)}(\chi_*F+\delta_{\chi}L)S_{(M,L)}(\delta_{\chi}L)^{-1}\ .
\ee
If the support of $\chi$ does not intersect the past (future) of $\supp F$, 
we have $\chi_*F=F$ and from the Causality Relation we conclude that $\alpha^{\chi}_{\pm}$ acts trivially on $S_{(M,L)}(F)$.

%Now let $N\subset M$ be such that it contains a Cauchy surface of $M$ and that 
%$\supp\chi\cap J^L_\mp(N)=\emptyset$. As explained, it holds that
%$\alpha_\pm(S(F))=S(F)$ for all $S(F)\in\fA(N,L,t)$. By the time slice axiom we 
%arrive at the identity
%%From Causality we conclude that $\alpha_{\pm}$ acts trivially if the support of $\chi$ does not intersect the past (future) of $\supp F$. Due to the time slice axiom we obtain the identities
%\be\label{MWI1}
%S(F)=S(\delta_{\chi}L)^{-1}S(\chi_*F+\delta_{\chi}L)=S(\chi_*F+\delta_{\chi}L)S(\delta_{\chi}L)^{-1}
%\quad\text{in $\fA(M,L,t)\equiv\fA(\Mcc)$.}
%\ee
%\lila{Writing this identity equivalently as
%$$
%S(\chi_*F+\delta_{\chi}L)=S(\delta_{\chi}L)\,S(F)=S(F)\,S(\delta_{\chi}L),
%$$
%it has the same form as the off-shell Schwinger-Dyson equation \cite{BF19}.\\
%Comparing with the old version of the unitary on-shell MWI, we conjecture for the case of a \emph{scalar} anomaly: 
%the anomaly of the unitary MWI is precisely given by the factor $S(\delta_{\chi}L)$.}

An analogous discussion can be performed for compactly supported affine field redefinitions $\Phi$. Let again $M'=M$, and let $L'=\Phi_{\ast}L$. Set $\delta_{\Phi}L=L'-L$ and define
\be 
\alpha^{\Phi}_{\pm}
%=(\alpha_{-\delta_{\Phi}L})^{-1}\circ\alpha_{\Phi}
\doteq\alpha_{\delta_{\Phi}L,\pm}\circ\alpha_{\Phi}\ .
\ee
In the case of a translation $\Phi:\phi\to\phi+\psi$ we see that the Dynamical Relation fixes these automorphisms to the identity.

The diffeomorphisms $\chi$ act on affine field redefinitions $\Phi=(A,\phi_0)$ by
\begin{equation}\label{eq:semi}
    \chi\Phi\doteq (A\circ\chi^{-1},\phi_0\circ\chi^{-1})\ .
\end{equation}
We find the following commutation relation
\begin{equation}\label{eq:chi-Phi}
    \chi_{\ast }\Phi_\ast =(\chi\Phi)_\ast \chi_\ast \ .
\end{equation}
Motivated by this property, we define:

\begin{definition}[\textbf{Symmetry Transformations}]\label{def:G}
The group of (compactly supported) \textbf{symmetry transformations} $$\mathpzc{G}_c(M)\doteq\mathcal{C}^{\infty}_c(M,\mathrm{Aff}(\RR^n))\rtimes\mathrm{Diff}_c(M)$$ 
is the semidirect product of the group of compactly supported diffeomorphisms of $M$, $\mathrm{Diff}_c(M)$, with the group of compactly supported  smooth and affine field redefinitions,  $\mathcal{C}^{\infty}_c(M,\mathrm{Aff}(\RR^n))$, with respect to the action \eqref{eq:semi},
{\it i.e.}, the product in $\mathpzc{G}_c(M)$ is defined by
\footnote{Hence, we obtain the same commutation law as in \eqref{eq:chi-Phi}:
$(\mathrm{id}_\Phi,\chi)\cdot(\Phi,\mathrm{id}_\chi)=((\chi\Phi),\mathrm{id}_\chi)\cdot(\mathrm{id}_\Phi,\chi)$\ .}
\begin{equation}
(\Phi_1,\chi_1)\cdot (\Phi_2,\chi_2)\doteq(\Phi_1\circ(\chi_1\Phi_2),\chi_1\chi_2)\ .
\end{equation}
The support of such a transformation $\g\doteq(\Phi,\chi)\in \mathpzc{G}_c(M)$ is
\begin{equation}
    \supp{(\Phi,\chi)}\doteq\supp\Phi\cup\supp\chi\ .
\end{equation}
\end{definition}

The group $\mathpzc{G}_c(M)$ acts on $\Floc(M,L)$ by
\begin{equation}
    (\Phi,\chi)_{\ast }\doteq\Phi_{\ast }\chi_{\ast }\ .
\end{equation}
By using \eqref{eq:chi-Phi} we verify that
\be
(\g_1\g_2)_\ast  = \g_{1\ast }\,\g_{2\ast }\ .
\ee
We also see that the support of $\g\in\mathpzc{G}_c(M)$ is the smallest closed subset $N$ of $M$ such that 
$\supp F\cap N=\emptyset$ implies $\g_\ast  F=F$.
%For example, we obtain:\todo{to be worked out}
%\begin{align*}
%    &(\chi\Phi)_\ast  F[\phi]=\chi_\ast (\Phi_\ast  F[\phi])=
%    F[\Phi(\phi)\circ\chi]=F[A(\chi(x))\phi(\chi(x))+\phi_0(\chi(x))]
%\\
%    &(\Phi\chi)_\ast  F[\phi]=\Phi_\ast (\chi_\ast  F[\phi])=
    %F[\Phi(\phi\circ\chi)]=F[A(x)\phi(\chi(x))+\phi_0(x)].
%\end{align*}
The \emph{diffeomorphism part} $\tildeg$ of $\g\in\mathpzc{G}_c(M)$ can be recovered from $\g_{\ast }$
by
%For any $g\in \Gc$ let $\tilde g:M\to M'$ be a diffeomorphism determined by 
\be\label{eq:tilde-g}
\supp(\g_\ast  F)=\tildeg(\supp F)\ ,\quad
F\in\Floc(M,L)\ .
\ee
%Of course, for $g=\chi$ (i.e., a structure preserving embedding) we have $\tilde g=g$ and for $g=\Phi$
%(i.e., an affine field redefinition) it holds $\tilde g=\mathrm{id}_M$.
We set %$\delta_\g L\doteq\g_{\ast }L-L$ 
$\delta_\g L\doteq\g_{\ast}L(f)-L(f)$ with $f\in\Dc_s(M)$ and $f\vert_{\supp\g}=1$ and observe the relation
\begin{equation}
    \delta_{\g\h}L=\g_{\ast }\delta_\h L+\delta_{\g}L\ ,\ 
    \g,\h\in\mathpzc{G}_c(M)\ .
\end{equation}
We also set 
\begin{equation}
    \alpha_{\pm}^{(\Phi,\chi)}\doteq\alpha_{\pm}^{\Phi}\circ\alpha_{\pm}^{\chi}\ ,
\end{equation}
with $\g=(\Phi,\chi)$; this can equivalently be written as
\be\label{eq:alpha+g}
\alpha_+^{\g}\bigl(S_{(M,L)}(F)\bigr)=S_{(M,L)}(\delta_\g L)^{-1} S_{(M,L)}(\g_\ast  F+\delta_\g L)\ ,
\ee
and analogously for $\alpha_-^{\g}\bigl(S_{(M,L)}(F)\bigr)$; finally we find 
\be\label{eq:alpha-gh}
\alpha_{\pm}^{\g\h}=\alpha_{\pm}^\g\circ\alpha_{\pm}^\h\ . 
\ee

In a first step we compute these automorphisms for the case of a linear field redefinition, a quadratic Lagrangian $L=\frac12\langle\phi,K\phi\rangle$, with a normally hyperbolic formally selfadjoint differential operator $K$ (again considered as a map from functions to densities), and affine functionals $F_h=\langle\phi,h\rangle+\frac12\langle h,\Delta_{\mathrm{adv}} h\rangle =
\langle\phi,h\rangle+\frac12\langle h,\Delta_{\mathrm{ret}} h\rangle $, $h\in\Dc_{\mathrm{dens}}(M,\RR^n)$ with the retarded ($\Delta_{\mathrm{ret}}$) and advanced ($\Delta_{\mathrm{adv}}$) inverses of $K$.
Let 
\be
\Phi[\phi](x)=A^t(x)\phi(x)
\ee
where $A$ is a $\mathrm{GL}(n,\RR)$-valued smooth function with compact support and the upper index $t$ denotes the transpose. Then
\be
\delta_{\Phi}L(x)=\frac12 \langle \phi,((AKA^t-K)\phi\rangle\ .
\ee
We use the formula
\be
S_{(M,L)}(G)S_{(M,L)}(\langle\phi,h\rangle)=S_{(M,L)}(G^{\Delta_{\mathrm{adv}}h}+\langle\phi,h\rangle)
\ee
derived in \cite{BF20} for test densities $h$ and local functionals $G$.
We have
\be
(\delta_{\Phi}L)^{\Delta_{\mathrm{adv}}h}+\langle\phi,h\rangle
=\delta_{\Phi}L+\langle\phi,(AKA^t\Delta_{\mathrm{adv}} h\rangle +\frac12\langle\Delta_{\mathrm{adv}} h,((AKA^t-K)\Delta_{\mathrm{adv}} h\rangle
\ee
and hence with $h'\doteq K(A^t)^{-1}\Delta_{\mathrm{adv}} h$ we get
\be
S_{(M,L)}(\delta_{\Phi}L+\Phi_{\ast}F_h)=S_{(M,L)}(\delta_{\Phi}L)S_{(M,L)}(\langle\phi,h'\rangle)e^{\frac{i}{2}\langle h',\Delta_{\mathrm{adv}} h'\rangle}
\ee
hence 
\be
\alpha^{\Phi}_+(S_{(M,L)}(F_h))=S_{(M,L)}(F_{h'})\ .
\ee
Now, the elements $W(h)\doteq  S(F_h)$ satisfy the Weyl relations
\be
W(h_1)W(h_2)=W(h_1+h_2)e^{-\frac{i}{2}\langle h_1,\Delta h_2\rangle}\ ,\ W(K\psi)=1\ ,\ \psi\in\Dc(M,\RR^n)\ ,
\ee
where $\Delta=\Delta_{\mathrm{ret}}-\Delta_{\mathrm{adv}}$, as derived in \cite{BF19}. Since
\be
h'=h+K((A^t)^{-1}-1)\Delta_{\mathrm{adv}}h
\ee
we conclude that $\alpha^{\Phi}_+$ acts trivially on $S_{(M,L)}(F_h)$. 
This remains true when we replace $F_h$ by a general affine functional $F=F_h+c,\,\,c\in\RR$, due to $S_{(M,L)}(G+c)=S_{(M,L)}(G)\,e^{ic}$. 
%with affine functionals $F$.
Analogously we find $\alpha^{\Phi}_-(W(h))=W(h)$.
This implies in particular that $S_{(M,L)}(\delta_{\Phi}L)$ commutes with all Weyl operators and is thus a multiple of 1 in an 
irreducible representation of the Weyl algebra.
Indeed, in a perturbative calculation only numerical terms can contribute to $S_{(M,L)}(\delta_{\Phi}L)$.

An analogous calculation can be done for a generic element of $\mathpzc{G}_c(M)$. To summarize,
we have found cases in which $S_{(M,L)}(\delta_{\g}L+\g_{\ast }F)$ is equal to $S_{(M,L)}(F)$ up to a 
constant phase factor $S_{(M,L)} (\delta_{\g} L)$.

In a formal path integral calculation the transformation $\g\in\mathpzc{G}_c(M)$ corresponds to a change of variables. 
There the time ordered exponential of some local functional $F$ corresponds to the time ordered exponential of 
$\delta_{\g} L+\g_{\ast }F$ up to the Jacobian of the transformation. In renormalized perturbation theory this is taken into account in the \emph{BV-Laplacian}, which is a term in the \emph{quantum master equation}. Within causal perturbation theory it results in the anomalous Master Ward Identity. See \cite{FredenhagenR13} for the detailed account of the BV formalism and its connection to the anomalous Master Ward Identity.

In our axiomatic framework we incorporate it in the following axiom: 
We introduce an $L$-dependent action of $\mathpzc{G}_c(M)$ on $\Floc(M,L)$
\begin{equation}\label{eq:g_L}
    (\g,F)\mapsto \g_LF\doteq \delta_{\g} L+\g_{\ast } F\ .
\end{equation}
Obviously $\mathpzc{e}_L=\mathrm{id}_{\Floc(M,L)}$ for the unit $\mathpzc{e}\in \mathpzc{G}_c(M)$ and, since 
$(\g\h)_\ast =\g_\ast  \h_\ast $, we easily verify that 
$(\g\h)_L=\g_{L} \h_{L}$. The possible deviation of the MWI is described in terms of a map $$\zeta:\mathpzc{G}_c(M)\rightarrow \Rc(M,L)\,,$$ from the group of symmetry transformations $\mathpzc{G}_c(M)$ (Def. \ref{def:G}) to the renormalization group of the Lagrangian $L$, $\Rc(M,L)$ (Def. \ref{def1}),
satisfying $\zeta_\e=\mathrm{id}_{\Floc(M,L)}$, $\supp\zeta_{\g}\subset\supp \g$ and the \emph{cocycle relation}
\begin{equation}\label{eq:cocycle}
 \zeta_{\g\h}=\zeta_{\h}(\zeta_{\g})^{\h}\quad\text{where}\quad
(\zeta_\g)^\h\doteq\h_L^{-1}\zeta_{\g} \h_L\ ,\quad \g,\h\in\G_c(M)\ ,
\end{equation}
The set of these cocycles is denoted by $\mathfrak{Z}(M,L)$.
% Now we require that the $S$-matrices fulfil the following additional axiom:

\begin{axiom}[\textbf{Symmetries}] 
Let $(M,L)$ be some dynamical spacetime.
A representation $\pi$ of $\fA(M,L)$ satisfies the \textbf{unitary anomalous Master Ward Identity} (unitary AMWI) if there exists
some $\zeta\in\mathfrak{Z}(M,L)$ 
% from the group of symmetry transformations $\mathpzc{G}_c(M)$ (Def. \ref{def:G}) to the renormalization group of the Lagrangian $L$, $\Rc(M,L)$ (Def. \ref{def1}),
%\to\Rc(M,L)\ ;\ \mathpzc{g}\mapsto\zeta_\mathpzc{g}$ 
%be a map from the group of symmetry transformations to the renormalization group of the Lagrangian $L$ (Definitions \ref{def1} and \ref{def:G}) %with $\supp\zeta_{\mathpzc{g}}\subset\supp \mathpzc{g}$. 
% satisfying $\zeta_\e=\mathrm{id}_{\Floc(M,L)}$, $\supp\zeta_{\g}\subset\supp \g$ and the \emph{cocycle relation}
% \begin{equation}\label{eq:cocycle}
%  \zeta_{\g\h}=\zeta_{\h}(\zeta_{\g})^{\h}\quad\text{where}\quad
% (\zeta_\g)^\h\doteq\h_L^{-1}\zeta_{\g} \h_L\ ,\quad \g,\h\in\G_c(M)\ ,
% \end{equation}
such that
\begin{equation}\label{eq:uni-anom-MWI}
    \pi\circ S_{(M,L)}\circ\g_L=\pi\circ S_{(M,L)}\circ\zeta_\g\ ,\quad \g\in\mathpzc{G}_c(M) \ .
\end{equation}
%The algebra $\fA(M,L,\zeta)$ is generated by elements $S_{(M,L,\zeta)}(F)$ which satisfy, 
%in addition to the Causality and Dynamical Relations, the identity
% \be\label{eq:uni-anom-MWI}
% %S_{(M,L,\zeta)}(\delta_{\mathpzc{g}}L+\mathpzc{g}_{\ast }F)=S_{(M,L,\zeta)}(\zeta_{\mathpzc{g}}(F))\ ,\ \mathpzc{g}\in\mathpzc{G}(M),\ %F\in\Floc(M,L)\ . 
% S_{(M,L,\zeta)} \circ\mathpzc{g}_L=S_{(M,L,\zeta)} \circ \zeta_{\mathpzc{g}}\ ,\quad\forall\mathpzc{g}\in\mathpzc{G}(M)\ .
% \ee
\end{axiom}
%Let an arbitrary symmetry transformation $g\in\Gc$ be given and let 
%$g_\ast  F$  and $\delta_g L\doteq L'-L=g_\ast  L-L$ as defined in Sect.~\ref{sec:frame}. Then
%there exists a $\zeta_g\in\mathcal{R}_L$, the renormalization group associated to the Lagrangian $L$ (Definition %\ref{def1}), with $\supp\zeta_g\subset\supp g,$\todo{new condition on the support (KF)}
%such that the generators $S_{\Mcc}(F)$ of $\mathfrak{A}(\Mcc)$ fulfil 
%\be\label{eq:uni-anom-MWI}
%S_\Mcc (\delta_g L+g_\ast  F)=S_\Mcc\bigl(\zeta_g(F)\bigr),
%\ee 
%for all $F$ being local and having compact support. 
%We call \eqref{eq:uni-anom-MWI} the \textbf{unitary anomalous Master Ward Identity} (unitary AMWI).

%In Appendix \ref{sec:pQFT} we show that in perturbation theory the unitary AMWI
%follows from the AMWI as established in \cite{Brennecke08}.
Let $I_{\zeta}$ denote the intersection of all ideals annihilated by representations which satisfy the unitary AMWI for a specific $\zeta$, and let $\fA(M,L,\zeta)$ denote the quotient $\fA(M,L)/I_{\zeta}$ and $\fA_{(M,L,\zeta)}$ the corresponding net. The C*-algebra $\fA(M,L,\zeta)$ is generated by the unitaries $S_{(M,L,\zeta)}(F)=S_{(M,L)}(F)+I_{\zeta}$. 

For any $F$ with $\supp F\cap\supp\g =\emptyset$ the identity \eqref{eq:uni-anom-MWI} reads
\be\label{eq:clear}
S_{(M,L,\zeta)} (\delta_{\g} L+F)=S_{(M,L,\zeta)}(\zeta_{\g}(0)+F)\ ,\,\,\text{in particular}\quad
S_{(M,L,\zeta)} (\delta_{\g} L)=S_{(M,L,\zeta)}\bigl(\zeta_{\g}(0)\bigr)\ ,
\ee
by using Definition \ref{def1}$(i)$. 

\begin{remark}\label{rk:Z(0)}  By \eqref{eq:clear} we see the reason why we do not require that the elements $Z$ of the renormalization
group $\Rc(M,L)$ satisfy $Z(0)=\mathrm{constant}$ (as it is done for the St\"uckelberg-Petermann group $\Rc_0$, cf.~Remark \ref{rm:SP}). Actually, terms like $Z(0)$, besides Prop.~\ref{prop:automorphism-by-Z}, will play a prominent role later (see Sect.~\ref{sec:RGF}).
\end{remark}

In Section~\ref{sec:pQFT} we show that in perturbation theory the unitary AMWI
is essentially equivalent to the AMWI as established in \cite{Brennecke08}.
If $\zeta_{\g}=\mathrm{id}_{\Floc(M,L)}$, we call \eqref{eq:uni-anom-MWI} the \emph{unitary Master Ward Identity}
(unitary MWI); in perturbation theory the latter is equivalent to the on-shell MWI introduced in
\cite{DB02,DF03} and \cite[Chap.~4]{Due19}.%
\footnote{For perturbative, scalar QED a proof of the equivalence of the unitary MWI to the on-shell MWI is given in \cite{DPR21}.} 
The (on-shell) MWI is a universal formulation of symmetries valid in classical field theory \cite{DF03}. For this reason, we call 
$\zeta:\g\mapsto \zeta_{\g}$ the \emph{anomaly map}.

% Given $\zeta:\mathpzc{g}\mapsto \zeta_{\mathpzc{g}}$, the unitary AMWI generates a closed ideal $I_{\zeta}$ in $\fA(M,L)$,  with quotient $\fA(M,L,\zeta)$.
% \todo{covariance?}

%We introduce an $L$-dependent action of $\mathpzc{G}(M)$ on $\Floc(M,L)$
%\begin{equation}
%    (\mathpzc{g},F)\mapsto \mathpzc{g}_LF\doteq \delta_{\mathpzc{g}} L+\mathpzc{g}_{\ast } F\ .
%\end{equation}\label{eq:g_L}
%Obviously $\mathpzc{e}_L=\mathrm{id}_{\Floc(M,L)}$ for the unit $\mathpzc{e}\in \mathpzc{G}(M)$ and, since 
%$(\mathpzc{g}\mathpzc{h})_\ast =\mathpzc{g}_\ast  \mathpzc{h}_\ast $, we easily verify that 
%$(\mathpzc{g}\mathpzc{h})_L=\mathpzc{g}_{L} \mathpzc{h}_{L}$.
%%One easily verifies $(gh)_L=g_{L} h_{L}$ and $e_L=\mathrm{id}_{\Floc(M,L)}$ for the unit $e\in \Gc$.
%The unitary AMWI then reads $S_{(M,L,\zeta)} \circ\mathpzc{g}_L=S_{(M,L,\zeta)} \circ \zeta_{\mathpzc{g}}$.
% The motivation to require that the map $\zeta:\mathpzc{G}_c(M)\to\Rc(M,L)$ appearing in the unitary AMWI 
% \eqref{eq:uni-anom-MWI} satisfies 
The cocycle relation \eqref{eq:cocycle} is motivated by the following result:

\begin{proposition}\label{prop:AMWI-cocyclerelation}
%The map $\zeta:\mathpzc{g}\mapsto \zeta_{\mathpzc{g}}$ which occurs in the unitary AMWI 
%\eqref{eq:uni-anom-MWI} has to satisfy the following relation
%\be
%S_{(M,L,\zeta)}\bigl(\zeta_{\mathpzc{g}\mathpzc{h}}(F)\bigr)
%=S_{(M,L,\zeta)}\bigl(\zeta_{\mathpzc{h}}(\zeta_{\mathpzc{g}})^{\mathpzc{h}}(F)\bigr)                \ .
%\ee
%with $(\zeta_\mathpzc{g})^\mathpzc{h}\doteq\mathpzc{h}_L^{-1}\zeta_{\mathpzc{g}} \mathpzc{h}_L$.
Let $\Ups:\mathpzc{G}_c(M)\to\Rc(M,L)$ and let $\pi$ be a representation of $\fA(M,L)$ such that 
\be\label{eq:amwi}\pi\circ S_{(M,L)} \circ\g_L=\pi\circ S_{(M,L)} \circ \Ups_{\g}\ .
\ee
Then it holds that
\be\label{eq:S(cocycle)}
\pi\circ S_{(M,L)}\circ\Ups_{\g\h}
=\pi\circ S_{(M,L)}\circ\Ups_{\h}\circ(\Ups_{\g})^{\h}\quad\text{with}\quad
(\Ups_\g)^\h\doteq\h_L^{-1}\Ups_{\g} \h_L\ .
\ee
\end{proposition}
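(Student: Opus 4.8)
The plan is to treat the hypothesis \eqref{eq:amwi} as an identity between \emph{maps} from $\Floc(M,L)$ into $\pi(\fA(M,L))$, i.e.\ $\pi\circ S_{(M,L)}\circ\g_L=\pi\circ S_{(M,L)}\circ\Ups_\g$, and to chase this identity through the multiplicativity of the $L$-dependent action. The only structural input I would use is the relation $(\g\h)_L=\g_L\circ\h_L$ recorded just above the Axiom ``Symmetries'' (itself a consequence of $\delta_{\g\h}L=\g_\ast\delta_\h L+\delta_\g L$ and $(\g\h)_\ast=\g_\ast\h_\ast$), together with the observation that $\g_L$, $\h_L$ and their inverses $(\g^{-1})_L$, $(\h^{-1})_L$, as well as the renormalization-group elements $\Ups_\g\in\Rc(M,L)$, are all bijections of $\Floc(M,L)$, so that every composition written below is well defined; in particular $(\Ups_\g)^\h=\h_L^{-1}\circ\Ups_\g\circ\h_L$ is again a self-map of $\Floc(M,L)$.

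First I would apply \eqref{eq:amwi} to the product $\g\h$ and substitute $(\g\h)_L=\g_L\circ\h_L$, obtaining $\pi\circ S_{(M,L)}\circ\Ups_{\g\h}=\pi\circ S_{(M,L)}\circ\g_L\circ\h_L$. Since \eqref{eq:amwi} for $\g$ is an equality of set maps, I may precompose it with the bijection $\h_L$ (equivalently, evaluate both sides at arguments of the form $\h_LF$), which replaces $\g_L$ by $\Ups_\g$ and yields $\pi\circ S_{(M,L)}\circ\Ups_{\g\h}=\pi\circ S_{(M,L)}\circ\Ups_\g\circ\h_L$. Then I would insert $\h_L\circ\h_L^{-1}=\mathrm{id}_{\Floc(M,L)}$ immediately to the left of $\Ups_\g$ and recognize the resulting middle block as $(\Ups_\g)^\h$, so that $\Ups_\g\circ\h_L=\h_L\circ(\Ups_\g)^\h$. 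Finally I would precompose \eqref{eq:amwi} for $\h$ with the self-map $(\Ups_\g)^\h$, which turns $\h_L$ into $\Ups_\h$ and produces exactly the claimed identity \eqref{eq:S(cocycle)}, $\pi\circ S_{(M,L)}\circ\Ups_{\g\h}=\pi\circ S_{(M,L)}\circ\Ups_\h\circ(\Ups_\g)^\h$.

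I do not expect a genuine obstacle; the argument is a short diagram chase. The one point demanding care --- and the closest thing to a ``main difficulty'' --- is the bookkeeping of the fact that \eqref{eq:amwi} is merely an equality of set-valued maps, not of algebra homomorphisms: one is allowed to \emph{precompose} it with maps $\Floc(M,L)\to\Floc(M,L)$ (here $\h_L$ and $(\Ups_\g)^\h$), but one must never postcompose with an algebra map nor factor $\pi\circ S_{(M,L)}$ across a sum or product of functionals. Keeping each composition on the correct side is precisely what makes the chain close up to give the cocycle relation in the stated order $\zeta_{\g\h}=\zeta_\h\,(\zeta_\g)^\h$ rather than its reverse; this also explains why the inner twist $(\Ups_\g)^\h$ is conjugation by $\h_L$, the $L$-dependent action, and not by $\h_\ast$.
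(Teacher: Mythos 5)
Your argument is correct and is essentially identical to the paper's proof: the same five-link chain $S\circ\Ups_{\g\h}=S\circ(\g\h)_L=S\circ\g_L\circ\h_L=S\circ\Ups_\g\circ\h_L=S\circ\h_L\circ(\Ups_\g)^\h=S\circ\Ups_\h\circ(\Ups_\g)^\h$, using $(\g\h)_L=\g_L\h_L$ and precomposition of \eqref{eq:amwi} with $\h_L$ and $(\Ups_\g)^\h$. Your cautionary remark about only precomposing the set-map identity is also exactly the point the paper makes afterwards when explaining why \eqref{eq:S(cocycle)} does not automatically upgrade to the cocycle relation on $\Floc(M,L)$ itself.
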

\begin{proof} 
%We shall write $S\equiv S_{(M,L,\zeta)}$ for simplicity. A straightforward application of the definitions gives
%\be
%\begin{split}
%S(\zeta_{\mathpzc{g}\mathpzc{h}}(F))=S((\mathpzc{g}\mathpzc{h})_LF)=S(\mathpzc{g}_L(\mathpzc{h}_LF))
%&=S(\zeta_{\mathpzc{g}}(\mathpzc{h}_LF))\\
%&=S(\mathpzc{h}_L(\zeta_{\mathpzc{g}})^{\mathpzc{h}}(F))=S(\zeta_{\mathpzc{h}}(\zeta_{\mathpzc{g}})^{\mathpzc{h}}(F))\ .
%\end{split}
%\ee
We shall write $S\equiv \pi\circ S_{(M,L)}$ for simplicity. A straightforward application of \eqref{eq:amwi} and the definition of 
$(\Ups_\g)^\h$ gives
\be
S\circ\Ups_{\g\h}=S\circ(\g\h)_L=S\circ\g_L\circ\h_L
=S\circ\Ups_{\g}\circ\h_L
=S\circ\h_L\circ(\Ups_{\g})^{\h}=S\circ\Ups_{\h}\circ(\Ups_{\g})^{\h}\ .
\ee
%\be
%\begin{split}
%S\circ\Ups_{\mathpzc{g}\mathpzc{h}}=S\circ(\mathpzc{g}\mathpzc{h})_L=S\circ\mathpzc{g}_L\circ\mathpzc{h}_L
%&=S\circ\Ups_{\mathpzc{g}}\circ\mathpzc{h}_L\\
%&=S\circ\mathpzc{h}_L\circ(\Ups_{\mathpzc{g}})^{\mathpzc{h}}=S\circ\Ups_{\mathpzc{h}}\circ(\Ups_{\mathpzc{g}})^{\mathpzc{h}}\ .
%\end{split}
%\ee
%\be
%=S(\delta_gL+g_{\ast }(\delta_hL+h_{\ast }F))=S(Z_g(\delta_hL+h_{\ast }F))
%\ee
%\be
%=S(\delta_hL+h_{\ast }Z_g^h(F))=S(Z_h(Z_g)^h(F))
%\ee
\end{proof}
Note that in causal perturbation theory with $S$-matrices evaluated on off-shell field configurations \cite{DF04}, the
relation \eqref{eq:S(cocycle)} even implies the cocycle relation \eqref{eq:cocycle}. In contrast, the $S$-matrices in Proposition \ref{prop:AMWI-cocyclerelation} would correspond to perturbative $S$-matrices evaluated on \emph{on-shell configurations} and are not injective maps on $\Floc(M,L)$, so the cocycle relation does not automatically follow.
% But by Proposition \ref{prop:Z}$(i)$ we can choose $\zeta$ independent of the source term; moreover the addition of source terms does  not influence the action $\g:\zeta\to\zeta^{\g}$ in the cocycle relation. So we obtain the relation independet of the source term. Since the direct product over all sources of the group generated by the $S$-matrices $S_{(M,L+\langle\phi.q\rangle,\zeta)}$ can draw the same conclusion as in the perturbative framework, if we restrict $\Floc$ to polynomial functionals. \todo{new (KF), I need more details (MD)}

%We therefore require that $\zeta$ fulfils the \emph{cocycle relation}
%\begin{equation}\label{eq:cocycle}
% \zeta_{\mathpzc{g}\mathpzc{h}}=\zeta_{\mathpzc{h}}(\zeta_{\mathpzc{g}})^{\mathpzc{h}}\ .
%\end{equation}

% Similarly to $\zeta_\mathpzc{g}\mapsto(\zeta_{\mathpzc{g}})^\mathpzc{h}$ 
In the formulation of the cocycle condition, we used an action of
the group $\mathpzc{G}_c(M)$ on  renormalization group elements $Z\in\mathcal{R}(M,L)$ by
\be \label{eq:Zg}
Z\mapsto Z^\g \doteq \g_L^{-1}Z\g_L\quad\text{for}\quad \g\in\mathpzc{G}_c(M).
\ee
This is well defined due to the following result:

\begin{lemma}\label{lem:Zg-in-R}
Let $Z\in\Rc(M,L)$ and $\g\in\mathpzc{G}_c(M)$ and let $Z^\g$ be defined by \eqref{eq:Zg}. Then, $Z^\g \in\Rc(M,L)$.
\end{lemma}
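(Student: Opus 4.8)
The strategy is to verify that $Z^\g := \g_L^{-1} Z \g_L$ satisfies each of the five defining conditions $(i)$--$(v)$ of Definition~\ref{def1}, by transporting the corresponding property of $Z$ along the bijection $\g_L$. The key algebraic fact to exploit throughout is that $\g_L F = \delta_\g L + \g_\ast F$ is an affine map in $F$ built from the \emph{linear} operator $\g_\ast$ (which preserves supports in the precise sense of \eqref{eq:tilde-g}, \ie $\supp(\g_\ast F)=\tildeg(\supp F)$ with $\tildeg$ the diffeomorphism part) plus a fixed shift $\delta_\g L$ supported in $\supp\g$. Since $\g_L$ is a bijection of $\Floc(M,L)$ (and restricts/corestricts appropriately between the relevant spaces $\Floc(M,L+A_G)$, using that $\g_\ast$ intertwines the addition of interactions), $Z^\g$ is a well-defined bijection of $\Floc(M,L)$; one should check this maps the right domains to the right codomains, which follows because $\g_\ast$ maps $\Floc(M,L+A_G)$ to $\Floc(M,L+A_{\g_\ast G})$ and $\delta_\g(L+A_G)=\delta_\g L + \g_\ast A_G - A_G$ bookkeeping works out.

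\emph{Locality $(ii)$ and compact support $(i)$.} For locality, given $G$ and $F,H$ with $\supp F\cap\supp H=\0$, apply $\g_L$: one has $\g_L(F+G+H) = \g_\ast F + \g_L(G) + \g_\ast H$ (the $\delta_\g L$ absorbed into the $G$-slot), and since $\tildeg$ is a diffeomorphism, $\supp(\g_\ast F)=\tildeg(\supp F)$ and $\tildeg(\supp H)$ are still disjoint; now invoke locality of $Z$, then push back with $\g_L^{-1}$, whose diffeomorphism part is $\tildeg^{-1}$, again preserving the disjoint decomposition. Compact support of $Z^\g$ should follow from $\supp Z^\g \subset \tildeg^{-1}(\supp Z) \cup \supp\g$ (the extra $\supp\g$ coming from the shift $\delta_\g L$), hence compact since $\supp Z$ and $\supp\g$ are; more carefully, if $x\notin \tildeg^{-1}(\supp Z)\cup\supp\g$ then a neighborhood $U$ of $x$ is disjoint from $\supp\g$, so $\g_L$ acts as $\g_\ast$ on functionals supported in $U$, and one checks $Z^\g(F+G)=F+Z^\g(G)$ there.

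\emph{Dynamics $(iii)$, field shift $(iv)$, causal stability $(v)$.} For dynamics and field shifts the main point is a compatibility identity between the $\mathpzc{G}_c(M)$-action and the shift action $F\mapsto F^\psi$: one needs something like $\g_L(F^\psi) = (\g_L F_{\mathrm{adjusted}})^{\psi'}$ relating the variation $\delta L(\psi)$ (or $\delta V(\psi)$) before and after applying $\g_\ast$, since a diffeomorphism $\chi$ conjugates a shift $\phi\mapsto\phi+\psi$ to $\phi\mapsto\phi+\chi_\ast\psi$ and an affine redefinition $\Phi$ conjugates it to a shift by $\psi A^{-1}$ roughly as in the earlier computation $(\Phi_\ast F)^{\psi'}=\Phi_\ast(F^\psi)$ with $\psi=\psi'A$. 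Granting such an intertwining, properties $(iii)$ and $(iv)$ of $Z^\g$ reduce directly to $(iii)$ and $(iv)$ of $Z$ by sandwiching. Causal stability $(v)$ follows because $Z^\g(F)$ and $\g_\ast F$ (hence, via $\g$ being a morphism of $\mathfrak{Loc}$, essentially $F$ up to the fixed metric change $\delta_\g L$) induce the same causal structure; precisely $J_\pm^{L+A_{Z^\g(F)}}$ relates to $J_\pm^{\g_\ast(L+A_{Z(\g_L F)})}$ via the pushforward, which by $(v)$ for $Z$ equals $J_\pm^{\g_\ast(L+A_{\g_L F})}=J_\pm^{L+A_F}$ after unwinding.

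\emph{Main obstacle.} I expect the genuinely delicate step to be $(iv)$, the field-shift condition, in the non-free case $L=L_0+V$: here one must carefully track how $\delta_\g L$ decomposes relative to $\delta_\g L_0$ and $\delta_\g V$, how the shift variations $\delta V(\psi)$ transform under $\g_\ast$, and verify the cocycle-type bookkeeping matches the reformulation \eqref{eq:shift-Z-V} of $(iv)$ (\ie $Z^{-V}(F^\psi)=Z^{-V}(F)^\psi$). It may be cleanest to first prove the lemma for the free Lagrangian $L_0$ using \eqref{eq:shift-RL0}, and then bootstrap to general $L$ via Proposition~\ref{prop:ZW} and the isomorphism $Z\mapsto Z^W$ (Corollary following it), exactly as Remark~\ref{rem:Z-shift} does for the original $(iv)$ --- reducing the whole lemma to the free case where the $\g_\ast$-conjugation of the simple shift relation is transparent.
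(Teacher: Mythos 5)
Your plan is correct and follows essentially the same route as the paper's proof: checking conditions $(i)$--$(v)$ by conjugation with $\g_L$, using the intertwining $\g_L\psi_L=(\g'\psi)_L\g_L$ for the shift action, and handling $(iv)$ by first treating the quadratic Lagrangian and then bootstrapping via the commutation $(Z^W)^\g=(Z^\g)^W$ together with Remark~\ref{rem:Z-shift}. The only cosmetic difference is in $(i)$: the paper obtains the sharper bound $\supp Z^\g\subset\tildeg^{-1}(\supp Z)$ without the extra $\supp\g$, since the shift $\delta_\g L$ can simply be absorbed into the $G$-slot rather than requiring disjointness from $\supp\g$.
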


\begin{proof}
We have to check the defining conditions of $\Rc(M,L)$ (Definition \ref{def1}) for $Z^{\g}$: Locality $(ii)$ 
is obtained straightforwardly by using that $\supp F\cap\supp H=\emptyset$ implies 
$\supp \g_\ast  F\cap\supp \g_\ast  H=\emptyset$.

To verify $(i)$ for $Z^{\g}$ let $\supp \g_{\ast }F\cap\supp Z=\emptyset$. Then
we obtain
\begin{equation*}
    \g_L Z^{\g}(F+G)=Z(\g_{\ast }F+g_L G)=\g_{\ast }F+Z(\g_L G)=\g_L(F+Z^{\g}(G))\ ,
\end{equation*}
hence $\supp Z^{\g}\subset (\tilde{ \g})^{-1}(\supp Z)$, where $\tildeg$ is defined in \eqref{eq:tilde-g}.

%To verify $(ii)$ we note that
%$$
%Z^{\g}(F+G)-Z^{\g}(G)=\g_\ast ^{-1}\bigl(Z(\g_\ast  F+\g_L G)-Z(\g_L G)\bigr).
%$$
%Thus $\supp Z$
%From \eqref{eq:Z(F+G)-Z(G)} we know that $\supp\bigl(Z(\delta_g %L+g_\ast  F)-Z(\delta_g L)\bigr)=\tilde g(\supp F)$. 
%With this, the assertion follows from $\widetilde{g^{-1}}=(\tilde %g)^{-1}$.

To check $(iii)$, we set $\psi_\ast  F\doteq F^{\psi}$ and $\psi_L F\doteq\delta L(\psi)+\psi_\ast  F$.
With this, the condition $(iii)$ says that $\psi_L$ commutes with elements of the renormalization group. We have
\be\label{eq:gL-psiL}
\g_L\psi_L=(\g'\psi)_L\g_L\ ,\quad\text{where}\quad \g'\psi\doteq\g(\phi+\psi)-\g\phi\in\Dc(M)\ ,
\ee
%We use that 
%$g_\ast (F^{\psi})=(g_\ast  F)^{g\psi}$ 
%and $g_\ast (\delta L(\psi))=\delta(g_\ast  L)(g\psi)$ and find
and find
\begin{equation}\label{eq:Zg-psiL}
    \begin{split}
        Z^{\g}\psi_L
        &=\g_L^{-1}Z\g_L\psi_L
        =\g_L^{-1}Z(\g'\psi)_L\g_L\\
        &=\g_L^{-1}(\g'\psi)_LZ\g_L
        =\psi_L\g_L^{-1}Z\g_L\\
        &=\psi_LZ^{\g}\ ,
        \end{split}
\end{equation}
by using $(iii)$ for $Z$ in the third equality sign.

$(iv)$ We first consider the case of a quadratic Lagrangian. Then
\begin{equation}\label{eq:shift}
    \begin{split}
        Z^{\g}\psi_{\ast }(F)&=Z^{\g}\bigl(\psi_L(F)-\delta L(\psi)\bigr)\\
        &=\g_L^{-1}Z\bigl(\g_L\psi_L(F)-\g_{\ast }\delta L(\psi)\bigr)\\
        &=\g_L^{-1}Z\bigl(\g_L\psi_L(F)\bigr)-\delta L(\psi)\\
        &=\psi_L\g_L^{-1}Z\g_L(F)-\delta L(\psi)\\
        &=\psi_{\ast }Z^{\g}(F)
    \end{split}
\end{equation}
where we used in the third line that $\g^{-1}_{\ast }\delta L(\psi)$ is an affine functional 
(hence we may apply Prop.~\ref{prop:Z}(ii)) and that $\g_L$ is an affine map, and in the 4th line the result \eqref{eq:Zg-psiL}.
%that $(g^{-1}\psi)_L$ commutes with $Z$.

For the general case we show that the transformation with $\g$ commutes with the addition of an interaction
according to \eqref{eq:renint}, namely we have
\begin{equation}
\begin{split}
    (Z^{W})^{\g}(F)&=\g^{-1}_{L+W}Z^W\g_{L+W}(F)\\
             &=\delta_{\g^{-1}}(L+W)+\g^{-1}_{\ast }Z^W\bigl(\delta_{\g}(L+W)+\g_{\ast }F\bigr)\\
             &=\delta_{\g^{-1}}L-W+\g^{-1}_{\ast }Z\bigl(\delta_{\g}L+\g_{\ast }(F+W)\bigr)\\
             &=(Z^{\g})^W(F)\ .
\end{split}
\end{equation}
As explained in Remark \ref{rem:Z-shift}, we have to show that
$(Z^{\g})^{-V}\,\psi_\ast  =\psi_\ast \,((Z^{\g})^{-V})$. From \eqref{eq:shift} we know that $(Z^{-V})^{\g}$ commutes with $\psi_\ast $,
therefore, this holds also for $(Z^{\g})^{-V}$. 

$(v)$ With simplified notation we have
\be
\begin{split}
L+Z^{\g}(F)&=L+\delta_{\g^{-1}}L+\g^{-1}_{\ast}Z(\delta_{\g}L+\g_{\ast}F)\\
&=\g^{-1}_{\ast}(L+Z(\delta_{\g}L+\g_{\ast}F))\ .
\end{split}
\ee
But by assumption, $L+Z(\delta_{\g}L+\g_{\ast}F)$ induces the same causal structure as $L+\delta_{\g}L+g_{\ast}F=\g_{\ast}(L+F)$. Hence also $Z^{\g}$ preserves the causal structure.
%This proves the claim.
%\begin{equation}
%    \begin{split}
 %       Z^g(\delta L(\psi)+F^{\psi})&=g_\ast ^{-1}Z^{\delta_g L}g_\ast \bigl(\delta L(\psi)+F^{\psi}\bigr)\\
 %       &=g_\ast ^{-1}Z^{\delta_g L}\bigl(\delta(g_\ast  L)(g\psi)+(g_\ast  F)^{g\psi}\bigr)\\
 %       &=g_\ast ^{-1}\bigl(Z^{\delta_g L}(g_\ast  F)^{g\psi}+\delta(g_\ast  L)(g\psi)\bigr)\\
%        &=Z^g(F)^\psi+\delta L(\psi)
 %       \end{split}
%\end{equation}
%by using additionally the defining property (iv) of $\mathcal{R}_L$ for %$Z^{\delta_g L}$.
\end{proof}

For later purpose we prove the following results:

\begin{proposition}\label{prop:ZW-cocycle}
Let $\zeta$ be as introduced above. Then the following properties hold:
\begin{itemize}
    \item[($i$)] Let $\zeta\in\mathfrak{Z}(M,L)$ and $W\in\mathrm{Int}(M,L)$. Then, $\zeta^W:\g\mapsto(\zeta_\g)^W$ 
(Def.~\ref{def:ZW}) is an element of $\mathfrak{Z}(M,L+W)$, {\it i.e.},
\be
(\zeta_{\g\h})^W=(\zeta_\h)^W\,\h_{L+W}^{-1}(\zeta_\g)^W \h_{L+W}.
\ee
\item[$(ii)$] The advanced and retarded maps $\alpha_{V,\pm}$ \eqref{Bog1} and \eqref{Bog2} induce isomorphisms $\overline{\alpha}_{V,\pm}$ from
$\fA(M,L+V,\zeta^V)$ to $\fA(M,L,\zeta)$.
\end{itemize}
\end{proposition}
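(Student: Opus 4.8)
For part $(i)$ the plan is to assemble facts already established about the map $Z\mapsto Z^W$ of Definition~\ref{def:ZW}. For each $\g\in\mathpzc{G}_c(M)$, Proposition~\ref{prop:ZW}$(ii)$ gives $(\zeta_\g)^W\in\Rc(M,L+W)$; its proof also shows $\supp(\zeta_\g)^W\subseteq\supp\zeta_\g\subseteq\supp\g$, and clearly $(\zeta_\e)^W=\mathrm{id}_{\Floc(M,L+W)}$. It remains to verify the cocycle relation for $\zeta^W$, and for this I would invoke two earlier results: that $Z\mapsto Z^W$ is a group isomorphism $\Rc(M,L)\to\Rc(M,L+W)$ (the corollary following Proposition~\ref{prop:ZW}), and the interchange identity $(Z^\h)^W=(Z^W)^\h$ — where the exponent $\h$ means $\h_L$-conjugation on the left and $\h_{L+W}$-conjugation on the right — which is precisely what is proved in the general case of condition $(iv)$ in the proof of Lemma~\ref{lem:Zg-in-R}. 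Granting these, for $\g,\h\in\mathpzc{G}_c(M)$ one computes
\be
(\zeta_{\g\h})^W=\bigl(\zeta_\h\,(\zeta_\g)^\h\bigr)^W=(\zeta_\h)^W\,\bigl((\zeta_\g)^\h\bigr)^W=(\zeta_\h)^W\,\bigl((\zeta_\g)^W\bigr)^\h=(\zeta_\h)^W\,\h_{L+W}^{-1}(\zeta_\g)^W\,\h_{L+W}\ ,
\ee
the first equality being the cocycle relation \eqref{eq:cocycle} for $\zeta$, the second the homomorphism property, the third the interchange identity; this is exactly the cocycle relation for $\zeta^W$, so $\zeta^W\in\mathfrak{Z}(M,L+W)$.

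For part $(ii)$ I treat the retarded case $\overline{\alpha}_{V,+}$; the advanced case is identical with $J_+$ replacing $J_-$ and \eqref{Bog2} replacing \eqref{Bog1}. The strategy is to show that the isomorphism $\alpha_{V,+}\colon\fA(M,L+V)\to\fA(M,L)$ carries the ideal $I_{\zeta^V}$ onto $I_\zeta$, hence descends to an isomorphism of the quotients. The key observation is that $\pi\mapsto\pi\circ\alpha_{V,+}$ sends every representation of $\fA(M,L)$ satisfying the unitary AMWI for $\zeta$ to a representation of $\fA(M,L+V)$ satisfying the unitary AMWI for $\zeta^V$. To see this, let $\pi$ satisfy the AMWI for $\zeta$, fix $F\in\Floc(M,L+V)$ and $\g\in\mathpzc{G}_c(M)$, and pick $f\in\Dc_s(M)$ equal to $1$ on a region large enough to be admissible for \eqref{Bog1} and to contain $\supp\g\cup\supp\zeta_\g\cup\supp F$. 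Then, using linearity of $\g_\ast$ together with $\delta_\g(L+V)=\delta_\g L+\g_\ast V(f)-V(f)$,
\be
V(f)+\g_{L+V}F=V(f)+\delta_\g(L+V)+\g_\ast F=\delta_\g L+\g_\ast\bigl(V(f)+F\bigr)=\g_L\bigl(V(f)+F\bigr)\ ,
\ee
and, using \eqref{eq:renint},
\be
V(f)+(\zeta_\g)^V(F)=V(f)+\zeta_\g\bigl(V(f)+F\bigr)-V(f)=\zeta_\g\bigl(V(f)+F\bigr)\ .
\ee
Applying $\alpha_{V,+}$ and \eqref{Bog1} (with the common test sequence $f$), the two sides of the AMWI for $\zeta^V$ evaluated in $\pi\circ\alpha_{V,+}$ become $\pi\bigl(S_{(M,L)}(V(f))^{-1}S_{(M,L)}(\g_L(V(f)+F))\bigr)$ and $\pi\bigl(S_{(M,L)}(V(f))^{-1}S_{(M,L)}(\zeta_\g(V(f)+F))\bigr)$, which coincide by the AMWI for $\zeta$ applied to the functional $V(f)+F$. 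Hence $\pi\circ\alpha_{V,+}$ satisfies the AMWI for $\zeta^V$.

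Applying this implication with $(M,L,V,\zeta)$ replaced by $(M,L+V,-V,\zeta^V)$ — legitimate since $-V\in\mathrm{Int}(M,L+V)$, $(\alpha_{V,+})^{-1}=\alpha_{-V,+}$, and $(\zeta^V)^{-V}=\zeta$ (by Proposition~\ref{prop:ZW}$(i)$ and \eqref{eq:renint}) — shows that every representation $\pi'$ of $\fA(M,L+V)$ satisfying the AMWI for $\zeta^V$ is of the form $\pi\circ\alpha_{V,+}$ with $\pi=\pi'\circ\alpha_{-V,+}$ satisfying the AMWI for $\zeta$. Thus $\pi\mapsto\pi\circ\alpha_{V,+}$ is a bijection between the two classes of representations, and since $\ker(\pi\circ\alpha_{V,+})=\alpha_{V,+}^{-1}(\ker\pi)$, intersecting over all such representations yields $I_{\zeta^V}=\alpha_{V,+}^{-1}(I_\zeta)$; as $\alpha_{V,+}$ is bijective, $\alpha_{V,+}(I_{\zeta^V})=I_\zeta$. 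Therefore $\alpha_{V,+}$ descends to an isomorphism $\overline{\alpha}_{V,+}\colon\fA(M,L+V,\zeta^V)\to\fA(M,L,\zeta)$, and likewise $\overline{\alpha}_{V,-}$ in the advanced case. The only genuinely delicate point I anticipate is the simultaneous choice of the test sequence $f$: it must satisfy the admissibility requirement of \eqref{Bog1} on both sides, be $\equiv1$ on $\supp\g$ for the second display, and be $\equiv1$ on $\supp F\cup\supp\zeta_\g$ for \eqref{eq:renint}. Since each ingredient is independent of $f$ within its admissible range and $\supp\g$ is compact, one simply enlarges $f$ to meet all requirements at once; the algebraic skeleton of both parts is then forced by the earlier results.
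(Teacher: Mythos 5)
Your proof is correct. For part $(ii)$ you follow essentially the paper's own route: the two displayed identities $V(f)+\g_{L+V}F=\g_L(V(f)+F)$ and $V(f)+(\zeta_\g)^V(F)=\zeta_\g(V(f)+F)$ are exactly the computation the paper performs, and your extra care in turning "the AMWI is mapped to the AMWI" into the ideal identity $\alpha_{V,+}(I_{\zeta^V})=I_\zeta$ --- via the bijection $\pi\mapsto\pi\circ\alpha_{V,+}$ between the two classes of representations, using $(\alpha_{V,+})^{-1}=\alpha_{-V,+}$ and $(\zeta^V)^{-V}=\zeta$ --- is a legitimate and welcome expansion of the paper's terse "thus the ideal $I_{\zeta^V}$ is mapped onto the ideal $I_\zeta$." For part $(i)$ your route differs from the paper's: the paper verifies the cocycle relation for $\zeta^W$ by a direct five-line chain of substitutions unwinding $\h_{L+W}^{-1}$ and reassembling $\h_L^{-1}$, whereas you factor the whole computation through two previously established facts, namely the homomorphism property $(Z_1Z_2)^W=Z_1^W Z_2^W$ of \eqref{eq:comb-z} and the interchange identity $(Z^\g)^W=(Z^W)^\g$ proved inside condition $(iv)$ of Lemma~\ref{lem:Zg-in-R}. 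Both are sound; your version buys brevity and makes visible that the cocycle property of $\zeta^W$ is a formal consequence of $Z\mapsto Z^W$ being a group isomorphism compatible with the $\G_c(M)$-action, while the paper's direct computation is self-contained and does not depend on having noticed that the displayed identity in Lemma~\ref{lem:Zg-in-R} is exactly the interchange law one needs. The only point worth flagging is that you should state explicitly that $(\zeta_\g)^\h\in\Rc(M,L)$ (Lemma~\ref{lem:Zg-in-R}) before applying the homomorphism property to the product $\zeta_\h(\zeta_\g)^\h$; this is immediate but is the hypothesis under which \eqref{eq:comb-z} was derived.
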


\begin{proof} $(i)$ According to Proposition~\ref{prop:ZW}, $\zeta^W$ takes values in $\Rc(M,L+W)$.
To shorten the notations we write $W$ in place of $W(f)$:

\be
\begin{split}
  (\zeta_\h)^W\,\h_{L+W}^{-1}(\zeta_\g)^W \h_{L+W}(F)&= 
  (\zeta_\h)^W\,\h_{L+W}^{-1}\bigl(\zeta_\g(\h_L F+\h_\ast W)-W\bigr)\\
  &= (\zeta_\h)^W\bigl(\delta_{\h^{-1}}L+(\delta_{\h^{-1}}W-\h_\ast^{-1} W)+\h_\ast^{-1}\zeta_\g(\h_L F+\h_\ast W)\bigr)\\
  &= (\zeta_\h)^W\bigl(\h_L^{-1}\zeta_\g(\h_L F+\h_\ast W)-W\bigr)\\
  &= \zeta_\h\h_L^{-1}\zeta_\g\h_L(F+ W)-W\\
  %=& \zeta_{\g\h} (F+ W)-W\\
  &= (\zeta_{\g\h})^W(F)\ ,
\end{split}
\ee
where the cocycle relation for $\zeta$ is used in the last step.

$(ii)$ The assertion follows from the observation that the unitary AMWI in $\fA(M,L+V,\zeta^V)$ is mapped to the
unitary AMWI in $\fA(M,L,\zeta)$. To wit, writing again $V$ instead of $V(f)$, we obtain

\begin{align}
\alpha_{V,+}\bigl(S_{(M,L+V)}(\g_{L+V}F)\bigr)&=S_{(M,L)}(V)^{-1}\,S_{(M,L)}\bigl(\delta_\g (L+V)+\g_\ast F+V\bigr)\nonumber\\
&=S_{(M,L)}(V)^{-1}\,S_{(M,L)}\bigl(\delta_\g L+\g_\ast (F+V)\bigr)
\intertext{and}
\alpha_{V,+}\bigl(S_{(M,L+V)}(\zeta_\g^V (F))\bigr)&=S_{(M,L)}(V)^{-1}\,S_{(M,L)}\bigl(\zeta_\g(F+V)\bigr)\ ,
\end{align}

thus the ideal $I_{\zeta^V}$ is mapped onto the ideal $I_\zeta$.  The same applies for $\alpha_{V,-}$.
\end{proof}
\begin{definition}[\textbf{Cocycle Equivalence}]\label{def:Cocycle Equivalence}
Two cocycles $\zeta,\zeta'\in\mathfrak{Z}(M,L)$ are equivalent, if there exists some $Z\in\Rc(M,L)$ with 
\be \label{eq:cohomology}
Z\zeta'_{\g}=\zeta_{\g}Z^{\g}\ ,\quad \g\in\G_c(M)\ .
\ee 
\end{definition}
One easily verifies that this is indeed an equivalence relation: symmetry follows from $(Z^\g)^{-1}=(Z^{-1})^\g$ 
and transitivity from $(Z_1Z_2)^\g=Z_1^\g\,Z_2^\g$.
Equivalences between cocycles lead to the following relations: 
\begin{proposition}\label{prop:betaZ-cocycle}
Let $\zeta\in\mathfrak{Z}(M,L)$ be a cocycle and $Z\in\Rc(M,L)$. Then
\begin{itemize}
    \item[($i$)] The map $\zeta': \g\mapsto \zeta_{\g}'$ with
    \begin{equation}
  \zeta_{\g}'\doteq Z^{-1}\zeta_{\g}Z^{\g}\in\Rc(M,L)  
\end{equation} 
is a cocycle, {\it i.e.}, it satisfies the relation \eqref{eq:cocycle}.
\item[($ii$)] Both $\beta_Z^{\mathrm{ret}}$ and  $\beta_Z^{\mathrm{adv}}$ (Prop.~\ref{prop:automorphism-by-Z})
induce isomorphisms $\overline{\beta}_Z^{\,\mathrm{ret/adv}}$ from $\fA(M,L,\zeta')$ to $\fA(M,L,\zeta)$, where $\zeta'$ is defined as in (i).

\item[$(iii)$] The isomorphism $\epsilon_Z\doteq\alpha_{A_{Z(0)},+}^{-1}\circ\beta^{\mathrm{ret}}_Z=\alpha_{A_{Z(0)},-}^{-1}\circ\beta^{\mathrm{adv}}_Z$ extends to a net isomorphism 
$$
\fA_{(M,L,\zeta')} \to \fA_{(M,L+A_{Z(0)},\zeta^{A_{Z(0)}})}\ .
$$ 

\end{itemize}
\end{proposition}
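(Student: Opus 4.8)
The plan is to reduce each of the three assertions to an identity among bijections of $\Floc(M,L)$, together with the structural facts already at hand: $(\g\h)_L=\g_L\h_L$, the homomorphism property of $Z\mapsto\beta_Z^{\mathrm{ret}}$ from Proposition~\ref{prop:automorphism-by-Z}, the relations $(Z^\g)^{-1}=(Z^{-1})^\g$ and $Z^{\g\h}=(Z^\g)^\h$, the sharp support identity $\supp(Z(F)-Z(0))=\supp F$ of Proposition~\ref{prop:inv-suppZ}, and the descent of $\alpha_{V,\pm}$ to cocycle-twisted algebras in Proposition~\ref{prop:ZW-cocycle}. For $(i)$, $\zeta'_\e=\mathrm{id}$ is immediate from $\zeta_\e=\mathrm{id}$ and $Z^\e=Z$; the cocycle relation \eqref{eq:cocycle} follows in one line by substituting $\zeta_{\g\h}=\zeta_\h\,\h_L^{-1}\zeta_\g\h_L$ and $Z^{\g\h}=\h_L^{-1}\g_L^{-1}Z\g_L\h_L$ into $\zeta'_{\g\h}=Z^{-1}\zeta_{\g\h}Z^{\g\h}$, collapsing $\h_L\h_L^{-1}$, inserting $ZZ^{-1}$ between $\h_L^{-1}$ and $\zeta_\g$, and regrouping so as to read off $\zeta'_\h$ on the left and $(\zeta'_\g)^\h=\h_L^{-1}\zeta'_\g\h_L$ on the right; one records in passing that $\supp\zeta'_\g$ is controlled by $\supp Z$ and $\supp\zeta_\g$, so that the twisted algebra $\fA(M,L,\zeta')$ used in $(ii)$ makes sense.

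For $(ii)$ the key point is that $\beta_Z^{\mathrm{ret}}$ intertwines the unitary AMWI for $\zeta$ with the one for $\zeta'$. Since $\beta_Z^{\mathrm{ret}}(S_{(M,L)}(F))=S_{(M,L)}(Z(0))^{-1}S_{(M,L)}(Z(F))$, in any representation $\pi$ the constant unitary $S_{(M,L)}(Z(0))$ is a common left factor, so $\pi\circ\beta_Z^{\mathrm{ret}}$ satisfies \eqref{eq:uni-anom-MWI} for $\zeta'$ if and only if $\pi\circ S_{(M,L)}\circ Z\circ\g_L=\pi\circ S_{(M,L)}\circ Z\circ\zeta'_\g$ for all $\g$. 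Now $Z\zeta'_\g=\zeta_\g Z^\g=\zeta_\g\circ\g_L^{-1}\circ Z\circ\g_L$, and the AMWI for $\pi$ with $\zeta$, rewritten as $\pi\circ S_{(M,L)}\circ\zeta_\g\circ\g_L^{-1}=\pi\circ S_{(M,L)}$, collapses the right-hand side to the left-hand side; running the same computation with $Z^{-1}$ in place of $Z$ and using $\beta_{Z^{-1}}^{\mathrm{ret}}=(\beta_Z^{\mathrm{ret}})^{-1}$ gives the converse. Hence the representations of $\fA(M,L)$ obeying the AMWI for $\zeta'$ are precisely the $\pi\circ\beta_Z^{\mathrm{ret}}$ with $\pi$ obeying it for $\zeta$; intersecting kernels and using bijectivity of $\beta_Z^{\mathrm{ret}}$ yields $\beta_Z^{\mathrm{ret}}(I_{\zeta'})=I_\zeta$, so $\beta_Z^{\mathrm{ret}}$ descends to the isomorphism $\overline{\beta}_Z^{\,\mathrm{ret}}:\fA(M,L,\zeta')\to\fA(M,L,\zeta)$. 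The advanced case is identical, with the common unitary factor now on the right.

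For $(iii)$ I would first compute the action of $\epsilon_Z$ on generators: writing $\alpha_{A_{Z(0)},+}^{-1}=\alpha_{-A_{Z(0)},+}$ and inserting the Bogoliubov formula \eqref{Bog1} into $\beta_Z^{\mathrm{ret}}$, the factors $S_{(M,L+A_{Z(0)})}(\pm Z(0))$ telescope and one obtains $\epsilon_Z(S_{(M,L)}(F))=S_{(M,L+A_{Z(0)})}(Z(F)-Z(0))$; the analogous computation with the advanced maps gives the same formula, which settles the equality of the two displayed expressions for $\epsilon_Z$. Since $A_{Z(0)}$ is a compactly supported local functional which by Definition~\ref{def1}$(v)$ preserves the causal structure, $A_{Z(0)}\in\mathrm{Int}(M,L)$ with no metric-convexity obstruction, so Proposition~\ref{prop:ZW-cocycle}$(ii)$ with $V=A_{Z(0)}$ furnishes an isomorphism $\overline{\alpha}_{A_{Z(0)},+}:\fA(M,L+A_{Z(0)},\zeta^{A_{Z(0)}})\to\fA(M,L,\zeta)$; composing its inverse with $\overline{\beta}_Z^{\,\mathrm{ret}}$ from $(ii)$ produces $\overline{\epsilon}_Z:\fA(M,L,\zeta')\to\fA(M,L+A_{Z(0)},\zeta^{A_{Z(0)}})$. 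For the net statement, Proposition~\ref{prop:inv-suppZ} gives $\supp(Z(F)-Z(0))=\supp F$, so $\epsilon_Z$ maps $\fA_{(M,L)}(\Oc)$ into $\fA_{(M,L+A_{Z(0)})}(\Oc)$ for every relatively compact causally convex $\Oc$; applying the same proposition to $Z^{-1}$ (and $Z^{-1}(Z(0))=0$) shows that the inverse $S_{(M,L+A_{Z(0)})}(G)\mapsto S_{(M,L)}(Z^{-1}(G+Z(0)))$ satisfies the reverse inclusion, whence $\epsilon_Z$ restricts to bijections of local subalgebras with underlying spacetime map $\mathrm{id}_M$, and this passes to the quotient nets.

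The step I expect to be the main obstacle is not a calculation but a matter of care. In $(ii)$ one must produce \emph{both} directions relating the representation classes for $\zeta$ and $\zeta'$ in order to conclude $\beta_Z^{\mathrm{ret}}(I_{\zeta'})=I_\zeta$ rather than a one-sided containment, and this is exactly where the homomorphism property $\beta_{Z^{-1}}^{\mathrm{ret}}=(\beta_Z^{\mathrm{ret}})^{-1}$ and the invertibility of the twist $\zeta\mapsto\zeta'$ enter. In $(iii)$ the subtlety is that ``net isomorphism'' is meant in the strong sense of a bijective natural transformation over $\mathrm{id}_M$, which hinges on the exact support identity of Proposition~\ref{prop:inv-suppZ} holding for $Z$ \emph{and} for $Z^{-1}$. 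Everything else is routine manipulation of the defining relations of $\fA(M,L)$.
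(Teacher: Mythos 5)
Your proposal is correct and follows essentially the same route as the paper: part $(i)$ is the same conjugation computation using $Z^{\g\h}=(Z^{\g})^{\h}$ and the cocycle relation for $\zeta$, and part $(ii)$ rests on the same identities $Z\g_L=\g_L Z^{\g}$ and $Z\zeta'_{\g}=\zeta_{\g}Z^{\g}$ (the paper applies them to the generators $S(\g_LF)^{-1}S(\zeta'_{\g}F)$ of the ideal, you to the representations defining $I_{\zeta'}$ --- the same content). Part $(iii)$ reproduces the paper's formula $\epsilon_Z(S(F))=S_{(M,L+A_{Z(0)})}(Z(F)-Z(0))$ and the support argument $\supp(Z(F)-Z(0))=\supp F$, only spelling out more details (the telescoping of the Bogoliubov factors and the application of Proposition~\ref{prop:inv-suppZ} to $Z^{-1}$) than the paper's one-line proof.
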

\begin{proof}
\begin{itemize}
    \item[$(i)$] Inserting the definitions and the cocycle relation for $\zeta$, we obtain
$$
\zeta'_{\g\h}=Z^{-1}\zeta_{\g\h}Z^{\g\h}=Z^{-1}\zeta_{\h}(\zeta_{\g})^{\h}(Z^{\g})^{\h}=Z^{-1}\zeta_{\h}Z^{\h}(Z^{-1}\zeta_{\g} Z^{\g})^{\h}=\zeta'_{\h}(\zeta'_{\g})^{\h}\ .
$$
\item[$(ii)$] The automorphisms $\beta_Z^{\mathrm{ret/adv}}$ map the ideals defined by the unitary AMWI into each other, as may be seen from the equation (denoting $S_{(M,L)}=S$)
\begin{equation}
        \beta^{\mathrm{ret}}_Z(S(\g_LF)^{-1}S(\zeta'_{\g}F))=S(Z\g_LF)^{-1}S(Z\zeta'_{\g}F)
        =S(\g_LZ^{\g}F)^{-1}S(\zeta_{\g}Z^{\g}F)\ ,
\end{equation}
%\begin{equation}
%    \begin{split}
%        \beta^{\mathrm{ret}}_Z(S(\g_LF)^{-1}S(\zeta'_{\g}F))&=S(Z\g_LF)^{-1}S(Z\zeta'_{\g}F)\\
%       &=S(\g_LZ^{\g}F)^{-1}S(\zeta_{\g}Z^{\g}F)\ ,
%   \end{split}
%\end{equation}
for the retarded case, the advanced case follows  analogously.

\item[$(iii)$] Since $\supp(Z(F)-Z(0))=\supp F$, the isomorphism $\epsilon_Z$,
\begin{equation}\label{eq:eps-Z}
    \epsilon_Z(S_{(M,L,\zeta')}(F))=S_{(M,L+A_{Z(0)},\zeta^{A_{Z(0)}})}(Z(F)-Z(0))\ ,
\end{equation}
preserves the local subalgebras.
\end{itemize}
\end{proof}

We can use the cocycle relation \eqref{eq:cocycle} for the map $\g\mapsto \zeta_{\g}$ to define a modified action of $\G_c(M)$ on $\Floc(M,L)$ by
\begin{equation}\label{eq:g-zeta}
    \g_{\zeta}F\doteq  \g_L\zeta_\g^{-1}(F)
\end{equation}
namely
\begin{equation}\label{eq:g-zeta-1}
(\g\h)_{\zeta}=(\g\h)_L\zeta_{\g\h}^{-1}=\g_L \h_L (\zeta_{\g}^{\h})^{-1}\zeta_{\h}^{-1}=\g_L\zeta_{\g}^{-1}{\h}_L\zeta_{\h}^{-1}=\g_{\zeta}\h_{\zeta}\ .
\end{equation}
\begin{remark}\label{rm:uAMWI-alternative}
Note that the unitary AMWI can be written in terms of $\g_\zeta$ as $$S=S\circ \g_\zeta\,.$$
\end{remark}
We now consider the case that the cocycle $\zeta_{\g}$  in the  unitary AMWI is trivial for $\g\in \mathpzc{H}$ with $\mathpzc{H}\subset \mathpzc{G}_c(M)$. We have the following result:
\begin{proposition}
The set $\mathpzc{H}=\{\h\in \mathpzc{G}_c(M)\, | \, \zeta_{\h}=\mathrm{id}_{\Floc(M,L)}\}$ is a subgroup of $\mathpzc{G}_c(M)$. 
\end{proposition}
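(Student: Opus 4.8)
The plan is to read $\mathpzc{H}$ as the \emph{kernel} of the cocycle $\zeta$ and to verify the three subgroup axioms directly from the cocycle relation \eqref{eq:cocycle} together with the normalisation $\zeta_\e=\mathrm{id}_{\Floc(M,L)}$; associativity is of course inherited from $\mathpzc{G}_c(M)$. First of all, $\e\in\mathpzc{H}$, since $\zeta_\e=\mathrm{id}_{\Floc(M,L)}$ by hypothesis, so $\mathpzc{H}$ is non-empty.

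Next I would check closure. For $\g,\h\in\mathpzc{H}$ we have $\zeta_\g=\zeta_\h=\mathrm{id}_{\Floc(M,L)}$, hence $(\zeta_\g)^\h=\h_L^{-1}\zeta_\g\,\h_L=\mathrm{id}_{\Floc(M,L)}$, and the cocycle relation \eqref{eq:cocycle} then gives
\[
\zeta_{\g\h}=\zeta_\h\,(\zeta_\g)^\h=\mathrm{id}_{\Floc(M,L)}\circ\mathrm{id}_{\Floc(M,L)}=\mathrm{id}_{\Floc(M,L)}\ ,
\]
so that $\g\h\in\mathpzc{H}$.

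For stability under inversion, let $\h\in\mathpzc{H}$ and apply \eqref{eq:cocycle} to the pair of elements $\h$ and $\h^{-1}$, in that order: since their product is the unit $\e$,
\[
\mathrm{id}_{\Floc(M,L)}=\zeta_\e=\zeta_{\h^{-1}}\,(\zeta_\h)^{\h^{-1}}=\zeta_{\h^{-1}}\,(\h^{-1})_L^{-1}\,\zeta_\h\,(\h^{-1})_L=\zeta_{\h^{-1}}\ ,
\]
where the last equality uses $\zeta_\h=\mathrm{id}_{\Floc(M,L)}$ once more. Hence $\zeta_{\h^{-1}}=\mathrm{id}_{\Floc(M,L)}$, i.e.\ $\h^{-1}\in\mathpzc{H}$.

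I do not expect any genuine difficulty here: the argument is purely a manipulation of the twisted cocycle identity, and the only point demanding attention is the bookkeeping -- respecting the order of the two factors in $\zeta_{\g\h}=\zeta_\h(\zeta_\g)^\h$ and the fact that the conjugating map appearing in $(\zeta_\g)^\h$ is $\h_L$ rather than $\h_\ast$. Once this is kept straight, both closure and closure under inversion drop out immediately from $\zeta_\e=\mathrm{id}_{\Floc(M,L)}$, and $\mathpzc{H}$ is a subgroup of $\mathpzc{G}_c(M)$.
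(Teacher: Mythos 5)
Your proof is correct and follows essentially the same route as the paper: the paper packages the computation via the modified action $\g_\zeta=\g_L\zeta_\g^{-1}$ and its multiplicativity $(\g\h)_\zeta=\g_\zeta\h_\zeta$ (which is itself just the cocycle relation \eqref{eq:cocycle} in disguise), whereas you unwrap the same algebra directly from \eqref{eq:cocycle} and $\zeta_\e=\mathrm{id}_{\Floc(M,L)}$. Both arguments are the identical two-line verification of closure and stability under inverses.
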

\begin{proof}
Let $\h_1,\h_2\in \mathpzc{H}$ then $(\h_i)_{\zeta}=(\h_i)_L$, $i=1,2$, hence
\begin{equation}
    (\h_1\h_2)_{\zeta}=(\h_1)_{\zeta}(\h_2)_{\zeta}=(\h_1)_L(\h_2)_L=(\h_1\h_2)_L
\end{equation}
thus $\zeta_{\h_1\h_2}=\mathrm{id}$ and $\h_1\h_2\in \mathpzc{H}$. Moreover, for $\h\in \mathpzc{H}$
\begin{equation}
    (\h^{-1})_{\zeta}=(\h_{\zeta})^{-1}=(\h_L)^{-1}=(\h^{-1})_L\ ,
\end{equation}
so also $\h^{-1}\in \mathpzc{H}$.
\end{proof}

%%%%%%%%%%%%%%%%%%%%%%%%%%%%%%%%%%%%%%%%%%%%%%%%%%%%%%%%%
\section{Time slice Axiom and the relative Cauchy evolution}\label{sec:tsa}
We now want to prove that the unitary AMWI implies the time slice axiom.
\begin{theorem}[Time slice]\label{th:TSA}
Let $N\subset M$ be a causally convex and globally hyperbolic neighbourhood of a Cauchy surface of $M$ with respect to the causal structure induced by the Lagrangian $L$
and let $\zeta\in\mathfrak{Z}(M,L)$ be a cocycle. Then
\begin{equation}\label{eq:TSA}
    \fA(N,L\!\restriction_{N},\zeta\!\restriction_{N})=\fA(M,L,\zeta) \ .
\end{equation}
\end{theorem}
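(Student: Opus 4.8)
The plan is to reduce the statement to the already-established time slice result for the \emph{unquotiented} algebra (Theorem~\ref{thm2}) together with the compatibility of the quotienting construction with the isomorphisms used there. First I would recall that, by the remark following Axiom ``Symmetries'', the algebra $\fA(M,L,\zeta)$ is the quotient of $\fA(M,L)$ by the ideal $I_\zeta$, and similarly $\fA(N,L\!\restriction_N,\zeta\!\restriction_N)=\fA(N,L\!\restriction_N)/I_{\zeta\restriction_N}$. Since $N$ is a causally convex, globally hyperbolic neighbourhood of a Cauchy surface of $M$, the inclusion $\iota_\rho:(N,L\!\restriction_N)\hookrightarrow(M,L)$ is a morphism of $\mathfrak{Loc}$, giving a monomorphism $\alpha_\rho:\fA(N,L\!\restriction_N)\to\fA(M,L)$; the content of \eqref{eq:TSA} is that after passing to the quotients this monomorphism becomes an isomorphism.

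The heart of the argument is surjectivity, i.e.\ that every generator $S_{(M,L)}(F)$ with $F\in\Floc(M,L)$ is, modulo $I_\zeta$, in the image of $\fA(N,L\!\restriction_N)$. Here is where I would invoke the unitary AMWI in an essential way, since \emph{without} the quotient this fails (as the paper stresses in Section~\ref{sec:RG}). The idea is the standard Cauchy-evolution trick: given $F$ with compact support, choose a symmetry transformation $\g=(\mathrm{id}_\Phi,\chi)\in\G_c(M)$ whose diffeomorphism part $\tilde\g$ pushes $\supp F$ into $N$ — concretely, a compactly supported diffeomorphism realising a ``time translation'' that drags the Cauchy surface (and a collar of it containing $\supp F$) into $N$. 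Then $\g_\ast F$ has support in $N$, while $\delta_\g L$ need not; but by Axiom ``Symmetries'' in the representations defining $I_\zeta$ we have
\begin{equation*}
S_{(M,L,\zeta)}(\g_L F)=S_{(M,L,\zeta)}(\zeta_\g(F))\ ,\qquad \g_L F=\delta_\g L+\g_\ast F\ .
\end{equation*}
Applying this with $\g^{-1}$ (or rearranging) expresses $S_{(M,L,\zeta)}(F)$ in terms of $S_{(M,L,\zeta)}$ evaluated on functionals supported in $N$, up to $\zeta$-corrections which, by $\supp\zeta_\g\subset\supp\g$ and a suitable choice of $\g$ with $\supp\g\subset N$, are themselves supported in $N$. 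One must be careful that the interaction $\delta_\g L$ can change the causal structure (kinetic-type terms), so the relevant ``Cauchy surface'' statement has to be read with respect to $L+\delta_\g L$; this is exactly the situation handled by the algebraic adiabatic limit isomorphisms $\alpha_{V,\pm}$ of Section~3, whose compatibility with the ideals $I_\zeta$ is Proposition~\ref{prop:ZW-cocycle}$(ii)$. So the clean route is: (1) use $\alpha^{\g}_{\pm}=\alpha_{\delta_\g L,\pm}\circ\alpha_\g$ and Proposition~\ref{prop:ZW-cocycle}$(ii)$ to reduce to a genuine structure-preserving embedding, then (2) push supports into $N$ by the diffeomorphism part and quote Theorem~\ref{thm2} (equivalently, its net-level refinement) for the interaction-free comparison.

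For injectivity I would argue that $\alpha_\rho$ maps $I_{\zeta\restriction_N}$ \emph{into} $I_\zeta$ and conversely that $I_\zeta\cap\alpha_\rho(\fA(N,L\!\restriction_N))=\alpha_\rho(I_{\zeta\restriction_N})$: an element of $\fA(N,L\!\restriction_N)$ lands in $I_\zeta$ iff it is annihilated by every representation of $\fA(M,L)$ satisfying the unitary AMWI for $\zeta$; restricting such representations along $\alpha_\rho$ yields representations of $\fA(N,L\!\restriction_N)$ satisfying the unitary AMWI for $\zeta\!\restriction_N$ (because $\g_L$, $\zeta_\g$ with $\supp\g\subset N$ only see $L\!\restriction_N$ and $\zeta\!\restriction_N$), and conversely any such representation of $\fA(N,L\!\restriction_N)$ extends, via the surjectivity just proved, to one of $\fA(M,L)$ with the right $\zeta$. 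Combining, $\alpha_\rho$ descends to an injective $\ast$-homomorphism on the quotients, which together with surjectivity gives the asserted equality of C*-algebras.

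The main obstacle I anticipate is bookkeeping the change of causal structure: the functional $\delta_\g L$ induced by a symmetry transformation $\g$ can contain quadratic first-derivative terms and thus deform the metric, so ``Cauchy surface of $M$'' versus ``Cauchy surface of the deformed spacetime'' must be tracked carefully, and one needs the hypothesis that $N$ (and the chosen collar) remains globally hyperbolic and causally convex after the deformation. This is precisely the complication flagged throughout Section~2 and handled in Appendix~\ref{sec:interpolatingmetrics}; the proof should either restrict to $\g$ with $\delta_\g L\in\mathrm{Q}(M,L)$ under control, or invoke the five-step decomposition $V=\sum_{i=1}^5 V_i$ used in Theorem~\ref{thm:adiabatic} to stay within jointly globally hyperbolic spacetimes at each stage. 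Everything else — the support estimates $\supp\zeta_\g\subset\supp\g$, $\supp(Z(F)-Z(0))=\supp F$, the cocycle manipulations — is routine given the machinery already assembled.
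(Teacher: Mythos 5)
Your overall strategy --- drag $\supp F$ toward $N$ with a compactly supported diffeomorphism, invoke the unitary AMWI to control the resulting $\delta_\g L$ and $\zeta_\g$ terms, and handle general Lagrangians through $\alpha_{V,\pm}$, Proposition~\ref{prop:ZW-cocycle}$(ii)$ and Theorem~\ref{thm2} --- matches the architecture of the paper's proof. But there is a genuine gap at the heart of the surjectivity step. You assert that the $\zeta$-corrections can be arranged to be supported in $N$ ``by $\supp\zeta_\g\subset\supp\g$ and a suitable choice of $\g$ with $\supp\g\subset N$.'' This is impossible: a transformation whose diffeomorphism part $\tildeg$ moves $\supp F$ (which lies outside $N$) into $N$ must be nontrivial on a region joining $\supp F$ to $N$, so $\supp\g\not\subset N$, and consequently $\delta_\g L$ and $\zeta_\g$ a priori live in exactly the region you are trying to evacuate. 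This is the actual difficulty of the theorem, and the paper's proof spends most of its effort on it. Its mechanism is: (i) restrict first to a stationary quadratic $L$ and move $\supp F$ only by a small time $\tau$ per step, not all the way into $N$; (ii) prove an auxiliary lemma showing that where $\chi$ acts as a pure time translation on a stationary Lagrangian, $\delta_\chi L$ has no support, so $\supp\delta_\chi L$ avoids $\chi\bigl(J_-(\supp F)\cap\{t\ge a/2\}\bigr)$; (iii) split $\delta_\chi L=H_++H_-$ and use the Causality Relation to factor out the part $H_+$ lying outside the past of the translated support; (iv) observe that $S(\delta_\chi L)=S(\zeta_\chi(0))$ is a constant phase because $\zeta_\chi(0)$ is a constant functional for quadratic $L$ (Proposition~\ref{prop:Z}); and (v) iterate finitely many times. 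None of (ii)--(v) appears in your plan, and without them the single-step ``push into $N$'' does not close.

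A second, related problem: you propose to ``quote Theorem~\ref{thm2}'' for the interaction-free comparison, but Theorem~\ref{thm2} only \emph{transfers} the time slice property from $(M,L)$ to $(M,L+V)$ in a representation, assuming it already holds for $(M,L)$. The base case --- the stationary quadratic Lagrangian --- is precisely where the unitary AMWI must be exploited from scratch, and that is the content your proposal leaves unestablished. Once that base case is in hand, your reduction of the general case (choose a foliation, compare with a stationary $L_0$, set $V=L-L_0$, pass to $\zeta^{-V}$ via Proposition~\ref{prop:ZW-cocycle}, and use faithfulness of the representation) is essentially the paper's, and your remarks on tracking the change of causal structure via the five-step decomposition are on target.
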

Here, $\zeta\!\restriction_{N}\,:\,\mathpzc{G}_c(N)\to\Rc(N,L\!\restriction_{N})$ is obtained by the restriction
of $\zeta\,:\,\mathpzc{G}_c(M)\to\Rc(M,L)$.

\begin{proof}
We shall use the notation $S\equiv S_{(M,L,\zeta)}$ for simplicity throughout the proof and do not display explicitely the restrictions as in \eqref{eq:TSA}. We first restrict ourselves to a
 stationary spacetime $\RR\times\Sigma$ with time coordinate $t$ and with a time independent quadratic Lagrangian $L$ and a neighbourhood $N$ of the submanifold $t=0$.

Let $F\in\Floc(\RR\times \Sigma,L)$. We show that $S(F)$ can be written as a finite product of $S$-matrices whose arguments have supports in $J_-(N)\cap\supp F$. 

If $\supp F\subset J_-(N)$ this is trivial. If $\supp F\not \subset J_-(N)$, we argue as follows:
Let $a=\sup\{t|(t,x)\in\supp F\}$. Choose $\tau\in(0,a/2)$ and a compactly supported diffeomorphism $\chi$ with support in $J_+(\Sigma)$ which acts on the set $\{(t,x)\in J_-(\supp F)|t\in[a/2,a]\}$ as a translation of the time variable by $-\tau$ and maps the set $\{(t,x)\in J_-(\supp F)|t\in[0,a/2]\}$ into the set $\{(t,x)\in J_-(\supp F)|t\in[0,a/2-\tau]\}$. An example for such a diffeomorphism is
\be
\chi(t,x)=(t+\rho(t)\sigma (x),x)
\ee
with test functions $\rho\in\Dc(\RR_+), \sigma\in\Dc(\Sigma)$ with $\dot{\rho}>-1$, $\rho(t)=-\tau$, $t\in[a/2,a]$ and $0\le\sigma\le1$, $\sigma\equiv 1$ on $\{x|(t,x)\in J_-(\supp F)\text{ for some }t\in[a/2,a]\}$; see Figure \ref{fg:TSA}.

\begin{figure}[ht]
\centering
\begin{tikzpicture}[scale=0.7]
\coordinate (A) at (0,0) ; \coordinate (B) at (0,4) ;
\coordinate (C) at (0,8) ; \coordinate (C1) at (0,6.5) ;\coordinate (C2) at (0,2.5) ;\coordinate (C3) at (0.5,2.5) ;
\coordinate (D) at (0,8.5) ; \coordinate (E) at (0,-2) ; 
\coordinate (F) at (4.5,6) ;
\coordinate (G) at (4.5,1) ; 
\coordinate (H) at (4.5,4.5) ;
\coordinate (I) at (2.5,6) ; \coordinate (J) at (2.5,1) ;
\coordinate (Ia) at (2.5,4.5) ; \coordinate (Ka) at (6.5,4.5) ;
\coordinate (Ja) at (4.5,-1) ; \coordinate (Kb) at (4.5,6.5) ;
\coordinate (K) at (6.5,6) ; \coordinate (L) at (6.5,1) ; \coordinate (M) at (4.5,7.5) ;
\coordinate (N) at (0.5,1.5) ; \coordinate (O) at (0.5,-1.5) ;
\coordinate (P) at (16,-1.5) ; \coordinate (Q) at (16,1.5) ;\coordinate (R) at (0.3,0) ; \coordinate (T) at (16.5,0) ;
\coordinate (U) at (3.09,7.41) ; \coordinate (V) at (3.09,5.91) ;\coordinate (W) at (5.91,5.91) ;\coordinate (U1) at (3,6.9) ;
\coordinate (X) at (0.5,4) ; \coordinate (Y) at (9.32,4) ;\coordinate (Z) at (9.32,2.5) ;
\coordinate (X1) at (0.5,3.32) ; \coordinate (Y1) at (9.2,3.25) ; \coordinate (Y2) at (10.82,2.5) ;
\coordinate (U2) at (0.5,4.82) ;\coordinate (W1) at (5.91,7.41) ;
\coordinate (Z1) at (13.32,0) ;\coordinate (Z2) at (14.82,-1.5) ; 
\coordinate (Z3) at (11,1.5) ;\coordinate (Z4) at (7.1,1.5) ;
\fill[black!5] (N) -- (O) -- (P) -- (Q) -- (N);
\fill[black!15] (Ia) -- (J) to[out=-90, in=180] (Ja) to[out=0, in=-90] (L) -- (Ka) to[out=90, in=0] (Kb) to[out=180, in=90] (Ia);
\draw[->] (E) -- (D) ;
\draw (A) node {$\scriptstyle-$} node[left] {$\scriptstyle 0$} ;
\draw (B) node {$\scriptstyle-$} node[left] {$\scriptstyle a/2$} ;
\draw (C) node {$\scriptstyle-$} node[left] {$\scriptstyle a$} ;
\draw (C1) node {$\scriptstyle-$} node[left] {$\scriptstyle a-\tau$} ;
\draw (C2) node {$\scriptstyle-$} node[left] {$\scriptstyle a/2-\tau$} ;
\draw (D)  node[above right] {$t$}; 
\draw[dashed] (N) -- (Z4); \draw[dashed] (Z3) -- (Q);
\draw[dashed] (O) -- (P);
\draw (F) ++(0:2cm) arc(0:180:2cm) ;
\draw[dashed] (H) ++(0:2cm) arc(0:180:2cm) ;
\draw (G) ++(0:2cm) arc(0:-180:2cm) ;
\draw (I) -- (J) ; \draw (K) -- (L) ;
\draw (M)  node {$\supp F$};
\draw (F)  node[below] {$\chi(\supp F)$};
\draw[thick] (R) -- (T);
\draw (N)  node[below right] {$N$};
\draw (T)  node[right] {$\{0\}\times\Sigma$};
\draw[thick,->] (U) -- (V) ; \draw[thick,->] (Y) -- (Z) ;
\draw (U1)  node[right] {$\tau$}; \draw (Y1)  node[right] {$\tau$};
\draw[dashed] (V) -- (X1); \draw[dashed] (W) -- (Z) to[out=-45, in=155] (Z3) to[out=-25, in=135] (Z1);
\draw (X) -- (Y);\draw[dashed] (C2) -- (Z); 
\draw(U) -- (U2);\draw[thick,dotted] (Y) -- (Z2); \draw (W1) -- (Y);
 \draw (Y2)  node[right] {$J_-(\supp F)$};
 \draw (10.85,1.90)  node[below left] {$\chi(J_-(\supp F))$};
\end{tikzpicture}
\caption{\small Illustration of the proof of Theorem \ref{th:TSA}: $N$ is filled by a light gray,
$\supp F$ and $J_-(\supp F)\cap\{t\ge a/2\}$ are sketched by a solid line, the remaining part of $J_-(\supp F)$ 
is sketched by a dotted line, $\chi(\supp F)$ is filled 
by a  dark gray, $\chi\bigl(J_-(\supp F)\bigr)=J_-^{L+\delta_{\chi}L}\bigl(\chi(\supp F)\bigr)$ and 
$\chi\bigl(J_-(\supp F)\cap\{t\ge a/2\}\bigr)$ are 
sketched by a dashed line. Note that the latter set is obtained from $J_-(\supp F)\cap\{t\ge a/2\}$ by a translation of the time 
variable by $(-\tau)$; however, $J_-(\supp F)$ and $\chi(J_-(\supp F))$ join at $t=0$, since there they leave the support of $\chi$.}
\label{fg:TSA} 
\end{figure}
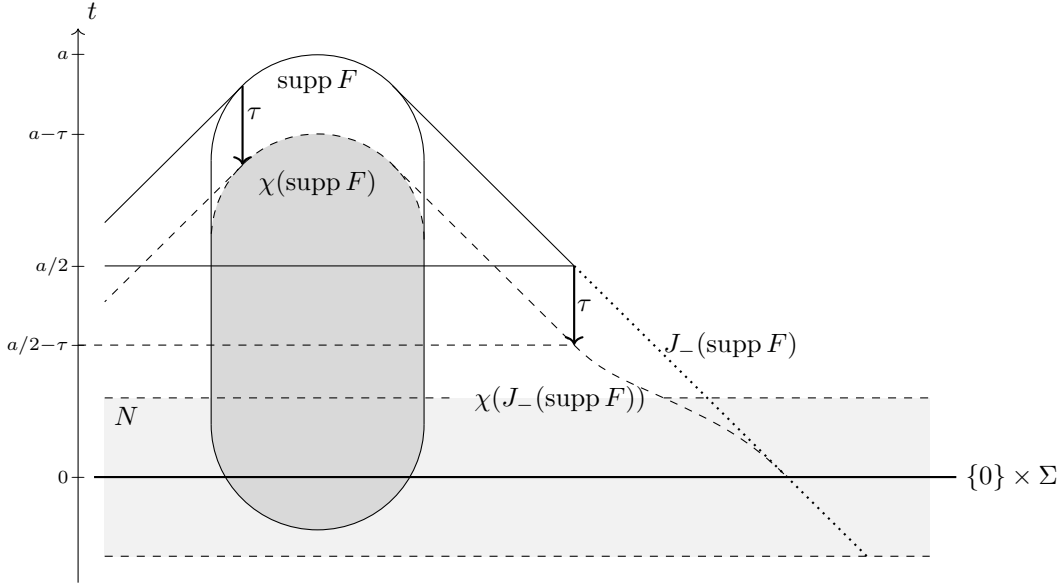

Next we prove the following auxiliary result:

\begin{lemma}
Let $(M,L)$ be a stationary dynamical spacetime, \ie the Lagrangian $L$ is time independent. In addition,
let $U\subset M$ be compact and $\chi(t,x)=(t+v,x)$ for all $(t,x)\in U$, where $v\in\RR$ is constant.
Then the variation $\delta_\chi L\in\Floc(M,L)$ satisfies $\supp\delta_\chi L\cap\chi(U)=\emptyset$.
\end{lemma}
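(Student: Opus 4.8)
The plan is to show that $\delta_\chi L$ has empty germ at every point $p\in\chi(U)$, \ie that the local functional $\delta_\chi L$ is insensitive to the field configuration on some neighbourhood of each such $p$, whence $p\notin\supp\delta_\chi L$. I read the hypothesis as saying that $\chi$ agrees with the translation $(t,x)\mapsto(t+v,x)$ on an open set containing $U$ (this is what actually occurs in the application to Theorem~\ref{th:TSA}, where $\chi(t,x)=(t+\rho(t)\sigma(x),x)$ with $\rho$ locally constant and $\sigma\equiv1$ near the relevant slab). Then $\chi(U)\subset\chi(\supp\chi)=\supp\chi$, so in $\delta_\chi L=(\chi_\ast L-L)(f)$ the cutoff $f$ may be taken $\equiv1$ on a neighbourhood of $\supp\chi$, hence near both $U$ and $\chi(U)$, so that $f$ is identically $1$ near every point occurring below. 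The argument then reduces to comparing, near $p$, the Lagrangian density $x\mapsto L(x)[f\phi]$ of $L(f)$ with the corresponding density of $(\chi_\ast L)(f)$.

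First I would unwind the push-forward: $(\chi_\ast L)(f)[\phi]=L(f\circ\chi)[\phi\circ\chi]=\int_M L(x)[(f\phi)\circ\chi]$, and the change of variables $x=\chi^{-1}(z)$ shows that the density of $(\chi_\ast L)(f)$ at $z$ equals $L(\chi^{-1}(z))[(f\phi)\circ\chi]$ transported to $z$ along $\chi^{-1}$, \ie weighted by $|\det D\chi^{-1}(z)|$ in any chart. Now for $z$ near $p=\chi(q)$ with $q\in U$, the point $\chi^{-1}(z)$ lies where $\chi$ is the exact translation $(t,x)\mapsto(t+v,x)$; there $D\chi=\mathrm{id}$, all higher derivatives of $\chi$ vanish, and the Jacobian weight is $1$, so the jet of $(f\phi)\circ\chi$ at $\chi^{-1}(z)$ is just the jet of $f\phi$ at $z$ carried back by the translation. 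Hence the density of $(\chi_\ast L)(f)$ at $z$ equals $L(\chi^{-1}(z))[f\phi]$, up to the trivial translation transport. Finally $\chi^{-1}(z)$ differs from $z$ only by the time shift $-v$ and $L$ is stationary, so by time-translation invariance $L(\chi^{-1}(z))[f\phi]$, transported to $z$, is exactly $L(z)[f\phi]$, the density of $L(f)$ at $z$. Therefore $(\chi_\ast L-L)(f)$ has vanishing density on a neighbourhood of $p$, so $p\notin\supp\delta_\chi L$; as $p\in\chi(U)$ was arbitrary, $\supp\delta_\chi L\cap\chi(U)=\emptyset$.

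The step I expect to be the main obstacle is the jet and density bookkeeping, specifically the claim that the jet of $(f\phi)\circ\chi$ at $\chi^{-1}(z)$ collapses to the jet of $f\phi$ at $z$: this uses crucially that $\chi$ coincides with a translation on a whole neighbourhood of $\chi^{-1}(p)$, so that its first derivative is the identity and its derivatives of order at least two vanish identically; merely fixing the value of $\chi$ at the single point $\chi^{-1}(p)$ would not do, which is why the hypothesis has to be read with $U$ contained in an open set on which $\chi$ is exactly the translation. Two routine auxiliary facts, both falling out of the same change of variables, are used along the way: that $\delta_\chi L$ is independent of the choice of $f$ with $f\equiv1$ on $\supp\chi$, and that the density of $\chi_\ast L-L$ is supported in $\supp\chi$, which is what legitimises the choice of $f$ made at the outset.
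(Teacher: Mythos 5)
Your argument is correct and is essentially the paper's proof: both hinge on the change of variables $y=\chi(x)$ together with the facts that $(D\chi)=\mathbf{1}$ and $d\mu_g$ is preserved where $\chi$ is the exact time translation, that $\Lc$ has no explicit $t$-dependence, and that the cutoff $f$ equals $1$ there. The only difference is organisational --- the paper verifies field-independence, $\delta_\chi L[\phi+\psi]=\delta_\chi L[\phi]$ for $\psi$ supported in $\chi(U)$, whereas you show the marginally stronger statement that the density of $(\chi_\ast L-L)(f)$ vanishes identically near $\chi(U)$; and your explicit reading of the hypothesis (that $\chi$ is the translation on a neighbourhood of $U$) is exactly what the paper uses implicitly when it writes $(D\chi)(\chi^{-1}(y))=\mathbf{1}$.
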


\begin{proof}
For any $\psi,\phi\in\Ec(M)$ with $\supp\psi\subset U$ we have to show that $\delta_\chi L[\phi+\psi]=\delta_\chi L[\phi]$,
where $\delta_\chi L[\phi]=L(f\circ\chi)[\phi\circ\chi]-L(f)[\phi]$ with $f\in\Dc(M)$ with $f\equiv 1$ on $\supp\chi$. 
Since $L$ does not explicitly depend on $t$,
we may introduce $\Lc\bigl(\phi(t,x),d\phi(t,x)\bigr)$ by
$$
\int\Lc\bigl(\phi(t,x),d\phi(t,x)\bigr)\,f(t,x)\,d\mu_g(t,x)\doteq L(f)[\phi]\quad\forall f,\phi\ .
$$
By setting $y\doteq\chi(t,x)$ we obtain
\begin{align}
  L(f\circ\chi)&[(\phi+\psi)\circ\chi]- L(f\circ\chi)[\phi\circ\chi]\nonumber\\ 
=&\int\Bigl(\Lc\bigl((\phi+\psi)(y),d(\phi+\psi)(y)(D\chi)(\chi^{-1}(y))\bigr)\nonumber\\
&\quad -\Lc\bigl(\phi(y),d\phi(y)(D\chi)(\chi^{-1}(y))\bigr)\Bigr)f(\chi^{-1}(y))\,d\mu_g(\chi^{-1}(y))\ .
\label{eq:translation}
\end{align}
Since $\supp\psi\subset\chi(U)$, we may restrict the integration to $y\in\chi(U)$,
hence $f(\chi^{-1}(y))=1$. In addition, since $\chi\vert_U$ is a time translation,
we have $(D\chi)(\chi^{-1}(y))=\mathbf{1}$ and $d\mu_g(\chi^{-1}(y))=d\mu_g(y)$ for $y\in\chi(U)$.
Hence, $\chi$ disappears from \eqref{eq:translation}, that is,
\eqref{eq:translation}$=L(f)[\phi+\psi]- L(f)[\phi]$.
\end{proof}

\noindent\emph{Continuation of the proof of Theorem \ref{th:TSA}.}
From the unitary AMWI we obtain
\be
S(F)=S(\delta_{\chi}L+\chi_{\ast }\zeta_{\chi}^{-1}(F))\ .
\ee
Since $\zeta_{\chi}\in\Rc(M,L)$ and $L$ is quadratic, $\zeta_{\chi}^{-1}(0)$ is a constant (Prop.~\ref{prop:Z}(ii)) and therefore $\supp\zeta_{\chi}^{-1}(F)=\supp F$ (by using \eqref{eq:invar-suppZ}). 

Then $\supp\chi_{\ast }\zeta_{\chi}^{-1}(F)=\chi(\supp F)$
and
\begin{equation}
    \chi(\supp F)\cap J_+(N)\subset J_-(\supp F)\cap\{t\le a-\tau\}\ ,
\end{equation}
and since by assumption $L$ is stationary, 
\be\supp\delta_{\chi}L\subset J_+(\Sigma)\setminus\chi\bigl( J_-(\supp F)\cap\{t\ge a/2\}\bigr)
\subset J_+(\Sigma)\setminus\bigl(\chi(J_-(\supp F))\cap\{t\ge a/2-\tau\}\bigr)
\ee 
by using the Lemma.
We decompose $\delta_{\chi}L=H_++H_-$ 
such that 
\be
\supp H_-\subset J_+(\Sigma)\cap\{t<a/2\}
\ ,\  
\supp H_+\cap \chi(J_-(\supp F))=\0\ .
\ee
Since $\chi(J_-(\supp F))=J_-^{L+\delta_{\chi}L}(\chi(\supp F))$ and $H_+$ does not change the metric in this region, we can replace $\delta_{\chi}L$ by $H_-$ and get
$\supp H_+\cap J_-^{L+A_{H_-}}(\supp \chi_{\ast }\zeta_{\chi}^{-1}(F))=\0$, thus we obtain the factorization
\be\label{eq:time-slice2}
S(F)=S(\delta_{\chi}L)S(H_-)^{-1}S(H_-+\chi_{\ast }\zeta_{\chi}^{-1}(F))\ .
\ee
Again by the unitary AMWI, $S(\delta_{\chi}L)=S(\zeta_{\chi}(0))$. $\zeta_{\chi}(0)$ is a constant functional and thus has empty support. Hence the arguments of all $S$-matrices on the right hand side have compact supports contained in 
%$J_+(N)\cap \{t\leq a-\tau\}$.
$(J_+(N)\cap \{t\leq a-\tau\})\cup (J_-(N)\cap\supp F)$.

We may now iterate the argument and find after a finite number of steps, that $S(F)$ can always be written as a product of $S$-matrices whose arguments have support in $J_-(N)\cap\supp F$.

We then repeat the procedure for the time reversed situation and arrive at the desired result for the special case of a free stationary Lagrangian.

For a generic choice of a dynamical spacetime $(M,L)$, a cocycle $\zeta$ and a neighborhood $N$ of a Cauchy surface $\Sigma$ we choose a foliation $M=\RR\times \Sigma$ with $\{0\}\times \Sigma\equiv\Sigma$ where the metric associated to $L$ assumes the form
\begin{equation}
    g(t,x)=a(t,x)\,dt^2-h_t(x)
\end{equation}
with a smooth positive function $a$ and a smooth family $h_t$ of Riemann metrics on $\Sigma$. We then set
\begin{equation}
    L_0=\frac12 g_0^{-1}(d\phi,d\phi)d\mu_{g_0}
\end{equation}
with
\begin{equation}
    g_0(t,x)=a(0,x)dt^2-h_0(x)\ .
\end{equation}
Let us define $V\doteq L-L_0\in\mathrm{Int}(M,L_0)$, hence $-V\in\mathrm{Int}(M,L)$.  
%We define a cocycle with values in $\Rc(M,L_0)$ by $\zeta^{-V}$. 
According to Proposition~\ref{prop:ZW-cocycle}, $\zeta^{-V}$ is an $\Rc(M,L_0)$-valued cocycle.
We now use Theorem \ref{thm2} for a faithful representation $\pi_0$ of $\fA(M,L_0,\zeta^{-V})$. Since $L_0$ is stationary and quadratic, we know from the first part of the proof that the time slice axiom holds for $\pi_0$. Since $\Sigma$ is a Cauchy surface for both, $g$ and $g_0$, we conclude that, for any splitting $V=V_++V_-$ into a past compact and a future compact part, the representation $\pi\doteq\pi_0\circ\alpha_{V_+,V_-}$ has the property
\begin{equation}
    \pi(\fA(N,L,\zeta))=\pi(\fA(M,L,\zeta))\ .
\end{equation}
Since with $\pi_0$ also $\pi$ is faithful, we conclude that 
\begin{equation}
    \fA(N,L,\zeta)=\fA(M,L,\zeta)\ .
\end{equation}
%a time independent quadratic Lagrangian $L'$ whose kinetic part coincides on $\Sigma$ with that of $L$. We then treat $L-L'$  as a perturbation and use Theorem \ref{thm2} and \eqref{eq:time-slice}.  
\end{proof}

Granted the time slice axiom, we can briefly discuss the relative Cauchy evolution that was defined and implemented in the more general framework of locally covariant quantum field theory \cite{BFV03}, to which we refer for more details. 
The basic advanced and retarded maps in \eqref{Bog1} and \eqref{Bog2} interpolate between the algebra $\fA(M,L',\zeta)$ and the \emph{perturbed} algebra $\fA(M,L,\zeta^V)$, with $L=L'+V$ (Prop.~\ref{prop:ZW-cocycle}),
where the interaction $V$ changes also the metric $g$ in a compact region of $M$, but the resulting transformed metric $g+h$ remains globally hyperbolic. Let $N_{\pm}$ be two neighborhoods of Cauchy surfaces in $(M,L')$, one in the past and the other in the future of the support of the perturbation $V$, w.r.t. the causal structure induced by $L'$. Let $\chi_{\pm}$ denote the isometric embeddings of $N
_{\pm}$ into $M$. We combine the basic retarded and advanced maps \eqref{Bog1} and \eqref{Bog2} with the monomorphisms $\alpha_{\chi_{\pm}}$ induced by these embeddings. Since Theorem \ref{th:TSA} holds, these combinations give rise to bijective maps.  Let us call them  as $\tilde{\alpha}_{V,-} \doteq \alpha_{\chi_-}\circ \overline{\alpha}_{V,-}^{-1}$ and $\tilde{\alpha}_{V,+} \doteq \alpha_{\chi_+}\circ \overline{\alpha}_{V,+}^{-1}$, where the extensions $\overline{\alpha}_{V,\pm}$ are the ones from Proposition \ref{prop:ZW-cocycle}(ii). We obtain:
\begin{align*}
\fA(M,L',\zeta)\,\,\xrightarrow{\overline{\alpha}_{V,\pm}^{-1}}&\,\,\fA(M,L'+V,\zeta^V)\\
&=\fA(N_\pm,(L'+V)\!\restriction_{N_\pm},\zeta^V\!\restriction_{N_\pm})\\
&=\fA(N_\pm,L'\!\restriction_{N_\pm},\zeta\!\restriction_{N_\pm})\xrightarrow{\alpha_{\chi_\pm}}\,\,\fA(M,L',\zeta)\ .
\end{align*}
Hence, it is straightforward to compute that the relative Cauchy evolution automorphism $\beta_V$ of $\fA(M,L',\zeta)$ defined via these maps is given by
$$
\beta_V \doteq \tilde{\alpha}_{V,+}\circ \tilde{\alpha}_{V,-}^{-1}= \mathrm{Ad}\bigl(S_{(M,L',\zeta)}(V(f))^{-1}\bigr)\ .
$$
\begin{remark}
As expected, the relative Cauchy evolution \ie \emph{scattering morphism} as also called in \cite{BFV03}, is indeed implemented by the scattering matrices. This also justifies, \emph{a posteriori}, the interpretation of the symbols $S_\bullet(F)$ as scattering matrices, inducing a perturbation of the theory by an intermediate change in the interaction and metric in a compact region.
\end{remark}

Our new formulation of the relative Cauchy evolution also allows one to obtain the stress-energy tensor. If $\delta_s L$ is the change in the Lagrangian induced by a change in the metric from $g$ to $g_s$, the S-matrix $S_{(M,L,\zeta)}(\delta_s L)$ may be interpreted as the time-ordered \emph{exponentiated} stress-energy tensor, so to obtain the stress-energy tensor, we need to be able to differentiate $S_{(M,L,\zeta)}(\delta_s L)$ in an appropriately regular Hilbert space representation $\pi$ for a smooth one-parameter family of metrics $g_s$ with , $g_0=g$,  $s\in[0,1]$ and $(g_s-g)$ compactly supported, satisfying the geometric assumptions as in Section 4.1 of \cite{BFV03}. 
% Let $F_s=(L_s-L)(f)$ with the perturbed Lagrangian $L_s$ and with $f=1$ on $\supp h$. 
% depends on $s$, so we can define $F_s=(L(s)-L'$. 
Assume that $\pi$ is such that $\pi(S_{(M,L,\zeta)}(\delta_sL))$ is differentiable as a function of $s$ in the sense of quadratic forms on a dense domain $\mathcal{V}$, \ie we require that
\[
\frac{d}{ds}\langle\theta,\pi\bigl(S_{(M,L,\zeta)}(\delta_s L)\bigr)\theta\rangle\Big\vert_{s=0}=\int_M t^{\mu\nu}(x)h_{\mu\nu}(x)\sqrt{-g}dx
\]
with $\theta\in\mathcal{V}$ and $h_{\mu\nu}=\frac{d}{ds}(g_s)_{\mu\nu}\big\vert_{s=0}$,
and that the right-hand side defines an operator-valued distribution  $T^{\mu\nu}$ with matrix elements $t^{\mu\nu}$ and domain $\mathcal{V}$.

One should choose this domain according to the physical problem at hand, so that it remains invariant under some appropriate class of observables. Note that our current formulation is an improvement over the results of \cite{BFV03}, since there one could only reconstruct the derivation given by the commutator with the stress-energy tensor, while here we obtained the stress-energy tensor itself.

Covariant conservation of the stress-energy tensor can be obtained as follows. 
% Let $\chi\in\mathrm{Diff}_c(M)$. 
% It is easy to see that $\beta_F$ transforms under the action of $\chi$ into $\beta_{\chi_*F}$. 
Let $(\chi_s)$ be a 1-parameter group of compactly supported diffeomorphisms and consider the induced change $(g_s-g)$ of the metric and $\delta_sL$ of the Lagrangian.  We interpret the equation
\be\label{eq:cov:cons}
S_{(M,L,\zeta)}\bigl(\delta_sL\bigr)=1\,,
\ee
for all $\chi_s \in \mathrm{Diff}_c(M) $, as the finite version of the covariant conservation of the stress-energy tensor. Namely, differentiating this property in the representation $\pi$
% and evaluating it on an off-shell configuration $\phi$
results in $\nabla_\mu T^{\mu\nu}=0$, in the sense of operator-valued distributions (see, {\it e.g.}, \cite{Hawking-Ellis}) . 

To see when the stress energy tensor is covariantly conserved, note that
by the unitary AMWI, we have that
\[
S_{(M,L,\zeta)}\bigl(\delta_s L\bigr)=S_{(M,L,\zeta)}\bigl( \zeta_{\chi_s}(0)\bigr)\,,
\]
so \eqref{eq:cov:cons} holds if the cocycle $\zeta$ is trivial for the subgroup $\mathrm{Diff}_c(M)$ of $\G_c(M)$, 
\ie in the absence of anomalies for diffeomorphisms.

%%%%%%%%%%%%%%%%%%%%%%%%
\section{Symmetries of the Lagrangian and the anomalous Noether theorem}\label{sec:symL}
In this section we will show how 
unbroken symmetries of the Lagrangian (those that survive quantization and don't lead to non-trivial anomalies) give rise to the unitary action on our algebras that can be interpreted as the quantum version of Noether's theorem (Corollary \ref{cor:unit:noether}), which follows from our main result: the Anomalous Noether Theorem (Theorem \ref{theorem:anomalousnoethertheorem}).

Let $\G(M)$ denote the unit component of the semidirect product of the diffeomorphism group of $M$ and the group of affine field redefinitions, as described in Definition \ref{def:G}, but without the restriction to compact support. Note that $\G_c(M)$ is a normal subgroup of $\G(M)$. Each element $\g\in\G(M)$ has an action on $\Floc(M,L)$ that can be approximated \emph{locally} by elements of $\G_c(M)$ in the sense of the following definition.
\begin{definition}\label{def:GgO}
Let $\Oc\subset M$ be a relatively compact subset.  $\G(\g,\Oc)$ denotes the set of all  $\g'\in\G_c(M)$ such that $\g'_{\ast}F=\g_{\ast}F$ for all $F\in\Floc(M,L)$ with $\supp F\subset\Oc$. 
\end{definition}
Note that this set is non-empty, since one can always use partitions of unity to construct these local approximations.

For a given Lagrangian $L$ on $M$, we consider the subgroup of dynamical symmetries
\begin{equation}
    \HL=\{\h\in\G(M)|\h_{\ast}L\sim L\}\ .
\end{equation}
The group of dynamical symmetries $\h$ is represented by automorphisms $\gamma_{\h}$ of $\fA(M,L)$, induced by their action on the generators,
\begin{equation}
    \gamma_{\h}\left(S(F)\right)\doteq S(\h_{\ast}F)\ .
\end{equation}
The ideal $I_{\zeta}$ of $\fA(M,L)$ induced by the cocycle $\zeta$ is generated by the elements
\begin{equation}
    P_\zeta(\g,F)\doteq S(\g_LF)S(\zeta_{\g}F)^{-1}-1\ ,\ \g\in\G_c(M)\ ,\ F\in\Floc(M,L)\ .
\end{equation}
We apply the automorphism $\gamma_{\h}$ of $\fA(M,L)$ to these elements and obtain
\begin{align}
    \gamma_{\h}(P_\zeta(\g,F))&=S(\h_{\ast }\g_LF)S(\h_{\ast }\zeta_{\g}F)^{-1}-1=
    S((\h\g \h^{-1})_L\h_{\ast }F)S((\h\zeta)_{\h\g \h^{-1}}\h_{\ast }F)-1\nonumber\\
    &=P_{\h\zeta}(\h\g\h^{-1},\h_\ast F)\label{eq:gamma-P}
\end{align}
with the action 
\begin{equation}\label{eq:sym-acts-cocy}
    (\h,\zeta)\mapsto \h\zeta\ ,\ (\h\zeta)_\g\doteq\h_{\ast }\zeta_{\h^{-1}\g \h}
    \h^{-1}_{\ast }
\end{equation}
of $\HL$ on $\mathfrak{Z}(M,L)$. Here we used that $\G_c(M)$ is a normal subgroup of $\G(M)$ and that
\begin{equation}
    \h_{\ast}\delta_{\g}L=\delta_{\h\g\h^{-1}}\h_{\ast}L
\end{equation}
as well that $\h_{\ast}L\sim L$ by assumption.

To verify the consistency of the definition \eqref{eq:sym-acts-cocy}, first note that $(\h\zeta)_\g\in\Rc(M,L)$ as one finds
straightforwardly by checking the defining properties of $\Rc(M,L)$; in addition, one verifies that $(\h\zeta)$ satisfies again the cocycle relation,
\be\label{eq:cocycle-hzeta}
(\h\zeta)_{\g_1\g_2}=(\h\zeta)_{\g_2}\,\g_{2L}^{-1}\,(\h\zeta)_{\g_1}\,\g_{2L}\ ,\quad\g_1,\g_2\in\G_c(M),\,\h\in\HL,
\ee
and associativity:
\be
\h_1(\h_2\zeta)=(\h_1\h_2)\zeta \ ,\quad \h_1,\h_2\in\HL\ .
\ee
The latter is straightforward, while  \eqref{eq:cocycle-hzeta} is obtained by inserting the definitions and the
cocycle relation for $\zeta$ and by using
\be
\g_L\h_\ast =h_\ast(\h^{-1}\g\h)_L\ ,\quad\h\in\HL,\,\g\in\G_c(M)\ ,
\ee
which relies on $\h_\ast L\sim L$.

From \eqref{eq:gamma-P} we conclude that $\gamma_{\h}(I_{\zeta})=I_{\h\zeta}$ and, therefore, 
$\gamma_{\h}$ induces an isomorphism $\overline{\gamma}_{\h}$ of the quotient algebras
\begin{equation}\label{eq:flow:anomaly}
    \overline{\gamma}_{\h}:\fA(M,L,\zeta)\to\fA(M,L,\h\zeta)\ .
\end{equation}
Hence $\HL$ induces a flow on the space of theories with a given Lagrangian $L$, but possibly different cocycles $\zeta$.
This is quite analogous to the flow of the renormalization group under scalings for scale-invariant Lagrangians as we will discuss in the next section.

We want to understand better the action of $\HL$ on cocycles. We use the fact that the symmetries of the Lagrangian can be locally approximated by compactly supported symmetries. 
We find the following relations:
\begin{proposition}\label{prop:h-zeta} 
Let $\Oc\subset M$ be an open relatively compact region and $\h\in\HL$. Then for $\supp \g,\supp F\subset\tildeh(\Oc)$, with $\tildeh$ defined as in \eqref{eq:tilde-g} and $\h'\in\G(\h,\Oc)$ as in Definition \ref{def:GgO}, we have:
\begin{align}
    (\h\zeta)_{\g}F&=(\h'\zeta)_{\g}F\ , \\
    (\h'\zeta)_{\g}&=Z^{-1}\zeta_{\g}Z^\g\quad\text{with}\quad Z\doteq\zeta_{(\h')^{-1}}\ , \end{align}
where $(\h'\zeta)$ is defined by the formula \eqref{eq:sym-acts-cocy}.
\end{proposition}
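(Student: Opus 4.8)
The plan is to prove the two identities separately, both exploiting the locality of the anomaly map (the support condition $\supp\zeta_{\g}\subset\supp\g$) together with Definition~\ref{def:GgO}, which says $\h'$ and $\h$ act identically on all functionals supported in $\Oc$.

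\medskip\noindent\textbf{First identity.} Fix $\g,F$ with $\supp\g,\supp F\subset\tildeh(\Oc)$. Unwinding \eqref{eq:sym-acts-cocy}, we have $(\h\zeta)_\g F=\h_\ast\zeta_{\h^{-1}\g\h}\,\h_\ast^{-1}F$ and likewise $(\h'\zeta)_\g F=\h'_\ast\zeta_{(\h')^{-1}\g\h'}\,(\h')_\ast^{-1}F$. The first step is to see that these coincide. Since $\supp F\subset\tildeh(\Oc)$, the inverse image $\h_\ast^{-1}F$ has support in $\Oc$ (using \eqref{eq:tilde-g}), and on functionals supported in $\Oc$ the maps $\h_\ast$ and $\h'_\ast$ agree; hence $\h_\ast^{-1}F=(\h')_\ast^{-1}F$, call it $F_0$, with $\supp F_0\subset\Oc$. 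Next, I claim $\zeta_{\h^{-1}\g\h}$ and $\zeta_{(\h')^{-1}\g\h'}$ agree when evaluated on $F_0$ and, more precisely, on all functionals supported in $\Oc$. This is the crux: the two group elements $\h^{-1}\g\h$ and $(\h')^{-1}\g\h'$ need not be equal, but they differ only outside $\Oc$ in a suitable sense. Because $\supp\g\subset\tildeh(\Oc)$, conjugation by $\h$ versus $\h'$ moves $\supp\g$ to the \emph{same} set inside $\Oc$ (again by \eqref{eq:tilde-g} and the agreement of $\h_\ast,\h'_\ast$ on $\Oc$), so $\h^{-1}\g\h$ and $(\h')^{-1}\g\h'$ have support inside $\Oc$ and act identically there. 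Then I invoke the locality/support properties of elements of $\Rc(M,L)$ (Definition~\ref{def1}$(i)$, Prop.~\ref{prop:inv-suppZ}) and of the cocycle to conclude $\zeta_{\h^{-1}\g\h}(F_0)=\zeta_{(\h')^{-1}\g\h'}(F_0)$. Applying $\h_\ast=\h'_\ast$ on $\Oc$ once more finishes the first identity; note one should check the image also lands in $\tildeh(\Oc)$ so the final application of $\h_\ast$ is legitimate, which follows from the support-preservation of cocycle elements.

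\medskip\noindent\textbf{Second identity.} Here everything is within $\G_c(M)$, so the maps in question are genuine bijections of $\Floc(M,L)$ and no restriction-to-$\Oc$ subtlety intervenes; it is a direct computation from the cocycle relation \eqref{eq:cocycle}. Write $Z=\zeta_{(\h')^{-1}}$. Unwinding \eqref{eq:sym-acts-cocy}, $(\h'\zeta)_\g=\h'_\ast\,\zeta_{(\h')^{-1}\g\h'}\,(\h')_\ast^{-1}$. The idea is to apply the cocycle relation twice to the composite $(\h')^{-1}\g\h'$: first $\zeta_{[(\h')^{-1}\g]\h'}=\zeta_{\h'}\,(\zeta_{(\h')^{-1}\g})^{\h'}$, then $\zeta_{(\h')^{-1}\g}=\zeta_\g\,(\zeta_{(\h')^{-1}})^{\g}=\zeta_\g\,Z^\g$, and finally relate $\zeta_{\h'}$ to $Z^{-1}=\zeta_{(\h')^{-1}}^{-1}$ via the cocycle relation applied to $(\h')^{-1}\h'=\e$, which gives $\mathrm{id}=\zeta_\e=\zeta_{\h'}\,(\zeta_{(\h')^{-1}})^{\h'}$, i.e.\ $\zeta_{\h'}=((\zeta_{(\h')^{-1}})^{\h'})^{-1}=(Z^{\h'})^{-1}=(Z^{-1})^{\h'}$. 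Substituting and carefully tracking the conjugation superscripts $(\cdot)^{\g}=\g_L^{-1}(\cdot)\g_L$ — using $(Z_1Z_2)^\g=Z_1^\g Z_2^\g$ and the fact that conjugating by $\h'_L=\h'_\ast$-plus-shift intertwines the inner $(\h')^{-1}(\cdot)\h'$ action on the left — collapses the expression to $Z^{-1}\zeta_\g Z^\g$. The main bookkeeping obstacle is matching the group-conjugation $(\h')^{-1}\g\h'$ appearing inside $\zeta$ with the $\Rc$-conjugation superscript $(\cdot)^{\h'}$ coming from the cocycle relation; the identity $\g_L\h'_\ast=\h'_\ast(\h'^{-1}\g\h')_L$ (the compactly supported analogue of the relation used in the proof of \eqref{eq:cocycle-hzeta}) is exactly what bridges the two and should be invoked explicitly.

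\medskip I expect the first identity to be the harder of the two: the delicate point is that $\h^{-1}\g\h$ and $(\h')^{-1}\g\h'$ are genuinely different group elements, and one must argue that the cocycle values agree \emph{as functionals on everything supported in $\Oc$}, not merely at a single point, which requires combining the support axioms for $\Rc(M,L)$ with the local-agreement hypothesis of Definition~\ref{def:GgO} in the right order. The second identity, by contrast, is a formal manipulation of the cocycle relation once the conjugation conventions are pinned down.
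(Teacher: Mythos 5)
Your overall architecture is right --- both identities do come down to the support condition on $\zeta_\g$ plus the cocycle relation --- but each half has a concrete gap. For the first identity, the step you call the crux does not go through as stated. You correctly reduce to comparing $\zeta_{\h^{-1}\g\h}$ and $\zeta_{(\h')^{-1}\g\h'}$ on $F_0=\h_\ast^{-1}F$, but then infer their agreement from ``the two group elements act identically on $\Oc$'' together with Definition~\ref{def1}$(i)$ and Prop.~\ref{prop:inv-suppZ}. Those properties constrain a single renormalization-group element; nothing in the axioms says that $\zeta$, as a map on $\G_c(M)$, assigns locally coinciding values to two \emph{distinct} group elements that happen to act alike on a region. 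Your hedge ``need not be equal, but differ only outside $\Oc$'' is in fact the wrong diagnosis: the two conjugates are \emph{literally equal}. Since $\widetilde{\h'\h^{-1}}$ is the identity on $\tildeh(\Oc)\supset\supp\g$, one has $\g(\h'\h^{-1})=(\h'\h^{-1})\g$, hence $(\h')^{-1}\g\h'=\h^{-1}\g\h$, and the first identity reduces to the fact that $\h'_\ast$ and $\h_\ast$ agree on the functional $\zeta_{\h^{-1}\g\h}(F_0)$, whose support lies in $\Oc$. Either prove this equality of group elements or supply the missing locality-in-$\g$ statement; as written the step fails.

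For the second identity your cocycle algebra ($\zeta_{(\h')^{-1}\g\h'}=\zeta_{\h'}(\zeta_{(\h')^{-1}\g})^{\h'}$, then $\zeta_{(\h')^{-1}\g}=\zeta_\g Z^\g$, then $\zeta_{\h'}=(\h')_L^{-1}Z^{-1}(\h')_L$) is exactly the paper's, but the claim that ``no restriction-to-$\Oc$ subtlety intervenes'' is wrong, and this is precisely where the hypothesis $\h\in\HL$ enters. The definition \eqref{eq:sym-acts-cocy} conjugates by $\h'_\ast$, whereas the cocycle superscript is $(\cdot)^{\h'}=(\h')_L^{-1}(\cdot)(\h')_L$; the two differ by $\delta_{\h'}L$ and $\delta_{(\h')^{-1}}L$. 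The paper first proves $(\h'\zeta)_\g=\h'_L\,\zeta_{(\h')^{-1}\g\h'}\,(\h')_L^{-1}$, which requires $\supp\delta_{(\h')^{-1}}L\cap\Oc=\0$ (a consequence of $\h_\ast L\sim L$ and $\delta_{\h'}L+\h'_\ast\delta_{(\h')^{-1}}L=0$) together with $\supp\zeta_{(\h')^{-1}\g\h'}\subset\Oc$, and only then expands by the cocycle relation. If you expand first and convert afterwards, as your sketch suggests, you are left with terms of the form $Z(G+\delta_{\h'}L)$ with $Z=\zeta_{(\h')^{-1}}$, and since $\supp Z\subset\supp\h'$ need not be disjoint from $\supp\delta_{\h'}L$ the additivity you need is unavailable. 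Your ``bridging identity'' $\g_L\h'_\ast=\h'_\ast((\h')^{-1}\g\h')_L$ likewise only holds here because $\supp\g\cap\supp\delta_{\h'}L=\0$ --- it is not the unrestricted $\HL$-relation --- so it, too, needs the same support argument. Incidentally, this makes the second identity, not the first, the one where the real work lies.
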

\begin{proof} 
$\h'\in\G(\h,\Oc)$ is equivalent to $(\h')^{-1}\in\G(\h^{-1},\tildeh(\Oc))$, hence $(\h')^{-1}_{\ast} F=\h^{-1}_{\ast}F$. Moreover, $\widetilde{\h'\h^{-1}}$ is a diffeomorphism of $M$ which is the identity on $\tildeh(\Oc)$, hence due to
$\supp\g\subset\tildeh(\Oc)$
\begin{equation}\label{eq:h'-h}
    \g\h'\h^{-1}=\h'\h^{-1}\g 
\end{equation}
and thus $(\h')^{-1}\g\h'=\h^{-1}\g\h$. Now $\supp \h^{-1}\g\h\subset\Oc$, hence $\supp\zeta_{\h^{-1}\g\h}\h^{-1}_{\ast}F\subset\Oc$.
Inserting these relations into \eqref{eq:sym-acts-cocy} yields the first equality.

We now use that $\h_{\ast}L\sim L$ for $\h\in\HL$. Therefore $\supp\delta_{\h'}L\cap\tildeh(\Oc)=\0$ for $\h'\in\G(\h,\Oc)$ and with
\begin{equation}
\delta_{\h'}L+\h'_{\ast}\delta_{(\h')^{-1}}L=0
\end{equation}
we get $\supp\delta_{(\h')^{-1}}L\cap\Oc=\0$. Hence
$\supp\delta_{(\h')^{-1}}L\cap\supp\zeta_{(\h')^{-1}\g\h'}=\0$. 
% for $\g\in\G(M)$
% with $\supp\g\subset\tildeh(\Oc)$. 
Then with
\begin{equation}
    \h'_{\ast}G=\h'_L G-\delta_{\h'}L\ ,\ (\h')^{-1}_{\ast}G=(\h')^{-1}_L G-\delta_{(\h')^{-1}}L
\end{equation}
 and
\begin{equation}
    \zeta_{(\h')^{-1}\g\h'}\bigl((\h')^{-1}_{\ast}G\bigr)=\zeta_{(\h')^{-1}\g\h'}\bigl((\h')^{-1}_L G\bigr)-\delta_{(\h')^{-1}}L
\end{equation}
for $G\in\Floc(M,L)$ (by using \eqref{eq:suppZ}) 
we get
\begin{equation}\label{eq:h'zeta1}
    (\h'\zeta)_{\g}=\h'_L\zeta_{(\h')^{-1}\g\h'}(\h')^{-1}_L\ .
\end{equation}
Using the cocycle identity we finally get the second equation
\begin{equation}\label{eq:h'zeta2}
        (\h'\zeta)_{\g}=\h'_L\zeta_{\h'}(\h')^{-1}_L\zeta_{\g}\zeta_{(\h')^{-1}}^{\g}
        %&=\zeta_{(\h')^{-1}}^{-1}\zeta_{\g}\zeta_{(\h')^{-1}}^{\g}\\
        =Z^{-1}\zeta_{\g}Z^\g\quad\text{with}\quad Z\doteq\zeta_{(\h')^{-1}}\ .
\end{equation}
%\begin{equation}
%    \begin{split}
%        (\h'\zeta)_{\g}&=\h'_L\zeta_{\h'}(\h')^{-1}_L\zeta_{\g}\zeta_{(\h')^{-1}}^{\g}\\
%        %&=\zeta_{(\h')^{-1}}^{-1}\zeta_{\g}\zeta_{(\h')^{-1}}^{\g}\\
%        &=Z^{-1}\zeta_{\g}Z^\g\quad\text{with}\quad Z\doteq\zeta_{(\h')^{-1}}\ .
%    \end{split}
%\end{equation}
\end{proof}
%Comparing with Prop.~\ref{prop:betaZ-cocycle},
We observe that, locally, the flow of $\HL$ is induced by the anomaly map $\zeta$ restricted to local approximations of 
$\HL$. We now study the consequences of the unitary AMWI.

% [QUESTION (MD): Comparing with Prop.~\ref{prop:betaZ-cocycle}, there are (locally) 3 isomorphisms $\fA(M,L,\zeta)\to\fA(M,L,\h'\zeta)$,
% namely $\overline{\gamma_{\h'}},\,\overline\beta_{\zeta_{(\h')^{-1}}}^{\,\mathrm{ret/adv}}$:
% \begin{align*}
% &\overline{\gamma_{\h'}}(S(F)+I_\zeta)=S(h'_\ast F)+I_{\h'\zeta}\ ,\\
% &\overline\beta_{\zeta_{(\h')^{-1}}}^{\,\mathrm{ret}}(S(F)+I_\zeta)=S(\zeta_{(\h')^{-1}}(0))\,S(\zeta_{(\h')^{-1}}(F))+I_{\h'\zeta}\ .
% \end{align*}
% What is the connection between these isomorphisms?]

% [Answer (KF): $\beta$ maps into the opposite direction (correction in Prop. \ref{prop:betaZ-cocycle}).
% One computes
% \begin{equation}
%   \overline\beta_{\zeta_{(\h')^{-1}}}^{\,\mathrm{ret}}\circ\overline{\gamma_{\h'}}=\mathrm{Ad}(S(Q_-)^{-1})  
% \end{equation}
% with $\delta_{\h'}L=Q_++Q_-$
% as in the Noether theorem]

% \subsection{Unitary Noether Theorem}
\begin{theorem}[Anomalous Noether Theorem]\label{theorem:anomalousnoethertheorem}
Let $(M,L)$ be a dynamical spacetime, equipped with a cocycle $\zeta\in\mathfrak{Z}(M,L)$. For $\h\in\HL$ and  any choice of $\h'\in\G(\h,\Oc)$, with $\Oc\subset M$  relatively compact and causally convex,  there exists a unitary $U\in\fA(M,L,\h\zeta)$ such that
\begin{equation}\label{eq:anomalousnoether}
    \overline{\gamma}_{\h}(S_{(M,L,\zeta)}(F))=\mathrm{Ad}(U)\bigl(\overline{\beta}^{\mathrm{ret}}_{(\h\zeta)_{\h'}}(S_{(M,L,(\h\zeta)')}(F))\bigr)
\end{equation}
for all $F\in\Floc(M,L)$ for which $\supp F\subset\Oc$,
 with
 \begin{equation}\label{eq:hzeta'}
     (\h\zeta)'_{\g}\doteq(\h\zeta)_{\h'}^{-1}(\h\zeta)_{\g}(\h\zeta)_{\h'}^{\g}=((\h')^{-1}\h\zeta)_{\g}^{-\delta_{(\h')^{-1}}L}\quad
     \ .
 \end{equation}
 and $(\h\zeta)'_{\g}=\zeta_{\g}$ if $\supp\g\subset\Oc$.
\end{theorem}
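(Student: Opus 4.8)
\emph{Plan.} The idea is to rewrite both sides as products of $S$‑matrices in $\fA(M,L,\h\zeta)$, reduce the assertion to a statement about a residual retarded interaction, and then produce $U$ by the same algebraic adiabatic–limit mechanism that makes the relative Cauchy evolution inner.

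The left‑hand side is read off at once: since $\h'\in\G(\h,\Oc)$ one has $\h'_\ast F=\h_\ast F$ whenever $\supp F\subset\Oc$, so $\overline{\gamma}_{\h}(S_{(M,L,\zeta)}(F))=S_{(M,L,\h\zeta)}(\h_\ast F)=S_{(M,L,\h\zeta)}(\h'_\ast F)$. The first step is to bring the right‑hand side into the same form. Writing $S\equiv S_{(M,L,\h\zeta)}$ and applying the unitary AMWI in $\fA(M,L,\h\zeta)$ to the \emph{compactly supported} transformation $\h'\in\G_c(M)$ — that is, $S\bigl((\h\zeta)_{\h'}(G)\bigr)=S(\h'_LG)=S(\delta_{\h'}L+\h'_\ast G)$ for every $G$, in particular $S\bigl((\h\zeta)_{\h'}(0)\bigr)=S(\delta_{\h'}L)$ — together with the definition of $\overline{\beta}^{\mathrm{ret}}$ (Proposition~\ref{prop:automorphism-by-Z}), I obtain
\begin{equation*}
\overline{\beta}^{\mathrm{ret}}_{(\h\zeta)_{\h'}}\bigl(S_{(M,L,(\h\zeta)')}(F)\bigr)=S(\delta_{\h'}L)^{-1}\,S(\delta_{\h'}L+\h'_\ast F),
\end{equation*}
which is precisely the automorphism $\alpha_+^{\h'}$ of \eqref{eq:alpha+g} read in the quotient; that $\overline{\beta}^{\mathrm{ret}}_{(\h\zeta)_{\h'}}$ does map $\fA(M,L,(\h\zeta)')$ to $\fA(M,L,\h\zeta)$ and that $(\h\zeta)'\in\mathfrak{Z}(M,L)$ follows from Proposition~\ref{prop:betaZ-cocycle}(i),(ii) applied to the cocycle $\h\zeta$ and $Z=(\h\zeta)_{\h'}$. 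Hence the assertion is equivalent to the existence of one unitary $U\in\fA(M,L,\h\zeta)$ with $S(\h'_\ast F)=\mathrm{Ad}(U)\bigl(S(\delta_{\h'}L)^{-1}\,S(\delta_{\h'}L+\h'_\ast F)\bigr)$ for all $F$ with $\supp F\subset\Oc$.

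Next I would uncover the geometry. Because $\h\in\HL$ satisfies $\h_\ast L\sim L$, the variation $\delta_{\h'}L$ is supported in the complement of the relatively compact, causally convex set $\tildeh(\Oc)$, which contains $\supp(\h'_\ast F)$ — this is the support input already used in Proposition~\ref{prop:h-zeta}. I would split $\delta_{\h'}L=H_++H_-$ with $\supp H_+$ disjoint from $J^{L}_-(\tildeh(\Oc))$ and $H_-$ past compact, exactly as in the proof of Theorem~\ref{th:TSA} (including the treatment of a possibly interaction‑dependent causal structure through the interpolating metrics of Appendix B). The Causality Relation then gives $S(\delta_{\h'}L+\h'_\ast F)=S(\delta_{\h'}L)\,S(H_-)^{-1}S(H_-+\h'_\ast F)$, so the identity to be proven reduces to $S(\h'_\ast F)=\mathrm{Ad}(U)\bigl(S(H_-)^{-1}S(H_-+\h'_\ast F)\bigr)$: one must show that the residual retarded interaction by $H_-$, restricted to the local algebra $\fA(\tildeh(\Oc),L,\h\zeta)$, is implemented by an inner automorphism of $\fA(M,L,\h\zeta)$. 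This is the step I expect to be the hard part. It is handled by the same circle of ideas as the relative Cauchy evolution discussed immediately before the theorem: using the time slice axiom now at our disposal (Theorem~\ref{th:TSA}), one realises $\overline{\gamma}_{\h}$ on $\fA(\Oc,L,\zeta)$ through the embedding monomorphisms $\alpha_{\chi_\pm}$ of neighbourhoods $N_\pm$ of Cauchy surfaces bracketing $\Oc$ composed with the advanced/retarded interaction morphisms built from $\delta_{\h'}L$; comparing this realisation with the composition that produced the right‑hand side, the two differ — by the mechanism in the proof of Theorem~\ref{thm2}, cf.\ \eqref{eq:time-slice} — by $\mathrm{Ad}(U)$ with $U$ an explicit product of $S$‑matrices $S_{(M,L,\h\zeta)}(\bullet)$. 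The remaining difficulties there are purely causal‑geometric (choosing the Cauchy neighbourhoods and the splitting $\delta_{\h'}L=H_++H_-$ consistently), not algebraic.

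It remains to dispatch the two identities for $(\h\zeta)'$. The expression $(\h\zeta)'_{\g}=((\h')^{-1}\h\zeta)_{\g}^{-\delta_{(\h')^{-1}}L}$ is obtained by inserting the definitions and using the cocycle relation $\delta_{\g\h}L=\g_\ast\delta_\h L+\delta_\g L$ (which gives $\h'_\ast\delta_{(\h')^{-1}}L=-\delta_{\h'}L$), the cocycle relation for $\zeta$, the transformation rule \eqref{eq:sym-acts-cocy} for the $\G$‑action on cocycles and the interaction‑shift formula \eqref{eq:renint}; the equivalence $\h'\in\G(\h,\Oc)\Leftrightarrow(\h')^{-1}\in\G(\h^{-1},\tildeh(\Oc))$ and the fact $\supp\delta_{(\h')^{-1}}L\cap\Oc=\emptyset$ (both from the proof of Proposition~\ref{prop:h-zeta}) make all the conjugations telescope. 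Finally, if $\supp\g\subset\Oc$ then $(\h')^{-1}\h\in\G(\mathrm{id},\Oc)$, so $((\h')^{-1}\h\zeta)_{\g}$ coincides with $\zeta_{\g}$ on functionals supported in $\Oc$; since moreover $\supp\delta_{(\h')^{-1}}L\cap\Oc=\emptyset$, the support property of renormalisation group elements (Definition~\ref{def1}$(i)$) forces $\zeta_{\g}^{-\delta_{(\h')^{-1}}L}=\zeta_{\g}$ there. Thus $(\h\zeta)'_{\g}=\zeta_{\g}$ whenever $\supp\g\subset\Oc$, so throughout the argument $(\h\zeta)'$ may be replaced by $\zeta$ on the relevant local algebra, which completes the proof.
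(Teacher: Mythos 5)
Your opening reductions are correct and in fact run parallel to the paper's own argument: the paper starts from $\overline{\gamma}_{\h}(S_{(M,L,\zeta)}(F))=S\bigl((\h\zeta)_{\h'}(F+\delta_{(\h')^{-1}}L)\bigr)$ and causally decomposes $\delta_{(\h')^{-1}}L$ (supported off $\Oc$), while you identify $\overline{\beta}^{\mathrm{ret}}_{(\h\zeta)_{\h'}}\circ S_{(M,L,(\h\zeta)')}$ with $\alpha_+^{\h'}$ in the quotient and propose to decompose $\delta_{\h'}L$ (supported off $\tildeh(\Oc)$); these are the same computation up to relabelling. The genuine gap is at the crux, exactly where you stop: having reduced to $S(\h'_\ast F)=\mathrm{Ad}(U)\bigl(S(H_-)^{-1}S(H_-+\h'_\ast F)\bigr)$, you declare the existence of $U$ ``the hard part'' and outsource it to a comparison with the relative Cauchy evolution and the mechanism of Theorem~\ref{thm2}; but ``the two realisations differ by $\mathrm{Ad}(U)$'' is precisely the assertion to be proved, and the time-slice/representation machinery is neither needed nor actually deployed. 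The missing step is simply one more application of the Causality Relation: choose the splitting so that $\supp H_-\cap J_+(\tildeh(\Oc))=\0$ (your condition ``$H_-$ past compact'' is vacuous for a compactly supported functional and is not the property you need — what matters is that $\tildeh(\Oc)$ is causally convex, so its complement is covered by the complements of its causal past and its causal future); then $S(H_-+\h'_\ast F)=S(\h'_\ast F)\,S(H_-)$, hence $S(H_-)^{-1}S(H_-+\h'_\ast F)=\mathrm{Ad}\bigl(S(H_-)^{-1}\bigr)(S(\h'_\ast F))$ and $U=S(H_-)$ works uniformly in $F$ (one also needs $S(\delta_{\h'}L)=S((\h\zeta)_{\h'}(0))$ from the unitary AMWI at $F=0$ to absorb the leftover prefactor). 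Because the pieces of $\delta_{\h'}L$ may themselves deform the causal structure, a single split $H_++H_-$ does not suffice in general; the paper iterates the factorization over the five-fold decomposition of Appendix~\ref{sec:interpolatingmetrics}, which is why its $U$ comes out as the explicit product $\prod_{i=0}^{4}S'(Q_{<6-i})^{-1}S'(Q_{<5-i}+Q_{5-i}^+)$. You gesture at the interpolating metrics but do not build this into the argument.

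The treatment of the two identities for $(\h\zeta)'$ is also only a sketch: ``the conjugations telescope'' is not a computation, and your argument for the final claim establishes $(\h\zeta)'_{\g}(F)=\zeta_{\g}(F)$ only for $\supp F\subset\Oc$, whereas the statement is an identity of renormalization group elements, i.e.\ it must hold for all $F\in\Floc(M,L)$; the paper obtains this by splitting an arbitrary $F$ as $F=\kk_\ast F-\delta_\kk F$ with $\kk=(\h')^{-1}\h$ and $\supp\delta_\kk F\cap\Oc=\0$, and then using $\supp(\zeta_\g(F)-F)\subset\supp\zeta_\g\subset\Oc$ from \eqref{eq:supp(Z(F)-F)}. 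The ingredients you name (the relation $\h'_\ast\delta_{(\h')^{-1}}L=-\delta_{\h'}L$, the cocycle relation, \eqref{eq:sym-acts-cocy} and \eqref{eq:renint}) are the right ones, so this part is a matter of writing out the chain of identities rather than a wrong idea.
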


%[Question:\todo{??(MD)} Since $\overline{\gamma}_{\h}:\fA(M,L,\zeta)\to\fA(M,L,\h\zeta)$ and 
%$\beta^{\mathrm{ret}}_{\zeta_{\h'}}:\fA(M,L,\zeta')\to\fA(M,L,\zeta)$ (Prop.~5.6) I expect
%$$
%(\h\zeta)'\overset{?}{=}\zeta\ .\quad (*)
%$$
%To check this relation let $Z\doteq\zeta_{\h'}$ and $Y\doteq\zeta_{\h'^{-1}}$. By using Prop.~7.1, we locally obtain
%$$
%(h\zeta)'_\g=Z^{-1}(\h'\zeta)_\g Z^\g=Z^{-1}Y^{-1}\zeta_\g Y^\g Z^\g=(YZ)^{-1}\zeta_\g (YZ)^\g\ .
%$$
%Hence (*) is correct if $Y=Z^{-1}$. But the cocycle relation implies $\zeta_{\h'}^{-1}=\h'_L^{-1}\zeta_{\h'^{-1}}\h'_L$.??]

%[Answer:
%\begin{equation*}
% \begin{split}
%    (\h\zeta)'_g&=(\h\zeta)_{\h'}^{-1}(\h\zeta)_{\g}(\h\zeta)_{\h'}^{\g})\\
%    &=(\h\zeta)_{(\h')^{-1}}^{\h'}(\h\zeta)_{\h'\g}\\
%    &=(\h')_L^{-1}(\h\zeta)_{(\h')^{-1}}(\h\zeta)_{\h'\g}^{(\h')^{-1}}\h'_L\\
%    &=(\h')_L^{-1}(\h\zeta)_{\h'\g(\h')^{-1}}\h'_L\\
%    &=(\h')_L^{-1}\h_{\ast}\zeta_{\g}\h_{\ast}^{-1}\h'_L\\
%    &=\delta_{(\h')^{-1}}L+\zeta_{\g}(\bullet-\delta_{(\h')^{-1}}L)\\
%    &=\zeta_{\g}^{-\delta_{(\h')^{-1}}L}] 
% \end{split}   
%\end{equation*}
%[Question (MD): I obtain $S(\h'_{\ast}F)=S(\h'_L(F+\delta_{(\h')^{-1}}L))=S(\zeta_{\h'}(F+\delta_{(\h')^{-1}}L))$. Why
%$\zeta_{\h'}(F+\delta_{(\h')^{-1}}L)=\zeta_{\h'}(F)+\delta_{(\h')^{-1}}L$?]\todo{? (MD)
%corrected (KF)}\\
\begin{proof}  The unitary AMWI yields for $\h\in\HL$
\begin{equation}
    \overline{\gamma}_{\h}(S_{(M,L,\zeta)}(F))=S(\h'_{\ast}F)=S\bigl((\h\zeta)_{\h'}(F+\delta_{(\h')^{-1}}L)\bigr)\ ,
\end{equation}
with $S\doteq S_{(M,L,\h\zeta)}$. We have $\supp\delta_{(\h')^{-1}}L\cap\Oc=\0$. Since $\Oc$ is causally convex, we decompose (see the corresponding procedure in Theorem \ref{thm:adiabatic})
\begin{equation}
 \delta_{(\h')^{-1}}L=\sum_{i=1}^5\sum_{\pm}Q_i^{\pm}   
\end{equation}
with $\supp Q_i^+\cap J_-^{L+A_{Q_{<i}}}(\Oc)=\0$ and $\supp Q_i^-\cap J^{L+A_{Q_{<i}+Q_i^+}}_+(\Oc)=\0$, $Q_{<i}\doteq\sum_{j<i,\pm}Q_j^{\pm}$, in particular $Q_{<1}=0$ and $Q_{<6}=\delta_{(\h')^{-1}}L$. 
We then use stepwise causal factorization for $S'(F+Q_{<6})$ with $S'\doteq\overline{\beta}^{\mathrm{ret}}_{(\h\zeta)_{\h'}}\circ S_{(M,L,(\h\zeta)')}$ (noticing that by Prop.~\ref{prop:automorphism-by-Z} and \ref{prop:betaZ-cocycle}, we have 
 $\overline{\beta}^{\mathrm{ret}}_{(\h\zeta)_{\h'}}(S_{(M,L,(\h\zeta)')}(F))\in\fA(M,L,\h\zeta)$) 
starting with $i=5$ and obtain 
\begin{align*}
  &\overline{\gamma}_{\h}(S_{(M,L,\zeta)}(F))=S((\h\zeta)_{\h'}(0))\,S'(F+Q_{<6})\\
  &=S'(Q_{<6})^{-1}\,S'(F+Q_{<5}+Q_5^+)\,S'(Q_{<5}+Q_5^+)^{-1}\,S'(Q_{<6})\\
  &=S'(Q_{<6})^{-1}\,S'(Q_{<5}+Q_5^+)\,S'(Q_{<5})^{-1}\,S'(F+Q_{<5})\,S'(Q_{<5}+Q_5^+)^{-1}\,S'(Q_{<6})\\
  &=\ldots\\
  &=\ldots S'(Q_{<2}+Q_2^+) S'(Q_{<2})^{-1}S'(Q_1^+)S'(F)S'(Q_1^+)^{-1}S'(Q_{<2})S'(Q_{<2}+Q_2^+)^{-1}S'(Q_{<3})\ldots,
\end{align*}
where we have also taken into account that
$$
S\bigl((\h\zeta)_{\h'}(0)\bigr)S'(Q_{<6})=S\bigl((\h\zeta)_{\h'}(\delta_{(\h')^{-1}}L)\bigr)=
\gamma_{\h}(S_{(M,L,\zeta)}(0))=1
$$
by the unitary AMWI for $F=0$. We end up with \eqref{eq:anomalousnoether}, where
\begin{equation}
    U=\prod_{i=0}^{4}S'(Q_{<6-i})^{-1}S'(Q_{<5-i}+Q_{5-i}^+)\ .%=\prod_{i=0}^{4}\tilde S(Q_{<6-i})^{-1}\tilde S(Q_{<5-i}+Q_{5-i}^+)
    %\ , \quad  \tilde S\doteq S\circ\zeta_{\h'}\ .
\end{equation}
In the formula for $U$ one may replace $S'$ by $S\circ(\h\zeta)_{\h'}$.

The second equation in \eqref{eq:hzeta'} is verified as follows
\begin{equation*}
\begin{alignedat}{2}
    (\h\zeta)'_g\quad&=\quad &&(\h\zeta)_{\h'}^{-1}(\h\zeta)_{\g}(\h\zeta)_{\h'}^{\g}\\
    &\overset{\mathclap{\eqref{eq:cocycle}}}{=}\quad\,&&(\h\zeta)_{(\h')^{-1}}^{\h'}(\h\zeta)_{\h'\g}\\
    &=\quad\, &&(\h')_L^{-1}(\h\zeta)_{(\h')^{-1}}(\h\zeta)_{\h'\g}^{(\h')^{-1}}\h'_L\\
    &\overset{\mathclap{\eqref{eq:cocycle}}}{=}\quad\, &&(\h')_L^{-1}(\h\zeta)_{\h'\g(\h')^{-1}}\h'_L\\
    &\overset{\mathclap{\eqref{eq:sym-acts-cocy}}}{=}\quad\, &&(\h')_L^{-1}\h_{\ast}\zeta_{\h^{-1}\h'\g(\h')^{-1}\h}\h_{\ast}^{-1}\h'_L\\
    &\overset{\mathclap{\eqref{eq:g_L}}}{=}\quad\, &&\delta_{(\h')^{-1}}L+
    ((\h')^{-1}\h\zeta)_{\g}(\bullet-\delta_{(\h')^{-1}}L)\\
    &=\quad\, &&((\h')^{-1}\h\zeta)_{\g}^{-\delta_{(\h')^{-1}}L}\ .
\end{alignedat}
\end{equation*}
%     &=\delta_{(\h')^{-1}}L+\zeta_{\g}(\bullet-\delta_{(\h')^{-1}}L)\\
%     &\overset{\eqref{eq:renint}}{=}\zeta_{\g}^{-\delta_{(\h')^{-1}}L}\ ;
%  \end{split}   
% \end{equation*}
To prove the last statement, let $\supp\g\subset\Oc$. We set $\kk\doteq(\h')^{-1}\h$ and use that $\supp\kk\cap\Oc=\0$. 
%and $\supp\h^{-1}_\ast\h'_\ast\cap\Oc=\0$. 
Then $\kk\g=\g\kk$ and
$\supp (\kk\zeta)_{\g}\subset\supp\g\subset\Oc$. %\cap\Oc=\0$. 
We split
\be
F=\kk_{\ast}F-\delta_\kk F
\ee
and use that $\supp \delta_\kk F\cap\Oc=\0$. 
Then
\begin{equation}
    \begin{split}
    (\kk\zeta)_{\g}^{-\delta_{(\h')^{-1}}L}(F)&=(\kk\zeta)_{\g}(\kk_{\ast}F-\delta_{(\h')^{-1}}L)-\delta_\kk F+\delta_{(\h')^{-1}}L\\
    &=(\kk\zeta)_{\g}\bigl(\kk_{\ast}F\bigr)-\delta_\kk F\\
    &=\kk_{\ast}\zeta_{\g}(F)-\delta_\kk F\\
    &=\kk_{\ast}(\zeta_{\g}(F)-F)+F\\
    &=\zeta_{\g}(F)-F+F\\
    &=\zeta_{\g}(F)\ ;
    \end{split}
\end{equation}
in the second step we have taken into account that $\supp\delta_{(\h')^{-1}}L\cap\Oc=\0$ (see the proof of Prop.~\ref{prop:h-zeta})
and in the second last step that $\supp(\zeta_{\g}(F)-F)\subset\supp\zeta_\g\subset\Oc$ \eqref{eq:supp(Z(F)-F)}. This finishes the proof.
\end{proof}
% We use additivity of $\zeta_{\h'}$ and find
% \begin{equation}
%     \zeta_{\h'}(F+Q_++Q_-)=\zeta_{\h'}(F)-\zeta_{\h'}(0)+\zeta_{\h'}(Q_++Q_-)
% \end{equation}
% By \eqref{eq:invar-suppZ} we have $\supp(\zeta_{\h'}(F)-\zeta_{\h'}(0))\subset\Oc$, $\supp (\zeta_{\h'}(Q_{+})-\zeta_{\h'}(0))=\supp Q_+$ 
% and $\supp(\zeta_{\h'}(Q_++Q_-)-\zeta_{\h'}(Q_+))=\supp Q_-$, hence by using twice causal factorization \todo{possible change of causal structure ignored (KF)} we get
% \todo{thanks for the details! (MD)}
% \begin{equation*}
% \begin{split}
% S\bigl(\zeta_{\h'}(F+Q_++Q_-)\bigr)
% &=S\bigl((\zeta_{\h'}(F)-\zeta_{\h'}(0))+\zeta_{\h'}(Q_+)+(\zeta_{\h'}(Q_++Q_-)-\zeta_{\h'}(Q_+))\bigr)\\
% &=S\bigl(\zeta_{\h'}(F)-\zeta_{\h'}(0)+\zeta_{\h'}(Q_+)\bigr)\,S\bigl(\zeta_{\h'}(Q_+)\bigr)^{-1}\,S\bigl(\zeta_{\h'}(Q_++Q_-)\bigr)\\
% &=S\bigl(\zeta_{\h'}(Q_+)\bigr)\,S\bigl(\zeta_{\h'}(0)\bigr)^{-1}\,S\bigl(\zeta_{\h'}(F)\bigr)\,S\bigl(\zeta_{\h'}(Q_+)\bigr)^{-1}\ ,
% %&=\mathrm{Ad}\bigl(S(\zeta_{\h'}(Q_+))\bigr)\bigl(\beta^{\mathrm{ret}}_{\zeta_{\h'}}(S(F))\bigr)\ .    
% \end{split}
% \end{equation*}
% where we used in the last line that $S(\zeta_{\h'}(Q_++Q_-))=S(0)=1$ by the unitary AMWI. Hence,
% \be\label{eq:gammahS(F)}
% \gamma_{\h}\bigl(S(F)\bigr)=\mathrm{Ad}\bigl(S(\zeta_{\h'}(Q_+))\bigr)\bigl(\beta^{\mathrm{ret}}_{\zeta_{\h'}}(S(F))\bigr)\ . 
% \ee

%Now $\HLz$ contains the subgroup

\begin{remark} A simple example of implementation of the previous theorem can be found in Subsection~\ref{subsect:comp-anomalies}.
\end{remark}
We now turn to the group of unbroken symmetries, $\HLz\doteq\{\h\in\HL|\h\zeta=\zeta\}$. Elements of this group induce automorphisms 
$\overline{\gamma}_\h$ of $\fA(M,L,\zeta)$, but may still have nontrivial anomaly.
A group which has locally trivial anomalies can be defined by 
\begin{equation}
\begin{split}
    \HLz^0\doteq\{\h\in\HL\,|\,&\text{for all relatively compact }\Oc\text{ there exists }\h'\in\G(\h,\Oc)\\
    &\ \text{with }\zeta_{\h'}(F)=F\text{ if }\supp F\subset\Oc\} \ .
\end{split}
\end{equation}
Actually, $\HLz^0$ is a subgroup of $\HLz$. Namely, let $\h\in\HLz^0$.  By using Prop.~\ref{prop:h-zeta} we obtain 
\be
\begin{split}
    (\h\zeta)_\g F=(\h'\zeta)_\g (F)&=(\zeta_{(\h')^{-1}})^{-1}\,\zeta_\g\,(\zeta_{(\h')^{-1}})^\g (F)\\
&=(\zeta_{(\h')^{-1}})^{-1}\,\zeta_\g(F)\\
&=\zeta_\g(F)
\end{split}
\ee
for a suitable choice of $\h'$; in the last step we have taken into account that
$\supp\zeta_\g(F)%\subset\bigl(\supp(\zeta_\g(F)- \zeta_\g(0))\cup\supp\zeta_\g(0)\bigr)
\subset(\supp F\cup\supp\g)$ by using \eqref{eq:suppZ(F)}.

As a straightforward application of Theorem~\ref{theorem:anomalousnoethertheorem} to what just discussed, we find
\begin{corollary}[Unitary Noether Theorem]\label{cor:unit:noether} For any $\h\in\HLz^0$ and a causally convex and relatively compact region $\Oc$ there exists a unitary $U\in\fA(M,L,\zeta)$ such that
 \begin{equation}\label{eq:unitarynoether}
    \overline{\gamma}_{\h}(S_{(M,L,\zeta)}(F))=\mathrm{Ad}(U)\bigl(S_{(M,L,\zeta)}(F)\bigr)
\end{equation}
for all $F$ with $\supp F\subset\Oc$.
\end{corollary}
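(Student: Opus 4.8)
The plan is to derive the Unitary Noether Theorem directly from the Anomalous Noether Theorem (Theorem \ref{theorem:anomalousnoethertheorem}), the point being that the local triviality of the anomaly that is built into the definition of $\HLz^0$ makes the twisting isomorphism $\overline{\beta}^{\mathrm{ret}}$ appearing there reduce to the identity on the generators supported in $\Oc$.

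First I would use that $\HLz^0\subseteq\HLz$ (verified in the text preceding the corollary), so that for $\h\in\HLz^0$ one has the literal equality of cocycles $\h\zeta=\zeta$. This gives $\fA(M,L,\h\zeta)=\fA(M,L,\zeta)$ and $(\h\zeta)_{\g}=\zeta_{\g}$ for every $\g\in\G_c(M)$, in particular $(\h\zeta)_{\h'}=\zeta_{\h'}$ for any $\h'\in\G_c(M)$. Then, for the given relatively compact causally convex $\Oc$, I would pick $\h'\in\G(\h,\Oc)$ as guaranteed by the definition of $\HLz^0$, so that $\zeta_{\h'}(F)=F$ whenever $\supp F\subset\Oc$; applying this with $F=0$ (whose support is empty) also yields $\zeta_{\h'}(0)=0$.

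With these data I would invoke Theorem \ref{theorem:anomalousnoethertheorem}, which produces a unitary $U\in\fA(M,L,\h\zeta)=\fA(M,L,\zeta)$ such that, for all $F$ with $\supp F\subset\Oc$,
\begin{equation*}
\overline{\gamma}_{\h}(S_{(M,L,\zeta)}(F))=\mathrm{Ad}(U)\bigl(\overline{\beta}^{\mathrm{ret}}_{(\h\zeta)_{\h'}}(S_{(M,L,(\h\zeta)')}(F))\bigr)\ .
\end{equation*}
Setting $Z\doteq(\h\zeta)_{\h'}=\zeta_{\h'}\in\Rc(M,L)$ and using Proposition \ref{prop:betaZ-cocycle} (which identifies $(\h\zeta)'$ with the source cocycle of $\overline{\beta}^{\mathrm{ret}}_Z$) together with the explicit action of $\beta^{\mathrm{ret}}_Z$ on generators from Proposition \ref{prop:automorphism-by-Z}, one computes
\begin{equation*}
\overline{\beta}^{\mathrm{ret}}_{Z}(S_{(M,L,(\h\zeta)')}(F))=S_{(M,L,\zeta)}(Z(0))^{-1}S_{(M,L,\zeta)}(Z(F))=S_{(M,L,\zeta)}(0)^{-1}S_{(M,L,\zeta)}(F)=S_{(M,L,\zeta)}(F)\ ,
\end{equation*}
where I used $Z(F)=F$, $Z(0)=0$ and $S_{(M,L,\zeta)}(0)=1$. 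Substituting this into the displayed identity yields precisely \eqref{eq:unitarynoether}.

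I do not expect a genuine obstacle beyond Theorem \ref{theorem:anomalousnoethertheorem} itself; the only thing needing care is the bookkeeping of cocycles and algebras — checking that $(\h\zeta)_{\h'}$ really is the renormalization-group element $Z$ whose $\overline{\beta}^{\mathrm{ret}}_Z$ occurs in the theorem, that $(\h\zeta)'$ is exactly its source cocycle and $\h\zeta$ its target (so that $\overline{\beta}^{\mathrm{ret}}_Z\colon\fA(M,L,(\h\zeta)')\to\fA(M,L,\zeta)$ is the legitimate isomorphism of Proposition \ref{prop:betaZ-cocycle}(ii)), and finally that the chosen $\h'\in\G(\h,\Oc)$ forces $Z=\zeta_{\h'}$ to fix all generators $S_{(M,L,\zeta)}(F)$ with $\supp F\subset\Oc$, which is where the defining property of $\HLz^0$ is used.
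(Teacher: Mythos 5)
Your proposal is correct and follows exactly the route the paper intends: the paper states the corollary as ``a straightforward application'' of the Anomalous Noether Theorem together with the preceding verification that $\HLz^0\subseteq\HLz$, and your write-up is precisely that application spelled out, with the correct bookkeeping showing that $(\h\zeta)_{\h'}=\zeta_{\h'}$ acts as the identity on the generators supported in $\Oc$ (and $\zeta_{\h'}(0)=0$), so that $\overline{\beta}^{\mathrm{ret}}_{(\h\zeta)_{\h'}}$ reduces to the identity there.
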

This entails that
% Let $\h\in\HLz^0$, $\Oc_1,\Oc_2\subset M$ relatively compact and causally convex with $\Oc_1\cup\tildeh(\Oc_1)\subset\Oc_2$ and $\h'\in\G(\h,\Oc_2)$ with $\zeta_{\h'}(F)=F$ for $F\in\Floc(M,L)$, $\supp F\subset\Oc_2$. The unitary AMWI then implies
% \begin{equation}
%     S(\delta_{\h'}L)=1\ ,\ S(\delta_{\h'}L+\h_{\ast}F)=S(F)
% \end{equation}
% where $\supp\delta_{h'}L\cap\Oc_2=\0$. For $F$ with $\supp F\subset\Oc_1$ we
% \todo{we need $\supp\delta_{h'}L\cap\tilde{\h'}(\Oc)=\0$?? (MD) corrected (KF)}
% split $\delta_{\h'}L=Q_++Q_-$ such that $\supp Q_+\cap\tilde{\h'}(J_-(\Oc_1))=\0$ and $\supp Q_-\cap J_+(\tilde{\h'}(\Oc_1))=\0$.
% Using twice the causality relation we obtain
% \begin{equation}
%     S(F)=S(\delta_{h'}L)S(Q_-)^{-1}S(\h_*F)S(Q_-)\ , 
% \end{equation}
% hence
% \begin{equation}
%     \gamma_{\h}(S(F))=S(Q_-)S(F)S(Q_-)^{-1}\ .
% \end{equation}
symmetries $\h\in\HLz^0$ are locally implemented by unitaries, and the particular case of the unitary MWI 
$$
S_{(M,L,\zeta)}(\delta_{(\h')^{-1}}L)=1\ ,\ \h'\in\G(\h,\Oc)
$$ 
is a unitary version of a conservation law. This constitutes a \emph{unitary version of the Noether theorem}. We see, however, that, dependent on the anomaly $\zeta$, the Noether theorem applies only to a subgroup of unbroken symmetries. 

\begin{example}
To give a concrete example for $\delta_{(\h')^{-1}}L$ in the case that $\zeta_{\h'}=\mathrm{id}$, we look at scalar QED 
(for details see \cite[Sect.~4.2]{DPR21}): for $L=L_0$ the free Lagrangian (including the mass term for the complex 
scalar field $\phi$) and for the affine field redefinition $\iota_{\h'}$ given by
$$
\phi(x)\mapsto\phi(x)\,e^{i\alpha(x)}\ ,\quad \phi^*(x)\mapsto\phi^*(x)\,e^{-i\alpha(x)}\ ,\quad A^\mu(x)\mapsto A^\mu(x),
$$
where $\alpha\in\Dc(M,\RR)$, one obtains
$$
\delta_{(\h')^{-1}}L_0=(\partial j)(\alpha)+(\phi^*\phi)\bigl((\partial\alpha)^2\bigr)\ ,
$$
where $j$ is the electromagnetic current of the free theory ({\it i.e.}, the Noether current pertaining to the invariance of $L_0$ 
under the global $U(1)$-transformation 
$\phi(x)\mapsto\phi(x)\,e^{ia}$ with $a\in\RR$).
\end{example}

\section{Renormalization group flow}\label{sec:RGF}
We have seen in the previous section that the presence of anomalies in form of a cocycle $\zeta$ might induce a nontrivial action of the symmetry group $\HL$ of the Lagrangian on the theory by the action of $\HL$ on $\zeta$. We also observed that, locally, the action of $\h\in\HL$ can be understood as an action of the renormalization group on the theory (Theorem \ref{theorem:anomalousnoethertheorem}) where the renormalization group element is given by $\zeta_{\h'}$, for a local approximation $\h'\in\G(\h,\Oc)$.
We now want to see whether also a global interpretation is possible.
For this purpose we ask in which sense our cocycles can be extended to not necessarily compactly supported symmetry transformations $\h\in\HL$.
A direct extension of $\zeta$, however, might not map compactly supported functionals to compactly supported functionals. 
%Instead we consider them as bijections of the space of equivalence classes of interactions.
% We define
% \begin{equation}
%     \zeta_h(V)(f)[\phi]=\zeta_{\h'}(V(g)-V(g)[0])[f\phi]\ , \supp f\subset\Oc,\h'\in\G(\h,\Oc), g\equiv 1 \text{ on }\supp \h'
% \end{equation}
% We check that the right hand side does not depend on $\h'$ and $g$. Let Let $\h''\in\G(\h,\Oc)$. Then $\h''=\jmath\h'$ with $\supp \jmath\cap\tildeh(\Oc)=\0$. 
% Then by the cocycle relation
% \begin{equation}
%   \zeta^V_{\h''}= \zeta^V_{\h'}(\zeta^V_{\jmath})^{\h'}\ .
% \end{equation}
% We have
% \begin{equation}
%     \begin{split}
%      (\zeta^V_{\jmath})^{\h'}(0)&=(\h')^{-1}_{L+V}\zeta^V_{\jmath}(\delta_{\h'}(L+V))\\
%                         &=(\h')^{-1}_{\ast}\zeta^V_{\jmath}\delta_{\h'}(L+V)+\delta_{(\h')^{-1}}(L+V)\\
%                         &=(\h')^{-1}_{\ast}(\zeta^V_{\jmath}\delta_{\h'}(L+V)-\delta_{\h'}(L+V))\ .
%     \end{split}
% \end{equation}
% with values in renormalization group transformations $Z\in\Rc(M,L)_{nc}$ where the requirement of compact support is given up.

We therefore define for $F$ with $\supp F\subset\Oc$, $\h'\in\G(\h,\Oc)$ %$\supp\h'\subset\Oc'$\todo{as far as I see $\Oc'$ is never used (MD)}
\begin{equation}\label{eq:theta-h}
 \theta_{\h}(F)\doteq\zeta_{\h'}(F)-\zeta_{\h'}(0)\ . 
 \end{equation}
% \begin{equation}
%     \theta_{\h}(F)\doteq\zeta_{\h'}(F+\zeta_{\h'}^{-1}(\delta_{\h''}L))-\delta_{\h''}L\ .
% \end{equation}
$\theta$ is well defined in view of the following proposition:
\begin{proposition}\label{prop:indip1}
$\zeta_{\h'}(F)-\zeta_{\h'}(0)$ does not depend on the choice of $\h'$.
\end{proposition}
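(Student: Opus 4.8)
The plan is to reduce the claim to the cocycle relation for $\zeta$ together with the support lemmas already at our disposal. Fix two admissible choices $\h',\h''\in\G(\h,\Oc)$ and set $\kk\doteq\h'\circ(\h'')^{-1}\in\G_c(M)$, so that $\h'=\kk\,\h''$. The first step is to localise $\kk$ away from $\tildeh(\Oc)$: since $\h'\in\G(\h,\Oc)$ is equivalent to $(\h')^{-1}\in\G(\h^{-1},\tildeh(\Oc))$ (and likewise for $\h''$), a short computation with $(\g_1\g_2)_\ast=\g_{1\ast}\g_{2\ast}$ gives $\kk_\ast G=G$ for every $G\in\Floc(M,L)$ with $\supp G\subset\tildeh(\Oc)$. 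By the characterisation of the support of an element of $\G_c(M)$ this forces $\supp\kk\cap\tildeh(\Oc)=\emptyset$, hence also $\supp\zeta_\kk\subset\supp\kk$ is disjoint from $\tildeh(\Oc)$; moreover, exactly as in the proof of Proposition~\ref{prop:h-zeta} (here using $\h\in\HL$), $\supp\delta_{\h''}L\cap\tildeh(\Oc)=\emptyset$.

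Next I would apply the cocycle relation \eqref{eq:cocycle} in the form $\zeta_{\h'}=\zeta_{\kk\h''}=\zeta_{\h''}\circ(\zeta_\kk)^{\h''}=\zeta_{\h''}\circ(\h''_L)^{-1}\circ\zeta_\kk\circ\h''_L$ and evaluate it on $F$ with $\supp F\subset\Oc$. Then $\h''_LF=\delta_{\h''}L+\h''_\ast F=\delta_{\h''}L+\h_\ast F$, and since $\supp(\h_\ast F)\subset\tildeh(\Oc)$ is disjoint from $\supp\zeta_\kk$, the defining property \eqref{eq:suppZ} of the support of $\zeta_\kk\in\Rc(M,L)$ gives $\zeta_\kk\bigl(\delta_{\h''}L+\h_\ast F\bigr)=\h_\ast F+\zeta_\kk(\delta_{\h''}L)$. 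Using $(\h''_L)^{-1}=((\h'')^{-1})_L$ — so this is an affine map with linear part $((\h'')^{-1})_\ast$ — together with $((\h'')^{-1})_\ast\h_\ast F=F$ (valid for $\supp F\subset\Oc$ because $\h''_\ast F=\h_\ast F$ there), one arrives at
\[
\zeta_{\h'}(F)=\zeta_{\h''}(F+W)\,,\qquad W\doteq(\h''_L)^{-1}\bigl(\zeta_\kk(\delta_{\h''}L)\bigr)\,,
\]
where crucially $W$ does not depend on $F$.

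It remains to see that $\supp W\cap\Oc=\emptyset$. By \eqref{eq:suppZ(F)} the support of $\zeta_\kk(\delta_{\h''}L)$ lies in $\supp\delta_{\h''}L\cup(\supp\zeta_\kk\cap\supp V)$, which is disjoint from $\tildeh(\Oc)$; and the diffeomorphism part of $(\h'')^{-1}$ carries $M\setminus\tildeh(\Oc)$ onto $M\setminus\Oc$, since it coincides with $\tildeh^{-1}$ on $\tildeh(\Oc)$. Hence, by \eqref{eq:tilde-g} applied to the affine map $(\h''_L)^{-1}$, $\supp W\subset M\setminus\Oc$, so $\supp F\cap\supp W=\emptyset$. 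Now Locality (Definition~\ref{def1}$(ii)$) for $\zeta_{\h''}$, applied to the disjointly supported pair $F,W$, gives $\zeta_{\h''}(F+W)=\zeta_{\h''}(F)-\zeta_{\h''}(0)+\zeta_{\h''}(W)$; since also $\zeta_{\h'}(0)=\zeta_{\h''}(W)$, subtracting yields $\zeta_{\h'}(F)-\zeta_{\h'}(0)=\zeta_{\h''}(F)-\zeta_{\h''}(0)$, which is the assertion.

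The main obstacle is the bookkeeping in the middle step: fixing the composition order in the cocycle relation so that $\zeta_{\h''}$ ends up outermost, and keeping straight which region — $\Oc$ or its image $\tildeh(\Oc)$ — each of the supports $\supp\kk$, $\supp\zeta_\kk$, $\supp\delta_{\h''}L$, $\supp W$ avoids. A secondary technical point, which I would verify but not dwell on, is that every intermediate sum of functionals above genuinely belongs to the relevant $\Floc(M,L+A_{\cdots})$ (this follows from the disjointness of supports and \eqref{eq:Floc-add}), so that \eqref{eq:suppZ} and Locality apply as stated.
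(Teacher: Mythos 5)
Your proof is correct and follows essentially the same route as the paper's: both factor one local approximant through the other by an element of $\G_c(M)$ supported away from $\tildeh(\Oc)$ (you write $\h'=\kk\h''$ where the paper writes $\h''=\jmath\h'$), apply the cocycle relation, and then combine the support estimates for $\zeta_\kk$, $\delta_{\h''}L$ with Locality of the outer cocycle to identify the residual, $F$-independent term with $\zeta_{\h'}(0)$. The remaining differences are only bookkeeping.
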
  
\begin{proof}
Let $\h''\in\G(\h,\Oc)$. Then $\h''=\jmath\h'$ with $\jmath\in\G_c(M)$ and $\supp \jmath\cap\tildeh(\Oc)=\0$. 
The cocycle relation yields
 \begin{equation}
   \zeta_{\h''}= \zeta_{\h'}\zeta_{\jmath}^{\h'}\ .
\end{equation}
We have
\begin{equation}
    \begin{split}
     \zeta_{\jmath}^{\h'}(F)&=(\h')^{-1}_L\zeta_{\jmath}(\h'_{\ast}F+\delta_{\h'}L)\\
                        &=(\h')^{-1}_{\ast}\zeta_{\jmath}\delta_{\h'}L+F+\delta_{(\h')^{-1}}L\\
                        &=(\h')^{-1}_{\ast}(\zeta_{\jmath}\delta_{\h'}L-\delta_{\h'}L)+F\ .
    \end{split}
\end{equation}
Since $\supp(\zeta_{\jmath}\delta_{\h'}L-\delta_{\h'}L)\subset\supp\jmath$ (by \eqref{eq:supp(Z(F)-F)})
we get from Locality of $\zeta_{\h'}$ the relation
\begin{equation}
    \zeta_{\h''}(F)=\zeta_{\h'}(\h')^{-1}_{\ast}(\zeta_{\jmath}\delta_{\h'}L-\delta_{\h'}L)-\zeta_{\h'}(0)+\zeta_{\h'}(F)\ .
\end{equation}
But the first term on the right hand side is equal to $\zeta_{\h''}(0)$. This yields the claim.
\end{proof}

% \begin{equation}
%     \delta_{\h_1}L=\delta_{\j}(\h''_{\ast}L)+\delta_{\h''}L
% \end{equation}
% where $\supp\delta_{\j}(\h''_{\ast}L)\cap\supp\zeta_{\h'}^{-1}=\0$.
% Hence
% \begin{equation}
%     \zeta_{\h'}^{-1}(\delta_{\h_1}L)=\zeta_{\h'}^{-1}(\delta_{\h''}L)+\delta_{\j}(\h''_{\ast}L)
% \end{equation}
%and

% \begin{equation}
%     \zeta_{\h'}(F+\zeta_{\h'}^{-1}(\delta_{\h_1}L))-\delta_{\h_1}L=\zeta_{\h'}(F+\zeta_{\h'}^{-1}(\delta_{\h''}L))-\delta_{\h''}L\ .
% \end{equation}

% Let now $\h_2\in\G(\h,\Oc)$ with $\supp\h_2\subset\Oc'$.
% Then $\h_2=\j\h'$ with $\supp \j\cap\supp F=\0$ and by the cocycle relation
% \begin{equation}
% \begin{split}
%     \zeta_{\h_2}(F+\zeta_{\h_2}^{-1}(\delta_{\h''}L))
%     &=\zeta_{\h'}\zeta_{\j}^{\h'}(F+\zeta_{\h_2}^{-1}(\delta_{\h''}L)\\
%     &=\zeta_{\h'}(F+\zeta_{\j}^{\h'}\zeta_{\h_2}^{-1}(\delta_{\h''}L)\\
%     &=\zeta_{\h'}(F+\zeta_{\h'}^{-1}(\delta_{\h''}L))
%     \end{split}
% \end{equation}
% where in the last equation again the cocycle relation was used.

%let $\h',\h''\in\G(\h,\Oc)$. Then $\h''=\j'\h'$ where $\supp \j'\cap\Oc=\0$.
%Hence by the cocycle relation for $\zeta$,   we get for $\supp F\subset\Oc$\todo{not correct (KF)}
%\begin{equation}
%\zeta_{\h''}(F)=\zeta_{\h'}\zeta_{\j'}^{\h'}(F)=\zeta_{\h'}(F)+\zeta_{\j'}^{\h'}(0)
%\end{equation}

% hence
% \begin{equation}
%     \zeta_{\h'}(F)-\zeta_{\h'}(0)=\zeta_{\h''}(F)-\zeta_{\h''}(0)\ .
% \end{equation}
%\todo{cocycle? (KF) Argument not ready }

In the next step we show that the family $\zeta_{\h'}(0),\,\h'\in\G(\h,\Oc),\Oc\subset M\ $ relatively compact, 
defines a generalized field $\delta_{\zeta,\h}L$. Namely, given $f\in\Dc(M)$, we choose $\Oc$ such that $\supp f\subset\Oc$ 
and set
\begin{equation}\label{eq:d-zeta-h-L}
   \delta_{\zeta,\h}L(f)[\phi]\doteq\zeta_{\h'}(0)[f\phi]-\zeta_{\h'}(0)[0]\ ,\  %\supp f\subset\Oc\ ,\ 
   \h'\in\G(\h,\Oc)\ . 
\end{equation}

Analogously to \eqref{eq:AF} the defining properties of a generalized field (Def.~\ref{eq:def-F-A(f)}) are satisfied.
Moreover, we prove that it is also well defined, namely
\begin{proposition}\label{prop:indip2}
$\zeta_{\h'}(0)[f\phi]-\zeta_{\h'}(0)[0]$ does not depend on the choice of $\h'$.
\end{proposition}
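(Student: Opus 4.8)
The plan is to reduce the claim to a single support statement and then to reuse, almost verbatim, the computation already carried out in the proof of Proposition~\ref{prop:indip1}. Fix $\Oc\subset M$ relatively compact with $\supp f\subset\Oc$ and let $\h',\h''\in\G(\h,\Oc)$. For any fixed $\chi\in\Ec(M,\RR^n)$ the evaluation $G\mapsto G[\chi]$ is linear in the local functional $G$, so the asserted identity $\zeta_{\h''}(0)[f\phi]-\zeta_{\h''}(0)[0]=\zeta_{\h'}(0)[f\phi]-\zeta_{\h'}(0)[0]$ is equivalent to $G[f\phi]=G[0]$ for every $\phi\in\Ec(M,\RR^n)$, where $G\doteq\zeta_{\h''}(0)-\zeta_{\h'}(0)$. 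Since $\supp(f\phi)\subset\supp f\subset\Oc$, the defining property of the support of a functional (namely, that it is the smallest closed set $N$ such that $\supp\eta\cap N=\emptyset$ implies $G[\chi+\eta]=G[\chi]$ for all $\chi,\eta\in\Ec(M,\RR^n)$) shows that it suffices to prove $\supp G\cap\Oc=\emptyset$.

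To establish this, I would first extract from the proof of Proposition~\ref{prop:indip1} the intermediate identity obtained there: writing $\h''=\jmath\h'$ with $\jmath\in\G_c(M)$ and $\supp\jmath\cap\tildeh(\Oc)=\emptyset$, the cocycle relation together with Locality of $\zeta_{\h'}$ and of $\zeta_{\jmath}$ gives
\begin{equation*}
\zeta_{\h''}(0)=\zeta_{\h'}(H)\ ,\qquad H\doteq(\h')^{-1}_{\ast}\bigl(\zeta_{\jmath}(\delta_{\h'}L)-\delta_{\h'}L\bigr)\ ,
\end{equation*}
and $\supp H\subset\widetilde{(\h')^{-1}}(\supp\jmath)$. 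This last set is disjoint from $\Oc$: since $(\h')^{-1}\in\G(\h^{-1},\tildeh(\Oc))$, the diffeomorphism part of $(\h')^{-1}$ coincides with that of $\h^{-1}$ on $\tildeh(\Oc)$, whence $\widetilde{(\h')^{-1}}(\tildeh(\Oc))=\Oc$, while $\supp\jmath\cap\tildeh(\Oc)=\emptyset$ by construction. Thus $\zeta_{\h''}(0)=\zeta_{\h'}(H)$ with $\supp H\cap\Oc=\emptyset$.

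The last step, which is the only part genuinely beyond Proposition~\ref{prop:indip1}, is to transfer this support information from $H$ to $\zeta_{\h'}(H)$. Here I would invoke Proposition~\ref{prop:inv-suppZ}, \ie equation~\eqref{eq:invar-suppZ}, applied to the renormalization group element $\zeta_{\h'}\in\Rc(M,L)$ with the choices $G=0$ and $F=H$, which yields $\supp\bigl(\zeta_{\h'}(H)-\zeta_{\h'}(0)\bigr)=\supp H$. Hence $G=\zeta_{\h''}(0)-\zeta_{\h'}(0)=\zeta_{\h'}(H)-\zeta_{\h'}(0)$ has support exactly $\supp H$, which is disjoint from $\Oc$, and the claim follows. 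I do not anticipate a genuine obstacle: the argument is a pure bookkeeping of supports under the diffeomorphism parts of $\h'$, $\h''$ and $\jmath$, and both substantive inputs, namely the identity $\zeta_{\h''}(0)=\zeta_{\h'}(H)$ and the support-preservation property \eqref{eq:invar-suppZ} of $\Rc(M,L)$, are already at our disposal.
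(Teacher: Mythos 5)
Your proof is correct and follows essentially the same route as the paper's: both start from the identity $\zeta_{\h''}(0)=\zeta_{\h'}(H)$ with $H=(\h')^{-1}_{\ast}(\zeta_{\jmath}\delta_{\h'}L-\delta_{\h'}L)$ extracted from the proof of Proposition~\ref{prop:indip1}, use \eqref{eq:invar-suppZ} to conclude $\supp\bigl(\zeta_{\h'}(H)-\zeta_{\h'}(0)\bigr)=\supp H\subset(\tilde{\h'})^{-1}(\supp\jmath)$, and then note that this set is disjoint from $\Oc$ so the difference evaluates trivially on configurations $f\phi$. The only difference is that you spell out the disjointness of $(\tilde{\h'})^{-1}(\supp\jmath)$ from $\Oc$, which the paper leaves implicit.
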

\begin{proof}
Choosing $\h''$ as in the previous proposition, we have
\begin{equation}
    \zeta_{\h''}(0)=\bigl(\zeta_{\h'}(\h')^{-1}_{\ast}(\zeta_{\jmath}\delta_{\h'}L-\delta_{\h'}L)-\zeta_{\h'}(0)\bigr)+\zeta_{\h'}(0)\ .
\end{equation}
The support of the first term on the right hand side is equal to 
$\supp(\h')^{-1}_{\ast}(\zeta_{\jmath}\delta_{\h'}L-\delta_{\h'}L)\subset(\mathpzc{\tilde{h'}})^{-1}(\supp\jmath)$, 
hence its evaluation on configurations $f\phi$ with $\supp f\subset\Oc$ is independent of $\phi$,  and we arrive at
\begin{equation}
   \zeta_{\h''}(0)[f\phi]-\zeta_{\h''}(0)[0]=\zeta_{\h'}(0)[f\phi]-\zeta_{\h'}(0)[0]\ . 
\end{equation}
\end{proof}
% $\HL$ acts on $\Rc(M,L)_{nc}$ by
% \begin{equation}
%     (\h,Z)\mapsto \h_{\ast}^{-1}Z\h_{\ast}
% \end{equation}
% We show that $\theta$ is a cocycle over $\HL$ with respect to this action. We first restrict ourselves to the case of quadratic Lagrangians. In this case $\zeta_{\g'}(0)$
% is a constant functional, and we get with
% \begin{equation}
%     \theta_{\g\h}(F)=\zeta_{\g'\h'}(F)-\zeta_{\g'\h'}(0)=\zeta_{\h'}(\zeta_{\g'}^{\h'}(F))-\zeta_{\h'}(0)-\zeta_{\g'}^{\h'}(0)=\zeta_{\h}(\zeta_{\g}^{\h'}(F))
% \end{equation}
% almost the wanted relation. We need in addition an extension of the action of $\G$ on the renormalization group to transformations $\g\in\G(M)$. This extension is given by
% \begin{equation}
%     Z^{\g}(F)\doteq Z^{\g'}(F)\ ,\ \g'\in\G'(\g,\Oc)
% \end{equation}
% for $\supp F\subset\Oc$.\todo{argument applies only to compactly supported $Z$ (KF)}

% The general case now follows from the fact that with $\zeta$ also $\zeta^V$ is a cocycle (Prop.~\ref{prop:ZW-cocycle}).

We now can construct the flow of theories under the action of the symmetry group in terms of the renomalization group. By the Anomalous Noether Theorem, Theorem~\ref{theorem:anomalousnoethertheorem}, we saw that, up to an inner automorphism, the action of $\overline{\gamma}_{\h}$ can locally be replaced by an isomorphism induced by a renormalization group transformation $\overline{\beta}^{\mathrm{ret}}_Z$.

We recall that a renormalization group element $Z\in\Rc(M,L)$ induces an isomorphism $\beta^{\mathrm{ret}}_Z$ and, from Remark \ref{rem:beta}
(see also \eqref{eq:eps-Z}), that $\beta^{\mathrm{ret}}_Z$ can be interpreted as the composition of two actions: one
adding an interaction $Z(0)$ and the other transforming the 
local functionals by $F\mapsto Z(F)-Z(0)$, \ie
\be\label{eq:betaZ-alpha}
\beta^{\mathrm{ret}}_Z(S_{(M,L)}(F))=\alpha_{A_{Z(0)},+}\bigl(S_{(M,L+A_{Z(0)})}(Z(F)-Z(0))\bigr)\ .
\ee
Passing to the quotient algebras depending on a cocycle $\zeta$, notice that for both previous actions (extended to the quotients, with $Z\doteq\zeta_{\h'}+c$, where $c\doteq -\zeta_{\h'}(0)[0]$  is a constant%
\footnote{Here we use that for any $Z\in\Rc(M,L)$ and any $c\in\RR$ it holds that also $Z+c\in\Rc(M,L)$ and that 
$\beta^{\mathrm{ret}}_Z=\beta^{\mathrm{ret}}_{(Z+c)}$. The latter relies on causal factorization: 
$S_{(M,L)}((Z+c)(F))=S_{(M,L)}(Z(F))\,e^{ic}$.})
we give meaningful expressions,  which depend only on $\h$, but neither on the choice of $\h'\in\G(\h,\Oc)$ nor of $\Oc$ (by Propositions \ref{prop:indip1} and \ref{prop:indip2}). 
Hence we \emph{interpret the $\h$-transformed theory as a theory with an additional interaction $A_{Z(0)}=\delta_{\h,\zeta}L$ and a 
field transformation $F\mapsto Z(F)-Z(0)=\theta_{\h}(F)$}. 
This corresponds nicely to the standard description of anomalies 
(as \eg the scaling anomaly) by running coupling constants (\ie addition of terms to the Lagrangian) and renormalizations  of composite fields (\ie field transformation).
We formulate our findings in the following theorem:
\begin{theorem}%[Flow of the Renormalization Group]
\label{thm:RG-flow}
Symmetries $\h\in\HL$ of a Lagrangian $L$ induce, in the presence of a nontrivial anomaly $\zeta$, a flow of the associated quantum field theory $\fA_{(M,L,\zeta)}$ which can be described in two equivalent ways: either as an action of $\h^{-1}$ on the anomaly leading to the net $\fA_{(M,L,h^{-1}\zeta)}$  $($see \eqref{eq:flow:anomaly}$)$ or as a change $L\mapsto\tilde{L}\doteq L+\delta_{\h\zeta}L$ of the Lagrangian, followed by a transformation $\theta_{\h}$ of the fields. The equivalence follows from the fact that
%for every $\h\in\HL$ 
the map
\begin{equation}\label{eq:flow}
S_{(M,L,\h^{-1}\zeta)}(F)\longmapsto S_{(M,\tilde{L},\tilde\zeta)}(\theta_{\h} F)
\end{equation}
where $\tilde\zeta\doteq\zeta^{\delta_{\h,\zeta}L}$, 
% where $\tilde L\doteq L+\delta_{\h,\zeta}L$ is the changed Lagrangian and $\tilde\zeta\doteq\zeta^{\delta_{\h,\zeta}L}$ 
% denotes the corresponding anomaly, 
induces a net isomorphism $\fA_{(M,L,\h^{-1}\zeta)}\to\fA_{(M,\tilde L,\tilde \zeta)}$.
\end{theorem}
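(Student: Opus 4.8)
The plan is to establish the net isomorphism \emph{locally}, on each relatively compact causally convex region $\Oc\subset M$, by recognising the restriction of \eqref{eq:flow} as the net isomorphism $\epsilon_Z$ of Proposition~\ref{prop:betaZ-cocycle}(iii) for a suitable $Z\in\Rc(M,L)$ built from a local approximation $\h'\in\G(\h,\Oc)$ of $\h$, and then to glue these pieces with the help of the independence statements Propositions~\ref{prop:indip1} and~\ref{prop:indip2}. Concretely, I would fix $\Oc$ and $\h'\in\G(\h,\Oc)$, put $c\doteq-\zeta_{\h'}(0)[0]$ and $Z\doteq\zeta_{\h'}+c\in\Rc(M,L)$ (using that $\Rc(M,L)$ is stable under adding a real constant and that $\beta^{\mathrm{ret}}_Z=\beta^{\mathrm{ret}}_{\zeta_{\h'}}$, as recorded in the footnote before the theorem). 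Then $Z(0)$ is a local functional with $Z(0)[0]=0$; for functionals $F$ with $\supp F\subset\Oc$ one has $Z(F)-Z(0)=\zeta_{\h'}(F)-\zeta_{\h'}(0)=\theta_{\h}(F)$, and $A_{Z(0)}=\delta_{\zeta,\h}L$ holds on $\Oc$ (as generalized fields evaluated on test functions supported in $\Oc$), directly from \eqref{eq:theta-h} and \eqref{eq:d-zeta-h-L}. In particular $L+A_{Z(0)}$ agrees with $\tilde L$ on $\Oc$, and hence $\zeta^{A_{Z(0)}}$ agrees with $\tilde\zeta=\zeta^{\delta_{\zeta,\h}L}$ on symmetries supported in $\Oc$, since by Proposition~\ref{prop:ZW}(i) the element $Z^W$ depends on $W$ only near the supports of the relevant functionals.

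The heart of the argument is the local identification of $\h^{-1}\zeta$ with the cocycle $\zeta'$ attached to $Z$ in Proposition~\ref{prop:betaZ-cocycle}(i), i.e. the identity $(\h^{-1}\zeta)_{\g}(F)=Z^{-1}\zeta_{\g}Z^{\g}(F)$ whenever $\supp\g,\supp F\subset\Oc$. To obtain it I would apply Proposition~\ref{prop:h-zeta} to the symmetry $\h^{-1}$, the region $\tildeh(\Oc)$, and the local approximation $(\h')^{-1}\in\G(\h^{-1},\tildeh(\Oc))$ of $\h^{-1}$: its two formulas give $(\h^{-1}\zeta)_{\g}(F)=((\h')^{-1}\zeta)_{\g}(F)$ and $((\h')^{-1}\zeta)_{\g}=W^{-1}\zeta_{\g}W^{\g}$ with $W=\zeta_{((\h')^{-1})^{-1}}=\zeta_{\h'}$, valid for $\g,F$ supported in $(\tildeh)^{-1}(\tildeh(\Oc))=\Oc$ (using that $\g\mapsto\tildeg$ is a group homomorphism, so the diffeomorphism part of $\h^{-1}$ is $(\tildeh)^{-1}$); and $W^{-1}\zeta_{\g}W^{\g}=Z^{-1}\zeta_{\g}Z^{\g}$ because $Z$ and $W=\zeta_{\h'}$ differ only by the constant $c$. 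With all the identifications of the previous paragraph in force, Proposition~\ref{prop:betaZ-cocycle}(iii) then says that $\epsilon_Z$ is a net isomorphism whose restriction to generators $S_{(M,L,\h^{-1}\zeta)}(F)$ with $\supp F\subset\Oc$ reads $\epsilon_Z(S_{(M,L,\h^{-1}\zeta)}(F))=S_{(M,\tilde L,\tilde\zeta)}(\theta_{\h}F)$; since $\supp\theta_{\h}F=\supp F$ by Proposition~\ref{prop:inv-suppZ}, this is exactly the restriction of \eqref{eq:flow} to $\fA_{(M,L,\h^{-1}\zeta)}(\Oc)$, and it maps this local algebra bijectively onto $\fA_{(M,\tilde L,\tilde\zeta)}(\Oc)$.

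It remains to glue. By Propositions~\ref{prop:indip1} and~\ref{prop:indip2} the right-hand side of \eqref{eq:flow} — and therewith $\tilde L$ and $\tilde\zeta$ — is intrinsic: it depends only on $\h$ and $F$, not on $\Oc$ or $\h'$. Consequently the local isomorphisms $\epsilon_{\zeta_{\h'}+c}$ obtained for varying $\Oc$ agree on overlaps and patch together to a single map, which on generators is precisely \eqref{eq:flow}. Since every generator $S(F)$ of $\fA(M,L,\h^{-1}\zeta)$ lies in $\fA_{(M,L,\h^{-1}\zeta)}(\Oc)$ for any $\Oc\supset\supp F$, this map is defined on all generators, respects the defining relations (being, locally, the homomorphism $\epsilon_Z$), is bijective (each $\epsilon_Z$ is, an inverse being given locally by the analogous construction), and preserves the assignment of local subalgebras; hence it is a net isomorphism $\fA_{(M,L,\h^{-1}\zeta)}\to\fA_{(M,\tilde L,\tilde\zeta)}$. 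I expect the main obstacle to be the bookkeeping of the second paragraph: matching $\h^{-1}\zeta$ restricted to $\Oc$ with the cocycle $\zeta'$ built from $Z=\zeta_{\h'}+c$ requires composing Proposition~\ref{prop:h-zeta} with the substitutions $\h\leftrightarrow\h^{-1}$, $\Oc\leftrightarrow\tildeh(\Oc)$, $\h'\leftrightarrow(\h')^{-1}$, and one must carry along the additive constant $c$ as well as the fact that $\delta_{\zeta,\h}L$ is in general \emph{not} compactly supported, so that the local algebras $\fA_{(M,L+A_{Z(0)},\zeta^{A_{Z(0)}})}(\Oc)$ furnished by Proposition~\ref{prop:betaZ-cocycle}(iii) must be reidentified with those of the genuinely different global theory $\fA_{(M,\tilde L,\tilde\zeta)}$; by contrast, once the local picture is pinned down, the gluing is exactly what the constructions of $\theta_{\h}$ and $\delta_{\zeta,\h}L$ (together with Propositions~\ref{prop:indip1}–\ref{prop:indip2}) were set up to deliver.
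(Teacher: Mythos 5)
Your local analysis is correct and is in fact exactly the observation the authors record in the remark preceding the proof: with $Z=\zeta_{\h'}-\zeta_{\h'}(0)[0]$ one has, on functionals and symmetries supported in $\Oc$, the identity $(\h^{-1}\zeta)_\g=Z^{-1}\zeta_\g Z^{\g}$, so that the restriction of \eqref{eq:flow} looks like the net isomorphism $\epsilon_Z$ of Proposition~\ref{prop:betaZ-cocycle}$(iii)$. But the paper explicitly warns that ``since $\h$ has non-compact support, it does not suffice to refer to this result,'' and your proposal founders on precisely the point you defer to the end as ``the main obstacle.'' The target of $\epsilon_{Z_\Oc}$ is $\fA(M,L+A_{Z_\Oc(0)},\zeta^{A_{Z_\Oc(0)}})$, a theory whose Lagrangian differs from $L$ by a \emph{compactly supported} interaction depending on $\Oc$, whereas the theorem asserts an isomorphism onto $\fA(M,\tilde L,\tilde\zeta)$ with $\tilde L=L+\delta_{\zeta,\h}L$ a globally modified Lagrangian. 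These are genuinely different algebras, and the paper provides no identification of the local subalgebra $\fA_{(M,L+A_{Z_\Oc(0)},\,\cdot\,)}(\Oc)$ with $\fA_{(M,\tilde L,\tilde\zeta)}(\Oc)$ acting as the identity on generators: the only comparison available between theories whose Lagrangians agree near $\Oc$ but differ far away is \eqref{eq:interactionchange}, which maps the local algebras \emph{onto} each other via $\alpha_{V_+,V_-}$ but deforms the generators. Without such an identification the statement that the maps $\epsilon_{Z_\Oc}$ ``agree on overlaps'' is not even well posed, and knowing that $\epsilon_{Z_\Oc}$ sends a defining relation of the source to a true statement in the auxiliary algebra $\fA(M,L+A_{Z_\Oc(0)},\,\cdot\,)$ does not yield the needed conclusion that its image is an instance of a defining relation of $\fA(M,\tilde L,\tilde\zeta)$. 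Propositions~\ref{prop:indip1} and~\ref{prop:indip2} only make $\theta_{\h}$ and $\delta_{\zeta,\h}L$ well defined; they do not supply this algebraic gluing.

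The paper's proof avoids the issue by working globally from the start: it defines the map on all generators by \eqref{eq:flow} and verifies directly that each defining relation of $\fA(M,L,\h^{-1}\zeta)$ is carried to the corresponding relation of $\fA(M,\tilde L,\tilde\zeta)$. The two computations that make this work are the intertwining identities \eqref{eq:theta1}, $\theta_{\h}(F^{\psi}+\delta L(\psi))=\theta_{\h}(F)^{\psi}+\delta\tilde L(\psi)$ (which handles the Dynamical Relation and, together with support preservation and Causal Stability, the Causality Relation), and \eqref{eq:theta2}, $\theta_{\h}\circ(\h^{-1}\zeta)_{\g}=\tilde\zeta_{\g}\circ\theta_{\h}^{\g}$ (which handles the unitary AMWI); bijectivity then follows from the invertibility of $\zeta_{\h'}$. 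Note that \eqref{eq:theta1} is exactly where the non-compactly supported generalized field $\delta_{\zeta,\h}L$ enters through $\zeta_{\h'}(0)^{\psi}+\delta L(\psi)-\zeta_{\h'}(0)=\delta\tilde L(\psi)$, i.e.\ the direct proof absorbs the passage from $A_{Z_\Oc(0)}$ to $\tilde L$ into a checkable identity rather than into an unproven identification of algebras. To repair your argument you would either have to establish that identification of local subalgebras for Lagrangians agreeing near $\Oc$, or simply convert your local observations into the global verification of the three axioms, which is what the paper does.
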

\begin{remark}
Note that if we restrict ourselves to scaling transformations, then Theorem \ref{thm:RG-flow} corresponds to the Theorem 6.2 of \cite{BDF09} (algebraic Callan-Symanzik equation). In the same paper also the relations to other renormalization group flow equations (e.g. Polchinski) are discussed.
\end{remark}

\begin{remark} 
The map \eqref{eq:flow} corresponds to the net isomorphism $\epsilon_Z$ (Prop.~\ref{prop:betaZ-cocycle}$(iii)$) with
$Z=\zeta_{\h'}-\zeta_{\h'}(0)[0]$, since (by Prop.~\ref{prop:h-zeta}) locally it holds that
$(\h^{-1}\zeta)_\g=(\h'^{-1}\zeta)_\g=Z^{-1}\zeta_\g Z^\g$. However, since $\h$ has non-compact support, it does not suffice to refer to this result,
%we cannot only refer to these results, 
instead we give a direct proof.
\end{remark}
\begin{proof}
Since $\theta_{\h}$ preserves the support due to condition $(ii)$ for elements of the renormalization group, the map preserves the local subalgebras. We check whether it also maps the axioms for $S\doteq S_{(M,L,\h^{-1}\zeta)}$ into the corresponding axioms for $\tilde S\doteq S_{(M,\tilde L,\tilde\zeta)}$ and vice versa.

We use the fact that $\theta_{\h}$ interpolates between the respective Lagrangians, namely we have
\begin{equation}\label{eq:theta1}
\begin{split}
    \theta_{\h}(F^{\psi}+\delta L(\psi))&=\zeta_{\h'}(F^{\psi}+\delta L(\psi))-\zeta_{\h'}(0)\\
    &=\zeta_{\h'}(F)^{\psi}+\delta L(\psi)-\zeta_{\h'}(0)\\
    &=\theta_{\h}(F)^\psi+\zeta_{\h'}(0)^{\psi}+\delta L(\psi)-\zeta_{\h'}(0)\\
    &=\theta_{\h}(F)^{\psi}+\delta \tilde L(\psi)
\end{split}
\end{equation}
as well as
\begin{equation}\label{eq:theta2}
\begin{split}
    \theta_{\h}(\h^{-1}\zeta)_{\g}(F)
    &=\zeta_{\h'}(\h^{-1}\zeta)_g(F)-\zeta_{\h'}(0)\\
    &=\zeta_{\g}\zeta_{\h'}^{\g}(F)-\zeta_{\h'}(0)\\
    &=\tilde\zeta_{\g}\bigl(\zeta_{\h'}^{\g}(F)-\zeta_{\h'}(0)\bigr)\\
    &=\tilde\zeta_{\g}\bigl(\g^{-1}_{\ast}\theta_{\h}\g_L(F)+\g^{-1}_{\ast}\zeta_{\h'}(0)+\delta_{\g^{-1}}L-\zeta_{\h'}(0)\bigr)\\
    &=\tilde\zeta_{\g}\theta_{\h}^{\g}(F)
\end{split}
\end{equation}
where $\theta_{\h}^{\g}\doteq\g_{L'}^{-1}\theta_{\h}\,\g_L$, by using Prop.~\ref{prop:h-zeta} in the second step.

%Moreover, $L'+\theta_{\h}F$ induces the same causal structure as $L+F$.
Since $\theta_{\h}$ also satisfies the Locality condition of Definition \ref{def1} $(ii)$, the Causality relation (Axiom 1) 
for $S$ is mapped into 
\begin{equation}
\begin{split}
     \tilde S&\bigl((\theta_\h(G+F)-\theta_\h(F))+\theta_\h(F)+(\theta_\h(H+F)-\theta_\h(F))\bigr)\\
&=\tilde S(\theta_\h(G+F))\tilde S(\theta_\h(F))^{-1}\tilde S(\theta_\h(H+F))
\end{split}
\end{equation}
if $\supp G\cap J_-^{L+A_F}(\supp H)=\emptyset$. 
This is the same axiom for $\tilde S$ due to $\supp(\theta_\h(G+F)-\theta_\h(F))=\supp G$ 
and since $\tilde{L}+A_{\theta_{\h}(F)}$ induces the same causal structure as $L+A_F$. 
The Dynamical relation (Axiom 2) for $S$ is mapped into 
$\tilde S\bigl(\theta_{\h}(F)^{\psi}+\delta \tilde L(\psi)\bigr)=\tilde S(\theta_{\h}(F))$; this follows straightforwardly from \eqref{eq:theta1}.
Finally the unitary AMWI for $S$ is mapped into $\tilde S(\g_{\tilde L}(\theta_{\h}^{\g}F))=\tilde S({\tilde\zeta
}_\g(\theta_{\h}^{\g}F))$, 
as a consequence of \eqref{eq:theta2}. This proves that the map is homomorphic. But since $\theta_{\h}$ is invertible, as a consequence of the invertiblity of $\zeta_{\h'}$ for all $\h'\in\G(\h,\Oc)$, the map is also bijective. 
\end{proof}

\section{Covariance}
Up to now we did not impose any covariance conditions on the choice of the anomaly. In perturbative locally covariant quantum field theories, renormalization is restricted by the requirement that the quantities of interest (fields, time ordered products, \etc) can be understood as natural transformations between certain functors on the category of spacetimes. In our present paper, we enlarged this category. We will now consider several subcategories and natural transformations between functors on them and discuss the resulting restrictions
for the anomaly.

We start with the space of local functionals. $\Floc$ associates to any dynamical spacetime $(M,L)$ the space $\Floc(M,L)$. Arrows $\iota_\rho:(M,L)\to(M',L')$ for diffeomorphisms $\rho:M\to M'$ with $\rho^{\ast}L'=L$ induce maps $\Floc\iota_{\rho}\equiv\rho_{\ast}:\Floc(M,L)\to\Floc(M',L')$ with
\begin{equation}
    (\rho_{\ast})F[\phi]=F[\phi\circ\rho]\ .
\end{equation}
Arrows $\iota_{\g}:(M,L)\mapsto(M',L')$ for symmetry transformations $\g\in\G_c(M)$ with $M'=M$ and $L'=\g_{\ast}L$ induce the maps $\Floc\iota_{\g}\equiv\g_{\ast}:\Floc(M,L)\to\Floc(M',L')$. 
We do not introduce the maps $\Floc\iota_{V,\pm}$ corresponding to interactions. 
% For the arrows $\iota_{V,\pm}$
% for interactions $V\in\mathrm{Int}(M,L)$ with compact support we set
% $V_{\ast}F=F-V(1)$. Interactions with noncompact support do not induce distinguished maps between the spaces of local functionals, and we restrict our functor to the corresponding subcategory.

The functor of observable algebras $\fA:(M,L)\to\fA(M,L)$ is defined on the whole category. The $S$-matrix is a natural transformation 
\begin{equation}
    S:\Floc\naturaltra\fA
\end{equation}
for the subcategory where the interaction arrows $\iota_{W,\pm}$ are removed, as illustrated in Figure \ref{fg:Covariance1}.

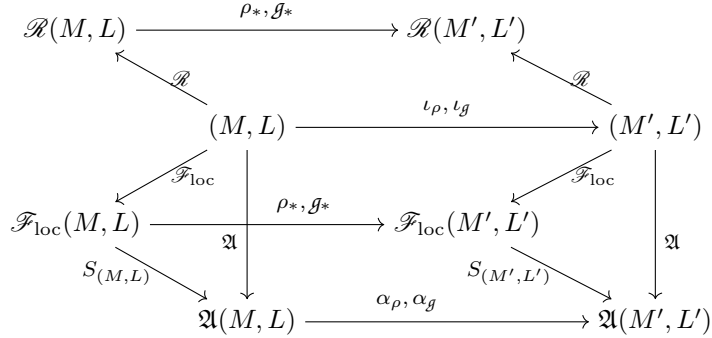
\begin{figure}[ht]
\centering
\begin{tikzpicture}
\matrix(M)[matrix of math nodes, row sep=2em, column sep=1.5 em]{
\Rc(M,L) &&& \Rc(M',L') &\\
& (M,L) &&& (M',L') \\
%&& \Rc(M,L) &&& \Rc(M',L')\\
\Floc(M,L) &&& \Floc(M',L') &\\
& \fA(M,L) &&& \fA(M',L') \\
   };
\begin{scope}[every node/.style={midway,font=\footnotesize}]
\draw[->] (M-2-2) -- node[above] {$\iota_\rho,\iota_\g$} (M-2-5) ;
\draw[->] (M-2-2) -- node[below left] {$\fA$} (M-4-2) ;
\draw[->] (M-2-2) -- node[right] {$\Floc$} (M-3-1);
\draw[->] (M-3-1) -- node[left] {$S_{(M,L)}$} (M-4-2) ;
\draw[->] (M-2-2) -- node[right] {$\Rc$} (M-1-1) ;
\draw[->] (M-2-5) -- node[right] {$\Floc$} (M-3-4);
\draw[->] (M-2-5) -- node[right] {$\Rc$} (M-1-4);
\draw[->] (M-3-4) -- node[left] {$S_{(M',L')}$} (M-4-5) ;
\draw[->] (M-2-5) -- node[below right] {$\fA$} (M-4-5) ;
\draw[->] (M-1-1) -- node[above] {$\rho_\ast,\g_\ast$} (M-1-4) ;
\draw[->] (M-3-1) -- node[above right] {$\rho_\ast,\g_\ast$} (M-3-4) ;
\draw[->] (M-4-2) -- node[above left] {$\alpha_\rho,\alpha_\g$} (M-4-5) ;
\end{scope}
\end{tikzpicture}
\caption{\small Covariance with respect to diffeomorphisms $\rho:M\to M'$ and symmetry transformations $\g\in\G_c(M)$ (for the latter
we assume $M'= M$). The diagram describes the functors $\Floc$, $\Rc$ and $\fA$ and the natural transformation $S$.}
\label{fg:Covariance1}
\end{figure}

The interactions transform under symmetries and embeddings as
\begin{equation}
    (\g_{\ast}V)(f)\doteq\g_{\ast}(V(f\circ\tildeg))
\end{equation}
In order to include the arrows corresponding to interactions we introduce a bundle over the space of past compactly supported interactions with fibers of associated local functionals,
\begin{equation}
    \mathrm{Int}_+^2(M,L)=\bigsqcup_{V\in\mathrm{Int}(M,L),\supp V\text{ past compact}}\{V\}\times\Floc(M,L+V)\ .
\end{equation}
In addition to the arrows for field redefinitions and embeddings which just act for both components as before, we also introduce the action of arrows $\iota_{W,+}$ for interactions $W$ with past compact support
\begin{equation}
    \mathrm{Int}_+^2\iota_{W,+}:\begin{cases}\mathrm{Int}_+^2(M,L+W)\longrightarrow &\mathrm{Int}_+^2(M,L)\\
    (V,F)\qquad\qquad\,\,\,\longmapsto &(V+W,F)\end{cases}\ ,
\end{equation}
where $W\in\mathrm{Int}(M,L)$, $V\in\mathrm{Int}(M,L+W)$ (hence $V+W\in\mathrm{Int}(M,L)$) and $F\in\Floc(M,L+W+V)$.
We then consider the relative $S$-matrices
\begin{equation}
\begin{split}
   S^{\mathrm{rel}}_{(M,L)}(V,F)&\doteq \alpha_{V,+}\bigl(S_{(M,L+V)}(F)\bigr)=S_{(M,L)}(V(f))^{-1}\,S_{(M,L)}(F+V(f))\\
   %S^{\mathrm{rel}}_{(M,L)}(V,F,-)&\doteq S_{(M,L)}(V(1)+F)S_{(M,L)}(V(1))^{-1}
\end{split}
\end{equation}
and find that they define a natural transformation $S^{\mathrm{rel}}:\mathrm{Int}_+^2\to\fA$, \ie
$$
S^{\mathrm{rel}}_{(M,L)}(V+W,F)=\alpha_{W,+}\bigl(S^{\mathrm{rel}}_{(M,L+W)}(V,F)\bigr)\ ,
$$
if only the arrows for retarded interactions are considered, see Figure \ref{fg:Covariance2}. 
An analogous construction can be done for the advanced case.

\begin{figure}[ht]
\centering
\begin{tikzpicture}[scale=0.5]
\matrix(M)[matrix of math nodes, row sep=2em, column sep=1 em]{
\Rc(M,L+W) &&& \Rc(M,L)&\\
& (M,L+W) &&& (M,L) \\
%&& \Rc(M,L+W) &&& \Rc(M,L)\\
\mathrm{Int}_+^2(M,L+W) &&& \mathrm{Int}_+^2(M,L) &\\
& \fA(M,L+W) &&& \fA(M,L) \\
   };
\begin{scope}[every node/.style={midway,font=\footnotesize}]
\draw[->] (M-2-2) -- node[above] {$\iota_{W,+}$} (M-2-5) ;
\draw[->] (M-2-2) -- node[below left] {$\fA$} (M-4-2) ;
\draw[->] (M-2-2) -- node[left] {$\mathrm{Int}_+^2$} (M-3-1);
\draw[->] (M-3-1) -- node[left] {$S^{\mathrm{rel}}_{(M,L+W)}$} (M-4-2) ;
\draw[->] (M-2-2) -- node[right] {$\Rc$} (M-1-1) ;
\draw[->] (M-2-5) -- node[left] {$\mathrm{Int}_+^2$} (M-3-4);
\draw[->] (M-2-5) -- node[right] {$\Rc$} (M-1-4);
\draw[->] (M-3-4) -- node[left] {$S^{\mathrm{rel}}_{(M,L)}$} (M-4-5) ;
\draw[->] (M-2-5) -- node[below right] {$\fA$} (M-4-5) ;
\draw[->] (M-1-1) -- node[above] {$Z\mapsto Z^{-W}$} (M-1-4) ;
\draw[->] (M-3-1) -- node[above right] {$\mathrm{Int}_+^2\iota_{W,+}$} (M-3-4) ;
\draw[->] (M-4-2) -- node[above left] {$\alpha_{W,+}$} (M-4-5) ;
\end{scope}
\end{tikzpicture}
\caption{\small Covariance with respect to the subtraction of an interaction $W$ with past compact support, given by the morphism $\iota_{W,+}$.}
\label{fg:Covariance2}
\end{figure}

The functor $\Rc:(M,L)\mapsto \Rc(M,L)$ of renormalization groups maps the interaction arrows $\iota_{V,\pm}$ to the maps 
$Z\mapsto Z^{-V}$ (or equivalently: $Z^V\mapsto Z$). For
symmetries we get $(\g_{\ast}Z)(F)\doteq\g_{\ast}(Z(\g_{\ast}^{-1}F)$. For embeddings $\rho:(M,L)\to (M',L')$
there is the difficulty that (for $F\in\Floc(M',L')$) $\rho^{-1}_\ast F$ is only defined if $\supp F\subset\rho(M)$, 
hence we decompose $F=F_0+F_1$ with $\supp F_0\subset\rho(M)$ and $\supp F_1\cap\rho(\supp Z)=\0$ and set
\begin{equation}\label{eq:rho*Z}
    (\rho_{\ast}Z)(F)\doteq\rho_{\ast}(Z(\rho_{\ast}^{-1}F_0))+F_1\ .
\end{equation}
The right hand side is independent of the split $F=F_0+F_1$, since for another split $F=F_0'+F_1'$ the functional $G=F_0'-F_0=F_1-F_1'$ satisfies $\supp G\cap\rho(\supp Z)=\0$, hence 
\begin{equation}
   \rho_{\ast}(Z(\rho_{\ast}^{-1}F_0'))=\rho_{\ast}(Z(\rho_{\ast}^{-1}F_0))+G \ .
\end{equation}

We finally discuss the covariance properties of the cocycles $\zeta$. The symmetry groups $\G_c(M)$ do not depend on the Lagrangian and their elements $\g=(\Phi,\chi)$ do not transform under 
changes of the interaction. Under symmetry transformations $\h\in\G_c(M)$ they transform by conjugation:
$\g\mapsto \h\g\h^{-1}$. Under embeddings we have to use that they are compactly supported. So we set
\begin{equation}
  \rho_{\ast}(\Phi,\chi)\doteq (\rho\Phi,\rho_{\ast}\chi)
\end{equation}
with $\rho\Phi\restriction_{\rho}$ as defined in \eqref{eq:semi} and extended to the identity outside of $\rho(M)$, and with
\begin{equation}
    \rho_{\ast}\chi(x)=\left\{
    \begin{array}{ccc}
    \rho(\chi(\rho^{-1}(x))) &,& x\in\rho(M) \\
       x&,&\text{ else}
    \end{array}\right.
\end{equation} 
The cocycles then transform by
\begin{equation}
    \zeta\mapsto \zeta^{-V}
\end{equation}
under interactions $\iota_{V;\pm}$, by $\zeta\to\g\zeta$ under symmetry tranformations 
$\iota_\g$, (here we use the definition \eqref{eq:sym-acts-cocy} for $\g\in\G_c(M)$),  and by
\begin{equation}
(\rho_{\ast}\zeta)_{\rho_{\ast}\g}=\rho_{\ast}\zeta_{\g}\rho_{\ast}^{-1}
%(\rho_{\ast}\zeta)_\g=\rho_{\ast}(\zeta_{\rho^{\ast}\g})
\end{equation}
under embeddings $\iota_\rho$ by using \eqref{eq:rho*Z}. Note that $\rho_{\ast}\zeta$ is defined only for symmetry transformations with support in the image of $\rho$. 
% where $\rho^{\ast}\g=\g\circ\rho$ (hence $\rho^\ast\rho_\ast\g=\g$) .

We can now discuss the appropriate naturality conditions on $\zeta$.
The covariance under adding interactions is already taking into account: $\zeta:\G_c\times\Floc\to\Floc$ is a natural transformation
with respect to $\iota_{V,\pm}$, \ie $\zeta^{-V}_\g(F+V(f))=\zeta_g(F)+V(f)$, as illustrated in Figure \ref{fg:Covariance3}.
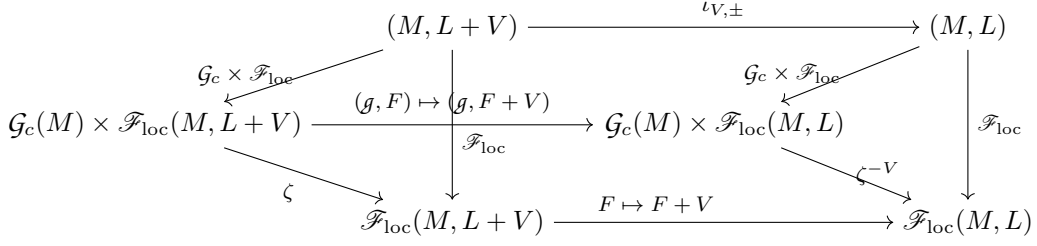
\begin{figure}[ht]
\centering
\begin{tikzpicture}
\matrix(M)[matrix of math nodes, row sep=2em, column sep=1.5 em]{
& (M,L+V) && (M,L) \\
\G_c(M)\times\Floc(M,L+V) && \G_c(M)\times\Floc(M,L) &\\
& \Floc(M,L+V) && \Floc(M,L) \\
   };
\begin{scope}[every node/.style={midway,font=\footnotesize}]
\draw[->] (M-1-2) -- node[above] {$\iota_{V,\pm}$} (M-1-4) ;
\draw[->] (M-1-2) -- node[below right] {$\Floc$} (M-3-2) ;
\draw[->] (M-1-2) -- node[left] {$\G_c\times\Floc$} (M-2-1);
\draw[->] (M-2-1) -- node[below left] {$\zeta$} (M-3-2) ;
\draw[->] (M-1-4) -- node[left] {$\G_c\times\Floc$} (M-2-3);
\draw[->] (M-2-3) -- node[right] {$\zeta^{-V}$} (M-3-4) ;
\draw[->] (M-1-4) -- node[right] {$\Floc$} (M-3-4) ;
\draw[->] (M-2-1) -- node[above] {$(\g,F)\mapsto (\g,F+V)$} (M-2-3) ;
\draw[->] (M-3-2) -- node[above left] {$F\mapsto F+V$} (M-3-4) ;
\end{scope}
\end{tikzpicture}
\caption{\small Naturality of $\zeta$ with respect to the subtraction of the interaction $V$. For shortness we write $V$ for $V(f)$.}
\label{fg:Covariance3}
\end{figure}
As a consequence it is sufficient to define the cocycle for a specific Lagrangian. 

Actually, in perturbation theory one discusses the renormalization within the free theory (including its time ordered product) which determines then also the interacting theory. The covariance under embeddings is the crucial condition for local covariance. Due to the work of Hollands and Wald \cite{HW01} (see also \cite{KhavMor}) this condition is satisfied in perturbation theory, hence we may impose it in our framework:
$\zeta:\G_c\times\Floc\to\Floc$ is a natural transformation
with respect to $\iota_\rho$, \ie $(\rho_\ast\zeta)_{\rho_\ast\g}(\rho_\ast F)=\rho_\ast\bigl(\zeta_g(F)\bigr)$, see Figure
\ref{fg:Covariance4}.
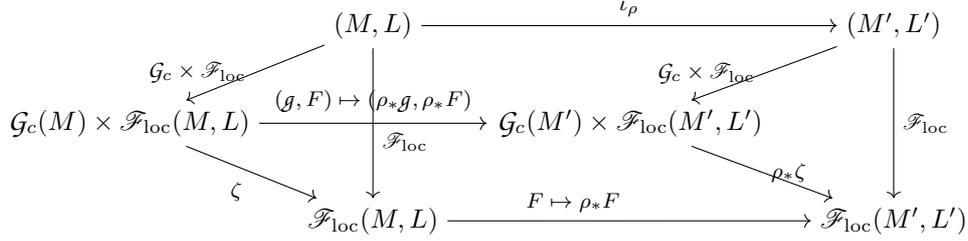
\begin{figure}[ht]
\centering
\begin{tikzpicture}
\matrix(M)[matrix of math nodes, row sep=2em, column sep=1.5 em]{
& (M,L) && (M',L') \\
\G_c(M)\times\Floc(M,L) && \G_c(M')\times\Floc(M',L') &\\
& \Floc(M,L) && \Floc(M',L') \\
   };
\begin{scope}[every node/.style={midway,font=\footnotesize}]
\draw[->] (M-1-2) -- node[above] {$\iota_\rho$} (M-1-4) ;
\draw[->] (M-1-2) -- node[below right] {$\Floc$} (M-3-2) ;
\draw[->] (M-1-2) -- node[left] {$\G_c\times\Floc$} (M-2-1);
\draw[->] (M-2-1) -- node[below left] {$\zeta$} (M-3-2) ;
\draw[->] (M-1-4) -- node[left] {$\G_c\times\Floc$} (M-2-3);
\draw[->] (M-2-3) -- node[right] {$\rho_\ast\zeta$} (M-3-4) ;
\draw[->] (M-1-4) -- node[right] {$\Floc$} (M-3-4) ;
\draw[->] (M-2-1) -- node[above] {$(\g,F)\mapsto (\rho_\ast\g,\rho_\ast F)$} (M-2-3) ;
\draw[->] (M-3-2) -- node[above left] {$F\mapsto \rho_\ast F$} (M-3-4) ;
\end{scope}
\end{tikzpicture}
\caption{\small Naturality of $\zeta$ with respect to diffeomorphisms $\rho:M\to M'$.}
\label{fg:Covariance4}
\end{figure}

Analogously to the latter result,
we can also impose the naturality condition for the arrows coming from symmetry transformations $\g\in\G_c(M)$, 
since due to the assumption on global hyperbolicity no compactly supported symmetry transformation leaves the Lagrangian invariant.

There remains, however, the relation between the arrow corresponding to the application of a symmetry $\h\in\G_c(M)$ 
and the arrow corresponding to the induced change of the Lagrangian: $L\to\h_\ast L=L+\delta_\h L$, 
details are given in Figure \ref{fg:Covariance5}.
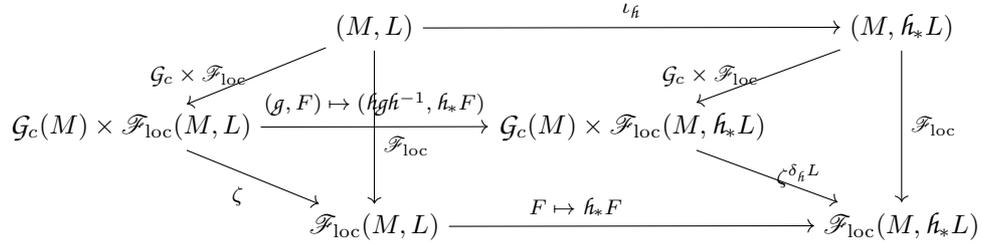
\begin{figure}[ht]
\centering
\begin{tikzpicture}
\matrix(M)[matrix of math nodes, row sep=2em, column sep=1.5 em]{
& (M,L) && (M,\h_\ast L) \\
\G_c(M)\times\Floc(M,L) && \G_c(M)\times\Floc(M,\h_\ast L) &\\
& \Floc(M,L) && \Floc(M,\h_\ast L) \\
   };
\begin{scope}[every node/.style={midway,font=\footnotesize}]
\draw[->] (M-1-2) -- node[above] {$\iota_\h$} (M-1-4) ;
\draw[->] (M-1-2) -- node[below right] {$\Floc$} (M-3-2) ;
\draw[->] (M-1-2) -- node[left] {$\G_c\times\Floc$} (M-2-1);
\draw[->] (M-2-1) -- node[below left] {$\zeta$} (M-3-2) ;
\draw[->] (M-1-4) -- node[left] {$\G_c\times\Floc$} (M-2-3);
\draw[->] (M-2-3) -- node[right] {$\zeta^{\delta_\h L}$} (M-3-4) ;
\draw[->] (M-1-4) -- node[right] {$\Floc$} (M-3-4) ;
\draw[->] (M-2-1) -- node[above] {$(\g,F)\mapsto (\h\g\h^{-1},\h_\ast F)$} (M-2-3) ;
\draw[->] (M-3-2) -- node[above left] {$F\mapsto \h_\ast F$} (M-3-4) ;
\end{scope}
\end{tikzpicture}
\caption{\small Behaviour of $\zeta$ under the application of a symmetry $\h\in\G_c(M)$ and the induced change of the Lagrangian.}
\label{fg:Covariance5}
\end{figure}

\noindent Full naturality of $\zeta$ would yield $\zeta^{\delta_{\h}L}_{\h\g\h^{-1}}(\h_\ast F){=}\h_\ast\zeta_g(F)$, that is,
\begin{equation}\label{eq:cocycle-naturality}
    \h\zeta{=}\zeta^{\delta_{\h}L} \ .
\end{equation}
But for such a $\zeta$ the renormalization group flow would be trivial:
\begin{proposition}
Let $\zeta\in\mathfrak{Z}(M,L)$.
\begin{itemize}
    \item[$(i)$] The cocycles $\h\zeta$ and $\zeta^{\delta_{\h}L}$ are equivalent for every $\h\in\G_c(M)$,
    \begin{equation}\label{eq:hzeta-zetadL}
    (\h\zeta)_{\g}=Z^{-1}\zeta_\g^{\delta_{\h}L}Z^{\g}\ ,\quad \g\in\G_c(M)\ ,
\end{equation}
with $Z\doteq\zeta_{\h^{-1}}^{\delta_{\h}L}$.
    \item[$(ii)$] If  $\h'\zeta{=}\zeta^{\delta_{\h'}L}$ for all $\h'\in\G_c(M)$,
then $\h\zeta=\zeta$ for all $\h\in\HL$.
\end{itemize}
\end{proposition}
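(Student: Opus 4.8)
The plan is to prove the two items by different routes: $(i)$ by a direct computation from the cocycle relation, and $(ii)$ by reducing to the local statement of Proposition~\ref{prop:h-zeta}.

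For $(i)$ I write $W\doteq\delta_\h L$, so that $\h_\ast L=L+W$, and I read both $\h\zeta$ and $\zeta^{W}$ as living over the Lagrangian $\h_\ast L$; recall $\zeta^{W}\in\mathfrak{Z}(M,\h_\ast L)$ and $Z\doteq\zeta_{\h^{-1}}^{W}\in\Rc(M,\h_\ast L)$ by Propositions~\ref{prop:ZW-cocycle}$(i)$ and \ref{prop:ZW}$(ii)$, and the superscript $\g$ in \eqref{eq:hzeta-zetadL} refers to the conjugation \eqref{eq:Zg} taken over $\h_\ast L$. The core is to establish \eqref{eq:hzeta-zetadL}. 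The approach is to begin with the definition \eqref{eq:sym-acts-cocy}, $(\h\zeta)_\g=\h_\ast\,\zeta_{\h^{-1}\g\h}\,\h^{-1}_\ast$, apply the cocycle relation \eqref{eq:cocycle} along the splittings $\h^{-1}\g\h=(\h^{-1}\g)\cdot\h$ and $\h^{-1}\g=\h^{-1}\cdot\g$ to obtain $\zeta_{\h^{-1}\g\h}=\zeta_\h\,\h_L^{-1}\,\zeta_\g\,\g_L^{-1}\,\zeta_{\h^{-1}}\,\g_L\,\h_L$, and then replace $\h_\ast$ and $\h^{-1}_\ast$ by the appropriate $L'$-dependent actions \eqref{eq:g_L} (with $L'=\h_\ast L$ and $L'=L$), using the variation identity $\delta_{\g\h}L=\g_\ast\delta_\h L+\delta_\g L$ to reorganise everything over $\h_\ast L$. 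In the result, the outer factor $\zeta_\h$ together with the $W$-shift coming from $\h_\ast$ assembles into $Z^{-1}$ (here one uses $(Z^{W})^{-1}=(Z^{-1})^{W}$ from the proof of Proposition~\ref{prop:ZW}), the inner factor $\zeta_{\h^{-1}}$ together with the $W$-shift from $\h^{-1}_\ast$, both conjugated by $\g_{\h_\ast L}$, assembles into $Z^{\g}$, and the remaining middle piece becomes $(\zeta_\g)^{W}$; this is exactly \eqref{eq:hzeta-zetadL}. Since the right-hand side of \eqref{eq:hzeta-zetadL} is precisely the $Z$-twist of the cocycle $\zeta^{W}$ described in Proposition~\ref{prop:betaZ-cocycle}$(i)$ (applied over the Lagrangian $\h_\ast L$), it follows immediately that $\h\zeta$ is again a cocycle in $\mathfrak{Z}(M,\h_\ast L)$ and that $\h\zeta$ and $\zeta^{W}=\zeta^{\delta_\h L}$ are equivalent in the sense of Definition~\ref{def:Cocycle Equivalence}, with witnessing element $Z$.

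For $(ii)$ I fix $\h\in\HL$ and a relatively compact open $\Oc\subset M$ and choose $\h'\in\G(\h,\Oc)$. Proposition~\ref{prop:h-zeta} gives $(\h\zeta)_\g F=(\h'\zeta)_\g F$ whenever $\supp\g,\supp F\subset\tildeh(\Oc)$, and, as shown inside the proof of that proposition, $\h_\ast L\sim L$ forces $\supp\delta_{\h'}L\cap\tildeh(\Oc)=\emptyset$. The hypothesis $\h'\zeta=\zeta^{\delta_{\h'}L}$ gives $(\h'\zeta)_\g=(\zeta_\g)^{\delta_{\h'}L}$, \ie $(\h'\zeta)_\g(F)=\zeta_\g(F+\delta_{\h'}L)-\delta_{\h'}L$; since $\supp F$ and $\supp\zeta_\g\subset\supp\g$ lie in $\tildeh(\Oc)$ and hence are disjoint from $\supp\delta_{\h'}L$, Locality (Definition~\ref{def1}$(ii)$) together with the defining property of $\supp\zeta_\g$ yields $\zeta_\g(F+\delta_{\h'}L)=\zeta_\g(F)+\delta_{\h'}L$, so that $(\h'\zeta)_\g(F)=\zeta_\g(F)$. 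Hence $(\h\zeta)_\g F=\zeta_\g F$ for all $\g$ and $F$ supported in $\tildeh(\Oc)$, and since every $\g\in\G_c(M)$ and $F\in\Floc(M,L)$ have compact support, $\tildeh^{-1}(\supp\g\cup\supp F)$ lies in some such $\Oc$; letting $\Oc$ exhaust $M$ we conclude $\h\zeta=\zeta$ for all $\h\in\HL$.

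The main obstacle is the computation in $(i)$: because there is no simplification from $\h\in\HL$, the variation $\delta_\h L$ must be carried through every step, and one has to keep precise track of which Lagrangian ($L$, $\h_\ast L$, or an intermediate one) each functional and each renormalization-group element lies over, so that the operation $(\,\cdot\,)^{W}$ of Definition~\ref{def:ZW} and the conjugation \eqref{eq:Zg} combine correctly and $Z$, $Z^{\g}$ and $(\zeta_\g)^{W}$ all land in $\Rc(M,\h_\ast L)$. Once \eqref{eq:hzeta-zetadL} is in hand, both the cocycle property of $\h\zeta$ and the asserted equivalence are formal consequences of Propositions~\ref{prop:ZW-cocycle} and \ref{prop:betaZ-cocycle}; and $(ii)$ is then essentially a corollary of Proposition~\ref{prop:h-zeta} and the support calculus already developed.
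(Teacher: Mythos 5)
Your proposal is correct. For part $(i)$ you follow essentially the paper's route: both proofs reduce $(\h\zeta)_\g=\h_\ast\,\zeta_{\h^{-1}\g\h}\,\h_\ast^{-1}$ to the claimed form by applying the cocycle relation twice along the factorization $\h^{-1}\g\h=(\h^{-1}\cdot\g)\cdot\h$ while tracking the $\delta_\h L$-shifts; the only difference is the order of operations — the paper first rewrites the conjugation as $\h_{\h_\ast L}\,\zeta^{\delta_\h L}_{\h^{-1}\g\h}\,\h_{\h_\ast L}^{-1}$ and then applies the cocycle relation for $\zeta^{\delta_\h L}$ over $\h_\ast L$, whereas you expand $\zeta_{\h^{-1}\g\h}$ over $L$ first and shift afterwards. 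Your reassembly of the three blocks into $Z^{-1}$, $(\zeta_\g)^{\delta_\h L}$ and $Z^{\g}$ does close up (the key bookkeeping identity is $\g_{\h_\ast L}(F)=\g_L(F+\delta_\h L)-\delta_\h L$, together with $\h_L\zeta_\h\h_L^{-1}=\zeta_{\h^{-1}}^{-1}$ from the cocycle relation at $\e=\h^{-1}\h$), so this is fine. For part $(ii)$ your argument is genuinely different from, and somewhat more economical than, the paper's: the paper first combines the hypothesis with part $(i)$ and the cocycle relation to upgrade $\g\mapsto\zeta_\g$ to a genuine homomorphism, $\zeta_{\kk^{-1}}\zeta_\g=\zeta_{\kk^{-1}\g}$, and then feeds this into the formula $(\h\zeta)_\g=\zeta_{(\h')^{-1}}^{-1}\,\zeta_\g\,\zeta_{(\h')^{-1}}^{\g}$ of Proposition \ref{prop:h-zeta}; you instead apply the hypothesis directly to the compactly supported approximation $\h'$ and observe that $(\zeta_\g)^{\delta_{\h'}L}(F)=\zeta_\g(F)$ because $\supp\delta_{\h'}L$ is disjoint from $\tildeh(\Oc)\supset\supp F\cup\supp\zeta_\g$, so the shift by $\delta_{\h'}L$ passes through $\zeta_\g$ and cancels. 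Your route makes no use of part $(i)$ in proving $(ii)$ and exposes more clearly that the content of the naturality hypothesis is the local triviality of the shift; the paper's route has the side benefit of recording the homomorphism property \eqref{eq:cocycle-representation}, which is of independent interest. Both are valid.
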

\begin{proof}
\begin{itemize}
    \item[$(i)$] 
We use the calculations in Section \ref{sec:symL} leading to Proposition \ref{prop:h-zeta}, but now for a generic element $\h\in\G_c$. The transformed cocycle $\h\zeta$ takes values in $\Rc(M,\h_{\ast}L)$. The second equation in the proposition then assumes the form
\eqref{eq:hzeta-zetadL}.
This follows by the following chain of identities:
\begin{equation}\label{eq:equivalence}
    \begin{split}
        (\h\zeta)_{\g}(F)\quad&\overset{\mathclap{\eqref{eq:sym-acts-cocy}}}{=}\quad\h_{\ast}\zeta_{\h^{-1}\g\h}\h_{\ast}^{-1}(F)\\
        &=\quad\h_{\h_{\ast}L}(\h_{\h_{\ast}L}^{-1}\h_{\ast})\zeta_{\h^{-1}\g\h}(\h_{\ast}^{-1}\h_{\h_{\ast}L})\h_{\h_{\ast}L}^{-1}(F)\\
        &\overset{\mathclap{\eqref{eq:g_L}}}{=}\quad
        \h_{\h_{\ast}L}\bigl(\delta_{\h^{-1}}\h_{\ast}L+\zeta_{\h^{-1}\g\h}(\delta_{\h}L+\h_{\h_{\ast}L}^{-1}(F))\bigr)\\
        &\overset{\mathclap{\eqref{eq:renint}}}{=}\quad\h_{\h_{\ast}L}\zeta^{\delta_{\h}L}_{\h^{-1}\g\h}\h_{\h_{\ast}L}^{-1}(F)\\
        &\overset{\mathclap{\mathrm{(cr)}}}{=}\quad\h_{\h_{\ast}L}(\zeta^{\delta_{\h}L}_{\h}\h_{\h_{\ast}L}^{-1}
        \zeta^{\delta_{\h}L}_{\g}\g_{\h_{\ast}L}^{-1}\zeta^{\delta_{\h}L}_{\h^{-1}}\g_{\h_{\ast}L})(F)\\
        &\overset{\mathclap{\mathrm{(cr)}}}{=}\quad
        (\zeta_{\h^{-1}}^{\delta_{\h}L})^{-1}\zeta_{\g}^{\delta_{\h}L}(\zeta_{\h^{-1}}^{\delta_{\h}L})^{\g}(F)\ ,
    \end{split}
\end{equation}
where ``(cr)'' stands for the cocycle relation \eqref{eq:cocycle} in $\Rc(M,\h_{\ast}L)$.
\item[$(ii)$] %\eqref{eq:equivalence} together with \eqref{eq:cocycle-naturality} implies
%\begin{equation}\label{eq:cocycle-representation}
%    \begin{split}
%    \zeta_{\h^{-1}}\zeta_{\g}=\zeta_{\g}(\zeta_{\h^{-1}})^{\g}=\zeta_{\h^{-1}\g}    
%    \end{split}
%\end{equation}
%where we used that $Z_1^VZ_2^V=(Z_1Z_2)^V$ as well as
%$(Z^V)^{\g}=(Z^{\g})^V$ for $Z,Z_1,Z_2\in\Rc(M,L)$, $V\in\mathrm{Int}(M,L)$ and $\g\in\G_c(M)$. 

For $\kk,\g\in\G_c(M)$ the assumption \eqref{eq:equivalence} together with \eqref{eq:cocycle-naturality} implies
\be
\zeta_\g^{\delta_\kk L}=(\kk\zeta)_\g=(\zeta_{\kk^{-1}}^{\delta_{\kk}L})^{-1}\zeta_{\g}^{\delta_{\kk}L}
(\zeta_{\kk^{-1}}^{\delta_{\kk}L})^{\g}\,\overset{\mathclap{\mathrm{(cr)}}}{=}\,(\zeta_{\kk^{-1}}^{\delta_{\kk}L})^{-1}
\zeta_{\kk^{-1}\g}^{\delta_{\kk}L}\ .
\ee
Using that $Z_1^VZ_2^V=(Z_1Z_2)^V$ (for $Z_1,Z_2\in\Rc(M,L)$, $V\in\mathrm{Int}(M,L)$) we see that $\G_c(M)\ni\g\mapsto\zeta_\g$
is a representation
\begin{equation}\label{eq:cocycle-representation}
    \begin{split}
    \zeta_{\kk^{-1}}\zeta_{\g}=\zeta_{\kk^{-1}\g}=\zeta_{\g}(\zeta_{\kk^{-1}})^{\g}\ .
    \end{split}
\end{equation}
Let now $\h\in\HL$, and let $\g\in\G_c(M)$, $F\in\Floc(M,L)$. Choose $\h'\in\G(\h,\Oc)$ for some relatively compact, causally convex region $\Oc\subset M$ with $\supp\g,\,\supp F\subset\tilde\h(\Oc)$. Then according to Proposition \ref{prop:h-zeta} we have 
\begin{equation}
    (\h\zeta)_{\g}(F)=\zeta_{(\h')^{-1}}^{-1}\zeta_{\g}\zeta_{(\h')^{-1}}^{\g}(F)\ .
\end{equation}
The claim now follows from \eqref{eq:cocycle-representation}. 
\end{itemize}
\end{proof}
The assumption of full naturality would exclude important examples of anomalies in perturbation theory. Instead we require that the equivalence class of $\zeta$ is natural for the categorical structure described above.

\begin{remark}
The algebra $\fA(M,L,\zeta)$ is well-defined for any choice of the cocycle  $\zeta$ satisfying 
the defining conditions given in formula \eqref{eq:cocycle} and directly before it. In particular, we may choose
$\zeta_\g=\mathrm{id}_{\Floc(M,L)}$ for all $\g\in\G_c(M)$.
However this would not be an optimal choice for a model $(M,L)$ with a nontrivial anomaly (as one knows e.g.~from perturbative computations) since this might prevent the existence of
physical states (as e.g.~the vacuum in Minkowski space) for $\fA(M,L,\zeta)$.
%$\fA(M,L,\zeta)$ is the algebra generated 
%from the trivial $S$-matrices $S_{(M,L,\zeta)}(F)=e^{iF[0]}\,1$.
\end{remark}
% \todo{slightly reformulated, addition (MD)}: naturality
% \be
% (\h\zeta)\circ(\G_c\times\Floc)\iota_\h(\g,F)\equiv(\h\zeta)_{\h\g\h^{-1}}(\h_\ast F)
% \overset{?}{=}\h_\ast\bigl(\zeta_\g(F)\bigr)\equiv\Floc\iota_\h\circ\zeta(\g,F)
% \ee
% holds for symmetries of the Lagrangian $\h\in\Hc_L$ (see \eqref{eq:sym-acts-cocy}), but in general not for $\h\in\G_c(M)$.

%%%%%%%%%%%%%%%%%%%%%%%%%%%%%%%%%%%%%%%%%%%%%%%%%%%%%%%%%%
\section{The unitary AMWI in perturbation theory}\label{sec:pQFT}

\subsection{Perturbation theory}\label{subsec:Pt}
The unitary anomalous MWI (see \eqref{eq:uni-anom-MWI}) is an additional axiom. In this subsection,
we show that an \emph{essentially} equivalent condition holds in perturbation theory. In perturbation theory (in terms of formal power series in the coupling constant $\lambda$), for a free (\ie quadratic) Lagrangian $L$,  S-matrices are defined as time-ordered exponentials  
\begin{equation}\label{eq:simple-S}
S(\lambda F)\equiv e^{i\lambda F}_{T}\equiv 1+\sum_{n=1}^\infty \frac{(i\lambda)^n}{n!}F\cdot_T\ldots \cdot_T F\,,\quad F\in\Floc(M)\,.
\end{equation}

Here $\cdot_T$, the renormalized time-ordered product, is a binary, commutative and associative product \cite{FredenhagenR13} on a subspace (which contains the local functionals) of the space of all functionals on the configuration space (not restricted to solutions of the field equation, \ie in the ``off-shell formalism''\footnote{This differs from the formalism often used in physics, where the time ordered products are identified with Fock space operators, corresponding to  functionals which vanish on solutions of
the free field equation (``on-shell formalism''). Since the functionals vanishing on solutions do not form an ideal for the off-shell time-ordered product, the on-shell time-ordered product is less well behaved.}). $S(\lambda F)$ is
realized as formal power series of functionals. %on the configuration space. 
% In this section, $L$ is fixed, so we drop it from the notation and everything depends only on the spacetime $M$, e.g. 
$\Floc(M)$ is the vector space of polynomial local functionals on $M$.  
%$T_n$ is a linear and symmetric map on $\Floc(M)^{\ox n}$  
% to formal power series in $\hbar$
%with values in functionals on configuration space which are not necessarily local. %In fact, $T_n$ can be written in the form $T_n(F_1\otimes\dots \otimes F_n)=F_1\cdot_T\dots \cdot_T F_n$, where $\cdot_T$ is a binary product.
% i.e., they are not restricted by the free field equation.
The formal parameter $\lambda$ serves only as a bookkeeping device, so for the simplicity of notation we are going to omit it and write formal power series as infinite sums.

The perturbative S-matrix $S$ is supposed to satisfy the causal factorization condition \eqref{Caus} (see Remark \ref{rm:Caus}). 
It can be constructed by the Epstein-Glaser method \cite{EG73}, generalized to curved space times \cite{BF00,HW01,HW02}. It is not 
uniquely determined by the condition \eqref{Caus}, but according to Stora's {\it Main Theorem of Renormalization} \cite{PopineauS16,Pinter01,DF04} any other solution $\hat{S}$ is of the form 
\be\label{eq:MainT}
\hat{S}=S\circ Z
\ee
where $Z$ is a formal power series 
\be\label{eq:Z-perturbative}
Z(F)=\sum_{n=0}^{\infty}\frac{1}{n!}Z_n(F^{\ox n})
\ee
of linear symmetric maps $Z_n:\Floc(M)^{\ox n}\to\Floc(M)$, and the composition has to be understood in the sense of insertions of a formal power series into another one. The formal power series $Z$ are invertible (in the sense of formal power series) and generate a group, the renormalization group in the sense of Stückelberg and Petermann \cite{StuPet53}, see also \cite{DF04}.  Applying \eqref{eq:MainT} to the local functional $F=0$, we immediately get
$Z(0)=Z_0=0$. However, for a better agreement with Def.~\ref{def1} of $\Rc(M,L)$ and with Prop.~\ref{prop:automorphism-by-Z}, 
we may add a constant functional $c\in\RR$: in terms of $\tilde Z\doteq Z+c$ (hence $\tilde Z(0)=c$) the Main Theorem 
formula \eqref{eq:MainT} can equivalently be written as
%We may also add a constant functional $Z(0)$ and obtain the relation
\begin{equation}\label{eq:MainT1}
    \hat{S}(F)=\beta_{\tilde Z}(S(F))\ ,\quad F\in\Floc(M)\ ,
\end{equation}
with $\beta_{\tilde Z}\doteq\beta_{\tilde Z}^{\mathrm{ret}}$ or $\beta_{\tilde Z}\doteq\beta_{\tilde Z}^{\mathrm{adv}}$ (both choices coincide) 
and $\beta_{\tilde Z}^{\mathrm{ret/adv}}$ defined in Prop.~\ref{prop:automorphism-by-Z}.
%as in Proposition \ref{prop:automorphism-by-Z}.

The ambiguity on the choice of $S$ can be restricted by renormalization conditions which then also restrict the renormalization group to a subgroup. For our purposes we need unitarity, field equation (FE) (\ie the Schwinger-Dyson equation \eqref{SD}, see \cite[Sect.~7]{BDFR21}), Action Ward Identity and field independence and denote the corresponding subgroup of the renormalization group by $\Rc_0$. (For details see \cite{DF04} or \cite[Chap.~3.1]{Due19}.) We do not impose covariance conditions since we want to study the behavior of $S$ under symmetry transformations $\g\in\G_c(M)$ which would destroy these conditions. Since they are compactly supported we are interested in the subgroup $\Rc_c$ of $\Rc_0$
of renormalization group elements with compact support (in the sense of \eqref{eq:suppZ}). The explicit definition of $\Rc_c$
is given in Appendix \ref{app:Delta-X}.

Note that \eqref{eq:simple-S} implies $\frac{d}{d\lambda}\vert_{\lambda=0}S(\lambda F)=iF$ and hence $Z_1=\mathrm{id}$.
For a generally covariant formalism, however, this choice is too restrictive \cite{HW02}. We take this generalization into account by admitting nontrivial, but still invertible $Z_1$. For convenience, we continue to use the formulation for the S-matrix as in formula \eqref{eq:simple-S} and obtain the more general S-matrices by composition with the renormalization group map $Z$ as in equation \eqref{eq:MainT}.

We use as an input the anomalous MWI 
from Brennecke and one of us \cite[Thm.~7]{Brennecke08}, see also \cite[Chap.~4.3]{Due19}. This identity is equivalent to a renormalized version of the Quantum Master Equation in the BV formalism (see \cite{FredenhagenR13}) and can be understood as an infinitesimal version of the unitary AMWI introduced in our paper,
the latter turns out from Theorem \ref{th:AMWI-uAMWI} below.

$\G_c(M)$ and $\Rc_c$ can be equipped with appropriate topologies and made into infinite-dimensional Lie groups modelled on locally convex topological vector spaces (for more details, see e.g. \cite{Neeb06,Neeb,Michor} and [Rej16] for review in the context of perturbative AQFT).  For $\G_c(M)$ this is relatively easy, since $\mathrm{Diff}_c(M)$ is a standard example of an infinite dimensional Lie group, while $\mathcal{C}^{\infty}_c(M,\mathrm{Aff}(\RR^n))$ is just the gauge group of point-wise affine transformations (smooth functions with values in the finite-dimensional Lie group of affine transformations on $\RR^n$), hence also standard. The action of diffeomorphisms on the group of affine transformations is smooth, so the Lie group structure on the semidirect product follows.
 As for $\Rc_c$, note that this is a subspace of the space of formal power series with values in multilinear maps from $\Floc(M)$ to $\Floc(M)$, which in itself is a topological vector space, so $\Rc_c$ can be equipped with the induced topology.

Note that we avoid the more subtle aspects of infinite dimensional differential geometry, since here we work with very explicit formulas and the precise choice of a setting for infinite dimensional calculus is not that relevant.

Let $\LieGc$ denote the Lie algebra of $\G_c(M)$ (Def.~\ref{def:G}) and $\LieRc$ the Lie algebra of $\Rc_c$.
The anomalous MWI states that there exists a linear map $\Delta:\LieGc\to\LieRc$ such that
\be\label{eq:anomMWI}
e_T^{iF}\cdot_T\bigl(\partial_X F+\partial_X L-\Delta X(F)\bigr)=\int d^4x\ \bigl(e_T^{iF}\cdot_T X\phi(x)\bigr)\frac{\delta L}{\delta \phi(x)}
\ee
for all $F\in\Floc(M)$, $X\in\LieGc$ and with the action
\be\label{eq:dXG}
\partial_X G[\phi]\doteq \int d^4x\ \frac{\delta G}{\delta \phi(x)}X\phi(x)
\ee
for $G\in\Floc(M)$, and with $\partial_X L\doteq \partial_X L(f)$ with $f\equiv 1$ on $\supp X$. This defines a map from $\LieGc$ to $\Gamma(T\Ec(M,\RR^n))$, the space of vector fields on $\Ec(M,\RR^n)$, by $X\mapsto \partial_X$.
The relation to the renormalized BV Laplacian $\triangle$ of \cite{FredenhagenR13} is given by $\triangle_F(\partial_X)\equiv \Delta X(F)$.

\begin{remark}
The statement that $\Delta X\in\LieRc$ is not given in the original formulation of the AMWI (in \cite[Thm.~7]{Brennecke08}
or \cite[Chap.~4.3]{Due19}), but apart from this the geometrical formulation of the AMWI \eqref{eq:anomMWI} is \emph{equivalent} to the original one,
as one sees from the following translation (for illustration see \cite[Example 4.2.1]{Due19} and \cite{DPR21}). In the mentioned references
the AMWI is formulated in terms of the local functional $A\doteq\int d^4x\,\,h(x)Q(x)\,\frac{\delta L(f)}{\delta\phi(x)}$ (where
$h\in\Dc(M)$, $Q$ is a polynomial in $\phi$ and its partial derivatives and $f\vert_{\supp h}=1$), the pertinent 
derivation\footnote{In \cite{Due19} the derivation $\delta_A$ of \cite{Brennecke08} is denoted by $\delta_{hQ}$.}
$\delta_A\equiv\delta_{hQ}\doteq\int d^4x\,\,h(x)Q(x)\,\frac{\delta}{\delta\phi(x)}$ and the anomaly map 
$F\mapsto\Delta_A(e_\ox^F)\equiv\Delta(e_\ox^F;hQ)$. Now, $A$ and $\delta_A$ can be obtained as follows: for a smooth curve
$(\g^\lambda)$ in $\G_c(M)$ with $\g^0=\e$ and with tangent vector $X\in\LieGc$ at $\lambda=0$ we set 
% introduce the vector field
% \be\label{eq:X-g}
% X(x)\doteq\frac{d}{d\lambda}\Big\vert_{\lambda=0}\g^\lambda(x)\ .
% \ee
% Then
\be
h(x)Q(x)\doteq X\phi(x)=\frac{d}{d\lambda}\Big\vert_{\lambda=0}(\g^\lambda\phi)(x)
\ee
and consequently
\be\label{eq:dA-dX}
\delta_A G\equiv\delta_{hQ}G=\partial_X G
\ee
for $G\in\Floc(M)$, in particular $A=\partial_X L$. In this paper we write $-\Delta X(F)$ for the anomaly map $\Delta(e_\ox^F;hQ)$.

The statement that $\Delta X\in\LieRc$ for $X\in\LieGc$ is proved in Appendix \ref{app:Delta-X}. 
It expresses that the St\"uckelberg-Petermann group has an additional purpose (besides describing the non-uniqueness of the perturbative
$S$-matrix by the Main Theorem \eqref{eq:MainT}): it also characterizes the anomalies of the MWI, restricted to fields $Q$ which are of first order in $\phi$. This is analogous to the
two different purposes of the (non-perturbative) renormalization group $\Rc(M,L)$ (Def.~\ref{def1}) worked out in this paper: 
the automorphisms $\beta_Z^{\mathrm{ret}/\mathrm{adv}}$ of $\fA(M,L)$
(with $Z\in\mathcal{R}(M,L)$, see Prop.~\ref{prop:automorphism-by-Z}) and the characterization of the anomaly map 
$\g\to\zeta_\g$ (belonging to the unitary AMWI \eqref{eq:uni-anom-MWI}) by $\zeta_\g\in\Rc(M,L)$.
\end{remark}

For perturbative, scalar QED, the  equivalence of the on-shell MWI and the unitary MWI\footnote{The unitary MWI is also an on-shell statement.} 
expressing global $U(1)$-symmetry is proved in \cite[Thm.~4.1]{DPR21}. We now generalize this result 
in different ways, we prove it for the \emph{anomalous} MWI and for \emph{arbitrary models} and \emph{arbitrary symmetry transformations} $\g\in\G_c(M)$. 

\begin{remark}
In the following Theorem we use the simple form of the S-matrix as in \eqref{eq:simple-S}. Using instead a general S-matrix by composition with a renormalization group map $Z$ would require in the result only a change to an equivalent cocycle, as explained in 
Definition~\ref{def:Cocycle Equivalence} and Proposition~\ref{prop:betaZ-cocycle}. (The proofs of Lemma~\ref{lem:Zg-in-R} and
Proposition~\ref{prop:betaZ-cocycle} apply also to the perturbative framework considered here, because the definition of $\Rc_c$ \ref{def:Rc}
is analogous to the Definition of $\Rc(M,L)$ \ref{def1} reduced to the special case $V=0$.)
\end{remark}

\begin{theorem}\label{th:AMWI-uAMWI}
In formal perturbation theory (\ie the equations hold in the sense of formal power series) the on-shell AMWI and the unitary AMWI are equivalent in the following sense:
\begin{itemize}
    \item[$(i)$] The on-shell AMWI ({\it i.e.}, \eqref{eq:anomMWI} $\mathrm{mod}\ \frac{\delta L}{\delta\phi}$) implies the unitary AMWI,
\be\label{eq:uAMWI}
S\circ \g_L(F)=S\circ\zeta_\g(F)\ \mathrm{mod}\ \frac{\delta L}{\delta\phi}\quad\text{for all}\quad\g\in\G_c(M)\ ,\,\,F\in\Floc(M)\ ,
\ee
%holds in formal perturbation theory 
with a cocycle $\zeta$ taking values in  $\Rc_c$ and with $\supp\zeta_\g\subset\supp\g,\g\in\G_c(M)$.
\item[$(ii)$] The unitary AMWI \eqref{eq:uAMWI} implies the on-shell AMWI \eqref{eq:anomMWI} where $X$ is the tangent vector at $\lambda=0$ of a smooth curve $(\g^\lambda)$ in $\G_c(M)$ with $\g^0=\e$ and the anomaly map $\Delta X$ is given by $\Delta X=\frac{d}{d\lambda}\big\vert_{\lambda=0}\zeta_{\g^\lambda}$. It holds  that $\supp\Delta X\subset\supp X$.
\end{itemize}
\end{theorem}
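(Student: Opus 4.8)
The plan is to route both implications through a single infinitesimal computation, the harder one being $(i)$, which additionally requires integrating the Lie-algebra-valued anomaly $\Delta$ up to a group cocycle. First I would record the basic derivative: for a smooth curve $(\g^\lambda)$ in $\G_c(M)$ with $\g^0=\e$, writing $\g^{\lambda+\mu}=\delta^{\lambda,\mu}\g^\lambda$ and $Y^\lambda\doteq\frac{d}{d\mu}\big\vert_{\mu=0}\delta^{\lambda,\mu}\in\LieGc$ for the right logarithmic derivative, the relation $\g^\lambda_L F=\delta_{\g^\lambda}L+\g^\lambda_{\ast}F$ together with linearity of $\g^\lambda_{\ast}$ gives
\be
\tfrac{d}{d\lambda}\bigl(\g^\lambda_L F\bigr)=\partial_{Y^\lambda}\bigl(\g^\lambda_L F\bigr)+\partial_{Y^\lambda}L\ ,\qquad\text{so}\quad\tfrac{d}{d\lambda}\big\vert_{0}\,\g^\lambda_L F=\partial_X F+\partial_X L\ ,\ \ X\doteq\tfrac{d}{d\lambda}\big\vert_0\g^\lambda\ .
\ee
Since $\frac{d}{d\lambda}e_T^{iG^\lambda}=i\,e_T^{iG^\lambda}\cdot_T\dot G^\lambda$, this yields $\frac{d}{d\lambda}S(\g^\lambda_L F)=i\,e_T^{i\g^\lambda_L F}\cdot_T(\partial_{Y^\lambda}(\g^\lambda_L F)+\partial_{Y^\lambda}L)$; and for the twisted action $\g^\lambda_\zeta\doteq\g^\lambda_L\circ\zeta_{\g^\lambda}^{-1}$ from \eqref{eq:g-zeta} the same manipulation produces an extra term $-\Delta Y^\lambda(\g^\lambda_\zeta F)$ with $\Delta Y^\lambda\doteq\frac{d}{d\mu}\big\vert_0\zeta_{\delta^{\lambda,\mu}}$.

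Direction $(ii)$ is then immediate: evaluate the unitary AMWI $S\circ\g^\lambda_L=S\circ\zeta_{\g^\lambda}\ (\mathrm{mod}\ \tfrac{\delta L}{\delta\phi})$ along such a curve and differentiate at $\lambda=0$. The left side gives $i\,e_T^{iF}\cdot_T(\partial_XF+\partial_XL)$, the right side $i\,e_T^{iF}\cdot_T\Delta X(F)$ with $\Delta X\doteq\frac{d}{d\lambda}\big\vert_0\zeta_{\g^\lambda}$, hence $e_T^{iF}\cdot_T(\partial_XF+\partial_XL-\Delta X(F))\equiv0\ (\mathrm{mod}\ \tfrac{\delta L}{\delta\phi})$, which is \eqref{eq:anomMWI} read modulo $\tfrac{\delta L}{\delta\phi}$. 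It remains to check that $X\mapsto\Delta X$ is linear (homogeneity by rescaling one-parameter subgroups, additivity from the cocycle relation \eqref{eq:cocycle}), that $\Delta X\in\LieRc$ (a derivative along a curve in $\Rc_c$), and that $\supp\Delta X\subset\supp X$ (take $\g^\lambda=\exp(\lambda X)$, whose flow stays inside $\supp X$, and use $\supp\zeta_\g\subset\supp\g$).

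For the nontrivial direction $(i)$ the idea is to build $\zeta$ by integration. Given $\g$ and a path $(\g^\lambda)_{\lambda\in[0,1]}$ from $\e$ to $\g$ with supports in a fixed compact set, let $\g^\lambda_\zeta$ be the curve of maps on $\Floc(M)$ solving
\be
\tfrac{d}{d\lambda}\bigl(\g^\lambda_\zeta F\bigr)=\partial_{Y^\lambda}\bigl(\g^\lambda_\zeta F\bigr)+\partial_{Y^\lambda}L-\Delta Y^\lambda\bigl(\g^\lambda_\zeta F\bigr)\ ,\qquad \g^0_\zeta=\mathrm{id}\ ,
\ee
with $\Delta Y^\lambda\in\LieRc$ (Appendix~\ref{app:Delta-X}); this is solvable order by order in $\lambda$ and in the formal parameter, each $\g^\lambda_\zeta$ is the composite of $\g^\lambda_L$ with an element of $\Rc_c$ supported in $\supp\g$, and — by the very shape of \eqref{eq:anomMWI} — one has $\frac{d}{d\lambda}S(\g^\lambda_\zeta F)=i\,e_T^{i\g^\lambda_\zeta F}\cdot_T(\partial_{Y^\lambda}(\g^\lambda_\zeta F)+\partial_{Y^\lambda}L-\Delta Y^\lambda(\g^\lambda_\zeta F))\equiv0\ (\mathrm{mod}\ \tfrac{\delta L}{\delta\phi})$, the last equality being the on-shell AMWI read with $F$ replaced by $\g^\lambda_\zeta F$ and $X$ by $Y^\lambda$. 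Integration over $[0,1]$ gives $S\circ\g_\zeta=S\ (\mathrm{mod}\ \tfrac{\delta L}{\delta\phi})$; with $\g_\zeta=\g_L\circ\zeta_\g^{-1}$, $\zeta_\g\in\Rc_c$, $\supp\zeta_\g\subset\supp\g$, and $\g_L=\g_\zeta\circ\zeta_\g$ this gives $S\circ\g_L=S\circ\zeta_\g\ (\mathrm{mod}\ \tfrac{\delta L}{\delta\phi})$, which is \eqref{eq:uAMWI} (equivalently $S=S\circ\g_\zeta$ as in Remark~\ref{rm:uAMWI-alternative}).

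The main obstacle is that the theorem asks $\zeta$ to be a genuine cocycle, i.e.\ to satisfy \eqref{eq:cocycle} exactly: Proposition~\ref{prop:AMWI-cocyclerelation} only gives the cocycle relation after composition with the non-injective on-shell $S$, so the construction above must be made coherent. Concretely, one must show that the map $\g\mapsto\g_\zeta$ is independent of the path chosen inside the identity component of $\G_c(M)$, so that concatenation of paths renders it a homomorphism there (and $\zeta$ a cocycle), the remaining components being handled by choosing representatives. Path-independence is the flatness of the time-dependent connection $G\mapsto\partial_{Y^\lambda}G+\partial_{Y^\lambda}L-\Delta Y^\lambda(G)$ driving the flow, equivalently the Wess--Zumino-type consistency identity $\Delta[X,Y]=X\cdot\Delta Y-Y\cdot\Delta X-[\Delta X,\Delta Y]$ on $\LieGc$; this infinitesimal cocycle condition follows by differentiating \eqref{eq:anomMWI} (it is the consistency relation of the renormalised Quantum Master Equation, cf.~\cite{FredenhagenR13}). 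Together with the inputs $\Delta X\in\LieRc$ and $\supp\Delta X\subset\supp X$ supplied by Appendix~\ref{app:Delta-X}, this is where the real work lies; by contrast the algebraic core common to both directions — feeding the AMWI into $\frac{d}{d\lambda}S(\g^\lambda_\zeta F)$ — is essentially one line read in two ways.
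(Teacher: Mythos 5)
Your treatment of the analytic core coincides with the paper's: both directions are run through the same differentiation of $S\circ\g^{\lambda}_L\circ\zeta_{\g^\lambda}^{-1}$ along a curve from $\e$ to $\g$, your flow equation for $\g^\lambda_\zeta$ is exactly the conjugate of the paper's defining ODE \eqref{eq:DGL-Z-1} for $\zeta_{\g^\lambda}^{-1}$, and the ingredients you cite (membership of the generator in $\LieRc$ so that $\zeta_\g\in\Rc_c$, the support estimate via $\supp\Delta X^\lambda\subset\supp\g$) are the ones the paper uses. Part $(ii)$ is likewise identical to the paper's one-line differentiation argument.

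Where you genuinely diverge --- and where your argument has a gap --- is the exact cocycle relation \eqref{eq:cocycle}. You correctly identify that Proposition \ref{prop:AMWI-cocyclerelation} only yields the relation after composition with the non-injective on-shell $S$, but your proposed repair (path-independence of the flow via a Wess--Zumino consistency identity for $\Delta$) is asserted rather than proved, and even granted it would only give invariance of $\g_\zeta$ under homotopies of the path: flatness is a local statement, the identity component of $\G_c(M)$ need not be simply connected, and handling the remaining components by ``choosing representatives'' would in general destroy the homomorphism property $(\g\h)_\zeta=\g_\zeta\h_\zeta$. The paper's resolution is different and complete: add a source term $\langle\phi,q\rangle$ to $L$. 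This changes neither the time-ordered product nor $\Delta X$, hence not $\zeta$ (the ODE \eqref{eq:DGL-Z-1} is insensitive to affine shifts of the argument of $\Delta X^\lambda$), so Proposition \ref{prop:AMWI-cocyclerelation} gives the cocycle relation $\mathrm{mod}\ \frac{\delta L}{\delta\phi}+q$ for every $q$, i.e.\ on all field configurations; injectivity of the \emph{off-shell} $S$-matrix then upgrades it to an identity of maps on $\Floc(M)$. Note that the same trick simultaneously settles the well-definedness (path-independence) of $\zeta_\g$ that you single out as the main obstacle, since it shows $\zeta_\g$ is uniquely determined by $S\circ\g_L$ read off shell. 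You should replace the flatness argument by this one.
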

\begin{proof} $(i)$
Let $\g\in\G_c(M)$. We choose a smooth curve $\lambda\mapsto\g^{\lambda}\in\G_c(M)$ with $\g^0=\e$ and $\g^1=\g$ and let 
$X^{\lambda}\in\LieGc$ be defined by $\frac{d}{d\lambda}\g^{\lambda}=X^{\lambda}\g^{\lambda}$. In the spirit of
Remark \ref{rm:uAMWI-alternative}
we search for a smooth curve $\lambda\mapsto\zeta_{\g^\lambda}^{-1}\in\Rc_c$ with $\zeta_\e^{-1}=\mathrm{id}$ and 
\be\label{eq:Z(1-lambda)-1}
\frac{d}{d\lambda}S\bigl(\g^{\lambda}_L\zeta_{\g^\lambda}^{-1}(F)\bigr)=0\ \mathrm{mod}\ \frac{\delta L}{\delta\phi}\ .
\ee
Inserting then $\lambda =0$ and $\lambda =1$ into $S\circ\g^{\lambda}_L\circ \zeta_{\g^\lambda}^{-1}$, we obtain the unitary AMWI \eqref{eq:uAMWI}.

We get
\begin{align}
\frac{d}{d\lambda}\g^\lambda_\ast G[\phi]& =\frac{d}{d\lambda'}\Big\vert_{\lambda'=\lambda}\,(\g^{\lambda'}{\g^{\lambda}}^{-1})_{\ast}\g^{\lambda}_{\ast}G[\phi]\nonumber\\
&=\frac{d}{d\lambda'}\Big\vert_{\lambda'=\lambda}\,\g^{\lambda}_{\ast}G[\g^{\lambda'}{\g^{\lambda}}^{-1}\phi]\nonumber\\
&=\int d^4x\ \frac{\delta \g^{\lambda}_{\ast}G}{\delta\phi(x)}[\phi]\,X^{\lambda}\phi(x)
% [(\g^\lambda)^{-1}\phi]=
% \int \frac{\delta G[(\g^\lambda)^{-1}\phi]}{\delta ((\g^\lambda)^{-1}\phi)(x)}\,\frac{d}{d\lambda}(\g^\lambda)^{-1}\phi(x)\nonumber\\
% &= -\int \Bigl(\g^\lambda_\ast\frac{\delta G}{\delta \phi(x)}\Bigr)\,X^\lambda (\g^\lambda)^{-1}\phi(x)
=\partial_{X^{\lambda}}\g^{\lambda}_\ast G[\phi]
\end{align}
(for $G\in\Floc(M)$); by using this result we perform the differentiation and obtain the condition
\be
S\bigl(\g^{\lambda}_L \zeta_{\g^\lambda}^{-1}(F)\bigr)\cdot_T\Bigl(\partial_{X^{\lambda}}\g^{\lambda}_L \zeta_{\g^\lambda}^{-1}(F)+
\partial_{X^{\lambda}}L+\g^{\lambda}_{\ast}\frac{d}{d\lambda}\zeta_{\g^\lambda}^{-1}(F)\Bigr)=0\ \mathrm{mod}\ \frac{\delta L}{\delta\phi}\ .
\ee
We insert the anomalous MWI \eqref{eq:anomMWI} and find 
\be
S\bigl(\g^{\lambda}_L \zeta_{\g^\lambda}^{-1}(F)\bigr)\cdot_T
\Bigl(\Delta X^{\lambda}(\g_L^{\lambda}\zeta_{\g^\lambda}^{-1}(F))+\g^{\lambda}_{\ast}\frac{d}{d\lambda}\zeta_{\g^\lambda}^{-1}(F)\Bigr)
=0\ \mathrm{mod}\ \frac{\delta L}{\delta\phi}\ .
\ee
We thus get the wanted family $\lambda\mapsto \zeta_{\g^\lambda}^{-1}$ as the unique solution of the differential equation
\be\label{eq:DGL-Z-1}
\frac{d}{d\lambda}\zeta_{\g^\lambda}^{-1}=-({\g_{\ast}^{\lambda}})^{-1}\Delta X^{\lambda}\g_L^{\lambda}\zeta_{\g^\lambda}^{-1}\ ,
\ee 
with the initial condition $\zeta_{\g^0}^{-1}=\mathrm{id}$. Since 
\be\label{eq:inR}
{\g_{\ast}^{\lambda}}^{-1}\Delta X^{\lambda}\g^{\lambda}_L\in\LieRc\ ,
\ee
as explained in the next paragraph, it follows that $\zeta_{\g^\lambda}^{-1}\in\Rc_c$, in particular $\zeta_\g\in\Rc_c$.

The claim \eqref{eq:inR} follows from $\Delta X^{\lambda}\in\LieRc$. In detail, let $\mu\mapsto\Ups^\lambda_\mu$ be a smooth curve in $\Rc_c$
with $\Ups^\lambda_0=\mathrm{id}_{\Floc}$ and $\Delta X^{\lambda}=\frac{d}{d\mu}\big\vert_{\mu =0}\Ups^\lambda_\mu$. Then, analogously  to Lemma \ref{lem:Zg-in-R}, using the formulas from Appendix \ref{app:Delta-X}, 
also ${(\g^{\lambda}}^{-1})_L\Ups^\lambda_\mu\, \g^{\lambda}_L=\delta_{{\g^\lambda}^{-1}}L+{\g_{\ast}^{\lambda}}^{-1}\Ups^\lambda_\mu\,\g^{\lambda}_L$ lies
in $\Rc_c$, and  \eqref{eq:inR} follows by applying $\frac{d}{d\mu}\vert_{\mu=0}$.

To prove the statement on the support of $\zeta_\g$ we choose $\g^{\lambda}$ such that $\supp\g^{\lambda}\subset\supp\g$, $0\le\lambda\le1$ and thus $\supp \Delta X^{\lambda}\subset\supp X^{\lambda}\subset\supp\g$. Let $F,G\in\Floc(M)$ with $\supp F\cap\supp\g=\0$.  We have $\supp (Z(F+G)-Z(G))\subset\supp F$ for $Z\in\Rc_c$ (see \cite[formula (6.3)]{BDF09} or Prop.~\ref{prop:inv-suppZ}) and hence 
\be
\supp \g^{\lambda}_{\ast}(\zeta_{\g^{\lambda}}^{-1}(F+G)-\zeta_{\g^{\lambda}}^{-1}(G))\subset\supp F\ .
\ee
Thus with $\g^{\lambda}_L(\bullet)=\g^{\lambda}_{\ast}(\bullet)+\delta_{\g^\lambda}L$ we find
\be
\Delta X^{\lambda}\g_L^{\lambda}\zeta_{\g^{\lambda}}^{-1}(F+G)=\Delta X^{\lambda}\bigl(\g^{\lambda}_L\zeta_{\g^{\lambda}}^{-1}(G)+\g^{\lambda}_{\ast}(\zeta_{\g^{\lambda}}^{-1}(F+G)-\zeta_{\g^{\lambda}}^{-1}(G))\bigr)
=\Delta X^{\lambda}\g^{\lambda}_L\zeta_{\g^{\lambda}}^{-1}(G)
\ee
and 
\be
\zeta_{\g^{-1}}(F+G)=F+G+\int_0^1d\lambda\,\,\frac{d}{d\lambda}\zeta_{\g^{\lambda}}^{-1}(G)=F+\zeta_\g^{-1}(G)
\ee
\ie $\supp\zeta_\g=\supp\zeta_\g^{-1}\subset\supp\g$.
%Note that with $\Delta X^{\lambda}$ also ${\g_{\ast}^{\lambda}}^{-1}\Delta X^{\lambda}\g^{\lambda}_L\in\LieRc$. 
%We arrive at the unitary AMWI with $\zeta_{\g}=Z^1$. 

It remains to show that $\zeta$ satisfies the cocycle identity. As shown in Proposition \ref{prop:AMWI-cocyclerelation}, 
the cocycle identity
\[
S\circ\zeta_{\g\h}(F)=S\circ\zeta_\h\circ(\zeta_\g)^\h(F)
\]
holds after evaluation of both sides on on-shell configurations.
%insertion into the on-shell S-matrix. \todo{more detailed (MD)}
However, $G\mapsto S(G)$ is not injective as a map from local functionals to functionals on-shell.
We solve this problem in the following way: we add a source term $\langle\phi,q\rangle$ 
to the free Lagrangian $L$. 
This does not change
% affect the pertinent 2-point function, hence 
the time ordered 
product. From \eqref{eq:anomMWI} we see that also the anomaly map $\Delta X$ is not changed; hence, this holds also for
the pertinent $\zeta$ satisfying the unitary AMWI. By Prop.~\ref{prop:AMWI-cocyclerelation} we conclude
\be
S\circ\zeta_{\g\h}(F)=S\circ\zeta_\h\circ(\zeta_\g)^\h(F)\ \mathrm{mod}\ \frac{\delta L}{\delta\phi}+q
\ee
for all sources $q$, hence for all field configurations. Since the  \emph{off-shell} S-matrix is injective,
we obtain the cocycle relation.

$(ii)$ The assertion is obtained by applying $\frac{d}{d\lambda}\vert_{\lambda=0}$ to the unitary AMWI \eqref{eq:uAMWI} for $\lambda\mapsto \g^\lambda$. The statement on the support follows from the fact that for any neighborhood $U$ of $\supp X$ we can find a smooth curve $\g^{\lambda}$ with $\supp\g^{\lambda}\subset U$ and $\frac{d}{d\lambda}\vert_{\lambda=0}\g^{\lambda}=X$. Thus $\supp \Delta X\subset U$ for all neighborhoods of $\supp X$ and hence $\supp\Delta X\subset\supp X$.
\end{proof}

If the unitary AMWI \eqref{eq:uAMWI} is anomaly free, then this holds also for the AMWI \eqref{eq:anomMWI}; explicitly, if $\zeta_{\g^\lambda}F=F$
for a certain $F\in\Floc(M)$ and in a neighbourhood of $\lambda =0$, then $\Delta X(F)=0$.
From the proof we see that the reversed statement holds in
the following sense: if, for a certain $F\in\Floc(M)$ and a suitable choice of the curve $(\g^\lambda)$, it holds that 
$\Delta X^{\lambda}(\g_L^{\lambda} F)=0$ for all $\lambda\in [0,1]$, then $S(\g_L F)=S(F)\, \mathrm{mod}\ \frac{\delta L}{\delta\phi}$.

\subsection{Scaling anomaly}\label{subsect:comp-anomalies}
We determine the anomaly in a special case -- the dilations in 4-dimen\-sio\-nal Minkowski space $\MM$, which are a combination of
a structure preserving embedding $\rho$ with an affine field redefinition $\Phi$. We study the massless real scalar field case,
{\it i.e.}, $L\doteq\frac{1}2\,(\partial\phi)^2$, and we work with the additional renormalization condition that the time-ordered product scales almost
homogeneously (see \cite{DF04}). Let $\RR\ni\lambda\mapsto \g^{\lambda}$ be a 1-parameter subgroup of $\G_c$ with generator $X\in\mathrm{Lie}\G_c$ which acts on field configurations as
\be\label{eq:Xphi}
X\phi(x)=\frac{d}{d\lambda}\Big\vert_{\lambda=0}\g^\lambda\phi(x)=\beta(x)\,(1+x^\mu\partial_\mu)\phi(x)
\ee

with $\lambda\in\RR$, $\beta\in\Dc(\MM,\RR)$. 
% with $x^{\mu}\partial_{\mu}\beta(x)>-1$ for all $x\in\MM$. 
% For a sufficiently large open set $U\subset M$ we assume that
Let
\be\label{eq:assume}
\beta\vert_U=b\in\RR\ ,
\ee
for some open convex neighbourhood $U$ of the origin. Then for $x\in e^{-|b|}U$ we have
\be
\g^\lambda \phi(x)\doteq \phi(x\,e^{\lambda b})\,e^{\lambda b}\ ,\ \lambda\in [-1,1]\ .
\ee
% From
% \be
% \g^{\lambda_1}\g^{\lambda_2}\phi(x)=
% \phi\bigl(x\,e^{\lambda_1\beta(x)}\,e^{\lambda_2\beta(xe^{\lambda_1\beta(x)})}\bigr)\,e^{\lambda_2\beta(xe^{\lambda_1\beta(x)})}\,e^{\lambda_1\beta(x)}
% \ee 
% we see that
% \be\label{eq:g1g2=g1+2}
% \g^{\lambda_1}\g^{\lambda_2}\phi(x)=\g^{\lambda_1+\lambda_2}\phi(x)\quad\text{if}\quad x\in U\,\,\text{and}\,\, xe^{\lambda_1b}\in U\ .
% \ee
% We obtain

% and, by using \eqref{eq:g1g2=g1+2},
% \be
% X^\lambda\g^\lambda\phi(x)=\Bigl(\frac{d}{d\lambda'}\Big\vert_{\lambda'=0}\g^{\lambda'}\Bigr)\g^\lambda\phi(x)
% =X\g^\lambda\phi(x)\quad\text{for}\quad x\in U
% \ee
% and, hence,
% \be\label{eq:XonU}
% X^\lambda(x)=X(x)=b\,(1+x^\mu\partial_\mu)\quad\text{for}\quad x\in U\ .
% \ee

In view of the AMWI we compute
\be
\partial_{X}L\overset{\eqref{eq:dXG}}{=}-\int d^4x\,\,\square\phi(x)\,X\phi(x)\overset{\eqref{eq:Xphi}}{=}
-\int d^4x\,\,\beta(x)\partial_{\mu}j^{\mu}(x)
\ee
with the dilation current
\be
j^{\mu}(x)=\bigl(\phi(x)+x^{\nu}\partial_{\nu}\phi(x)\bigr)\partial^{\mu}\phi(x)-\frac12x^{\mu}\partial_{\nu}\phi(x)\partial^{\nu}\phi(x)\ .
\ee

We want to compute $\Delta X(F)=\sum_{n=0}^\infty(\Delta X)^{(n)}(F^{\otimes n})/n!$ (understood as formal power series in $F$)
for the particular local functional
\be\label{eq:F}
F[\phi]\doteq\int d^4x\,\,f(x)\phi(x)^2/2\quad\text{with}\quad\supp f\subset e^{-|b|}U\ .
\ee
Note that $\supp \g^\lambda_\ast F\subset U$ for $\lambda\in[-1,1]$. By using \eqref{eq:dXG} we obtain
\be
\partial_XF[\phi]=b\int d^4x\,\,\frac{\delta F}{\delta\phi(x)}\,(1+x^\mu\partial_\mu)\phi(x)
=b\int d^4x\,\,f(x)\bigl(1+\frac12x^{\mu}\partial_{\mu}\bigr)\phi(x)^2\ .
\ee

The sequence $(\Delta X)^{(n)}(F^{\otimes n})$ can be computed by solving the unitary AMWI \eqref{eq:anomMWI} by induction on $n$,
see \cite[formula (5.15)]{Brennecke08} or \cite[formula (4.3.10)]{Due19}.
In 0th order we use the conservation law for the dilation current in the massless theory and conclude that $\Delta^{(0)}=0$, in agreement with the general results of \cite{Brennecke08}. In first order we obtain 
\be\label{eq:Delta-A1}
(\Delta X)^{(1)}(F)=\partial_X F+i\,F\cdot_T \partial_X L+i\int d^4 x\,\bigl(F\cdot_T X\phi(x)\bigr)\square\phi(x)
\ee
%with
%\be
%X(x)=-\beta(x)\,(1+x^{\mu}\partial_{\mu})\phi(x)\ ,,%\quad\text{i.e.}\quad A=\int dx\,\,Q(x)\,\frac{\delta L_0(f)}{\delta\phi(x)}\ ,
%\ee
and in second order
\begin{multline}
(\Delta X)^{(2)}(F^{\otimes 2})=2i\,F\cdot_T\partial_X F+2i\,F\cdot_T (\Delta X)^{(1)}(F)-F\cdot_TF\cdot_T\partial_XL\\
-\int d^4 x\,\bigl(F\cdot_TF\cdot_T X\phi(x)\bigr)\square\phi(x)\ .
\end{multline}
Since the MWI holds true in classical field theory, that is, for tree diagrams,
the only potentially nonvanishing contributions come from \emph{local} terms of
the loops $F\cdot_T \partial_XL[0]$, $F\cdot_T\partial_XF[0]$ and $F\cdot_T F\cdot_T \partial_XL[0]$. Integrating by parts we note that
\begin{equation}
    \partial_X L=\int d^4x\,\,\partial_{\mu}\beta(x) j^{\mu}(x)\,,
\end{equation}
so due to our assumption on $\beta$ and $f$ given in \eqref{eq:assume} and \eqref{eq:F}, the only nonvanishing contribution to $\Delta X(F)$ comes from $F\cdot_T\partial_X F[0]$.

With $D^2$ being the renormalized fish diagram,{\it i.e.},
\be
D^2(z)=\bigl(D_F(z)\bigr)^2\big\vert_{\mathrm{renormalized}}=D^2(-z)
\ee
(where $D_F$ is the Feynman propagator of the massless scalar field), and by using the Action Ward Identity, we compute
\be
      F\cdot_T\partial_X F[0]=b\int d^4xd^4y\,\, f(x)f(y)\bigl(1+\frac12\,y^\mu\partial^y_\mu\bigr)D^2(x-y)\ .
\ee
We may symmetrize, that is, we may replace $(1+\frac12\,y^\mu\partial^y_\mu)D^2(x-y)$ by
\be
\bigl(1+\frac14(x^\mu\partial^x_\mu+y^\mu\partial^y_\mu)\bigr)D^2(x-y)=
\bigl(1+\frac14(x^\mu-y^{\mu})\partial^x_\mu\bigr) D^2(x-y)=\frac14\,\partial^x_{\mu}(x-y)^\mu D^2(x-y)\ .
\ee
 % \begin{equation}
 %   \begin{split}
  %    F\cdot_T\partial_X F[0] =b\int d^4xd^4y\,\, f(x)f(y)(1+\frac14(x^\mu\partial^x_\mu+y^\mu\partial^y_\mu))D^2(x-y)\\
      %&=b\int d^4xd^4y\,\, f(x)f(y)(1+\frac14(x^\mu-y^{\mu})\partial^x_\mu) D^2(x-y)\\
      %&=b\int d^4xd^4y\,\, f(x)f(y)\partial^x_{\mu}(x-y)^\mu D^2(x-y)\ .
   % \end{split}
%\end{equation}
Using $D_F(z)=\frac{-1}{4\pi^2}\,\frac{1}{z^2-i0}$ we obtain 
\begin{equation}\label{eq:identity-DF2}
    z^\mu D^2(z)=%-\frac{1}{\pi^2}\partial^\mu \Delta_F(x)
    \frac{1}{8\pi^2}\partial^\mu D_F(z)\quad\text{for}\quad z\not=0\ .
\end{equation}
Moreover, since both sides have unique extensions from $\Dc'(\RR^4\setminus\{0\})$ to
$\Dc'(\RR^4)$ which preserve almost homogeneous scaling, they agree everywhere.  
% extensions of $\bigl(D_F(z)\bigr)^2$ from $\Dc'(\RR^4\setminus\{0\})$ to
% $\Dc'(\RR^4)$ which preserve almost homogeneous scaling can differ only by a multiple of the $\delta$-function, is of the form $c\,\delta(z)$ (with $c\in\CC$), the identity \eqref{eq:identity-DF2} 
% holds also for any admissible extension, i.e., for $D^2\in\Dc'(\RR^4)$.
%with the Feynman propagator $\Delta_F$ of the massless scalar field, 
Inserting this result, we end up with
\be
    \Delta X(F)=\frac12\,(\Delta X)^{(2)}(F^{\otimes 2})=i\,F\cdot_T\partial_X F[0]
    =\frac{b}{32\pi^2}\int d^4x\,\,(f(x))^2\ .
    %=\frac{b}{\pi^2}\int d^4 x f(x)^2 \ .
\ee

Here we have taken into account that $(\Delta X)^{(n)}(F^{\otimes n})=0$ for all $n\geq 3$. Proceeding by induction on $n$ this can be seen as follows
(for details see \cite{Brennecke08,DF04} and \cite[Chaps.~3.1 and 4.3]{Due19}): writing 
$$\int d^4y\prod_{j=1}^n \bigl(d^4x_j\,f(x_j)\bigr)\,\Delta^{(n)}(x_1,\ldots,x_n;y)\doteq(\Delta X)^{(n)}(F^{\otimes n})$$ 
the terms contributing to $\Delta^{(n)}(x_1,\ldots,x_n;y)$ are (up to constant prefactors)
\begin{align}\label{eq:Delta-n}
    &\bigl(\phi^2(x_1)\cdot_T\ldots\cdot_T\phi^2(x_{n-1})\cdot_T\phi(y)\,X\phi(y)\bigr)\,\delta(x_n-y),\quad
    \bigl(\phi^2(x_1)\cdot_T\ldots\cdot_T\phi^2(x_n)\cdot_T\square\phi(y)\,X\phi(y)\bigr)\nonumber\\
&\bigl(\phi^2(x_1)\cdot_T\ldots\cdot_T\phi^2(x_n)\cdot_T X\phi(y)\bigr)\,\square\phi(y),\quad
\bigl(\phi^2(x_1)\cdot_T\ldots\cdot_T\phi^2(x_{n-2})\bigr)\,b\delta(x_{n-1}-y,x_n-y)\ ,
\end{align}
note that the last term is coming from 
$$\bigl(F^{\cdot_T(n-2)}\cdot_T(\Delta X)^{(2)}(F^{\ox 2})\bigr)=\bigl(F^{\cdot_T(n-2)}\bigr)\,(\Delta X)^{(2)}(F^{\ox 2})\ .$$
Now we use that 
\be\label{eq:supp-Delta}
\supp\Delta^{(n)}(x_1,\ldots,x_n;y)\subset\Delta_{n+1}\doteq\{(x_1,\ldots,x_n,y)\in \MM^{n+1}\,\vert\,x_1=\ldots=x_n=y\}\ ,
\ee
that is, on $\MM^{n+1}\setminus\Delta_{n+1}$ the contributions to 
$\Delta^{(n)}(x_1,\ldots,x_n;y)$ cancel out. This holds even on the whole manifold $\MM^{n+1}$.
To wit, looking at the causal Wick expansion of the terms given in \eqref{eq:Delta-n}, all coefficients (which are $\CC$-valued distributions depending 
on the relative coordinates) scale almost homogeneously with a degree smaller than $4n$, 
as one verifies by power counting.
Since we require that the extension of the time ordered products to the thin diagonal preserves almost homogeneous scaling, terms 
proportional to $\partial^a\delta(x_1-y,\ldots,x_n-y)$ cannot be produced.

\medskip
%From this we immediately obtain the result on 

\begin{proposition}\label{prop:scale-anomaly}
The anomaly $\zeta$ associated to the 1-parameter group $(\g^{\lambda})$ acts on \[F[\phi]=\int d^4x f(x)\phi(x)^2/2\] as
\begin{equation}\label{eq:scale}
     \zeta_{\g^{\lambda}}(F)=F+\frac{\lambda b}{32\pi^2}\int d^4x\,\,(f(x))^2 +c(\lambda)  \ ,\ \lambda\in[-1,1] 
\end{equation}
 with constant functionals $c(\lambda)$ not depending on $F$ with $c(0)=0$ and $c'(0)=0$.
 %\todo{since
% $\frac{d}{d\lambda}\vert_{\lambda=0}\zeta_{\g^\lambda}=\Delta X$ (MD)} 
 Moreover $c(\lambda)$ can be removed by a suitable renormalization.   
\end{proposition}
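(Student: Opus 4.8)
The plan is to integrate the symmetry flow and read off $\zeta_{\g^\lambda}(F)$ from the one‑loop value of the anomaly computed above. By \eqref{eq:g-zeta-1} the family $\lambda\mapsto(\g^\lambda)_\zeta\doteq\g^\lambda_L\circ\zeta_{\g^\lambda}^{-1}$ is a one‑parameter group of maps of $\Floc(M)$; since the subgroup $(\g^\lambda)$ is abelian we have $(\g^{\lambda+\mu})_\zeta=(\g^\mu)_\zeta\circ(\g^\lambda)_\zeta$, so $\Psi(\lambda)\doteq(\g^\lambda)_\zeta(F)$ solves the flow equation $\Psi'(\lambda)=V(\Psi(\lambda))$, $\Psi(0)=F$, with generator $V(G)\doteq\partial_X G+\partial_X L-\Delta X(G)$; here $\partial_X(\cdot)+\partial_X L=\frac{d}{d\lambda}\big\vert_{0}\g^\lambda_L$ and $-\Delta X=\frac{d}{d\lambda}\big\vert_{0}\zeta_{\g^\lambda}^{-1}$ by Theorem~\ref{th:AMWI-uAMWI}(ii). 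Using $(\g\h)_L=\g_L\h_L$ one also obtains $\frac{d}{d\lambda}\g^\lambda_L G=\partial_X(\g^\lambda_L G)+\partial_X L$, whence the ansatz
\[
\Psi(\lambda)=\g^\lambda_L F-\frac{\lambda b}{32\pi^2}\int d^4x\,(f(x))^2+c_1(\lambda)\ ,\qquad c_1(0)=0\ ,
\]
satisfies the flow equation exactly when $\Delta X\bigl(\g^\lambda_L F\bigr)=\frac{b}{32\pi^2}\int d^4x\,(f(x))^2-c_1'(\lambda)$.

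I would then evaluate $\Delta X(\g^\lambda_L F)=\Delta X(\g^\lambda_\ast F+\delta_{\g^\lambda}L)$. One has $\g^\lambda_\ast F=\int d^4x\,f_\lambda(x)\phi(x)^2/2$ with $f_\lambda(y)=e^{-2\lambda b}f(ye^{-\lambda b})$, and the hypothesis $\supp f\subset e^{-|b|}U$ ensures $\supp f_\lambda\subset U$ for $\lambda\in[-1,1]$, so the computation of the previous subsection applies verbatim with $f$ replaced by $f_\lambda$ and yields $\Delta X(\g^\lambda_\ast F)=\frac{b}{32\pi^2}\int d^4x\,(f_\lambda(x))^2=\frac{b}{32\pi^2}\int d^4x\,(f(x))^2$, the last equality by the scale invariance of $\int d^4x\,(f(x))^2$ in four dimensions. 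The term $\delta_{\g^\lambda}L$ is a quadratic functional controlled, since $L$ is scale invariant, by the region where $\beta$ is not constant; choosing $\beta$ constant on a large enough neighbourhood of $\supp f$ one separates it from $\g^\lambda_\ast F$, and the locality of the anomaly (the derivative of condition $(ii)$ for the cocycle) gives $\Delta X(\g^\lambda_L F)=\frac{b}{32\pi^2}\int d^4x\,(f(x))^2+k(\lambda)$ with $k(\lambda)\doteq\Delta X(\delta_{\g^\lambda}L)$ a constant functional not depending on $F$. Thus $c_1(\lambda)=-\int_0^\lambda k(s)\,ds$, and by uniqueness of the ODE solution $\Psi(\lambda)=\g^\lambda_L F-\frac{\lambda b}{32\pi^2}\int d^4x\,(f(x))^2-\int_0^\lambda k$.

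Finally, since $\zeta_{\g^\lambda}^{-1}=(\g^\lambda_L)^{-1}\circ(\g^\lambda)_\zeta=\g^{-\lambda}_L\circ(\g^\lambda)_\zeta$ and $\g^{-\lambda}_\ast\delta_{\g^\lambda}L=-\delta_{\g^{-\lambda}}L$ (from $\delta_{\g\h}L=\g_\ast\delta_\h L+\delta_\g L$), a short computation yields $\zeta_{\g^\lambda}^{-1}(F)=F-\frac{\lambda b}{32\pi^2}\int d^4x\,(f(x))^2-\int_0^\lambda k$. As $L$ and $F$ are quadratic, Proposition~\ref{prop:Z} — which applies to $\Rc_c$, cf.\ the remark after Theorem~\ref{th:AMWI-uAMWI} — tells us that $\zeta_{\g^\lambda}$ sends $F$ to itself up to an additive constant and lets constant functionals pass through, so inverting the previous identity gives $\zeta_{\g^\lambda}(F)=F+\frac{\lambda b}{32\pi^2}\int d^4x\,(f(x))^2+c(\lambda)$ with $c(\lambda)=\int_0^\lambda k(s)\,ds$. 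Then $c(0)=0$, $c'(0)=k(0)=\Delta X(\delta_{\e}L)=\Delta X(0)=0$, and $c$ depends only on $\g^\lambda$ and $L$. The constant $c(\lambda)$ stems entirely from $\delta_{\g^\lambda}L$, which is supported away from the origin; it is therefore a local artifact of the collar region and can be disposed of by a suitable finite renormalization of the time‑ordered products there — equivalently by replacing $\zeta$ with a cocycle‑equivalent one (Definition~\ref{def:Cocycle Equivalence}, Proposition~\ref{prop:betaZ-cocycle}), using that $Z\mapsto Z+\mathrm{const}$ does not change the induced automorphisms.

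The step I expect to be the main obstacle is the evaluation in the second paragraph: making rigorous, uniformly in $\lambda\in[-1,1]$, the decoupling of the genuine scale anomaly $\Delta X(\g^\lambda_\ast F)$ from the collar contribution $\Delta X(\delta_{\g^\lambda}L)$. This rests on delicate support bookkeeping — $\supp f$ must lie well inside the region on which $\beta$ is constant, and one must track how the diffeomorphism part of $\g^\lambda$ transports the supports of $\delta_{\g^\lambda}L$ and of $\partial_X L$ — and it is precisely this collar term that forces the appearance of the (harmless, $F$‑independent) constant $c(\lambda)$.
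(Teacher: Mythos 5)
Your proposal is correct and follows essentially the same route as the paper: the paper likewise verifies the ODE $-\frac{d}{d\lambda}\zeta_{\g^{\lambda}}^{-1}(F)=(\g_{\ast}^{\lambda})^{-1}\Delta X\bigl(\g^{\lambda}_L\zeta_{\g^{\lambda}}^{-1}(F)\bigr)$, splits $\Delta X(\g^\lambda_\ast F+\delta_{\g^\lambda}L)$ by locality into the scale-invariant fish-diagram term $\frac{b}{32\pi^2}\int f^2$ plus the $F$-independent constant $\Delta X(\delta_{\g^\lambda}L)$ fixing $c'(\lambda)$, and removes $c$ by passing to the equivalent cocycle $\zeta'_{\g^\lambda}=Z^{-1}\zeta_{\g^\lambda}Z^{\g^\lambda}$. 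The only imprecision is in your last clause: the removal is effected not by $Z\mapsto Z+\mathrm{const}$ (which leaves $\zeta'$ unchanged) but by choosing $Z\in\Rc_0$ with $Z(F)=F$ and $Z(\delta_{\g^{\lambda}}L)=\delta_{\g^{\lambda}}L-c(\lambda)$, for which a short computation gives $\zeta'_{\g^{\lambda}}(0)=0$.
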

\begin{proof}
Obviously, the formula \eqref{eq:scale} for $\zeta_{\g^\lambda}$ satisfies the necessary condition $\frac{d}{d\lambda}\vert_{\lambda=0}\zeta_{\g^\lambda}(F)=\Delta X(F)$.
We have to show that there exist a $c(\lambda)$ (with the mentioned properties) such that $\zeta_{\g^{\lambda}}(F)$ satisfies the 
differential equation \eqref{eq:DGL-Z-1}, that is,
\begin{equation}\label{eq:DGL-zeta-1}
   -\frac{d}{d\lambda}\zeta_{\g^{\lambda}}^{-1}(F)=({\g_{\ast}^{\lambda}})^{-1}\Delta X(\g^{\lambda}_L\zeta_{\g^{\lambda}}^{-1}(F))\ 
\end{equation}
by using that $X^\lambda=X$. 

First note that, since $Z(G+a)=Z(G)+a$ for any $Z\in\Rc_c$, $G\in\Floc(M)$ and $a\in\CC$, the assertion \eqref{eq:scale} 
can equivalently be written as
\begin{equation}\label{eq:scale-1}
     \zeta_{\g^{\lambda}}^{-1}(F)=F-\frac{\lambda b}{32\pi^2}\int d^4x\,\,(f(x))^2 -c(\lambda)  \ ,\ \lambda\in[-1,1] \ .
\end{equation}

Now we insert \eqref{eq:scale-1} into the r.h.s.~of \eqref{eq:DGL-zeta-1} and use the facts that $\Delta X(G+C)=\Delta X(G)$ for $G\in\Floc(\MM)$ 
and any constant functional $C$ (as one easily sees from \eqref{eq:anomMWI}), that $\supp\delta_{\g^{\lambda}}L\cap\supp\g^{\lambda}_{\ast}F=\0$ and that $\Delta X$ is additive on functionals with disjoint support (property $(iii)$ of $z\in\LieRc$ in Appendix \ref{app:Delta-X}) and obtain
\begin{equation}
    ({\g_{\ast}^{\lambda}})^{-1}\Delta X(\g^{\lambda}_L\zeta_{\g^{\lambda}}^{-1}(F))=({\g_{\ast}^{\lambda}})^{-1}\Delta X(\delta_{\g^{\lambda}}L)+({\g_{\ast}^{\lambda}})^{-1}\Delta X(\g^{\lambda}_{\ast}F)\ .
\end{equation}
We have $\g^{\lambda}_{\ast}F=\int d^4x\,\,\bigl(e^{-2\lambda b}f(e^{-\lambda b}x)\bigr)\,\phi(x)^2/2$, hence
\begin{equation}
    ({\g_{\ast}^{\lambda}})^{-1}\Delta X(\g^{\lambda}_{\ast}F)=
    \frac{b}{32\pi^2}\int d^4x\,\, e^{-4\lambda b}f(e^{-\lambda b}x)^2
    =\frac{b}{32\pi^2}\int d^4x\,\, f(x)^2 \ ,
\end{equation}
Inserting \eqref{eq:scale-1} also into the l.h.s.~of \eqref{eq:DGL-zeta-1}, we see that \eqref{eq:scale-1} satisfies \eqref{eq:DGL-zeta-1} iff
\be
\frac{d}{d\lambda}c(\lambda)=({\g_{\ast}^{\lambda}})^{-1}\Delta X(\delta_{\g^{\lambda}}L)\ ;
\ee
taking also into account that $c(0)=0$, $c(\lambda)$ is uniquely fixed. 
%and
%\begin{equation}
%    {\g_{\ast}^{\lambda}}^{-1}\Delta X(\delta_{\g^{\lambda}}L)\equiv %\frac{d}{d\lambda}\zeta_{\g^{\lambda}}(0)=\frac{d}{d\lambda}c(\lambda)\ .
%\end{equation}
%This verifies \eqref{eq:scale}.

To remove $c$ we use a renormalization $Z\in\Rc_0$ with
\begin{equation}\label{eq:Z-c}
    Z(F)=F\ ,\quad Z(\delta_{\g^{\lambda}}L)=\delta_{\g^{\lambda}}L-c(\lambda)\ .
\end{equation}
Since $F$ is quadratic in $\phi$, we may use the result of Prop.~\ref{prop:betaZ-cocycle} that the cocycle belonging to the renormalized 
time ordered product is the equivalent cocycle $\zeta'_{\g^{\lambda}}=Z^{-1}\zeta_{\g^{\lambda}}Z^{\g^{\lambda}}$. For the latter we find
\begin{equation}
\begin{split}
    c'(\lambda)\overset{\eqref{eq:scale}}{=}\zeta'_{\g^{\lambda}}(0)&=Z^{-1}\zeta_{\g^{\lambda}}Z^{\g^{\lambda}}(0)\\
    &=Z^{-1}\zeta_{\g^{\lambda}}({\g^{\lambda}_L})^{-1}Z(\delta_{\g^{\lambda}}L)\\
    &=Z^{-1}\zeta_{\g^{\lambda}}({\g^{\lambda}_L})^{-1}\bigl(\delta_{g^{\lambda}}L-c(\lambda)\bigr)\\
    &=Z^{-1}\zeta_{\g^{\lambda}}\bigl(\delta_{(\g^{\lambda})^{-1}}L+({\g^{\lambda}_{\ast}})^{-1}\delta_{\g^{\lambda}}L-c(\lambda)\bigr)\\
    &=Z^{-1}\zeta_{\g^{\lambda}}\bigl(-c(\lambda)\bigr)=0
\end{split}
\end{equation}
where we used
\begin{equation}
    \delta_{(\g^{\lambda})^{-1}}L+({g_{\ast}^{\lambda}})^{-1}\delta_{\g_{\lambda}}L=0\quad\text{and}\quad 
    \zeta_{\g^{\lambda}}\bigl(0-c(\lambda)\bigr)=\zeta_{\g^{\lambda}}\bigl(0\bigr)-c(\lambda)=0\ .
\end{equation}
This concludes the proof.
\end{proof}

We can also compute the action of global scaling transformations using the concepts of Section \ref{sec:RGF}. Namely, let $\h^{\lambda}\in\HL$ act on configurations $\phi$
as
\begin{equation}
    \h^\lambda \phi(x)=e^{\lambda b}\phi(xe^{\lambda b})\ ,\ x\in\MM\ .
\end{equation}
Then, given $f$ and $\lambda$, we choose $U$ such that $\supp f\subset e^{-|\lambda b|}U$ and find for all such $U$
\begin{equation}
  \theta_{\h^{\lambda}}(F)\overset{\eqref{eq:theta-h}}{=}\zeta_{\g^{\lambda}}(F)-\zeta_{\g^{\lambda}}(0)=F+\frac{\lambda b}{32\pi^2}\int d^4x\,\,f(x)^2\ .  
\end{equation}
The induced change of the Lagrangian vanishes, 
since for $g\in\Dc(\MM,\RR)$ we have 
\begin{equation}
    \delta_{\zeta,\h^{\lambda}}L(g)[\phi]\overset{\eqref{eq:d-zeta-h-L}}{=}
    \zeta_{\g^{\lambda}}(0)[g\phi]-\zeta_{\g^{\lambda}}(0)[0]=0\ .
\end{equation}
Finally, we may look at the anomalous Noether Theorem \ref{theorem:anomalousnoethertheorem} which simplifies in our case since the transformation $\g^{\lambda}$ does not change the causal structure. Following the proof of that theorem, we split 
\begin{equation}
    \delta_{\g^{\lambda}}L=Q_++Q_-
\end{equation}
such that $\supp Q_+\cap J_-(e^{|\lambda b|}U)=\0$ and $\supp Q_-\cap J_+(e^{|\lambda b|}U)=\0$ and find for $\supp f\subset U$ and $\beta(x)=b$ for $x\in e^{|\lambda b|}U$
\begin{equation}
S(\theta_{\h^{\lambda}}(F))=S(Q_-)^{-1}S(h^{\lambda}_{\ast}F)S(Q_-) \ .
\end{equation}

\subsection{Axial anomaly}
Another famous example of an anomaly is the axial anomaly. For a massless Dirac field $\psi$ in  $4$-dimensional Minkowski space the axial current
\be
j^a_{\mu}\doteq\overline{\psi}\gamma_{\mu}\gamma^5\psi
\ee
is conserved as a consequence of the Dirac equation. It is the Noether current corresponding to the symmetry
\be
\g\psi(x)\doteq e^{i\alpha(x)\gamma^5}\psi(x),\quad \g\overline{\psi}(x)\doteq\overline{\g\psi(x)}=\overline{\psi}(x)\,e^{i\alpha(x)\gamma^5},
\quad\alpha\in\Dc(\MM,\RR) \ ,
\ee
namely with the free Lagrangian
\be
L\doteq i\overline{\psi}\gamma^\mu\partial_\mu \psi
\ee
we have
\be\label{eq:dgL}
\delta_{\g}L=\int d^4x\,\, \alpha(x)\partial^{\mu}j^a_{\mu}(x)\ .
\ee
We compute the anomaly $\zeta_{\g}$ on $F=\int d^4x\,\, j_{\mu}(x)A^{\mu}(x)$ with the vector current $j_{\mu}=\overline{\psi}\gamma_{\mu}\psi$ and an external electromagnetic potential $A^{\mu}\in\Dc(\MM,\RR^4)$. 
% and we assume
% \be\label{eq:alpha=a}
% \alpha\big\vert_{\supp A^\mu}\doteq a\in\RR\ .
% \ee
As shown in \cite{BDFR21} the formalism of the present paper can be extended to Fermi fields by adding external Grassmann parameters in an appropriate way. For quadratic expressions in the basic Dirac field these parameters are not needed.

As in the case of scaling we choose a 1-parameter group $\g^{\lambda}$ with 
$\g^{\lambda}\psi(x)\doteq e^{i\lambda\alpha(x)\gamma^5}\psi(x)$. Its generator $X$ acts on $\psi$ and $\overline{\psi}$ as
\be
X\psi(x)=\frac{d}{d\lambda}\Big\vert_{\lambda=0}\g^{\lambda}\psi(x)=i\alpha(x)\gamma^5\psi(x)\ ,\quad
X\overline{\psi}(x)=\overline{\psi}(x)i\alpha(x)\gamma^5\ .
\ee
%We have $\partial_X F=0$ and $\partial_X L=\delta_{\g}L$. 
Since $\g^\lambda_\ast F=F$ we obtain $\partial_X F=\frac{d}{d\lambda}\vert_{\lambda=0}\,\g^\lambda_\ast F=0$ and, by using \eqref{eq:dgL}, we get
$\partial_X L=\frac{d}{d\lambda}\vert_{\lambda=0}\,\delta_{\g^\lambda}L=\delta_{\g}L$.
%\be
%\partial_X F=\int d^4x\,\Bigl(\frac{\delta_r F}{\delta\psi(x)}\,X\psi(x)+X\overline{\psi}(x)\,\frac{\delta F}{\delta\overline{\psi}(x)}\Bigr)
%=0
%\ee
%($\delta_r/\delta\psi(x)$ denotes the functional derivative from the r.h.s.)
%and analogously $\partial_X L=\delta_{\g}L$. 
Since $\g^{\lambda_1}\g^{\lambda_2}=\g^{\lambda_1+\lambda_2}$ it holds that
\be\label{eq:X-lambda}
X^\lambda\g^\lambda\phi(x)=\frac{d}{d\lambda'}\Big\vert_{\lambda'=0}\g^{\lambda'}\g^{\lambda}\phi(x)=X\g^\lambda\phi(x),\quad
\text{that is,}\quad X^\lambda=X.
\ee

As in the case of scaling only the divergent loop graphs contribute to the anomaly. We consider the distributions $D_{\bullet}\in\Dc'(\RR^{4n})$
\be
D_{\mu_1,\dots,\mu_n;\nu}(x_1-y,\dots,x_n-y)\doteq \bigl(j_{\mu_1}(x_1)\cdot_T \dots\cdot_T j_{\mu_n}(x_n)\cdot_T j^a _{\nu}(y)\bigr)^c[0]\ , 
\ee
the upper index ``c'' means that we select the contribution of all connected diagrams.
By Furry's theorem (which is a consequence of charge conjugation invariance of the time ordered product) $D_{\mu_1,\dots,\mu_n;\nu}$ vanishes 
for $n$ odd. By using the inductive Epstein-Glaser construction of the time ordered product, one shows
that the divergence of $D_{\mu_1,\dots,\mu_n;\nu}$ with respect to $\nu$ and $y$ vanishes outside of the origin and is therefore a derivative of the $\delta$-function
\be
\sum_{i=1}^n\partial^{\nu}_{x_i}D_{\mu_1,\dots,\mu_n;\nu}\doteq p_{\mu_1,\dots,\mu_n}(\partial)\delta
\ee
where $p_{\bullet}$ is a family of homogeneous polynomials of the partial derivatives $\partial^{\nu_i}_{x_i}$ with degree $3n+4-4n=4-n$. It is symmetric under permutations of the index $i$ and odd under parity. The only nontrivial case is $n=2$ where %$p_{\mu\nu}(\partial)=c\epsilon_{\mu\nu\rho\sigma}\partial^{\rho}_x\partial^\sigma_y$ 
$p_{\mu_1\mu_2}(\partial)=c\epsilon_{\mu_1\mu_2\rho\sigma}\partial^{\rho}_{x_1}\partial^\sigma_{x_2}$
with some constant $c\in\RR$. An explicit calculation yields $c=\frac{1}{2\pi^2}$ 
(for a derivation of this result based on the Epstein-Glaser method, 
see \cite{DKS91}), under the condition that the Ward identities for the vector current are satisfied,
{\it i.e.}, $\partial^{\mu_i}_{x_i}D_{\mu_1,\dots,\mu_n;\nu}=0$ for all $1\leq i\leq n$.

We compute $\Delta X$ as in the previous section and find
\be
\begin{split}
    \Delta X(F)&=-\frac12 F\cdot_T F\cdot_T\partial_XL[0]\\
    &=-\frac12\int d^4x_1d^4x_2d^4y\,\,A^{\mu_1}(x_1)A^{\mu_2}(x_2)\alpha(y)\,\partial_y^\nu D_{\mu_1,\mu_2;\nu}(x_1-y,x_2-y)\\
    &=\frac12\int d^4x_1d^4x_2d^4y\,\,A^{\mu_1}(x_1)A^{\mu_2}(x_2)\alpha(y)\,\,c\epsilon_{\mu_1\mu_2\rho\sigma}\partial^\rho_{x_1}
    \partial^{\sigma}_{x_2}\delta(x_1-y)\delta(x_2-y)\\
    &=\frac{c}{2}\int d^4y\,\,\alpha(y)(\partial^{\rho}A^{\mu_1})(y)(\partial^{\sigma}A^{\mu_2})(y)\epsilon_{\mu_1\mu_2\rho\sigma}\\
    &=-\frac{1}{16\pi^2}\int \alpha\, dA\wedge dA
\end{split}
\ee
with the 1-form $A\doteq A_{\mu}dx^{\mu}$.

In order to compute $\zeta_{\g^{\lambda}}(F)$ we also have to determine $\Delta X(\g^{\lambda}_L F)$. We first study
$\Delta X(\delta_{\g^{\lambda}}L)$. Also here the only possible contribution comes from the triangle diagram
\begin{equation}\label{eq:axial-triangle}
    \begin{split}
        \Delta X(\delta_{\g^{\lambda}}L)&=-\frac12(\delta_{\g^{\lambda}}L\cdot_T \delta_{\g^{\lambda}}L\cdot_T\partial_X L)^c[0]\\
        &=\lambda^2\int d^4xd^4yd^4z\,\,\alpha(x)\alpha(y)\alpha(z)\,\partial^\mu_x\partial^\nu_y\partial^\rho_z\bigl(j_\mu^a(x)\cdot_T j_\nu^a(y)\cdot_Tj_\rho^a(z)\bigr)^c[0]
    \end{split}
\end{equation}
since for an even number of factors $j^a$ the correlation functions can be renormalized to coincide with those of the vector current, and thus their divergences vanish, and for 5 factors the divergence of the correlation function with respect to one factor is of the form
\be
a_{\mu\nu\rho\sigma}\delta
\ee
with a tensor of 4th order which is symmetric with odd parity, hence $a_{\bullet}=0$.
The divergence of the triangle diagram does not vanish, actually it is of the form
\begin{equation}
   \partial^\rho_z\bigl(j_\mu^a(x)\cdot_T j_\nu^a(y)\cdot_Tj_\rho^a(z)\bigr)^c[0]=-\frac{1}{6\pi^2}\epsilon_{\mu\nu\sigma_1\sigma_2}\partial^{\sigma_1}_{x}\partial^{\sigma_2}_y\delta(x-z,y-z) 
\end{equation}
(see, \eg, \cite{DKS91}). Inserting it into \eqref{eq:axial-triangle} we see that also this term does not contribute, \ie $\Delta X(\delta_{\g^{\lambda}}L)=0$.

Also mixed terms which possibly could contribute to the value of $\Delta X$ on the sum  $G\doteq\g^{\lambda}_L F=F+\delta_{\g^{\lambda}}L$ vanish.
To explain this first note that $\partial_X G=\partial_X \delta_{\g^{\lambda}}L=0$, since $\g^\lambda_\ast j^a=j^a$. 
In second order in $G$ the statement follows 
from $\bigl(j\cdot_T j^a\cdot_T \partial j^a\bigr)^c[0]=\bigl(j\cdot_T j\cdot_T \partial j\bigr)^c[0]=0$
by Furry Theorem. For $n>2$ we obtain
\be
(\Delta X)^{(n)}(G^{\ox n})[0]=(\Delta X)^{(n)}(G^{\ox n})^c[0]=i^n\bigl(G^{\cdot_T n}\cdot_T\partial_X L\bigr)^c[0]\ ,
\ee
where we first use that $(\Delta X)^{(n)}$ is supported on the thin diagonal (see \eqref{eq:supp-Delta})
and then \eqref{eq:anomMWI}. In addition we have taken into account that 
(since $(\Delta X)^{(2)}(F^{\ox 2})\in\RR$) there is no connected diagram contributing to 
$\bigl(G^{\cdot_T (n-2)}\cdot_T(\Delta X)^{(2)}(F^{\ox 2})\bigr)=\bigl(G^{\cdot_T (n-2)}\bigr)\,(\Delta X)^{(2)}(F^{\ox 2})$.
By suitable renormalization (respecting the mentioned renormalization conditions) one can reach that
\be
\bigl(j(x_1)\cdot_T\ldots\cdot_T j(x_k)\cdot_T j^a(x_{k+1})\cdot_T\ldots\cdot_T j^a(x_n)\cdot_T \partial j^a(y)\bigr)^c[0]=0\quad\text{for}\quad n>2\ .
\ee
Hence,
\be
\Delta X(G)=\frac{1}2(\Delta X)^{(2)}(F^{\ox 2})=\Delta X(F)\ .
\ee
So we obtain
\begin{equation}
    {\g^{\lambda}_{\ast}}^{-1}\Delta X\g^{\lambda}_L(F)=\Delta X(F)\ .
\end{equation}

Proceeding analogously to the scaling anomaly (Prop.~\ref{prop:scale-anomaly}) we can now %easily 
solve the differential equation for $\zeta_{\g^{\lambda}}(F)$ and obtain
\be
\zeta_{\g^\lambda}(F)=F-\frac{\lambda }{16\pi^2}\int \alpha \, dA\wedge dA\ .
\ee
% In particular we use that $\supp\delta_{\g^{\lambda}}L\subset\supp\partial\alpha$ (by \eqref{eq:dgL}) and, hence,
% we again have $\supp\delta_{\g^{\lambda}}L\cap\supp\g^{\lambda}_{\ast}F=\0$ by using \eqref{eq:alpha=a}.

%%%%%%%%%%%%%%%%%%%%%
\section{Conclusions and outlook}
In the program of constructing algebraic quantum field theories \cite{BF19} we succeeded in incorporating both aspects of causality: the causal independence of spacelike separated regions as well as a dynamical law by which future and past in a region of causal dependence are fixed. We have proven the time-slice axiom and constructed the general expression for the relative Cauchy evolution. We have used the latter to obtain the stress-energy tensor as an unbounded operator, improving on the results of \cite{BFV03}, where only the derivation obtained as the commutator with the stress-energy tensor could be reconstructed.

In addition to a classical Lagrangian which fixes the dynamics only in the case of the free theory and the subalgebra of Weyl operators, we introduced a cocycle on a group of classical symmetries with values in the renormalization group. Together with the Lagrangian this specifies the dynamics. Moreover, it describes whether classical symmetries of the Lagrangian are unbroken in the quantized theory, and allows a direct characterization of the renormalization group flow induced by anomalies. This means that anomalies appear when a classical symmetry is broken in the process of quantization and the departure from the classical expression is quantified in terms of a certain renormalization group cocycle. The transformation  of the S-matrix obtained this way may be interpreted as a \emph{quantum} symmetry arising from  the classical symmetry modified by the cocycle. Such quantum symmetries can then be unitarily implemented, which is the content of our \emph{anomalous Noether theorem}. We have also shown that in perturbation theory, the derivative of our cocycle is related to the BV Laplacian or the anomaly term in the perturbative anomalous Master Ward identity. This emphasizes the fact that our formulation indeed allows one to upgrade classical symmetries to quantum symmetries and their relative difference is reflected by the presence of anomalies.

There is one essential point missing in our construction, namely the implication of the spectrum condition, related to the existence of a vacuum, stability of states \etc \cite{Borchers}. There have been various attempts to understand this implication for the structure of the algebra, starting from Sergio Doplicher's “algebraic spectrum condition" \cite{Dop65}, including Rainer Verch's approach to an algebraic concept of wave front sets \cite{Verch}, but it is fair to say that there is not yet a fully satisfactory answer. From our experience with perturbation theory we know that the spectrum condition imposes constraints on the choice of cocycles which lead to the occurence of anomalies. Ignoring the slight difference between $\Rc(M,L_0)$ and the St\"uckelberg-Petermann renormalization group (see Remarks \ref{rm:SP} 
and \ref{rm:main-thm}), we strongly presume that, due to the main theorem of renormalization \cite{PopineauS16,DF04,BDF09} and 
Prop.~\ref{prop:betaZ-cocycle}, the equivalence class of the cocycle is uniquely fixed.
We may therefore formulate the remaining open problem in
the algebraic construction of quantum field theories as the problem to determine this equivalence class. 

In this paper we treated only scalar theories, but we included also an example of computation of an anomaly for fermions along the line of \cite{BDFR21}. It would be desirable to cover also gauge theories.
%. Here an obstruction is the large reduncancy of the description requiring via the BV-BRST-formalism algebraic methods which are quite distinct from our C*-algebraic framework. 
We plan to return to this problem in future work.

%%%%%%%%%%%%%%%%%%%%%%%%%%%%%%%
\begin{appendix}
\section{Functionals and generalized fields}\label{app:gen-field}
\begin{definition}\label{eq:def-F-A(f)}
The (functional) support of a map $F:\Ec(M)\to\CC$
is the smallest closed subset $N$ of $M$ such that $F[\phi+\psi]=F[\phi]$
for all $\phi,\psi\in\Ec(M)$ with $\supp\psi\cap N=\emptyset$.
%\be\label{eq:supp(F)}
%\supp F=\{x|\forall \text{ neighborhoods }V\text{ of }x\ \exists \phi\psi\in\Ec\ ,\ \supp\psi\subset V\text{ such that }F[\phi]\neq %F[\phi+\psi]\}\ .
%\ee 

A local functional $F\in\Floc(M)$ is a map $F:\Ec(M)\to\RR$ with compact support which satisfies the Hammerstein relation
\be\label{eq:Hammer}
F(\phi+\chi+\psi)=F(\phi+\chi)-F(\chi)+F(\chi+\psi)
\ee
for $\phi,\chi,\psi\in\Ec(M)$ with $\supp\phi\cap\supp\psi=\0$.

A generalized field is a map $A:\Dc(M)\to\Floc(M)$ with $\supp A(f)\subset\supp f$ such that
\be
A(f+g+h)=A(f+g)-A(g)+A(g+h)
\ee
whenever $\supp f\cap\supp h=\0$. Two generalized fields $A, A'$ are equivalent if
\be
\supp (A-A')(f)\subset\supp(f-1)\quad\forall f\in\Dc(M)\  .
\ee
The support of a generalized field $A$ is defined by
\be
\supp A=\{x\,|\,x\in\supp A(f)\ \forall \ f\equiv 1 \text{ near }x\}
\ee
\end{definition}

Note that $x\in\supp A(f)$ for every $f$ with $f\equiv 1$ near $x$ if $x\in\supp A(f')$ for some $f'$ with $f'\equiv 1 $ near $x$. 
Namely we can split $f'=f_0+f_1+f_2$ with $f=f_0+f_1$ and $x\not\in\supp f_{1,2}$ and $\supp f_2\cap\supp f_0=\0$. Then
\be\label{eq:A(f')}
A(f')=A(f)-A(f_1)+A(f_1+f_2)
\ee
and 
\begin{align*}
\supp A(f')&\subset\supp A(f)\cup\supp A(f_1)\cup\supp A(f_1+f_2)\\
&\subset \supp A(f)\cup\supp f_1\cup\supp f_2\ ,
\end{align*}
hence $x\in\supp A(f)$ if $x\in\supp A(f')$.

Obviously, for a generalized field $A$ it holds that
\be
\supp A\subset\overline{\bigcup_{f\in\Dc(M)}\supp A(f)}\ .
\ee

\begin{proposition}
Equivalent generalized fields have the same support.
\end{proposition}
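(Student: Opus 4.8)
The statement to prove is: if $A$ and $A'$ are equivalent generalized fields, then $\supp A = \supp A'$. By symmetry of the equivalence relation it suffices to show $\supp A \subseteq \supp A'$, i.e.\ that $x \notin \supp A'$ implies $x \notin \supp A$. Recall that $\supp A = \{x \mid x \in \supp A(f)\ \forall f \equiv 1 \text{ near } x\}$, and (by the remark preceding the proposition) it is equivalent to check: $x \in \supp A(f')$ for \emph{some} $f'$ with $f' \equiv 1$ near $x$. So the plan is: assume $x \notin \supp A'$ and produce a single test function $f$ with $f \equiv 1$ near $x$ and $x \notin \supp A(f)$.

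First I would unwind what $x \notin \supp A'$ gives us: there exists some $g \in \Dc(M)$ with $g \equiv 1$ near $x$ and $x \notin \supp A'(g)$. Next, use the equivalence hypothesis $\supp(A - A')(f) \subset \supp(f-1)$ for all $f$. The natural choice is to feed in a test function $f$ that equals $1$ on a neighbourhood of $x$ that is still large enough that $g$ also equals $1$ there; more precisely, pick $f$ with $f \equiv 1$ near $x$ and with $\supp(f - 1)$ disjoint from a small neighbourhood $U$ of $x$ on which $g \equiv 1$ as well. Then on $U$ we have both $f \equiv 1$ and $g \equiv 1$, so in particular $x \notin \supp(f-1)$, whence $x \notin \supp(A - A')(f)$.

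Now combine the pieces: $A(f) = A'(f) + (A - A')(f)$, so $\supp A(f) \subseteq \supp A'(f) \cup \supp(A-A')(f)$. We already have $x \notin \supp(A - A')(f)$. It remains to arrange $x \notin \supp A'(f)$. For this I would use the generalized-field axiom for $A'$ to relate $A'(f)$ to $A'(g)$: writing $f$ as a sum $f = g_0 + f_1$ with $g = g_0 + g_1$ and $x$ outside the supports of $f_1$ and $g_1$ (and $\supp g_1 \cap \supp g_0 = \emptyset$), exactly as in the displayed computation \eqref{eq:A(f')} just before the proposition, one gets $\supp A'(f) \subseteq \supp A'(g) \cup \supp g_1 \cup \supp f_1$, none of which contains $x$. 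Hence $x \notin \supp A'(f)$, and combining, $x \notin \supp A(f)$ for this particular $f$ with $f \equiv 1$ near $x$, which by the remark gives $x \notin \supp A$.

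The main obstacle is purely bookkeeping: carefully choosing the neighbourhoods and the partition of $f$ (and of $g$) so that all three conditions hold simultaneously — $f \equiv 1$ near $x$, $\supp(f-1) \cap U = \emptyset$ where $g\equiv 1$ on $U$, and the Hammerstein-type splitting of $f$ relative to $g$ keeps $x$ out of the small pieces. This is routine once one fixes a chain of shrinking open neighbourhoods of $x$; no analytic input is needed beyond the defining axioms of a generalized field and the definition of support. I expect the whole argument to be a few lines.
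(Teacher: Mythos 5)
Your argument is correct and rests on the same key step as the paper's proof: the inclusion $\supp A(f)\subset\supp A'(f)\cup\supp(A-A')(f)\subset\supp A'(f)\cup\supp(f-1)$ together with $x\notin\supp(f-1)$ for $f\equiv 1$ near $x$. The only difference is that you introduce a second test function and invoke the splitting argument \eqref{eq:A(f')} to pass from $g$ to $f$, whereas the paper simply applies the inclusion to the single witness $f_0$ (here one may just take $f=g$), so your bookkeeping step is harmless but unnecessary.
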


\begin{proof}
%Let $A'$ be equivalent to $A$. Then $\supp(A-A')(f)\subset\supp(f-1)$ and hence
%\be
%x\not\in\supp A\ \Longrightarrow \exists \ f\equiv1\text{ near }x\ \text{ such that } \supp A(f)\not\ni x 
%\ee 
%and thus
%\be
%\supp A'(f)\subset\supp A(f)\cup\supp(A'-A)(f))\subset\supp A(f)\cup\supp(f-1)\not\ni x\ .
%\ee
%We conclude that equivalent generalized fields have the same support.
Let $A'$ be equivalent to $A$ and $x\not\in\supp A$. Then there exists an $f_0\equiv1$ near $x$ 
such that $x\not\in \supp A(f_0)$. Since
$$
\supp A'(f_0)\subset\supp A(f_0)\cup\supp(A'-A)(f_0))\subset\supp A(f_0)\cup\supp(f_0-1)\ ,
$$
we see that $x\not\in\supp A'(f_0)$, hence $x\not\in\supp A'$.
\end{proof}

We can characterize equivalence classes of generalized fields by their relative action.
\begin{definition}
Let $A$ be a generalized field. The relative action of $A$ is a map $\delta A:\Dc(M)\to\Floc(M)$ defined by
\be
\delta A(\psi)[\phi]=A(f)[\phi+\psi]-A(f)[\phi] \ ,\ f\equiv 1 \text{ on }\supp \psi\ .
\ee
\end{definition}

To see that this definition does not depend on the choice of $f$, let $f'$ be another choice. We split
$f'=f_0+f_1+f_2$ with $f=f_0+f_1$ and $\supp\psi\cap\supp f_{1,2}=\emptyset$ and $\supp f_2\cap\supp f_0=\0$. Then,
the relation \eqref{eq:A(f')} holds true; and since $\supp A(f_1)\subset\supp f_1$ and 
$\supp A(f_1+f_2)\subset\supp f_1\cup\supp f_2$, we have $A(f_1)[\phi+\psi]=A(f_1)[\phi]$ and 
$A(f_1+f_2)[\phi+\psi]=A(f_1+f_2)[\phi]$. This yields the assertion.

\begin{proposition}
Two generalized fields are equivalent iff their relative actions are equal.
\end{proposition}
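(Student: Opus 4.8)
The plan is to prove both implications of the equivalence ``$A\sim A'$ if and only if $\delta A=\delta A'$'' directly from the definitions, using the splitting trick that already appeared twice in the appendix (once for $A(f')$ in \eqref{eq:A(f')} and once for the independence of $\delta A(\psi)$ from the choice of $f$).

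For the ``only if'' direction, suppose $A\sim A'$, i.e.\ $\supp(A-A')(f)\subset\supp(f-1)$ for all $f\in\Dc(M)$. Fix $\psi\in\Dc(M)$ and choose $f\equiv 1$ on $\supp\psi$. Then $(A-A')(f)$ has support contained in $\supp(f-1)$, which is disjoint from $\supp\psi$; hence by the definition of the functional support of a local functional (Definition~\ref{eq:def-F-A(f)}), $(A-A')(f)[\phi+\psi]=(A-A')(f)[\phi]$ for all $\phi$. Rearranging, $A(f)[\phi+\psi]-A(f)[\phi]=A'(f)[\phi+\psi]-A'(f)[\phi]$, which is precisely $\delta A(\psi)[\phi]=\delta A'(\psi)[\phi]$. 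Since $\psi$ was arbitrary, $\delta A=\delta A'$.

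For the ``if'' direction, suppose $\delta A=\delta A'$ and fix $f\in\Dc(M)$. I want to show $(A-A')(f)[\phi+\psi]=(A-A')(f)[\phi]$ for every $\psi$ with $\supp\psi\cap\supp(f-1)=\emptyset$; this will give $\supp(A-A')(f)\subset\supp(f-1)$ by definition of functional support. The difficulty is that $\delta A(\psi)$ is defined using a test function equal to $1$ on $\supp\psi$, while here we have the fixed $f$, which is only equal to $1$ off $\supp(f-1)$ — but since $\supp\psi\cap\supp(f-1)=\emptyset$, $f$ itself \emph{is} equal to $1$ on a neighbourhood of $\supp\psi$ (shrinking $\psi$'s support slightly, or simply noting $f\equiv 1$ on the open set $M\setminus\supp(f-1)\supset\supp\psi$). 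Therefore $f$ is an admissible choice in the definition of $\delta A(\psi)$ and of $\delta A'(\psi)$, so $\delta A(\psi)[\phi]=A(f)[\phi+\psi]-A(f)[\phi]$ and likewise for $A'$; subtracting and using $\delta A(\psi)=\delta A'(\psi)$ yields $(A-A')(f)[\phi+\psi]=(A-A')(f)[\phi]$, as desired.

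The main obstacle — really the only subtle point — is the bookkeeping in the ``if'' direction: making sure that the fixed $f$ qualifies as the auxiliary test function in the definition of the relative action whenever $\supp\psi$ avoids $\supp(f-1)$, and conversely that ``$f\equiv 1$ on $\supp\psi$'' in the definition of $\delta A$ is compatible with ``$\supp\psi\cap\supp(f-1)=\emptyset$''. These are the same condition (up to the harmless distinction between $f\equiv 1$ on a closed set versus on an open neighbourhood of it, which is absorbed by the independence of $\delta A(\psi)$ from the choice of $f$ already established above). Everything else is a one-line rearrangement, so the proof will be short; I would present it as the two implications above, each a single short paragraph.
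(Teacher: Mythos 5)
Your proposal is correct and follows essentially the same route as the paper's own proof: the ``only if'' direction is identical, and your ``if'' direction is the paper's pointwise argument (for $x\notin\supp(f-1)$, use $\psi$ supported near $x$ where $f\equiv 1$) phrased globally for all $\psi$ with $\supp\psi\cap\supp(f-1)=\emptyset$. The bookkeeping point you flag — that the fixed $f$ qualifies as the auxiliary test function in the definition of $\delta A(\psi)$ — is exactly the observation the paper relies on as well.
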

\begin{proof}
Let $A,A'$ be generalized fields. If they are equivalent, then $\supp (A-A')(f)\subset\supp(f-1)$ for all $f$. 
Let $f\equiv 1$ on $\supp\psi$. We then have
\be
\delta A(\psi)-\delta A'(\psi)=A(f)[\bullet+\psi]-A(f)-A'(f)[\bullet+\psi]+A'(f)
=(A-A')(f)[\bullet+\psi]-(A-A')(f)=0
\ee 
since $\supp (f-1)\cap\supp \psi=\0$. 

Let on the other side $\delta A=\delta A'$, $f\in\Dc(M)$ arbitrary and $x\not\in\supp (f-1)$. 
There exists a neighborhood $V$ of $x$ such that $f\equiv1$ on $V$.
Then for all $\psi$ with $\supp\psi\subset V$
\be
(A-A')(f)[\bullet+\psi]=\delta A(\psi)-\delta A'(\psi)+(A-A')(f)=(A-A')(f)
\ee
hence $x\not\in\supp(A-A')(f)$, \ie $A$ and $A'$ are equivalent.
\end{proof}
The proposition leads to a criterion for the support of a generalized field:

\begin{proposition}\label{prop:suppA=suppdA}
Let $A$ be a generalized field and let $\supp\delta A$ be 
the smallest closed subset $N$ of $M$ such that $\delta A(\psi)=0$ for all $\psi\in\Dc(M\setminus N)$. 
Then it holds that  
\be
\supp A=\supp\delta A\ .
\ee
%\be
%x\in\supp A\Longleftrightarrow \forall{ neighborhoods  }\ V\text{ of }x\ \exists \psi\ ,\ \supp\psi\subset V\text{ such that }\delta %A(\psi)\neq0\ .
%\ee
\end{proposition}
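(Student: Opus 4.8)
The plan is to prove the two inclusions $\supp A\subseteq\supp\delta A$ and $\supp\delta A\subseteq\supp A$ separately, using the characterization of $\supp A$ via $f\equiv 1$ near a point together with the Hammerstein/cocycle relation for generalized fields. Both inclusions will follow from essentially elementary manipulations of the defining relation $A(f+g+h)=A(f+g)-A(g)+A(g+h)$ and the fact, established just above, that $\supp A(f)\subseteq\supp f$.

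First I would show $\supp\delta A\subseteq\supp A$. Suppose $x\notin\supp A$. Then there is some $f_0\in\Dc(M)$ with $f_0\equiv 1$ near $x$, say $f_0\equiv 1$ on an open neighbourhood $V$ of $x$, such that $x\notin\supp A(f_0)$. Pick any $\psi\in\Dc(V)$. Since $f_0\equiv 1$ on $\supp\psi$, we may use $f_0$ to compute $\delta A(\psi)$: $\delta A(\psi)[\phi]=A(f_0)[\phi+\psi]-A(f_0)[\phi]$. But $\supp A(f_0)$ does not contain $x$ and is contained in $\supp f_0$; shrinking $V$ if necessary we can arrange that $\supp\psi\cap\supp A(f_0)=\emptyset$ — more carefully, since $x\notin\supp A(f_0)$ there is an open neighbourhood $W\subseteq V$ of $x$ disjoint from $\supp A(f_0)$, and for $\psi\in\Dc(W)$ we then have, by the definition of the functional support of the local functional $A(f_0)$, that $A(f_0)[\phi+\psi]=A(f_0)[\phi]$, hence $\delta A(\psi)=0$. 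Thus $\delta A(\psi)=0$ for all $\psi\in\Dc(W)$, so $x\notin\supp\delta A$. This proves $\supp\delta A\subseteq\supp A$.

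For the reverse inclusion $\supp A\subseteq\supp\delta A$, suppose $x\notin\supp\delta A$, so there is an open neighbourhood $U$ of $x$ with $\delta A(\psi)=0$ for all $\psi\in\Dc(U)$. I want to produce an $f$ with $f\equiv 1$ near $x$ and $x\notin\supp A(f)$. Choose $f$ with $f\equiv 1$ on a small neighbourhood of $x$ and $\supp f\subseteq U$; I claim $\supp A(f)\cap U'=\emptyset$ for a suitable neighbourhood $U'\subseteq U$ of $x$, which gives $x\notin\supp A(f)$ and hence $x\notin\supp A$. To see that $A(f)$ has no support in $U$, take any $\psi$ with $\supp\psi\subseteq U$; since $f\equiv 1$ on a neighbourhood of $\supp\psi$ is \emph{not} automatic, I instead argue by splitting. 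Write $f=f_1+f_2$ where $f_1\equiv 1$ on a neighbourhood of $\supp\psi$, $\supp f_1\subseteq U$, and $\supp f_2\cap\supp\psi=\emptyset$, with moreover $\supp f_1\cap\supp f_2=\emptyset$ arranged by choosing the pieces appropriately (possible since $f\equiv1$ near $x$ and we may take $\supp\psi$ inside the region where $f=1$). Then the cocycle relation gives $A(f)=A(f_1+f_2)=A(f_1)-A(\emptyset)+A(f_2)$ with $A(\emptyset)=0$ (apply the relation with all arguments $0$), so $A(f)[\phi+\psi]-A(f)[\phi]=\big(A(f_1)[\phi+\psi]-A(f_1)[\phi]\big)+\big(A(f_2)[\phi+\psi]-A(f_2)[\phi]\big)$. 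The second bracket vanishes because $\supp A(f_2)\subseteq\supp f_2$ is disjoint from $\supp\psi$; the first bracket equals $\delta A(\psi)$ since $f_1\equiv1$ on $\supp\psi$, and $\delta A(\psi)=0$ because $\supp\psi\subseteq U$. Hence $A(f)[\phi+\psi]=A(f)[\phi]$ for all such $\psi$, so by definition of functional support $x\notin\supp A(f)$, and therefore $x\notin\supp A$.

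The main obstacle, and the point requiring the most care, is the bookkeeping in the second inclusion: arranging the decomposition $f=f_1+f_2$ so that simultaneously (a) $f_1\equiv 1$ on a neighbourhood of $\supp\psi$ so that $f_1$ is a legitimate choice in the definition of $\delta A(\psi)$, (b) $\supp f_2$ is disjoint from $\supp\psi$ so the $A(f_2)$ term is killed by the support bound, and (c) $\supp f_1\cap\supp f_2=\emptyset$ so the cocycle relation applies. This is purely a partition-of-unity argument exploiting that $f\equiv 1$ on a full neighbourhood of $x$ and that we are free to shrink $\supp\psi$ around $x$; I would spell it out by choosing nested open sets $x\in U'\Subset U''\Subset U$ with $f\equiv1$ on $U''$, taking $\supp\psi\subseteq U'$, and letting $f_1$ be $f$ times a cutoff equal to $1$ on $U'$ and supported in $U''$, $f_2=f-f_1$. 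Everything else is a routine application of the already-established facts $\supp A(g)\subseteq\supp g$ and $A(\emptyset)=0$ together with the definition of the relative action.
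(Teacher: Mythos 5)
Your overall strategy is the paper's own argument run in contrapositive form: both inclusions rest on the identity $A(f)[\phi+\psi]-A(f)[\phi]=\delta A(\psi)[\phi]$, valid whenever $f\equiv 1$ on $\supp\psi$, combined with the local characterization of functional support (insensitivity to perturbations supported near a point). Your first inclusion, $\supp\delta A\subseteq\supp A$, is correct as written and matches what the paper dismisses as ``essentially the same argument.''

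The second inclusion, however, contains a step that fails as stated. You decompose $f=f_1+f_2$ with $f_1\equiv 1$ on a neighbourhood of $\supp\psi$ and $\supp f_1\cap\supp f_2=\emptyset$, and then apply the cocycle relation with middle entry $0$ to obtain $A(f)=A(f_1)-A(0)+A(f_2)$. Such a decomposition does not exist in general: if $\supp f_1\cap\supp f_2=\emptyset$ then no cancellation between $f_1$ and $f_2$ is possible, so both vanish outside $\supp f$ and $\supp f=\supp f_1\sqcup\supp f_2$; for connected $\supp f$ this forces $f_2=0$. Your explicit construction $f_1=f\chi$, $f_2=f-f\chi$ also violates the disjointness: on the transition region where $0<\chi<1$ and $f=1$, both $f_1=\chi$ and $f_2=1-\chi$ are nonzero, so the cocycle relation with $g=0$ is not applicable. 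Fortunately the whole detour is unnecessary. Once you restrict to $\psi$ supported in $U'$ with $f\equiv 1$ on $U''\supseteq\overline{U'}$ --- which you do --- the functional $f$ is itself an admissible choice in the definition of $\delta A(\psi)$, so $A(f)[\phi+\psi]-A(f)[\phi]=\delta A(\psi)[\phi]=0$ directly. This shows $A(f)$ is insensitive to perturbations supported in $U'$, hence $x\notin\supp A(f)$ and therefore $x\notin\supp A$. Deleting the splitting repairs the proof completely and brings it into exact agreement with the paper's.
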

\begin{proof}
Let $x\in\supp A$. Then for any neighborhood $V$ of $x$ and any $f\equiv 1$ on $V$ there exists some $\psi$ with $\supp \psi\subset V$ such that 
\be
0\neq A(f)[\bullet+\psi]-A(f)=\delta A(\psi)\ .
\ee
%hence $x\in\supp \delta A$. 
The opposite inclusion follows by essentially the same argument.
\end{proof} 
We now look at generalized fields with compact support and set
\be
A(1)[\phi]=\delta A(\chi\phi)[0]\ ,\ \chi\in\Dc(M)\ ,\ \chi\equiv1\text{ on }\supp A
\ee
$A(1)$ does not depend on $\chi$, namely for $\chi'\equiv1$ on $\supp A$ we have
\be
\delta A(\chi'\phi)[0]=A(f)[\chi'\phi]-A(f)[0]=A(f)[\chi'\phi]-A(f)[\chi\phi]+\delta A(\chi\phi)[0]\ .
\ee
(where $f\equiv 1$ on $\supp\chi\cup\supp\chi'$) and  $\supp(\chi'-\chi)\cap\supp A=\0$, hence 
$A(f)[\chi'\phi]-A(f)[\chi\phi]=\delta A((\chi'-\chi)\phi)[\chi\phi]=0$. We also have $A(1)=A'(1)$ for $A,A'$ equivalent. 

We have found a map $A\mapsto A(1)$ from generalized fields with compact support to local functionals. 
On the other hand, we can also consider a map from local functionals %vanishing at $\phi=0$ 
to generalized fields
\be\label{eq:AF}
F\mapsto A_F\ ,\ A_F(f)[\phi]=F[f\phi]\ .
\ee 
%Clearly $\supp A_F(\bullet)=\supp F$.
The associated relative action can be written as
\be
\delta A_F(\bullet)[\phi]=F[\phi+\bullet]-F[\phi]
\ee 
by choosing $f\equiv 1$ on $\supp F$; hence $\supp A_F=\supp\delta A_F=\supp F$.
We also find
\be
A_F(1)[\phi]=\delta A_F(\chi\phi)[0]=F[\chi\phi]-F[0]=F[\phi]\ .
\ee
with $\chi\equiv1$ on $\supp F$. 
Moreover, the generalized field $A_{A(1)}$ built from the local functional $A(1)$ is equivalent to $A$,
since $\delta A_{A(1)}(\psi)=A(1)[\bullet+\psi]-A(1)[\bullet]=\delta A(\psi)$.
%with $f\equiv1$ on $\supp A$ and $g\equiv 1$ on $\supp f$.
%We show that $A(1)$ is independent of $f$,$g$ and the choice of $A$ in its equivalence class. 
%
%Let $g'\equiv1$ on $\supp f$. Then $(g'-g)f=0$, hence 
%\be
%A((1-f)g')=A(-fg+g+(g'-g))=A((1-f)g)-A(g)+A(g')
%\ee
%by the additivity property of generalized fields, thus replacing $g$ by $g'$ does not change $A(1)$.
%
%Let now $f'\equiv 1$ on $\supp A$ and $g\equiv 1$ on $\supp f'$. Then
%%%%%%%%%%%%%%%%%%%%%%%%%%%%%%%%%%%%%%%%%%%%%%%%%%%%%%%%%%
\section{Interpolating metrics}\label{sec:interpolatingmetrics}
In this Appendix we show that, given two Lorentz metrics
%Let $g_0,g_1$ be Lorentz metrics on the manifold $M$ for which the manifold becomes globally hyperbolic 
%and which differ only on a compact set. 
%and which differ only on a compact set
$g_0,g_1$ on the manifold $M$ for which the manifold becomes globally hyperbolic, there exists a sequence of 5 metrics, starting with 
$g_0$ and ending with $g_1$, such that for each neighbouring pair all pointwise convex combinations are 
Lorentz metrics for which $M$ is globally hyperbolic, see \eqref{eq:g-convex-comb}.

We choose %associated time functions $t_i$ 
time functions $t_i$ associated to $g_i$, $i=0,1$,
with timelike differentials $dt_i$ such that $\supp (t_1-t_0)$ is compact and such that the convex combinations $\lambda dt_1+(1-\lambda) dt_0$ nowhere vanish, $0\le\lambda\le 1$ (always possible in more than 2 dimensions).

There exists a vector field $X$ with $\langle dt_i,X\rangle=1$, $i=0,1$. In a first step we define metrics $g_i'$ with larger lightcones (\ie $g_i'\ge g_i$) for which $X$ is timelike. In the second step we define a metric $g_{01}$ whose lightcone contains $X$ and is contained in the lightcones of both metrics. We obtain a sequence of 5 metrics,
\begin{equation}
   g_0\le g_0'\ge g_{01}\le g_1'\ge g_1 
\end{equation}
such that for each neighbouring pair all convex combinations are globally hyperbolic Lorentz metrics.

In detail we construct the metrics $g'_0,\,g'_1$ and $g_{01}$ as follows:
we define Riemannian metrics $\gamma_{i}$ by
\begin{equation}\begin{split}
    &\gamma_i(Y,Y)=2a_i\langle dt_i,Y\rangle^2-g_i(Y,Y)\ ,\ i=0,1\ .
    \end{split}
\end{equation}
with $a_i=(g_i^{-1}(dt_i,dt_i))^{-1}$.
Let $d_i=\gamma_i(X,X)$. 
% In case $d_i+1<2a_i$ the vector field $X$ is timelike for $g_i$ and we can set $g_i'=g_i$. 
We set
\begin{equation}
    g_i'\doteq k dt_i^2-\gamma_i
\end{equation}
with $k>1+d_0,1+d_1,2a_0,2a_1$ to be determined later, and find that 
$X$ is timelike for $g'_0$ as well as for $g'_1$ and that $g_i'\ge g_i$, $i=0,1$.
We then construct $g_{01}$ by
\begin{equation}
    g_{01}\doteq b\, dt_0dt_1-c(dt_0^2+dt_1^2)-\gamma_0-\gamma_1
\end{equation}
where $b,c>0$ are chosen such that $X$ is timelike for $g_{01}$ and such that each vector field $Y$ which is timelike for $g_{01}$ and future directed with respect to $t_0+t_1$ is also timelike and future directed for $g_i'$, $i=0,1$.

The condition that $X$ is timelike for $g_{01}$ requires $b>2c+d_0+d_1$. Let now $Y$ be a future directed vector field with respect to $g_{01}$ and $t_0+t_1$. We see immediately that then $\langle dt_i,Y\rangle>0$ for both values of $i$. We then use the inequality
\begin{equation}
   2\langle dt_0,Y\rangle\langle dt_1,Y\rangle\le(\lambda \langle dt_0,Y\rangle^2+\lambda^{-1}\langle dt_1,Y\rangle^2)\ ,\lambda>0 
\end{equation}
with $\lambda b=2c$ as well as with $\lambda^{-1}b=2c$. We obtain the inequalities
\begin{equation}
    0<g_{01}(Y,Y)\le \left(\frac{b^2}{4c}-c\right)dt_i^2-\gamma_0(Y,Y)-\gamma_1(Y,Y)\ ,\ i=0,1\ .
\end{equation}
We therefore choose $k$ such that
\begin{equation}
  \frac{b^2}{4c}-c\le k\ .  
\end{equation}
and get $g_i'(Y,Y)>0$, $i=0,1$.

\section{Proof of \texorpdfstring{$\Delta X\in\LieRc $}{DeltaX}}\label{app:Delta-X}

In this appendix we prove that the anomaly map $\Delta X$ of the perturbative AMWI \eqref{eq:anomMWI} lies in $\LieRc$.
In a first step we list the defining properties of the subgroup $\Rc_c$ of the St\"uckelberg Petermann renormalization group $\Rc_0$
determined by the renormalization conditions given at the beginning of Sect.~\ref{sec:pQFT}  and obtain the defining properties of $\LieRc$;
we also give an explicit formula for the Lie bracket. 
In a second step we verify that the latter are satisfied by $\Delta X(F)\equiv\Delta(e_\ox^F;hQ)$ 
by using the structural results for $\Delta(e_\ox^F;hQ)$ derived in \cite[Sect.~5.2]{Brennecke08} (see also\cite[Chap.~4.3]{Due19}).

\begin{definition}\label{def:Rc}
The compactly supported subgroup $\Rc_c$ of the St\"uckelberg-Petermann renormalization group is the set of formal power series %\eqref{eq:Z-perturbative}
$Z=\sum_{n=0}^\infty\frac{1}{n!}Z_n$,
with $n$-linear symmetric maps $Z_n$ of local functionals to local functionals (cf.~\eqref{eq:Z-perturbative}), 
with the following properties:
\begin{itemize}
    \item[(1)] $Z_1$ is invertible,
    \item[(2)] $Z(F+G)=Z(F)-Z(0)+Z(G)$ for $\supp F\cap\supp G=\0$, $F,G\in\Floc(M)$,
    \item[(3)] $Z(F^{\psi}+\delta L(\psi))=Z(F)^{\psi}+\delta L(\psi)$ for $\psi\in\Dc(M,\RR^n)$,
    \item[(4)] $Z(F)^{\psi}=Z(F^{\psi})$ for $\psi\in\Dc(M,\RR^n)$,
    \item[(5)] $\supp Z$ is compact where the support is defined analogously to \eqref{eq:suppZ}.
\end{itemize}
\end{definition}
Mind the difference: the $n$-fold time ordered product $F_1\cdot_T\,\,\cdots\,\,\cdot_T F_n$ is a $\CC$-valued functional, 
but $Z(F)$ is an $\RR$-valued functional (by definition of $\Floc$) -- this is the reason why there is no defining property for $\Rc_c$ 
corresponding to the renormalization condition unitarity for the $S$-matrix. 
An immediate consequence of the property (4) is that $\supp Z(F)=\supp F$ for all $F\in\Floc(M)$ and thus $Z(0)=Z_0=\const$.
Note that condition (2) implies due to multilinearity of the coefficients of $Z$ the general locality condition
\begin{itemize}
    \item[(2')] $Z(F+G+H)=Z(F+H)-Z(H)+Z(G+H)$ with $F,G$ as above and $H\in\Floc$
\end{itemize}
(see \cite[Appendix B]{BDF09}); and, assuming the validity of (4), (3) can equivalently be written in the simpler form
\begin{itemize}
    \item[(3')] $Z(F+\delta L(\psi))=Z(F)+\delta L(\psi)$ for $\psi\in\Dc(M,\RR^n)$.
\end{itemize}

We also point out that each $Z_k$, $k\in\NN$, satisfies
the properties (2)-(5) individually -- this observation is crucial for the proof of the Main Theorem \cite{DF04}. 
It is obvious for (2), (4) and (5); and (3') can be verified as follows: 
\begin{align}
Z_k\bigl(F+\delta L(\psi)\bigr)=&\frac{d^k}{d\lambda^k}\Big\vert_{\lambda=0}\, Z\bigl(\lambda(F+\delta L(\psi))\bigr)\nonumber\\                        =&\frac{d^k}{d\lambda^k}\Big\vert_{\lambda=0}\, Z\bigl(\lambda F+\delta L(\lambda\psi)+c_\psi(\lambda)\bigr)\nonumber\\
=&\frac{d^k}{d\lambda^k}\Big\vert_{\lambda=0}\, \Bigl(Z(\lambda F)+\delta L(\lambda\psi)+c_\psi(\lambda)\Bigr)\nonumber\\
=&Z_k(F)+\delta_{k1}\,\delta L(\psi),
\end{align}
%\begin{align}
%Z_k\bigl(F^{\psi}+\delta L(\psi)\bigr)=&\frac{d^k}{d\lambda^k}\Big\vert_{\lambda=0}\,       Z\bigl(\lambda(F^{\psi}+\delta L(\psi))\bigr)\nonumber\\       %                 =&\frac{d^k}{d\lambda^k}\Big\vert_{\lambda=0}\, Z\bigl(\lambda F^{\psi}+\delta L(\lambda\psi)+c_\psi(\lambda)\bigr)\nonumber\\
%=&\frac{d^k}{d\lambda^k}\Big\vert_{\lambda=0}\, \Bigl(Z(\lambda F^{(1-\lambda)\psi})^{\lambda\psi}+\delta L(\lambda\psi)+c_\psi(\lambda)\Bigr)\nonumber\\
%=&\frac{d^k}{d\lambda^k}\Big\vert_{\lambda=0}\, \Bigl(Z(\lambda F)^{\psi}+\lambda\delta L(\psi)\Bigr)\nonumber\\
%                    =&Z_k(F)^{\psi}+\delta_{k1}\,\delta L(\psi),
%\end{align}\todo{correction (KF)}
%\todo{?? (MD)} [Due to the validity of (4), (3) is equivalent to $Z(F+\delta L(\psi))=Z(F)+\delta L(\psi)$. I prefer the old
%computation (proving the latter relation for $Z_k$), since it is simpler and since also the new computation uses (4). Or do I overlook some point?]
where $c_\psi(\lambda)$ is a constant and we use that $\delta L(\psi)=\langle\phi,K\psi\rangle+\tfrac{1}2\langle\psi,K\psi\rangle$ and that 
$Z(F+c)=Z(F)+c$ (since $\supp c=\emptyset$). %(alternatively: Prop.~\ref{prop:Z}(i) applies also to $Rc_c$)
In particular note that $Z_1\bigl(\delta L(\psi)\bigr)=\delta L(\psi)$. 

 As mentioned in Subsection \ref{subsec:Pt}, we do not require that $Z_1=\mathrm{id}$, in contrast to \cite{DF04,BDF09}, but in agreement with \cite{HWRG} . The reason is that the functional $F$ as the argument of $S$ has to be understood as related to a quantum observable $\hat{F}$ by normal ordering. A symmetry transformation then acts on $F$ in two ways: via its action on field configurations and via its action on the normal ordering prescription. The latter action is incorporated in the action of the renormalization group in the unitary AMWI.

Usually a normal ordering prescription 
is of the form
\be\label{eq:normal-ordering}
\hat F=\alpha_H^{-1} F\quad\text{with}\quad\alpha_H\doteq e^{\frac{1}{2}\,\langle H,\frac{\delta^2}{\delta\phi^2}\rangle}
\ee
where $H$ coincides up to smooth terms with an admissible $2$-point function of the free theory (\ie satisfying microlocal spectrum condition). 
%(in particular, $H_{jk}(x,y)=H_{kj}(y,x)\in\RR$).
We point out that the algebra of quantum observables $\hat F$ is 
\emph{abstractly} defined and the formula \eqref{eq:normal-ordering} has to be understood in a formal sense, because $\alpha_H^{-1} F$ does, in general, not exist as a functional,
%not exist for a local functional $F$, 
see \cite{BDF09} and \cite{FredenhagenR15}.
If the normal ordering is changed one should keep $\hat{F}$ fixed and therefore has to admit changes of $F$.
More explicitly, if the new normal ordering is characterized by $H'$ we find
\be
\hat{F}=\alpha_{H'}^{-1}Z_1(F)\quad \text{and hence}\quad Z_1=\alpha_{H'-H}\ .
\ee
Note that $Z_1$ is well defined on polynomial functionals, since $H'-H$ is smooth.
% writing $S^{H_k}$ for the $S$-matrix belonging to $H_k$ and
% \be
% \tau_1^{H_k}(F)\doteq\frac{d}{i\,d\lambda}\Big\vert_{\lambda=0}S^{H_k}(\lambda F)=\alpha_{H^k}\hat F=\alpha_{H^k-H} F,
% \ee
% for the pertinent time-ordered product to first order (where $k=1,2$),
% the Main Theorem formula $S^{H_2}=S^{H_1}\circ(Z-Z_0)$ yields $\tau^{H_2}=\tau^{H_1}\circ Z_1$, that is,
% \be
% Z_1=\alpha_{H_2-H_1}.
% \ee
The set of admissible $H$ is restricted by the above condition %s (3) and 
(5) on $Z_1$. ((1)-(4) %, (2) and (4) 
are obviously satisfied, because $Z_1=\alpha_{H'-H}$ is invertible, linear, acts trivially on $\delta L(\psi)=\langle\phi,K\psi\rangle+\tfrac{1}2\langle\psi,K\psi\rangle$
and commutes with functional derivatives.) Explicitly, since $Z_1$ must have compact support, also
$H'-H$ and all its derivatives restricted to the diagonal must have compact support. Note that $H$ appearing in \eqref{eq:normal-ordering} is a solution of the 
free field equation, but all further admissible $H'$ do not necessarily have this property. 
%These changes are then restricted by the relations (2)-(5).
 
%\medskip
 
% One easily verifies that $Z^1,\,Z^2\in\Rc_c$ implies $Z^1+\lambda(Z^2-\mathrm{id})\in\Rc_c$ for all $\lambda\in\RR$, that is,
% $V:=\Rc_c-\mathrm{id}$ is a vector space; in other words, $\Rc_c$ has the structure of an \emph{affine space} (cf.~\cite{DFKR14} and \cite[Chap.~3.6.1]{Due19}).\todo{'affine space' added (MD)}

% Since the Lie algebra of $\Rc_c$ is the tangent space in the point $\mathrm{id}\in\Rc_c$, we conclude that
% $\LieRc =V$ and that there is a bijection $l:\,\Rc_c\to\LieRc\,;\,Z\mapsto (Z-\mathrm{id})$. We immediately see that $\LieRc$ can be defined as follows:
% it is
%The Lie algebra of $\Rc_c$ is then
The associated Lie algebra $\LieRc$ is defined as follows: it is
the set of formal power series $z=\sum_{n=0}^\infty \frac{1}{n!}z_n$, with $n$-linear symmetric maps $z_n$ of local functionals to local functionals, with the properties
\begin{itemize}
    \item[$(i)$]$\mathrm{id}+\lambda z_1$ is invertible for $\lambda$ sufficiently small,
    \item[$(ii)$] $z(F+G)=z(F)-z(0)+z(G)$ for $\supp F\cap\supp G=\0$, $F,G\in\Floc(M)$,
    \item[$(iii)$]$z(F^{\psi}+\delta L(\psi))=z(F)^{\psi}$ for $\psi\in\Dc(M,\RR^n)$,
     \item[$(iv)$]$z(F)^{\psi}=z(F^{\psi})$ for $\psi\in\Dc(M,\RR^n)$,
     \item[$(v)$] the support of $z$, \begin{align}\label{eq:suppz}
&\supp z\doteq \{x\in M\,|\, \text{ for every neighborhood }U\ni x\,\text{ there exist}\ F,G\in\Floc(M),\nonumber \\
              &\text{ with }\supp F\subset U\text{ such that } z(F+G)\neq z(G)\},
\end{align} is compact.
\end{itemize}
To obtain an explicit formula for $[z^a,z^b]_{\LieRc}$, we use that it is connected to the product 
in the Lie group $\Rc_c$ by the expansion
\be\label{eq:Lie-bracket}
Z^a(\lambda)Z^b(\lambda)Z^a(\lambda)^{-1}Z^b(\lambda)^{-1}=\mathrm{id}+\lambda^2\,[z^a,z^b]_{\LieRc}+\mathcal{O}(\lambda^3)\ ,
\ee
where
\be
Z^i(\lambda)\doteq\mathrm{id}+\lambda\,z^i\quad,\ i=a,b\ .\ee
By taking into account that
\be\label{eq:z(F+cG)}
z(F+\lambda G)=z(F)+\lambda \langle z'(F),G\rangle+\mathcal{O}(\lambda^2)\ ,
\ee
where 
\be
\langle z'(F),G\rangle\doteq\frac{d}{d\lambda}\Big\vert_{\lambda=0}z(F+\lambda G)=\sum_{n=1}^\infty\frac{1}{(n-1)!}\,z_n(F^{\otimes(n-1)}\otimes G)
\ee
is linear in $G$, we obtain
\be\label{eq:Z-inverse}
Z(\lambda)^{-1}(F)=F-\lambda\, z(F)+\lambda^2\,\langle z'(F),z(F)\rangle+\mathcal{O}(\lambda^3)\ .
\ee
By using \eqref{eq:z(F+cG)} and \eqref{eq:Z-inverse}, we see by a straightforward computation that $Z^a(\lambda)Z^b(\lambda)Z^a(\lambda)^{-1}Z^b(\lambda)^{-1}$ is indeed
of the form given in \eqref{eq:Lie-bracket}, and we can read off that
\be\label{eq:Lie-bracket-1}
[z^a,z^b]_{\LieRc}(F)=\langle(z^a)'(F),z^b(F)\rangle-\langle(z^b)'(F),z^a(F)\rangle\ .
\ee

% We are going to check that this formula satisfies the properties of a Lie bracket: first of all, since the expansion \eqref{eq:Lie-bracket} is 
% valid for all $\lambda_1,\lambda_2\in\RR$, the second order term, i.e., $[z^a,z^b]_{\LieRc}$, must lie in $\LieRc$. Next, antisymmetry and bilinearity
% of the formula \eqref{eq:Lie-bracket-1} for $[\,\cdot\,,\,\cdot\,]_{\LieRc}$ are obvious. Finally, to check the Jacobi identity, we use that
% \be
% z'(F+\lambda G)(H)=z'(F)(H)+\lambda\,z''(F)(G\otimes H)+\mathcal{O}(\lambda^3)\ ,
% \ee
% where 
% \be 
% z''(F)(G\otimes H)\doteq\frac{d}{d\lambda}\Big\vert_{\lambda=0}z'(F+\lambda G)(H)=
% \sum_{n=2}^\infty\frac{1}{(n-2)!}\,z_n(F^{\otimes(n-2)}\otimes G\otimes H)
% \ee
% is linear in $G$ and $H$ and symmetric under $G\leftrightarrow H$. With that we obtain
% \begin{align}
% \bigl([z^a,z^b]_{\LieRc}\bigr)'(F)(G)&\doteq\frac{d}{d\lambda}\Big\vert_{\lambda=0}[z^a,z^b]_{\LieRc}(F+\lambda G)\nonumber\\
% &=(z^a)'(F)\bigl((z^b)'(F)(G)\bigr)+(z^a)''(F)\bigl(G\otimes z^b(F)\bigr)-(z^a\leftrightarrow z^b)
% \end{align}
% and, using also \eqref{eq:Lie-bracket-1}, we arrive at
% \begin{align}
%   \bigl[z^a,[z^b,z^c]_{\LieRc}\bigr]_{\LieRc}(F)=&(z^a)'(F)\bigl((z^b)'(F)(z^c(F))\bigr)- (z^a)'(F)\bigl((z^c)'(F)(z^b(F))\bigr)\nonumber\\
%   &-(z^b)'(F)\bigl((z^c)'(F)(z^a(F))\bigr)-(z^b)''(F)\bigl(z^a(F)\otimes z^c(F)\bigr)\nonumber\\
%   &+(z^c)'(F)\bigl((z^b)'(F)(z^a(F))\bigr)+(z^c)''(F)\bigl(z^a(F)\otimes z^b(F)\bigr)\ ,
% \end{align}
% from which we see that the Jacobi identity holds true.

\begin{remark}The renormalization group $\Rc_c$ may be considered as a group of diffeomorphisms of the space of local functionals. Accordingly the associated Lie algebra corresponds to a Lie algebra of vector fields equipped with the usual Lie bracket.
\end{remark}

Finally we are going to prove that $\Delta X\in\LieRc$.
In first order, $\Delta X$ is a second order differential operator, due to the two contractions in the time ordered product 
$F\cdot_T \partial_XL$ (see \eqref{eq:Delta-A1}), terms with more contractions do not occur since $\partial_XL$ is of second order in $\phi$.
Hence $\mathrm{id}+\lambda (\Delta X)_1$ is invertible on polynomial local functionals, so condition $(i)$ holds. 
$\Delta X$ evidently also satisfies conditions $(ii)$ and $(v)$. Condition $(iv)$ follows from the condition field independence
(which implies formula (5.27) of \cite{Brennecke08}), due to the fact that $\G_c(M)$ contains only affine field redefinitions, hence the term in equation (5.26) of \cite{Brennecke08} does not contribute.

% $\Delta X$ satisfies evidently conditions $(i),(ii),(iii)$ and $(vi)$. $(v)$ follows from the conditions field independence and field equation (FE) 
% (which implies formula (5.27) of \cite{Brennecke08}), due to the fact that $\G_c(M)$ contains only affine field redefinitions, hence the term in equation (5.26) of \cite{Brennecke08} does not contribute.

$(iii)$ is not explicitly mentioned in \cite{Brennecke08}. It follows from the following calculation. We use the notation $\psi_L$ and $\psi_{\ast}$ introduced in the proof of Lemma \ref{lem:Zg-in-R}, so that $(iii)$
assumes the form $z\psi_L=\psi_{\ast}z$, and find
\begin{align*}
S(\psi_{L}F)\cdot_T\Delta X({\psi_{L}F})\quad\overset{\mathclap{\eqref{eq:anomMWI}}}{=}
&\quad\frac{d}{id\lambda}\Big\vert_{\lambda=0}   S\bigl(\g^{\lambda}_{L}(\psi_{L}F)\bigr)\ &\mathrm{mod}\ \frac{\delta L}{\delta \phi}\\
\overset{\mathclap{\eqref{eq:gL-psiL}}}{=}&\quad \frac{d}{id\lambda}\Big\vert_{\lambda=0}S\bigl(({\g'}^\lambda\psi)_{L}\g^\lambda_{L}F\bigr)&\mathrm{mod}\ \frac{\delta L}{\delta \phi}\\
\overset{\mathclap{(\mathrm{FE})}}{=}&\quad \frac{d}{id\lambda}\Big\vert_{\lambda=0}S\bigl(\g^\lambda_{L}F\bigr)&\mathrm{mod}\ \frac{\delta L}{\delta \phi}\\
 \overset{\mathclap{\eqref{eq:anomMWI}}}{=}&\quad \frac{d}{id\lambda}\Big\vert_{\lambda=0}S\bigl(F+\lambda\Delta X(F)\bigr)&\mathrm{mod}\ \frac{\delta L}{\delta \phi}\\
 \overset{\mathclap{(\mathrm{FE})}}{=}&\quad \frac{d}{id\lambda}\Big\vert_{\lambda=0}S\bigl(\psi_{L}(F+\lambda \Delta X(F))\bigr)&\mathrm{mod}\ \frac{\delta L}{\delta \phi}\\
 =&\quad  S(\psi_{L}F)\cdot_T(\psi_{\ast}\Delta X(F))&\mathrm{mod}\ \frac{\delta L}{\delta \phi} 
\end{align*}
We now use the fact that $\Delta X(G+c)=\Delta X(G)$ for $G\in\Floc(M)$ and a constant functional $c$ and add a source term $\langle\phi,q\rangle$ to the Lagrangian, thus replacing $L$ by $L_q\doteq L+\langle\phi,q\rangle$. Neither the time ordered product nor 
$\Delta X$ nor $\Delta X(\psi_{L_q}F)$ depend on $q$, and $S(\psi_{L_q}F)=e^{i\langle\psi,q\rangle}S(\psi_LF)$.
Hence the equation 
\be 
S(\psi_{L}F)\cdot_T\Delta X({\psi_{L}F})=S(\psi_{L}F)\cdot_T(\psi_{\ast}\Delta X(F))
\ee
holds everywhere. Since the off-shell $S$-matrix is invertible with respect to $\cdot_T$, we arrive at condition $(iii)$.
\end{appendix}
%%%%%%%%%%%%%%%%%%%%%%%%%%%%%%%  

\end{document}